\newcommand{\cS}{\mathcal{S}}
\colorlet{minusTwoColour}{blue!200}
\colorlet{minusOneColour}{green!50}
\colorlet{zeroColour}{black!10}
\colorlet{plusOneColour}{yellow!80}
\colorlet{plusTwoColour}{orange!80}
\colorlet{plusThreeColour}{red!200}
\colorlet{aColour}{blue!50}
\colorlet{bColour}{green!40}
\colorlet{cColour}{purple!60}
\colorlet{dColour}{yellow!200}
\colorlet{eColour}{orange!80}
\colorlet{fColour}{red!200}
\pgfmathsetmacro{\bigShift}{4.5}
\pgfmathsetmacro{\distanceFactor}{1.2}
\newcommand\squareTensorA[2]{
\begin{tikzpicture}[baseline=#1 ex, local bounding box=C,scale=#2]
\draw[fill=zeroColour,scale=.7] (0,2) rectangle ++(1,1);
\draw[fill=zeroColour,scale=.7] (1,2) rectangle ++(1,1);
\draw[fill=plusOneColour,scale=.7] (2,2) rectangle ++(1,1);
\draw[fill=plusOneColour,scale=.7] (0,1) rectangle ++(1,1);
\draw[fill=zeroColour,scale=.7] (1,1) rectangle ++(1,1);
\draw[fill=minusOneColour,scale=.7] (2,1) rectangle ++(1,1);
\draw[fill=zeroColour,scale=.7] (0,0) rectangle ++(1,1);
\draw[fill=zeroColour,scale=.7] (1,0) rectangle ++(1,1);
\draw[fill=zeroColour,scale=.7] (2,0) rectangle ++(1,1);
\end{tikzpicture}
}
\pgfmathsetmacro{\opacityColouredBlocksFront}{.95}
\pgfmathsetmacro{\opacityColouredBlocksMiddle}{.65}
\pgfmathsetmacro{\opacityColouredBlocksBack}{.35}
\pgfmathsetmacro{\opacityColouredBlocksLayerOne}{1}
\pgfmathsetmacro{\opacityColouredBlocksLayerTwo}{.85}
\pgfmathsetmacro{\opacityColouredBlocksLayerThree}{.7}
\pgfmathsetmacro{\opacityColouredBlocksLayerFour}{.55}
\pgfmathsetmacro{\opacityColouredBlocksLayerFive}{.4}
\pgfmathsetmacro{\opacityColouredBlocksLayerSix}{.25}
\pgfmathsetmacro{\opacityZeroBlocksLayerOne}{.8}
\pgfmathsetmacro{\opacityZeroBlocksLayerTwo}{.6}
\pgfmathsetmacro{\opacityZeroBlocksLayerThree}{.4}
\pgfmathsetmacro{\opacityZeroBlocksLayerFour}{.35}
\pgfmathsetmacro{\opacityZeroBlocksLayerFive}{.3}
\pgfmathsetmacro{\opacityZeroBlocksLayerSix}{.25}
\pgfmathsetmacro{\border}{0.1}
\pgfmathsetmacro{\distanceShadowXY}{12}
\pgfmathsetmacro{\distanceShadowXZ}{-3}
\pgfmathsetmacro{\distanceShadowYZ}{8.5}
\pgfmathsetmacro{\blockseparation}{1}
\newcommand\mysquareXY[5]{
\fill[#1,opacity=#2,shift={(-#5,#4,-#3)},scale=.7](0,0,0)--++(0,-1,0)--++(0,0,-1)--++(0,1,0)--cycle;
}
\newcommand\mysquareXZ[5]{
\fill[#1,opacity=#2,shift={(-#5,#4,-#3)},scale=.7](0,0,0)--++(-1,0,0)--++(0,0,-1)--++(1,0,0)--cycle;
}
\newcommand\mysquareYZ[5]{
\fill[#1,opacity=#2,shift={(-#5,#4,-#3)},scale=.7](0,0,0)--++(-1,0,0)--++(0,-1,0)--++(1,0,0)--cycle;
}
\newcommand\mycube[6]{
\draw[line width=\border,fill=#1,opacity=#2,shift={(-#5,#4,-#3)},scale=#6](0,0,0)--++(0,-1,0)--++(0,0,-1)--++(0,1,0)--cycle;
\draw[line width=\border,fill=#1,opacity=#2,shift={(-#5,#4,-#3)},scale=#6](0,0,0)--++(-1,0,0)--++(0,0,-1)--++(1,0,0)--cycle;
\draw[line width=\border,fill=#1,opacity=#2,shift={(-#5,#4,-#3)},scale=#6](0,0,0)--++(-1,0,0)--++(0,-1,0)--++(1,0,0)--cycle;
}
\newcommand\mycubeNoBorder[5]{
\fill[#1,opacity=#2,shift={(-#5,#4,-#3)},scale=.7](0,0,0)--++(0,-1,0)--++(0,0,-1)--++(0,1,0)--cycle;
\fill[#1,opacity=#2,shift={(-#5,#4,-#3)},scale=.7](0,0,0)--++(-1,0,0)--++(0,0,-1)--++(1,0,0)--cycle;
\fill[#1,opacity=#2,shift={(-#5,#4,-#3)},scale=.7](0,0,0)--++(-1,0,0)--++(0,-1,0)--++(1,0,0)--cycle;
}
\newcommand\cubeTensorW[2]{
\begin{tikzpicture}[baseline=#1 ex, local bounding box=C,scale=#2]
\begin{scope}[3d view={110}{15}]
%

  \foreach \y in {0,...,5} 
	 \foreach \x in {5,...,0}
		 \mycubeNoBorder{zeroColour}{\opacityZeroBlocksLayerSix}{\distanceFactor*\x}{\distanceFactor*\y}{\distanceFactor*5};
 \foreach \y in {0,...,5} 
	 \foreach \x in {5,...,0}
		 \mycubeNoBorder{zeroColour}{\opacityZeroBlocksLayerFive}{\distanceFactor*\x}{\distanceFactor*\y}{\distanceFactor*4};
 \foreach \y in {0,...,5} 
	 \foreach \x in {5,...,0}
		 \mycubeNoBorder{zeroColour}{\opacityZeroBlocksLayerFour}{\distanceFactor*\x}{\distanceFactor*\y}{\distanceFactor*3};
 \foreach \y in {0,...,5} 
	 \foreach \x in {5,...,0}
		 \mycubeNoBorder{zeroColour}{\opacityZeroBlocksLayerThree}{\distanceFactor*\x}{\distanceFactor*\y}{\distanceFactor*2};
 \foreach \y in {0,...,5} 
	 \foreach \x in {5,...,0}
		 \mycubeNoBorder{zeroColour}{\opacityZeroBlocksLayerTwo}{\distanceFactor*\x}{\distanceFactor*\y}{\distanceFactor*1};
 \foreach \y in {0,...,5} 
	 \foreach \x in {5,...,0}
		 \mycubeNoBorder{zeroColour}{\opacityZeroBlocksLayerOne}{\distanceFactor*\x}{\distanceFactor*\y}{\distanceFactor*0};

\mycube{minusOneColour}{\opacityColouredBlocksBack}{\distanceFactor*2}{\distanceFactor*1}{\distanceFactor*2}{.7};
\mycube{plusOneColour}{\opacityColouredBlocksBack}{\distanceFactor*2}{\distanceFactor*0}{\distanceFactor*2}{.7};
\mycube{minusOneColour}{\opacityColouredBlocksBack}{\distanceFactor*1}{\distanceFactor*0}{\distanceFactor*2}{.7};
\mycube{plusOneColour}{\opacityColouredBlocksBack}{\distanceFactor*0}{\distanceFactor*0}{\distanceFactor*2}{.7};
\mycube{minusOneColour}{\opacityColouredBlocksFront}{\distanceFactor*1}{\distanceFactor*2}{\distanceFactor*0}{.7};
\mycube{plusOneColour}{\opacityColouredBlocksFront}{\distanceFactor*0}{\distanceFactor*2}{\distanceFactor*0}{.7};
\mycube{plusOneColour}{\opacityColouredBlocksFront}{\distanceFactor*2}{\distanceFactor*1}{\distanceFactor*0}{.7};
\mycube{minusOneColour}{\opacityColouredBlocksFront}{\distanceFactor*2}{\distanceFactor*0}{\distanceFactor*0}{.7};
\mycube{plusTwoColour}{\opacityColouredBlocksFront}{\distanceFactor*1}{\distanceFactor*0}{\distanceFactor*0}{.7};
\mycube{minusOneColour}{\opacityColouredBlocksFront}{\distanceFactor*0}{\distanceFactor*0}{\distanceFactor*0}{.7};
\end{scope}
\end{tikzpicture}
}
\newcommand\quartzExample[2]{
\begin{tikzpicture}[baseline=#1 ex, local bounding box=C,scale=#2]
\begin{scope}[3d view={110}{15}]
%
  \foreach \y in {0,...,5} 
	 \foreach \x in {5,...,0}
		 \mycubeNoBorder{zeroColour}{\opacityZeroBlocksLayerSix}{\distanceFactor*\x}{\distanceFactor*\y}{\distanceFactor*5};
 \foreach \y in {0,...,5} 
	 \foreach \x in {5,...,0}
		 \mycubeNoBorder{zeroColour}{\opacityZeroBlocksLayerFive}{\distanceFactor*\x}{\distanceFactor*\y}{\distanceFactor*4};
 \foreach \y in {0,...,5} 
	 \foreach \x in {5,...,0}
		 \mycubeNoBorder{zeroColour}{\opacityZeroBlocksLayerFour}{\distanceFactor*\x}{\distanceFactor*\y}{\distanceFactor*3};
 \foreach \y in {0,...,5} 
	 \foreach \x in {5,...,0}
		 \mycubeNoBorder{zeroColour}{\opacityZeroBlocksLayerThree}{\distanceFactor*\x}{\distanceFactor*\y}{\distanceFactor*2};
 \foreach \y in {0,...,5} 
	 \foreach \x in {5,...,0}
		 \mycubeNoBorder{zeroColour}{\opacityZeroBlocksLayerTwo}{\distanceFactor*\x}{\distanceFactor*\y}{\distanceFactor*1};
 \foreach \y in {0,...,5} 
	 \foreach \x in {5,...,0}
		 \mycubeNoBorder{zeroColour}{\opacityZeroBlocksLayerOne}{\distanceFactor*\x}{\distanceFactor*\y}{\distanceFactor*0};

\draw[black, dashed, opacity=.6] (0-.35,0-.35,0-.35) -- (-\distanceFactor*5-.35,\distanceFactor*0-.35,-\distanceFactor*0-.35)--
(-\distanceFactor*5-.35,\distanceFactor*4-.35,-\distanceFactor*0-.35)--
(-\distanceFactor*0-.35,\distanceFactor*4-.35,-\distanceFactor*0-.35)--
(0-.35,0-.35,0-.35);
\draw[black, dashed, opacity=.6] (0-.35,0-.35,-\distanceFactor*3-.35) -- (-\distanceFactor*5-.35,\distanceFactor*0-.35,-\distanceFactor*3-.35)--
(-\distanceFactor*5-.35,\distanceFactor*4-.35,-\distanceFactor*3-.35)--
(-\distanceFactor*0-.35,\distanceFactor*4-.35,-\distanceFactor*3-.35)--
(0-.35,0-.35,-\distanceFactor*3-.35);
\draw[black, dashed, opacity=.6] (0-.35,0-.35,0-.35) --(0-.35,0-.35,-\distanceFactor*3-.35);
\draw[black, dashed, opacity=.6] (-\distanceFactor*5-.35,\distanceFactor*0-.35,-\distanceFactor*0-.35)--(-\distanceFactor*5-.35,\distanceFactor*0-.35,-\distanceFactor*3-.35);
\draw[black, dashed, opacity=.6] (-\distanceFactor*5-.35,\distanceFactor*4-.35,-\distanceFactor*0-.35)--(-\distanceFactor*5-.35,\distanceFactor*4-.35,-\distanceFactor*3-.35);
\draw[black, dashed, opacity=.6] (-\distanceFactor*0-.35,\distanceFactor*4-.35,-\distanceFactor*0-.35)--(-\distanceFactor*0-.35,\distanceFactor*4-.35,-\distanceFactor*3-.35);

\mycube{minusOneColour}{\opacityColouredBlocksBack}{\distanceFactor*3}{\distanceFactor*4}{\distanceFactor*5}{.7};
\mycube{plusOneColour}{\opacityColouredBlocksBack}{\distanceFactor*0}{\distanceFactor*4}{\distanceFactor*5}{.7};
\mycube{plusOneColour}{\opacityColouredBlocksBack}{\distanceFactor*3}{\distanceFactor*0}{\distanceFactor*5}{.7};
\mycube{minusOneColour}{\opacityColouredBlocksBack}{\distanceFactor*0}{\distanceFactor*0}{\distanceFactor*5}{.7};
\mycube{plusOneColour}{\opacityColouredBlocksFront}{\distanceFactor*3}{\distanceFactor*4}{\distanceFactor*0}{.7};
\mycube{minusOneColour}{\opacityColouredBlocksFront}{\distanceFactor*0}{\distanceFactor*4}{\distanceFactor*0}{.7};
\mycube{minusOneColour}{\opacityColouredBlocksFront}{\distanceFactor*3}{\distanceFactor*0}{\distanceFactor*0}{.7};
\mycube{plusOneColour}{\opacityColouredBlocksFront}{\distanceFactor*0}{\distanceFactor*0}{\distanceFactor*0}{.7};

\draw(0,0,.75) node {$\ba$};
\draw(-\distanceFactor*5,\distanceFactor*4+.3,-\distanceFactor*3+.75) node {$\bb$};

\end{scope}
\end{tikzpicture}
}
\newcommand\WMinusOneQuartz[2]{
\begin{tikzpicture}[baseline=#1 ex, local bounding box=C,scale=#2]
\begin{scope}[3d view={110}{15}]
%
  \foreach \y in {0,...,5} 
	 \foreach \x in {5,...,0}
		 \mycubeNoBorder{zeroColour}{\opacityZeroBlocksLayerSix}{\distanceFactor*\x}{\distanceFactor*\y}{\distanceFactor*5};
 \foreach \y in {0,...,5} 
	 \foreach \x in {5,...,0}
		 \mycubeNoBorder{zeroColour}{\opacityZeroBlocksLayerFive}{\distanceFactor*\x}{\distanceFactor*\y}{\distanceFactor*4};
 \foreach \y in {0,...,5} 
	 \foreach \x in {5,...,0}
		 \mycubeNoBorder{zeroColour}{\opacityZeroBlocksLayerFour}{\distanceFactor*\x}{\distanceFactor*\y}{\distanceFactor*3};
 \foreach \y in {0,...,5} 
	 \foreach \x in {5,...,0}
		 \mycubeNoBorder{zeroColour}{\opacityZeroBlocksLayerThree}{\distanceFactor*\x}{\distanceFactor*\y}{\distanceFactor*2};
 \foreach \y in {0,...,5} 
	 \foreach \x in {5,...,0}
		 \mycubeNoBorder{zeroColour}{\opacityZeroBlocksLayerTwo}{\distanceFactor*\x}{\distanceFactor*\y}{\distanceFactor*1};
 \foreach \y in {0,...,5} 
	 \foreach \x in {5,...,0}
		 \mycubeNoBorder{zeroColour}{\opacityZeroBlocksLayerOne}{\distanceFactor*\x}{\distanceFactor*\y}{\distanceFactor*0}; 

            \mycube{plusOneColour}{\opacityColouredBlocksLayerSix}{\distanceFactor*3}{\distanceFactor*0}{\distanceFactor*5}{.7};\mycube{minusOneColour}{\opacityColouredBlocksLayerSix}{\distanceFactor*0}{\distanceFactor*0}{\distanceFactor*5}{.7};\mycube{minusOneColour}{\opacityColouredBlocksLayerSix}{\distanceFactor*3}{\distanceFactor*4}{\distanceFactor*5}{.7};\mycube{plusOneColour}{\opacityColouredBlocksLayerSix}{\distanceFactor*0}{\distanceFactor*4}{\distanceFactor*5}{.7};\mycube{plusOneColour}{\opacityColouredBlocksLayerThree}{\distanceFactor*2}{\distanceFactor*0}{\distanceFactor*2}{.7};\mycube{minusOneColour}{\opacityColouredBlocksLayerThree}{\distanceFactor*1}{\distanceFactor*0}{\distanceFactor*2}{.7};\mycube{plusOneColour}{\opacityColouredBlocksLayerThree}{\distanceFactor*0}{\distanceFactor*0}{\distanceFactor*2}{.7};\mycube{minusOneColour}{\opacityColouredBlocksLayerThree}{\distanceFactor*2}{\distanceFactor*1}{\distanceFactor*2}{.7};\mycube{minusOneColour}{\opacityColouredBlocksLayerOne}{\distanceFactor*3}{\distanceFactor*0}{\distanceFactor*0}{.7};\mycube{minusOneColour}{\opacityColouredBlocksLayerOne}{\distanceFactor*2}{\distanceFactor*0}{\distanceFactor*0}{.7};\mycube{plusTwoColour}{\opacityColouredBlocksLayerOne}{\distanceFactor*1}{\distanceFactor*0}{\distanceFactor*0}{.7};\mycube{plusOneColour}{\opacityColouredBlocksLayerOne}{\distanceFactor*2}{\distanceFactor*1}{\distanceFactor*0}{.7};\mycube{minusOneColour}{\opacityColouredBlocksLayerOne}{\distanceFactor*1}{\distanceFactor*2}{\distanceFactor*0}{.7};\mycube{plusOneColour}{\opacityColouredBlocksLayerOne}{\distanceFactor*0}{\distanceFactor*2}{\distanceFactor*0}{.7};\mycube{plusOneColour}{\opacityColouredBlocksLayerOne}{\distanceFactor*3}{\distanceFactor*4}{\distanceFactor*0}{.7};\mycube{minusOneColour}{\opacityColouredBlocksLayerOne}{\distanceFactor*0}{\distanceFactor*4}{\distanceFactor*0}{.7};

\end{scope}
\end{tikzpicture}
}
\newcommand\HollowCrystal[2]{
\begin{tikzpicture}[baseline=#1 ex, local bounding box=C,scale=#2]
\begin{scope}[3d view={110}{15}]

\foreach \x in {0,...,5}
    \foreach \y in {0,...,5}
        \mysquareXY{zeroColour}{.6}{\distanceFactor*\x}{\distanceFactor*\y}{\distanceShadowXY};
        
\mysquareXY{plusOneColour}{1}{\distanceFactor*0}{\distanceFactor*2}{\distanceShadowXY};
\mysquareXY{plusOneColour}{1}{\distanceFactor*1}{\distanceFactor*0}{\distanceShadowXY};
\mysquareXY{minusOneColour}{1}{\distanceFactor*1}{\distanceFactor*2}{\distanceShadowXY};
        
\foreach \x in {0,...,5}
    \foreach \z in {0,...,5}
        \mysquareXZ{zeroColour}{.6}{\distanceFactor*\x}{\distanceShadowXZ}{\distanceFactor*\z};
        
\mysquareXZ{plusOneColour}{1}{\distanceFactor*0}{\distanceShadowXZ}{\distanceFactor*2};
\mysquareXZ{plusOneColour}{1}{\distanceFactor*1}{\distanceShadowXZ}{\distanceFactor*0};
\mysquareXZ{minusOneColour}{1}{\distanceFactor*1}{\distanceShadowXZ}{\distanceFactor*2};
        
\foreach \y in {0,...,5}
    \foreach \z in {0,...,5}
        \mysquareYZ{zeroColour}{.6}{\distanceShadowYZ}{\distanceFactor*\y}{\distanceFactor*\z};
        
\mysquareYZ{plusOneColour}{1}{\distanceShadowYZ}{\distanceFactor*0}{\distanceFactor*2};
\mysquareYZ{plusOneColour}{1}{\distanceShadowYZ}{\distanceFactor*1}{\distanceFactor*0};
\mysquareYZ{minusOneColour}{1}{\distanceShadowYZ}{\distanceFactor*1}{\distanceFactor*2};

%
  \foreach \y in {0,...,5} 
	 \foreach \x in {5,...,0}
		 \mycubeNoBorder{zeroColour}{\opacityZeroBlocksLayerSix}{\distanceFactor*\x}{\distanceFactor*\y}{\distanceFactor*5};
 \foreach \y in {0,...,5} 
	 \foreach \x in {5,...,0}
		 \mycubeNoBorder{zeroColour}{\opacityZeroBlocksLayerFive}{\distanceFactor*\x}{\distanceFactor*\y}{\distanceFactor*4};
 \foreach \y in {0,...,5} 
	 \foreach \x in {5,...,0}
		 \mycubeNoBorder{zeroColour}{\opacityZeroBlocksLayerFour}{\distanceFactor*\x}{\distanceFactor*\y}{\distanceFactor*3};
 \foreach \y in {0,...,5} 
	 \foreach \x in {5,...,0}
		 \mycubeNoBorder{zeroColour}{\opacityZeroBlocksLayerThree}{\distanceFactor*\x}{\distanceFactor*\y}{\distanceFactor*2};
 \foreach \y in {0,...,5} 
	 \foreach \x in {5,...,0}
		 \mycubeNoBorder{zeroColour}{\opacityZeroBlocksLayerTwo}{\distanceFactor*\x}{\distanceFactor*\y}{\distanceFactor*1};
 \foreach \y in {0,...,5} 
	 \foreach \x in {5,...,0}
		 \mycubeNoBorder{zeroColour}{\opacityZeroBlocksLayerOne}{\distanceFactor*\x}{\distanceFactor*\y}{\distanceFactor*0};

\mycube{minusOneColour}{\opacityColouredBlocksLayerSix}{\distanceFactor*3}{\distanceFactor*0}{\distanceFactor*5}{.7};
\mycube{plusOneColour}{\opacityColouredBlocksLayerSix}{\distanceFactor*1}{\distanceFactor*0}{\distanceFactor*5}{.7};
\mycube{minusOneColour}{\opacityColouredBlocksLayerSix}{\distanceFactor*1}{\distanceFactor*2}{\distanceFactor*5}{.7};
\mycube{plusOneColour}{\opacityColouredBlocksLayerSix}{\distanceFactor*0}{\distanceFactor*2}{\distanceFactor*5}{.7};
\mycube{plusOneColour}{\opacityColouredBlocksLayerSix}{\distanceFactor*3}{\distanceFactor*4}{\distanceFactor*5}{.7};
\mycube{minusOneColour}{\opacityColouredBlocksLayerSix}{\distanceFactor*0}{\distanceFactor*4}{\distanceFactor*5}{.7};
\mycube{plusOneColour}{\opacityColouredBlocksLayerThree}{\distanceFactor*3}{\distanceFactor*0}{\distanceFactor*2}{.7};
\mycube{minusOneColour}{\opacityColouredBlocksLayerThree}{\distanceFactor*3}{\distanceFactor*1}{\distanceFactor*2}{.7};
\mycube{minusOneColour}{\opacityColouredBlocksLayerThree}{\distanceFactor*1}{\distanceFactor*4}{\distanceFactor*2}{.7};
\mycube{plusOneColour}{\opacityColouredBlocksLayerThree}{\distanceFactor*0}{\distanceFactor*4}{\distanceFactor*2}{.7};
\mycube{plusOneColour}{\opacityColouredBlocksLayerOne}{\distanceFactor*3}{\distanceFactor*1}{\distanceFactor*0}{.7};
\mycube{minusOneColour}{\opacityColouredBlocksLayerOne}{\distanceFactor*3}{\distanceFactor*4}{\distanceFactor*0}{.7};
\mycube{plusOneColour}{\opacityColouredBlocksLayerOne}{\distanceFactor*1}{\distanceFactor*4}{\distanceFactor*0}{.7};

\end{scope}
\end{tikzpicture}
}
\newcommand\SCrystal[2]{
\begin{tikzpicture}[baseline=#1 ex, local bounding box=C,scale=#2]
\begin{scope}[3d view={110}{15}]

\foreach \x in {0,...,2}
    \foreach \y in {0,...,2}
        \mysquareXY{zeroColour}{.6}{\distanceFactor*\x}{\distanceFactor*\y}{\distanceShadowXY};
        
\mysquareXY{eColour}{1}{\distanceFactor*0}{\distanceFactor*1}{\distanceShadowXY};
\mysquareXY{cColour}{1}{\distanceFactor*0}{\distanceFactor*2}{\distanceShadowXY};
\mysquareXY{fColour}{1}{\distanceFactor*1}{\distanceFactor*0}{\distanceShadowXY};
\mysquareXY{aColour}{1}{\distanceFactor*1}{\distanceFactor*2}{\distanceShadowXY};
\mysquareXY{dColour}{1}{\distanceFactor*2}{\distanceFactor*0}{\distanceShadowXY};
\mysquareXY{bColour}{1}{\distanceFactor*2}{\distanceFactor*1}{\distanceShadowXY};

\foreach \x in {0,...,2}
    \foreach \z in {0,...,2}
        \mysquareXZ{zeroColour}{.6}{\distanceFactor*\x}{\distanceShadowXZ}{\distanceFactor*\z};
        
\mysquareXZ{cColour}{1}{\distanceFactor*0}{\distanceShadowXZ}{\distanceFactor*1};
\mysquareXZ{eColour}{1}{\distanceFactor*0}{\distanceShadowXZ}{\distanceFactor*2};
\mysquareXZ{aColour}{1}{\distanceFactor*1}{\distanceShadowXZ}{\distanceFactor*0};
\mysquareXZ{fColour}{1}{\distanceFactor*1}{\distanceShadowXZ}{\distanceFactor*2};
\mysquareXZ{bColour}{1}{\distanceFactor*2}{\distanceShadowXZ}{\distanceFactor*0};
\mysquareXZ{dColour}{1}{\distanceFactor*2}{\distanceShadowXZ}{\distanceFactor*1};

\foreach \y in {0,...,2}
    \foreach \z in {0,...,2}
        \mysquareYZ{zeroColour}{.6}{5}{\distanceFactor*\y}{\distanceFactor*\z};
        
\mysquareYZ{dColour}{1}{5}{\distanceFactor*0}{\distanceFactor*1};
\mysquareYZ{fColour}{1}{5}{\distanceFactor*0}{\distanceFactor*2};
\mysquareYZ{bColour}{1}{5}{\distanceFactor*1}{\distanceFactor*0};
\mysquareYZ{eColour}{1}{5}{\distanceFactor*1}{\distanceFactor*2};
\mysquareYZ{aColour}{1}{5}{\distanceFactor*2}{\distanceFactor*0};
\mysquareYZ{cColour}{1}{5}{\distanceFactor*2}{\distanceFactor*1};

\mycube{eColour}{\opacityColouredBlocksLayerThree}{\distanceFactor*0}{\distanceFactor*1}{\distanceFactor*2}{.7};
\mycube{fColour}{\opacityColouredBlocksLayerThree}{\distanceFactor*1}{\distanceFactor*0}{\distanceFactor*2}{.7};
 \foreach \y in {0,...,2} 
	 \foreach \x in {2,...,0}
		 \mycubeNoBorder{zeroColour}{.3}{\distanceFactor*\x}{\distanceFactor*\y}{\distanceFactor*2};
 \mycube{cColour}{\opacityColouredBlocksLayerTwo}{\distanceFactor*0}{\distanceFactor*2}{\distanceFactor*1}{.7};
\mycube{dColour}{\opacityColouredBlocksLayerTwo}{\distanceFactor*2}{\distanceFactor*0}{\distanceFactor*1}{.7};
 \foreach \y in {0,...,2} 
	 \foreach \x in {2,...,0}
		 \mycubeNoBorder{zeroColour}{.4}{\distanceFactor*\x}{\distanceFactor*\y}{\distanceFactor*1};
 \foreach \y in {0,...,2} 
	 \foreach \x in {2,...,0}
		 \mycubeNoBorder{zeroColour}{.7}{\distanceFactor*\x}{\distanceFactor*\y}{\distanceFactor*0};

\mycube{aColour}{\opacityColouredBlocksLayerOne}{\distanceFactor*1}{\distanceFactor*2}{\distanceFactor*0}{.7};
\mycube{bColour}{\opacityColouredBlocksLayerOne}{\distanceFactor*2}{\distanceFactor*1}{\distanceFactor*0}{.7};

\end{scope}
\end{tikzpicture}
}
\newcommand\VCrystal[2]{
\begin{tikzpicture}[baseline=#1 ex, local bounding box=C,scale=#2]
\begin{scope}[3d view={110}{15}]

\foreach \x in {0,...,2}
    \foreach \y in {0,...,2}
        \mysquareXY{zeroColour}{.6}{\distanceFactor*\x}{\distanceFactor*\y}{\distanceShadowXY};
        
\mysquareXY{plusOneColour}{1}{\distanceFactor*0}{\distanceFactor*2}{\distanceShadowXY};
\mysquareXY{plusOneColour}{1}{\distanceFactor*1}{\distanceFactor*0}{\distanceShadowXY};
\mysquareXY{minusOneColour}{1}{\distanceFactor*1}{\distanceFactor*2}{\distanceShadowXY};

\foreach \x in {0,...,2}
    \foreach \z in {0,...,2}
        \mysquareXZ{zeroColour}{.6}{\distanceFactor*\x}{\distanceShadowXZ}{\distanceFactor*\z};
        
\mysquareXZ{plusOneColour}{1}{\distanceFactor*0}{\distanceShadowXZ}{\distanceFactor*2};
\mysquareXZ{plusOneColour}{1}{\distanceFactor*1}{\distanceShadowXZ}{\distanceFactor*0};
\mysquareXZ{minusOneColour}{1}{\distanceFactor*1}{\distanceShadowXZ}{\distanceFactor*2};

\foreach \y in {0,...,2}
    \foreach \z in {0,...,2}
        \mysquareYZ{zeroColour}{.6}{5}{\distanceFactor*\y}{\distanceFactor*\z};
        
\mysquareYZ{plusOneColour}{1}{5}{\distanceFactor*0}{\distanceFactor*2};
\mysquareYZ{plusOneColour}{1}{5}{\distanceFactor*1}{\distanceFactor*0};
\mysquareYZ{minusOneColour}{1}{5}{\distanceFactor*1}{\distanceFactor*2};

\mycube{plusOneColour}{\opacityColouredBlocksLayerThree}{\distanceFactor*0}{\distanceFactor*0}{\distanceFactor*2}{.7};
\mycube{minusOneColour}{\opacityColouredBlocksLayerThree}{\distanceFactor*1}{\distanceFactor*0}{\distanceFactor*2}{.7};
\mycube{plusOneColour}{\opacityColouredBlocksLayerThree}{\distanceFactor*2}{\distanceFactor*0}{\distanceFactor*2}{.7};
\mycube{minusOneColour}{\opacityColouredBlocksLayerThree}{\distanceFactor*2}{\distanceFactor*1}{\distanceFactor*2}{.7};

\foreach \y in {0,...,2} 
	 \foreach \x in {2,...,0}
		 \mycubeNoBorder{zeroColour}{.3}{\distanceFactor*\x}{\distanceFactor*\y}{\distanceFactor*2};
\foreach \y in {0,...,2} 
	 \foreach \x in {2,...,0}
		 \mycubeNoBorder{zeroColour}{.4}{\distanceFactor*\x}{\distanceFactor*\y}{\distanceFactor*1};
\foreach \y in {0,...,2} 
	 \foreach \x in {2,...,0}
		 \mycubeNoBorder{zeroColour}{.7}{\distanceFactor*\x}{\distanceFactor*\y}{\distanceFactor*0};

\mycube{minusOneColour}{\opacityColouredBlocksLayerOne}{\distanceFactor*0}{\distanceFactor*0}{\distanceFactor*0}{.7};
\mycube{plusTwoColour}{\opacityColouredBlocksLayerOne}{\distanceFactor*1}{\distanceFactor*0}{\distanceFactor*0}{.7};
\mycube{minusOneColour}{\opacityColouredBlocksLayerOne}{\distanceFactor*2}{\distanceFactor*0}{\distanceFactor*0}{.7};
\mycube{plusOneColour}{\opacityColouredBlocksLayerOne}{\distanceFactor*2}{\distanceFactor*1}{\distanceFactor*0}{.7};
\mycube{plusOneColour}{\opacityColouredBlocksLayerOne}{\distanceFactor*0}{\distanceFactor*2}{\distanceFactor*0}{.7};
\mycube{minusOneColour}{\opacityColouredBlocksLayerOne}{\distanceFactor*1}{\distanceFactor*2}{\distanceFactor*0}{.7};

\end{scope}
\end{tikzpicture}
}
\newcommand\SProjectionXY[2]{
\begin{tikzpicture}[baseline=#1 ex, local bounding box=C,scale=#2]
\draw[fill=zeroColour,scale=.7] (0,0) rectangle ++(1,1);
\draw[fill=eColour,scale=.7] (1,0) rectangle ++(1,1);
\draw[fill=cColour,scale=.7] (2,0) rectangle ++(1,1);
\draw[fill=fColour,scale=.7] (0,-1) rectangle ++(1,1);
\draw[fill=zeroColour,scale=.7] (1,-1) rectangle ++(1,1);
\draw[fill=aColour,scale=.7] (2,-1) rectangle ++(1,1);
\draw[fill=dColour,scale=.7] (0,-2) rectangle ++(1,1);
\draw[fill=bColour,scale=.7] (1,-2) rectangle ++(1,1);
\draw[fill=zeroColour,scale=.7] (2,-2) rectangle ++(1,1);
\end{tikzpicture}
}
\newcommand\SProjectionXZ[2]{
\begin{tikzpicture}[baseline=#1 ex, local bounding box=C,scale=#2]
\draw[fill=zeroColour,scale=.7] (0,0) rectangle ++(1,1);
\draw[fill=cColour,scale=.7] (1,0) rectangle ++(1,1);
\draw[fill=eColour,scale=.7] (2,0) rectangle ++(1,1);
\draw[fill=aColour,scale=.7] (0,-1) rectangle ++(1,1);
\draw[fill=zeroColour,scale=.7] (1,-1) rectangle ++(1,1);
\draw[fill=fColour,scale=.7] (2,-1) rectangle ++(1,1);
\draw[fill=bColour,scale=.7] (0,-2) rectangle ++(1,1);
\draw[fill=dColour,scale=.7] (1,-2) rectangle ++(1,1);
\draw[fill=zeroColour,scale=.7] (2,-2) rectangle ++(1,1);
\end{tikzpicture}
}
\newcommand\SProjectionYZ[2]{
\begin{tikzpicture}[baseline=#1 ex, local bounding box=C,scale=#2]
\draw[fill=zeroColour,scale=.7] (0,0) rectangle ++(1,1);
\draw[fill=dColour,scale=.7] (1,0) rectangle ++(1,1);
\draw[fill=fColour,scale=.7] (2,0) rectangle ++(1,1);
\draw[fill=bColour,scale=.7] (0,-1) rectangle ++(1,1);
\draw[fill=zeroColour,scale=.7] (1,-1) rectangle ++(1,1);
\draw[fill=eColour,scale=.7] (2,-1) rectangle ++(1,1);
\draw[fill=aColour,scale=.7] (0,-2) rectangle ++(1,1);
\draw[fill=cColour,scale=.7] (1,-2) rectangle ++(1,1);
\draw[fill=zeroColour,scale=.7] (2,-2) rectangle ++(1,1);
\end{tikzpicture}
}
\newcommand\lineTensorA[2]{
\begin{tikzpicture}[baseline=#1 ex, local bounding box=C,scale=#2]
\draw[fill=plusOneColour,scale=.7] (0,2) rectangle ++(1,1);
\draw[fill=minusOneColour,scale=.7] (0,1) rectangle ++(1,1);
\draw[fill=zeroColour,scale=.7] (0,0) rectangle ++(1,1);
\end{tikzpicture}
}
\newcommand\cubeTensorA[2]{
\begin{tikzpicture}[baseline=#1 ex, local bounding box=C,scale=#2]
\begin{scope}[3d view={110}{15}]

\mycube{minusOneColour}{\opacityColouredBlocksLayerTwo}{\distanceFactor*2}{\distanceFactor*2}{\distanceFactor*1}{.7};

\foreach \y in {0,...,2} 
	 \foreach \x in {2,...,0}
		 \mycubeNoBorder{zeroColour}{.3}{\distanceFactor*\x}{\distanceFactor*\y}{\distanceFactor*2};
\foreach \y in {0,...,2} 
	 \foreach \x in {2,...,0}
		 \mycubeNoBorder{zeroColour}{.4}{\distanceFactor*\x}{\distanceFactor*\y}{\distanceFactor*1};
\foreach \y in {0,...,2} 
	 \foreach \x in {2,...,0}
		 \mycubeNoBorder{zeroColour}{.7}{\distanceFactor*\x}{\distanceFactor*\y}{\distanceFactor*0};

\mycube{plusOneColour}{\opacityColouredBlocksLayerOne}{\distanceFactor*0}{\distanceFactor*0}{\distanceFactor*0}{.7};
\mycube{minusOneColour}{\opacityColouredBlocksLayerOne}{\distanceFactor*1}{\distanceFactor*0}{\distanceFactor*0}{.7};
\mycube{plusOneColour}{\opacityColouredBlocksLayerOne}{\distanceFactor*2}{\distanceFactor*0}{\distanceFactor*0}{.7};
\mycube{minusOneColour}{\opacityColouredBlocksLayerOne}{\distanceFactor*2}{\distanceFactor*1}{\distanceFactor*0}{.7};
\mycube{plusOneColour}{\opacityColouredBlocksLayerOne}{\distanceFactor*2}{\distanceFactor*2}{\distanceFactor*0}{.7};
\end{scope}
\end{tikzpicture}
}
\newcommand\rectangularTensorA[2]{
\begin{tikzpicture}[baseline=#1 ex, local bounding box=C,scale=#2]
\draw[fill=zeroColour,scale=.7] (0,2) rectangle ++(1,1);
\draw[fill=zeroColour,scale=.7] (1,2) rectangle ++(1,1);
\draw[fill=plusOneColour,scale=.7] (0,1) rectangle ++(1,1);
\draw[fill=zeroColour,scale=.7] (1,1) rectangle ++(1,1);
\draw[fill=zeroColour,scale=.7] (0,0) rectangle ++(1,1);
\draw[fill=zeroColour,scale=.7] (1,0) rectangle ++(1,1);
\end{tikzpicture}
}
\newcommand\squareTensorB[2]{
\begin{tikzpicture}[baseline=#1 ex, local bounding box=C,scale=#2]
\draw[fill=minusOneColour,scale=.7] (0,2) rectangle ++(1,1);
\draw[fill=zeroColour,scale=.7] (1,2) rectangle ++(1,1);
\draw[fill=plusOneColour,scale=.7] (2,2) rectangle ++(1,1);
\draw[fill=plusTwoColour,scale=.7] (0,1) rectangle ++(1,1);
\draw[fill=zeroColour,scale=.7] (1,1) rectangle ++(1,1);
\draw[fill=minusOneColour,scale=.7] (2,1) rectangle ++(1,1);
\draw[fill=minusOneColour,scale=.7] (0,0) rectangle ++(1,1);
\draw[fill=plusOneColour,scale=.7] (1,0) rectangle ++(1,1);
\draw[fill=zeroColour,scale=.7] (2,0) rectangle ++(1,1);
\end{tikzpicture}
}
\newcommand\squareTensorBNewProjection[2]{
\begin{tikzpicture}[baseline=#1 ex, local bounding box=C,scale=#2]
\draw[fill=plusOneColour,scale=.7] (0,2) rectangle ++(1,1);
\draw[fill=zeroColour,scale=.7] (1,2) rectangle ++(1,1);
\draw[fill=zeroColour,scale=.7] (2,2) rectangle ++(1,1);
\draw[fill=minusOneColour,scale=.7] (0,1) rectangle ++(1,1);
\draw[fill=zeroColour,scale=.7] (1,1) rectangle ++(1,1);
\draw[fill=zeroColour,scale=.7] (2,1) rectangle ++(1,1);
\draw[fill=plusOneColour,scale=.7] (0,0) rectangle ++(1,1);
\draw[fill=minusOneColour,scale=.7] (1,0) rectangle ++(1,1);
\draw[fill=zeroColour,scale=.7] (2,0) rectangle ++(1,1);
\end{tikzpicture}
}
\newcommand\cubeTensorB[2]{
\begin{tikzpicture}[baseline=#1 ex, local bounding box=C,scale=#2]
\begin{scope}[3d view={110}{15}]

\mycube{plusOneColour}{\opacityColouredBlocksLayerThree}{\distanceFactor*0}{\distanceFactor*0}{\distanceFactor*2}{.7};
\mycube{minusOneColour}{\opacityColouredBlocksLayerThree}{\distanceFactor*1}{\distanceFactor*0}{\distanceFactor*2}{.7};
\mycube{plusOneColour}{\opacityColouredBlocksLayerThree}{\distanceFactor*2}{\distanceFactor*0}{\distanceFactor*2}{.7};
\mycube{minusOneColour}{\opacityColouredBlocksLayerThree}{\distanceFactor*2}{\distanceFactor*1}{\distanceFactor*2}{.7};

\foreach \y in {0,...,2} 
	 \foreach \x in {2,...,0}
		 \mycubeNoBorder{zeroColour}{.3}{\distanceFactor*\x}{\distanceFactor*\y}{\distanceFactor*2};

\mycube{plusOneColour}{\opacityColouredBlocksLayerTwo}{\distanceFactor*2}{\distanceFactor*2}{\distanceFactor*1}{.7};
   
\foreach \y in {0,...,2} 
	 \foreach \x in {2,...,0}
		 \mycubeNoBorder{zeroColour}{.4}{\distanceFactor*\x}{\distanceFactor*\y}{\distanceFactor*1};
\foreach \y in {0,...,2} 
	 \foreach \x in {2,...,0}
		 \mycubeNoBorder{zeroColour}{.7}{\distanceFactor*\x}{\distanceFactor*\y}{\distanceFactor*0};

\mycube{minusTwoColour}{\opacityColouredBlocksLayerOne}{\distanceFactor*0}{\distanceFactor*0}{\distanceFactor*0}{.7};
\mycube{plusOneColour}{\opacityColouredBlocksLayerOne}{\distanceFactor*0}{\distanceFactor*2}{\distanceFactor*0}{.7};
\mycube{plusThreeColour}{\opacityColouredBlocksLayerOne}{\distanceFactor*1}{\distanceFactor*0}{\distanceFactor*0}{.7};
\mycube{minusOneColour}{\opacityColouredBlocksLayerOne}{\distanceFactor*1}{\distanceFactor*2}{\distanceFactor*0}{.7};
\mycube{minusTwoColour}{\opacityColouredBlocksLayerOne}{\distanceFactor*2}{\distanceFactor*0}{\distanceFactor*0}{.7};
\mycube{plusTwoColour}{\opacityColouredBlocksLayerOne}{\distanceFactor*2}{\distanceFactor*1}{\distanceFactor*0}{.7};
\mycube{minusOneColour}{\opacityColouredBlocksLayerOne}{\distanceFactor*2}{\distanceFactor*2}{\distanceFactor*0}{.7};

\end{scope}
\end{tikzpicture}
}
\newcommand\fourDimCrystal[2]{
\begin{tikzpicture}[baseline=#1 ex, local bounding box=C,scale=#2]
\begin{scope}[shift={(5,5,5)},3d view={110}{15},scale=.5]
\mycube{minusOneColour}{\opacityColouredBlocksLayerTwo}{\distanceFactor*2}{\distanceFactor*2}{\distanceFactor*1}{.7};

\foreach \y in {0,...,2} 
	 \foreach \x in {2,...,0}
		 \mycubeNoBorder{zeroColour}{.3}{\distanceFactor*\x}{\distanceFactor*\y}{\distanceFactor*2};
\foreach \y in {0,...,2} 
	 \foreach \x in {2,...,0}
		 \mycubeNoBorder{zeroColour}{.4}{\distanceFactor*\x}{\distanceFactor*\y}{\distanceFactor*1};
\foreach \y in {0,...,2} 
	 \foreach \x in {2,...,0}
		 \mycubeNoBorder{zeroColour}{.7}{\distanceFactor*\x}{\distanceFactor*\y}{\distanceFactor*0};

\mycube{plusOneColour}{\opacityColouredBlocksLayerOne}{\distanceFactor*0}{\distanceFactor*0}{\distanceFactor*0}{.7};
\mycube{minusOneColour}{\opacityColouredBlocksLayerOne}{\distanceFactor*1}{\distanceFactor*0}{\distanceFactor*0}{.7};
\mycube{plusOneColour}{\opacityColouredBlocksLayerOne}{\distanceFactor*2}{\distanceFactor*0}{\distanceFactor*0}{.7};
\mycube{minusOneColour}{\opacityColouredBlocksLayerOne}{\distanceFactor*2}{\distanceFactor*1}{\distanceFactor*0}{.7};
\mycube{plusOneColour}{\opacityColouredBlocksLayerOne}{\distanceFactor*2}{\distanceFactor*2}{\distanceFactor*0}{.7};
\end{scope}

\begin{scope}[shift={(3,3,3)},3d view={110}{15},scale=.75]
\foreach \y in {0,...,2} 
	 \foreach \x in {2,...,0}
		 \mycubeNoBorder{zeroColour}{.3}{\distanceFactor*\x}{\distanceFactor*\y}{\distanceFactor*2};
\foreach \y in {0,...,2} 
	 \foreach \x in {2,...,0}
		 \mycubeNoBorder{zeroColour}{.4}{\distanceFactor*\x}{\distanceFactor*\y}{\distanceFactor*1};
\foreach \y in {0,...,2} 
	 \foreach \x in {2,...,0}
		 \mycubeNoBorder{zeroColour}{.7}{\distanceFactor*\x}{\distanceFactor*\y}{\distanceFactor*0};
\end{scope}

\begin{scope}[3d view={110}{15},scale=1]
\mycube{plusOneColour}{\opacityColouredBlocksLayerThree}{\distanceFactor*0}{\distanceFactor*0}{\distanceFactor*2}{.7};
\mycube{minusOneColour}{\opacityColouredBlocksLayerThree}{\distanceFactor*1}{\distanceFactor*0}{\distanceFactor*2}{.7};
\mycube{plusOneColour}{\opacityColouredBlocksLayerThree}{\distanceFactor*2}{\distanceFactor*0}{\distanceFactor*2}{.7};
\mycube{minusOneColour}{\opacityColouredBlocksLayerThree}{\distanceFactor*2}{\distanceFactor*1}{\distanceFactor*2}{.7};

\foreach \y in {0,...,2} 
	 \foreach \x in {2,...,0}
		 \mycubeNoBorder{zeroColour}{.3}{\distanceFactor*\x}{\distanceFactor*\y}{\distanceFactor*2};

\mycube{plusOneColour}{\opacityColouredBlocksLayerTwo}{\distanceFactor*2}{\distanceFactor*2}{\distanceFactor*1}{.7};
   
\foreach \y in {0,...,2} 
	 \foreach \x in {2,...,0}
		 \mycubeNoBorder{zeroColour}{.4}{\distanceFactor*\x}{\distanceFactor*\y}{\distanceFactor*1};
\foreach \y in {0,...,2} 
	 \foreach \x in {2,...,0}
		 \mycubeNoBorder{zeroColour}{.7}{\distanceFactor*\x}{\distanceFactor*\y}{\distanceFactor*0};

\mycube{minusTwoColour}{\opacityColouredBlocksLayerOne}{\distanceFactor*0}{\distanceFactor*0}{\distanceFactor*0}{.7};
\mycube{plusOneColour}{\opacityColouredBlocksLayerOne}{\distanceFactor*0}{\distanceFactor*2}{\distanceFactor*0}{.7};
\mycube{plusThreeColour}{\opacityColouredBlocksLayerOne}{\distanceFactor*1}{\distanceFactor*0}{\distanceFactor*0}{.7};
\mycube{minusOneColour}{\opacityColouredBlocksLayerOne}{\distanceFactor*1}{\distanceFactor*2}{\distanceFactor*0}{.7};
\mycube{minusTwoColour}{\opacityColouredBlocksLayerOne}{\distanceFactor*2}{\distanceFactor*0}{\distanceFactor*0}{.7};
\mycube{plusTwoColour}{\opacityColouredBlocksLayerOne}{\distanceFactor*2}{\distanceFactor*1}{\distanceFactor*0}{.7};
\mycube{minusOneColour}{\opacityColouredBlocksLayerOne}{\distanceFactor*2}{\distanceFactor*2}{\distanceFactor*0}{.7};
\end{scope}

\end{tikzpicture}
}
\renewenvironment{shaded}{%
  \MakeFramed{\advance\hsize-\width \FrameRestore\FrameRestore}}%
 {\endMakeFramed}
\definecolor{shadecolor}{gray}{0.88}
\newcommand{\YES}{\textsc{Yes}}
\newcommand{\NO}{\textsc{No}}
\newcommand{\Test}[2]{
\def\temp{#2}\ifx\temp\empty
  \operatorname{Test}_{#1}
\else
  \operatorname{Test}_{#1}^{#2}
\fi
}
\newcommand{\ignore}[1]{}
\newcommand{\A}{\mathbf{A}}
\newcommand{\B}{\mathbf{B}}
\newcommand{\GG}{\mathbf{G}}
\newcommand{\HH}{\mathbf{H}}
\newcommand{\K}{\mathbf{K}}
\newcommand{\X}{\mathbf{X}}
\newcommand{\SSS}{\mathbf{S}}
\newcommand{\Vset}{V}
\newcommand{\Eset}{E}
\newcommand{\N}{\mathbb{N}}
\newcommand{\Q}{\mathbb{Q}}
\newcommand{\Z}{\mathbb{Z}}
\newcommand{\cT}{\mathcal{T}}
\newcommand{\freeM}{\mathbb{F}_{\Mminion}}
\newcommand{\freeQ}{\mathbb{F}_{\Qconv}}
\newcommand{\freeZ}{\mathbb{F}_{\Zaff}}
\newcommand{\freeBA}{\mathbb{F}_{\BAminion}}
\newcommand{\Xk}{\X^\tensor{k}}
\newcommand{\Ak}{\A^\tensor{k}}
\newcommand{\Bk}{\B^\tensor{k}}
\renewcommand{\vec}[1]{\mathbf{#1}}
\newcommand{\ba}{\vec{a}}
\newcommand{\bb}{\vec{b}}
\newcommand{\bc}{\vec{c}}
\newcommand{\bd}{\vec{d}}
\newcommand{\bh}{\vec{h}}
\newcommand{\bi}{\vec{i}}
\newcommand{\bj}{\vec{j}}
\newcommand{\bbm}{\vec{m}}
\newcommand{\bn}{\vec{n}}
\newcommand{\bq}{\vec{q}}
\newcommand{\br}{\vec{r}}
\newcommand{\bs}{\vec{s}}
\newcommand{\bu}{\vec{u}}
\newcommand{\bx}{\vec{x}}
\newcommand{\bv}{\vec{v}}
\newcommand{\by}{\vec{y}}
\newcommand{\bw}{\vec{w}}
\newcommand{\be}{\vec{e}}
\newcommand{\bell}{{\ensuremath{\boldsymbol\ell}}}
\newcommand{\bk}{\vec{k}}
\newcommand{\bz}{\vec{z}}
\newcommand{\bepsilon}{{\bm{\epsilon}}}
\newcommand{\balpha}{{\bm{\alpha}}}
\newcommand{\bbeta}{{\bm{\beta}}}
\DeclareMathOperator{\tr}{tr}
\DeclareMathOperator{\Bmat}{B}
\DeclareMathOperator{\Amat}{A}
\DeclareMathOperator{\BLP}{BLP}
\DeclareMathOperator{\AIP}{AIP}
\DeclareMathOperator{\IP}{IP}
\DeclareMathOperator{\BA}{BA}
\DeclareMathOperator{\SDP}{SDP}
\DeclareMathOperator{\PCSP}{PCSP}
\DeclareMathOperator{\CSP}{CSP}
\DeclareMathOperator{\PCSPs}{PCSPs}
\DeclareMathOperator{\CSPs}{CSPs}
\DeclareMathOperator{\parCSPs}{(P)CSPs}
\DeclareMathOperator{\supp}{supp}
\DeclareMathOperator{\set}{set}
\DeclarePairedDelimiter{\floor}{\lfloor}{\rfloor}
\newcommand{\Mminion}{\ensuremath{{\mathscr{M}}}}
\newcommand{\Qconv}{\ensuremath{{\mathscr{Q}_{\operatorname{conv}}}}}
\newcommand{\Zaff}{\ensuremath{{\mathscr{Z}_{\operatorname{aff}}}}}
\newcommand{\BAminion}
{\ensuremath{{\mathscr{M}_{\operatorname{\BA}}}}}
\newcommand{\bone}{\mathbf{1}}  
\newcommand{\bzero}{\mathbf{0}} 
\newcommand{\tensor}[1]{\textsuperscript{\raisebox{-.5pt}{\normalfont\textcircled{\raisebox{-.1pt}{\tiny #1}}}}}
\newcommand\ang[1]{{\ensuremath\langle #1\rangle}}
\newcommand\cont[1]{\overset{\tiny#1}{\ast}}
\theoremstyle{plain}
\newtheorem{thm}{Theorem}
\newtheorem*{thm*}{Theorem}
\newtheorem{lem}[thm]{Lemma}
\newtheorem*{lem*}{Lemma}
\newtheorem{prop}[thm]{Proposition}
\newtheorem*{prop*}{Proposition}
\newtheorem{cor}[thm]{Corollary}
\newtheorem*{cor*}{Corollary}
\theoremstyle{definition}
\newtheorem{defn}[thm]{Definition}
\newtheorem*{defn*}{Definition}
\newtheorem{rem}[thm]{Remark}
\newtheorem{example}[thm]{Example}
\newcommand{\lemeq}[1]
{
\ensuremath{\stackrel{\operatorname{L}.\ref{#1}}{\;\;=\;\;}}
}
\newcommand{\propparteq}[2]
{
\ensuremath{\stackrel{\operatorname{P}.\ref{#1}\eqref{#2}}{\;\;=\;\;}}
}
\newcommand{\equationeq}[1]
{
\ensuremath{\stackrel{\eqref{#1}}{\;\;=\;\;}}
}
\newcommand{\spaceeq}
{
\ensuremath{\stackrel{}{\;\;=\;\;}}
}
\newcommand{\spaceneq}
{
\ensuremath{\stackrel{}{\;\;\neq\;\;}}
}
\begin{document}

\title{Approximate Graph Colouring\\ and the Crystal with a Hollow Shadow\thanks{Two extended abstracts of different parts of this work appeared in the Proceedings of the 2023 ACM-SIAM Symposium on Discrete Algorithms (SODA'23)~\cite{cz23soda:aip} and in the Proceedings of the 2023 ACM Symposium on Theory of Computing (STOC'23)~\cite{cz23stoc:ba}, respectively.
This research was funded in whole by UKRI EP/X024431/1. For the purpose of Open Access, the authors have applied a CC BY public copyright licence to any Author Accepted Manuscript version arising from this submission. All data is provided in full in the results section of this paper.}}

\author{Lorenzo Ciardo\\
University of Oxford\\
\texttt{lorenzo.ciardo@cs.ox.ac.uk} 
\and 
Stanislav {\v{Z}}ivn{\'y}\\
University of Oxford\\
\texttt{standa.zivny@cs.ox.ac.uk}
}

\date{\today}
\maketitle

\begin{abstract}
\noindent We show that approximate graph colouring is not solved by
the lift-and-project hierarchy for the combination of linear programming and  linear Diophantine equations.
 The proof is based on combinatorial tensor theory.
\end{abstract}

\section{Introduction}
\label{sec:intro}

\noindent The \emph{approximate graph colouring} problem (AGC) consists in finding a $d$-colouring of a given
$c$-colourable graph, where $3\leq c\leq d$. There is a huge gap in our
understanding of this problem.
For an $n$-vertex graph and $c=3$, the best known polynomial-time
algorithm of Kawarabayashi, Thorup, and Yoneda~\cite{KawarabayashiTY24} finds a $d$-colouring with $d=\tilde O(n^{0.19747})$, building on a long line of works started by Wigderson~\cite{Wigderson83:jacm}.
It was conjectured by Garey and Johnson~\cite{GJ76} that the problem is NP-hard for any fixed constants
$3\leq c\leq d$ even in the decision variant: Given a graph, output $\YES$ if
it is $c$-colourable and output $\NO$ if it is not $d$-colourable.

For $c=d$, the problem becomes the classic $c$-colouring problem, which appeared
on Karp's original list of $21$ NP-complete problems~\cite{Karp72}. The case
$c=3$, $d=4$ was only proved to be NP-hard in 2000 by Khanna, Linial, and Safra~\cite{KhannaLS00}
(and a simpler proof was given 
by Guruswami and Khanna
in~\cite{GK04}); more generally, \cite{KhannaLS00} showed hardness of the case
$d=c+2\floor{c/3}-1$. This was improved to $d=2c-2$ in 2016 by Brakensiek and Guruswami~\cite{BrakensiekG16}, and recently to $d=2c-1$ by Barto, Bul\'in, Krokhin, and Opr\v{s}al~\cite{BBKO21}. In particular, this last result implies
hardness of the case $c=3$, $d=5$; the complexity of the case $c=3$, $d=6$ is
still open. Building on the work of Khot~\cite{Khot01} and Huang~\cite{Huang13},
Krokhin, Opr\v{s}al, Wrochna, and \v{Z}ivn\'{y} established 
NP-hardness for  $d={c\choose\floor{c/2}}-1$
for $c\geq 4$ in~\cite{KOWZ22}.
NP-hardness of
AGC was established for all constants $3\leq c\leq d$ 
by Dinur, Mossel, and Regev
in~\cite{Dinur09:sicomp} under a non-standard variant of
the Unique Games Conjecture, 
by Guruswami and Sandeep
in~\cite{GS20:icalp}
under the $d$-to-1 conjecture~\cite{Khot02stoc} for any fixed $d$, and (an even stronger statement of distinguishing 3-colourability from not having an independent set of significant size)
by Braverman, Khot, Lifshitz, and Minzer
in~\cite{Braverman21:focs} under the rich $2$-to-$1$ conjecture of Braverman, Khot, and Minzer~\cite{Braverman21:itcs}. Conditional to suitable strengthened versions of the UGC, Dinur and Shinkar proved NP-hardness in a $4$ vs. superconstant regime in~\cite{dinur2010conditional}. 

AGC is a prominent example of so called \emph{Promise Constraint Satisfaction
Problems} ($\PCSPs$), which we define next.
A \emph{directed graph} (\emph{digraph}) $\A$ consists of a set $\Vset(\A)$ of elements
called \emph{vertices} and a set $\Eset(\A)\subseteq \Vset(\A)^{2}$ of pairs of vertices called \emph{edges}.
Given two digraphs $\A$ and $\B$, a map $f:\Vset(\A)\to\Vset(\B)$ is a
\emph{homomorphism} from $\A$ to $\B$ if $(f(u),f(v))\in \Eset(\B)$ for any
$(u,v)\in\Eset(\A)$. We shall indicate the existence of a homomorphism from $\A$ to $\B$ by writing $\A\to\B$.
Let $\A$ and $\B$ be two fixed finite digraphs with $\A\to\B$; we call the pair $(\A,\B)$ a \emph{template}.
The $\PCSP$ parameterised by the template $(\A,\B)$, denoted by $\PCSP(\A,\B)$,
is the following decision problem: Given a finite digraph $\X$ as input, answer
$\YES$ if $\X\to\A$ and $\NO$ if $\X\not\to\B$.\footnote{The requirement $\A\to\B$ implies that the two cases cannot happen simultaneously, as homomorphisms compose; the \emph{promise} is that one of the two cases always happens.}
A $p$-colouring of a digraph $\X$ is precisely a homomorphism from $\X$ to the \emph{clique} $\K_p$---i.e., the digraph on vertex set $\{1,\dots,p\}$ such that any pair of distinct vertices is a (directed) edge. Hence, AGC is $\PCSP(\K_c,\K_d)$.
It is customary to study $\parCSPs$ on more general objects than digraphs, known as \emph{relational structures}, which consist of a collection of  relations of arbitrary arities on a vertex set, cf.~\cite{BBKO21}.

By letting $\A=\B$ in the definition of a $\PCSP$, one obtains the standard (non-promise) \emph{Constraint Satisfaction Problem} ($\CSP$)~\cite{Feder98:monotone}.
$\PCSPs$ were introduced by Austrin, Guruswami, and H{\aa}stad~\cite{AGH17} and Brakensiek and Guruswami~\cite{BG21} 
as a general framework for studying approximability of perfectly satisfiable $\CSPs$ and have
emerged as a new exciting  direction in constraint satisfaction that requires different
techniques than $\CSPs$. Recent works on $\PCSPs$ include
those using analytical methods~\cite{Bhangale21:stoc,Braverman21:itcs,BGS23,Bhangale22:stoc}
and those building on algebraic methods~\cite{BG19,bgwz20,GS20:icalp,
AB21,Barto21:stacs,BWZ21,Butti21:mfcs,Barto22:soda, cz23sicomp:clap,NZ22}
developed in~\cite{BBKO21}. However, most basic questions are still wide open, including complexity classifications and applicability of different types of algorithms.

Two main algorithmic techniques have been utilised for solving CSPs and their variants: enforcing (some type of) \emph{local consistency}, and solving (generalisations of) \emph{linear equations}.
The first type of algorithms divides a given CSP into multiple small CSPs, each
of which requires meeting \emph{local} constraints on a portion of the instance
of bounded size, and then enforces \emph{consistency} between all solutions
(called partial homomorphisms); i.e., it requires that solutions should agree on the intersection of their domains. Instead, the second type of algorithms seeks a \emph{global} solution that satisfies 
a \emph{linearised} version of the constraints.
More precisely, it is always possible to formulate a CSP (and, in fact, any homomorphism problem) as a system of linear equations over $\{0,1\}$; then, the algorithms of the second type work by suitably modifying the system (in particular, extending the domain of its variables) in a way that it can be efficiently solved through variants of Gaussian elimination.

Remarkably, all algorithms hitherto proposed in the literature on (variants of) CSPs can be broadly classified as instances of one of the two aforementioned techniques, or a combination of both. 
A primary example of the first type is the \emph{bounded width} algorithm, which outputs $\YES$ if and only if a consistent collection of partial homomorphisms exists~\cite{Feder98:monotone}.
More powerful versions of the local consistency technique require that the
partial homomorphisms should be sampled according to a probability distribution
(which results in the \emph{Sherali--Adams LP} hierarchy~\cite{Sherali1990}), and
that the probabilities should be treated as vectors satisfying certain orthogonality requirements (which gives the \emph{sum-of-squares} or \emph{Lasserre SDP} hierarchy~\cite{shor1987class,parrilo2000structured,Lasserre02}).
As for the second type, the linear-system formulation of a CSP can be
efficiently solved in $\Z$ by computing the Hermite or the Smith canonical forms
of the corresponding coefficient matrix~\cite{MR874114}; this results in the
\emph{affine integer programming} ($\AIP$) relaxation (also known as the system
of \emph{linear Diophantine equations}), studied in the context of PCSPs in~\cite{BG21,BBKO21}. 

Neither of the two techniques, alone, is powerful enough to solve all tractable
CSPs, even in the non-promise variant and on Boolean domains.
In fact, the elusive interaction between consistency-checking methods and linear
equations for non-Boolean CSPs was the major obstacle to the proof of the Feder--Vardi dichotomy conjecture~\cite{Feder98:monotone}, 
finally settled independently by Bulatov~\cite{Bulatov17:focs} and by
Zhuk~\cite{Zhuk17_FOCS,Zhuk20:jacm}. 
Hence, efforts have been directed to \emph{blending the two techniques}, in order to design a stronger \emph{local-global} algorithm~\cite{BG19,bgwz20,BhangaleKM24,cnp24:stoc,CiardoZ24}. 
In~\cite{bgwz20}, Brakensiek, Guruswami, Wrochna, and \v{Z}ivn\'y proposed an algorithm that combines the first level of the Sherali--Adams LP hierarchy (known as the \emph{basic linear programming} ($\BLP$) relaxation) with the $\AIP$ relaxation. Remarkably, that algorithm, which we call $\BA$ in this paper, solves all tractable cases of Schaefer's dichotomy of Boolean CSPs~\cite{Schaefer78:stoc}, as proved in~\cite{bgwz20}.
While the $\BA$ algorithm admits a characterisation in terms of polymorphic identities and, thus, the class of (P)CSPs solved by it is well understood~\cite{bgwz20}, the power of the hierarchy\footnote{A hierarchy similar to the $\BA$ hierarchy from this paper was considered by Berkholz and Grohe~\cite{Berkholz17:soda} in the context of the graph isomorphism problem.} built on top of $\BA$ is still unknown, even for non-promise CSPs. Very recently, Lichter and Pago have constructed the first example of a tractable, finite-domain CSP that is not solved by any constant level of such hierarchy~\cite{lichter2024limitations}.

Since polynomial-time algorithms are not expected to solve NP-hard problems, a well-established line
of work has sought lower bounds on the efficacy of these 
algorithms; see~\cite{Arora06:toc,Braun15:stoc,Chan16:jacm-lp,Kothari22:sicomp,Ghosh18:toc}
for lower bounds on LPs arising from lift-and-project hierarchies such as that of Sherali--Adams,~\cite{Tulsiani09:stoc,Lee15:stoc,Chan15:jacm} for lower bounds on SDPs, and~\cite{Berkholz17:soda} for lower bounds on linear Diophantine equations. 
If, as conjectured by Garey and Johnson~\cite{GJ76}, AGC is NP-hard and P$\neq$NP, neither of the two algorithmic techniques discussed above (nor their blend) should be able to solve it. 
In a striking sequence of
works 
by Dinur, Khot, Kindler, Minzer, and Safra~\cite{Khot17:stoc-independent,Dinur18:stoc-non-optimality,Dinur18:stoc-towards,Khot18:focs-pseudorandom},
the 2-to-2 conjecture of Khot~\cite{Khot02stoc} (with imperfect completeness)
was resolved. As detailed in~\cite{Khot18:focs-pseudorandom}, this implies
(together with~\cite{GS20:icalp}) that AGC is not solved by the sum-of-squares hierarchy (and, as a consequence, by the weaker Sherali--Adams LP and bounded width hierarchies). That lower bound is obtained by transferring known sum-of-squares integrality gaps for linear equations mod $2$~\cite{Grigoriev01:tcs,Schoenebeck08} to AGC. Since linear equations are solved by $\AIP$, the reduction from~\cite{Khot17:stoc-independent,Dinur18:stoc-non-optimality,Dinur18:stoc-towards,Khot18:focs-pseudorandom} cannot be used to produce lower bounds against AIP-based algorithms.

\paragraph{Contributions}
We prove that AGC is not solved by the $\BA$ hierarchy. 
This substantially extends the state of the art on non-solvability of AGC. In particular, our result directly implies non-solvability of AGC by the $\AIP$ hierarchy and gives a new proof of non-solvability by the Sherali--Adams LP hierarchy, as both of these hierarchies are weaker than $\BA$.

Ruling out the first level of the $\BA$ hierarchy is trivial
using the characterisation from~\cite{bgwz20}, while the task is significantly more challenging for higher levels. 
The core of our proof is geometric. Using the framework recently developed by the authors in~\cite{cz23soda:minions} to study algorithmic hierarchies, we reduce the problem of finding a ``fooling instance'' for the $\BA$ hierarchy applied to AGC to the geometric problem of building a \emph{hollow-shadowed crystal}; i.e., a high-dimensional integral tensor whose projections onto hyperplanes of low dimension are equal up to reflection (i.e., up to permutations of the tensor modes; we call such a tensor a \emph{crystal}) and satisfy a sparsity condition dictating that certain entries should be set to zero (in this case, we say that the crystal has a \emph{hollow shadow}). The main technical result of this work is a constructive proof of the existence of tensors having these features.

Our construction consists of two phases. The first phase concerns the existence of crystals (regardless of the hollowness requirement). We perform this task by providing a complete combinatorial characterisation for \emph{realisable systems of shadows}; i.e., for those collections of low-dimensional tensors that can be realised as the projections of a single high-dimensional tensor. As detailed in the conference version~\cite{cz23soda:aip}, this construction is sufficient to prove non-solvability of AGC by the $\AIP$ hierarchy.
To prove the analogous result for the stronger $\BA$ hierarchy, we need to deal with the problem of enforcing hollowness of the shadow of a given crystal. This is accomplished in the second phase of our construction (extending the conference version~\cite{cz23stoc:ba}), which consists in applying local modifications to a tensor through certain crystals that we call \emph{quartzes}.

Two-dimensional variants of this problem have appeared in the literature in
combinatorial matrix theory. The problem of recovering a matrix (i.e., a
two-dimensional tensor) from its row- and column-sum vectors (i.e.,
one-dimensional projections) has been studied for different classes of matrices,
such as nonnegative integral matrices~\cite{brualdi1991combinatorial}, $0$--$1$ matrices~\cite{da20090,ryser1957combinatorial}, alternating-sign matrices~\cite{MR1392498}, and sign-restricted matrices~\cite{brualdi2021sign}, see also the survey~\cite{MR2890890}.
Moreover, an active research trend in combinatorial matrix theory investigates
the conditions for the existence of matrices over a certain domain having
prescribed row and column sums and a fixed \emph{pattern}, i.e., a fixed set of
entries allowed (or required) to be nonzero. Examples include $0$--$1$ matrices with zero trace (i.e., adjacency matrices of digraphs)~\cite{MR120166}, with at most one fixed zero in each column~\cite{anstee1982properties}, or with a fixed zero block~\cite{MR1997370}, 
real matrices with a fixed pattern~\cite{MR1758207}, and
integral matrices with fixed lower and upper bounds on each entry~\cite{MR1184987}; see also related work in~\cite{MR2444364,MR2446674,MR3551626}. 

To the best of our knowledge, the problem of reconstructing a tensor from low-dimensional projections has 
hitherto 
only been studied for matrices (but cf.~\cite{MR3759214}, where a related problem is investigated in three dimensions in the restricted setting of alternating-sign three-dimensional tensors). In order to rule out solvability of AGC for all numbers of colours, we need to build crystals of arbitrarily high dimension 
and hence approach the reconstruction problem for arbitrarily high-dimensional tensors. 
In addition to its application
to AGC, we believe that our result might be of independent interest to the linear algebra and tensor theory communities.
Furthermore, within complexity theory,
we expect that our method will be useful more broadly in bringing new insights into the power of algorithmic techniques that blend the consistency and the linear equation approaches---which are gaining much prominence in the wider context of CSPs and
PCSPs~\cite{Dalmau24:lics,bgwz20,OConghaileC22:mfcs,BG19,BhangaleKM24,cnp24:stoc,CiardoZ24}.
The geometric method we develop in the current work appears to be particularly well-suited for capturing the essence of such algorithms.  

\section{Overview of results and techniques}
\label{sec_overview}

\noindent 
Let $\X$ and $\A$ be two digraphs.
We can cast the question ``Is there a homomorphism from $\X$ to $\A$?'' as the question of checking whether a system of linear equations (over, say, $\Q$) has a solution in the set $\{0,1\}$. Indeed, introduce variables $\lambda_{x,a}$ for all vertices $x\in V(\X),a\in V(\A)$, and variables
$\mu_{\by,\bb}$ for all edges $\by\in E(\X),\bb\in E(\A)$, and consider the equations
\vspace*{-1.25em}
\begin{align}
\label{eqns_BLP}
\tag{IP}
\begin{array}{llllll}\\
\mbox{($\IP_1$)}\qquad &\displaystyle\sum_{a\in V(\A)} \lambda_{x,a}\ &=\ 1 & \hspace*{1cm} &\forall x\in V(\X)\\[15pt]
\mbox{($\IP_2$)}\qquad &\displaystyle\sum_{\substack{\bb\in E(\A)\\ b_i=a}} \mu_{\by,\bb}\ &=\ \lambda_{y_i,a} & \hspace*{1cm} &\forall \by\in E(\X),\; i\in \{1,2\},\; a\in V(\A).
\end{array}
\end{align}
\vspace*{-1em}

\noindent One readily checks that $\X\to\A$ if and only if \eqref{eqns_BLP} has a solution in $\{0,1\}$. Unless P=NP, this system is not solvable in polynomial time over $\{0,1\}$, in general. Relaxing it by allowing that the variables can be assigned rational nonnegative values results in the so-called \emph{basic linear programming} (BLP) relaxation. 
Similarly, allowing that the variables can be assigned integer values yields the~\emph{affine integer programming} (AIP) relaxation.
The $\BA$ relaxation described in~\cite{bgwz20} combines $\BLP$ and $\AIP$. More concretely, it outputs $\YES$ if and only if there exist a solution to $\BLP$ and a solution to $\AIP$ such that the following \emph{refinement condition} holds: Whenever a variable is zero in the $\BLP$ solution, it is zero in the $\AIP$ solution. It follows that $\BA$ is at least as strong as both $\BLP$ and $\AIP$; in fact, as shown in~\cite{bgwz20}, it is strictly stronger, in the sense that there exist templates
that are solved by $\BA$ but not by $\BLP$ or $\AIP$. 
Note that the three relaxations mentioned above result in algorithms that are complete but not necessarily sound, in the sense that they always output $\YES$ if $\X\to\A$, but may fail to output $\NO$ if $\X\not\to\A$.

The system~\eqref{eqns_BLP} can be refined by replacing the variables
$\lambda_{x,a}$ with variables $\lambda_{S,f}$, where $S$ is a set of vertices
of $\X$ of size at most $k$ and $f$ is a function from $S$ to $\Vset(\A)$.
Solving such refined system over the set of nonnegative rational numbers (integers) would then mean finding rational nonnegative (integer) distributions over the set of partial assignments from portions of the instance of size at most $k$ to $\A$. The former choice results in the Sherali--Adams LP hierarchy~
\cite{Sherali1990}, 
which we call the $\BLP$ hierarchy; the latter results in the affine integer programming hierarchy~\cite{cz23soda:aip}, which we call the $\AIP$ hierarchy.
Similarly, the $\BA$ hierarchy we consider in this work consists in applying the $\BA$ relaxation of~\cite{bgwz20} to progressively larger portions of the instance, in the same spirit as the $\BLP$ and $\AIP$ hierarchies. Equivalently, the $\BA$ hierarchy can be described as follows: Its $k$-th level, applied to two digraphs $\X$ and $\A$, outputs $\YES$ if and only if $(i)$ the $k$-th level of both $\BLP$ and $\AIP$ outputs $\YES$ when applied to $\X$ and $\A$, and $(ii)$ the two solutions they provide satisfy the refinement condition~\cite{cz23soda:minions}. In this case, we write $\BA^k(\X,\A)=\YES$. Given two digraphs $\A,\B$ such that $\A\to\B$, we say that the $k$-th level of $\BA$ \emph{solves} $\PCSP(\A,\B)$ if, for any instance $\X$, $\BA^k(\X,\A)=\YES$ implies $\X\to\B$. (The definition for the $\BLP$ and $\AIP$ hierarchies is analogous.) Note that, if $\PCSP(\A,\B)$ is solved by some level of the $\BLP$ or $\AIP$ hierarchies, then it is also solved by the same level of the $\BA$ hierarchy.

These three hierarchies are complete but not necessarily sound, and they become progressively stronger as the level $k$ increases.
Crucially, the $\BA$ hierarchy (and, in fact, already the weaker $\BLP$ hierarchy) ensures local consistency, in the sense that each assignment receiving nonzero weight corresponds to a partial homomorphism. Equivalently, the $\BA$ hierarchy is at least as strong as the bounded-width algorithm\footnote{More precisely, the $k$-th level of the $\BA$ (or $\BLP$) hierarchy is at least as strong as the $k$-th level of the bounded-width algorithm.}~\cite{Feder98:monotone,Barto14:jacm,Barto14:jloc} (and, in fact, strictly stronger, see~\cite{Atserias22:soda}). In particular, the $\BA$ hierarchy is \emph{sound in the limit}, in the sense that its $k$-th level correctly classifies all instances of size $k$ or less---which is clear from the fact that a partial homomorphism over the whole domain is a homomorphism.
The same is not true for the $\AIP$ hierarchy.

The main result of our work is that no constant level of the $\BA$ hierarchy solves the approximate graph colouring problem.

\begin{shaded}
\vspace{-.4cm}
\begin{thm}
\label{thm_BLPAIPk_no_solves_AGC}
For any fixed $3\leq c\leq d$, there is no $k\in\N$ such that 
$\BA^k$
solves $\PCSP(\K_c,\K_d)$.  
\end{thm}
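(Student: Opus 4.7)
The plan is, for any fixed $3 \leq c \leq d$ and any level $k \in \N$, to construct a fooling digraph $\X$ satisfying both $\X \not\to \K_d$ and $\BA^k(\X, \K_c) = \YES$. For the first property, it suffices to take $\X$ with chromatic number strictly larger than $d$; to enable the second property, I would additionally require $\X$ to have odd girth much larger than $k$, so that locally $\X$ is bipartite, which is essential for supporting a consistent family of partial colourings. Both requirements are simultaneously achievable by classical probabilistic or explicit constructions of graphs of high chromatic number and high odd girth.

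The bulk of the proof lies in producing a $\BA^k$-certificate for $(\X, \K_c)$. Using the minion framework from \cite{cz23soda:minions}, I would reformulate the existence of such a certificate as the existence of a pair of compatible solutions --- one rational-valued, one integer-valued --- to the system of linear equations associated with $k$-sized subinstances, together satisfying the refinement condition. After exploiting the $S_c$-symmetry of $\K_c$ and the automorphisms of $\X$, the problem collapses to constructing a single integer tensor of arbitrary dimension whose low-dimensional projections coincide up to permutation of modes (the \emph{crystal} property) and vanish on certain entries (the \emph{hollow shadow} condition) dictated by the non-edges of $\K_c$, i.e.\ equal-colour pairs, which cannot be images of an edge of $\X$.

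I would build such a hollow-shadowed crystal in two phases. In phase one, I would construct a crystal having the correct projections but without any hollowness requirement. To do this, I would give a combinatorial characterisation of which systems of low-dimensional tensors are \emph{realisable} as projections of a single higher-dimensional tensor (a many-dimensional analogue of Ryser-type conditions for matrices), and verify that the system of projections induced by a natural target $\BLP$-solution satisfies this characterisation. In phase two, I would turn the resulting crystal into a hollow-shadowed one by introducing a family of small auxiliary crystals, called \emph{quartzes}, which preserve the crystal property but are localised so as to cancel prescribed entries; the required hollowness would then be achieved by adding a suitable integer combination of quartzes to the base crystal. Translating the final tensor back through the minion correspondence yields the sought $\BA^k$-certificate.

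The step I expect to be the main obstacle is phase two. Each quartz cancels one prescribed entry but inevitably perturbs many others, so one must show that the integer lattice generated by quartzes, modulo the subspace of ``tensors with matching projections,'' already contains every pattern that needs to be cancelled. Establishing this closure in arbitrary dimension is where the combinatorial heart of the argument lies: it requires a careful inductive or tensor-product construction for quartzes together with an accounting of which sparsity patterns are achievable as quartz combinations. This is what drives the technicality of the proof and what the algebraic methods of \cite{BBKO21,Barto22:soda,dalmau2023local} appear insufficient to handle, motivating the geometric approach pursued here.
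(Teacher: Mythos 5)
Your crystal/hollow-shadow framework (phases one and two) matches the paper's core technical contribution, but there is a genuine gap in how you propose to turn a hollow-shadowed crystal into a fooling instance for $\PCSP(\K_c,\K_d)$ with a \emph{fixed} small $c$.

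The hollow-shadowed crystals the paper mines (Theorem~\ref{thm_existence_crystals_with_hollow_shadow}) have width $\tfrac{k^2+k}{2}$, which grows with the level $k$. Consequently, the acceptance they certify is $\BA^k(\X,\K_{(k^2+k)/2})=\YES$ for any loopless $\X$ (Proposition~\ref{acceptance_prop_BA_big_enough}), and this holds with no girth hypothesis whatsoever. So your intuition that high odd girth is ``essential for supporting a consistent family of partial colourings'' is misdirected: the crystal construction already gives you consistency for arbitrary loopless instances, but only against a clique whose size scales quadratically with $k$. Moreover, Example~\ref{ex_no_hollo_crystal_of_small_width} in the paper shows this scaling cannot be avoided: there is no hollow affine $2$-crystal in $\cT^{3\cdot\bone_3}(\Z)$, so you cannot directly produce a crystal-based certificate of $\BA^k(\X,\K_3)=\YES$ for large $k$ and arbitrary $\X$. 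High girth does not change the target clique $\K_c$, so it does not remove this obstruction.

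What your proposal is missing is the amplification step. The paper takes \emph{shift digraphs} $\SSS_{q,i}$ (iterated line digraphs of a clique) as fooling instances and exploits the asymmetry between how the line digraph operator affects chromatic number and how it affects $\BA$-acceptance level: $\delta$ decreases the chromatic number \emph{exponentially} (Theorem~\ref{prop_chromatic_number_line_digraph}, via $a(p)=2^p$ and $b(p)=\binom{p}{\lfloor p/2\rfloor}$), whereas it only halves the $\BA$ level (Proposition~\ref{prop_reduction_line_digraph_BA}). Because the crystal width $\tfrac{k^2+k}{2}$ is \emph{sub-exponential} in $k$, iterating $\delta$ enough times brings the number of colours down to any fixed $c\geq 4$ while still certifying $\BA^k$ acceptance, and a short auxiliary reduction using $\SSS_{4,2}\to\K_3$ handles $c=3$. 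Without this line-digraph bootstrapping, your high-girth high-chromatic-number graphs will only fool $\BA^k$ applied to cliques whose size grows with $k$, which does not suffice for the theorem. The girth condition neither helps nor hurts; the real driver is the exponential-vs-polynomial decay mismatch under $\delta$.
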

\vspace{-.4cm}
\end{shaded}

A way to prove that approximate graph colouring is not solved by the $\BA$ hierarchy is to present \emph{fooling instances}---digraphs with a large chromatic number but yet whose structure meets all constraints of the hierarchy. More precisely, it suffices to build, for every $c$, $d$, and $k$, a digraph $\GG$ whose chromatic number is higher than $d$ and such that $\BA^k(\GG,\K_c)=\YES$. Thus, the high-level description of our strategy is:
\begin{center}
\emph{``Find a fooling instance for the $\BA$ hierarchy applied to AGC.''}
\end{center}

Instead of directly looking for instances that fool the hierarchy, our approach
shall be to consider the following questions: What does a \emph{certificate of
acceptance} for the $\BA$ hierarchy look like? Can we tell, from the shape of
such a certificate, what the \emph{limits} of the hierarchy applied to AGC are? The first step of our analysis is to translate the problem of whether the $\BA$ hierarchy accepts an input into a problem having a different, \emph{multilinear} nature.
Building on the framework developed in~\cite{cz23soda:minions}, we find that $\BA$ acceptance is implied by the existence of a family of \emph{tensors} having certain special characteristics. First of all, they need to satisfy $(i)$ a \emph{system of symmetries}.
At a high level, this requirement results from the marginality constraints that are enforced by all ``lift-and-project'' hierarchies such as the $\BLP$, $\AIP$, and Lasserre $\SDP$ hierarchies~\cite{Laurent03}, and is common to all algorithmic hierarchies studied in~\cite{cz23soda:minions} through the tensor approach. 
There is, however, a feature that is unique to the $\BA$ hierarchy.
Not only does $\BA$ require that both a linear program and a system of Diophantine equations have a solution; it also requires that any variable that is assigned zero weight by the former should be assigned zero weight by the latter. 
The translation of this refinement condition into the multilinear framework is $(ii)$ a \emph{hollowness} requirement: Each tensor certifying $\BA$ acceptance needs to be hollow; i.e., it needs to contain zeros in certain prescribed entries. In sum, the original problem has now become the following:
\begin{center}
\emph{``Produce a family of hollow tensors satisfying a system of symmetries.''}
\end{center}

There is a natural way to produce a family $\{T_i\}$ of tensors satisfying such symmetries: One starts with a high-dimensional tensor $C$ whose low-dimensional oriented projections (i.e., projections onto oriented hyperplanes) are equal. Then, the family of \emph{all} (not necessarily oriented) low-dimensional projections of $C$ satisfies the required symmetries. We call such a tensor $C$ a \emph{crystal}, while the \emph{shadow} of $C$ is any of its oriented projections. We then reformulate the problem 
to its final form; the solution of this problem is the main technical result of the paper.
\begin{center}
\emph{``Find a crystal whose shadow is hollow.''}
\end{center}

\paragraph{Organisation of the paper}
The rest of the article is conceptually organised in three parts, each corresponding to a different phase of the proof of Theorem~\ref{thm_BLPAIPk_no_solves_AGC}: 
    (1) a \emph{pre-processing} phase, where $\BA^k$ acceptance is turned into a multilinear problem;
    (2) a \emph{multilinear} phase, where the multilinear problem is solved (i.e., hollow-shadowed crystals are built);
    (3) a \emph{post-processing} phase, where the solution of the multilinear problem is translated back to the algorithmic framework, and it is used to recover a fooling instance.
Full details of the three phases are discussed in Sections~\ref{sec_BA_hierarchy_through_tensors},~\ref{sec_crystals}, and~\ref{sec_fooling_BA}, respectively, after providing some preliminaries in Section~\ref{sec_preliminaries}. Sections~\ref{subsec_BA_hierarchy_through_tensors},~\ref{subsec_crystals}, and~\ref{subsec_fooling_BA} below give a more intuitive overview of the contents of each of them.

\subsection{The BA hierarchy through tensors}
\label{subsec_BA_hierarchy_through_tensors}
All relaxation algorithms hitherto studied for (promise) CSPs, including the
$\BLP$, $\AIP$, and $\BA$ algorithms, have an algebraic counterpart described through the
notion of \emph{linear minion}---an algebraic structure consisting of a set of matrices
that is closed
under the application of elementary row operations (summing up or swapping two
rows, inserting an extra zero row). Given a linear minion $\Mminion$ and a digraph
$\A$ with $n$ vertices and $m$ edges, there exists a natural way of simulating
the structure of $\A$ in $\Mminion$, by defining a new (potentially infinite)
digraph $\freeM(\A)$ (the \emph{free structure} of $\Mminion$ generated by $\A$)
whose vertices are the matrices in $\Mminion$ having $n$ rows and whose edges
are pairs of matrices $(M,N)$ such that both $M$ and $N$ can be obtained from
some matrix $Q$ having $m$ rows through certain elementary row operations induced by the edges of $\A$. Then,
the relaxation induced by $\Mminion$ works as follows: Given an instance
$\X$, rather than directly checking whether $\X\to\A$, one checks whether
$\X\to\freeM(\A)$. The advantage is that, for certain linear minions, the latter can be tested in polynomial time, even when the former cannot. As an example, stochastic rational vectors form a linear minion (since they are preserved under elementary row operations) named $\Qconv$, whose corresponding relaxation is $\BLP$. Similarly, integer vectors whose entries sum up to $1$ form the linear minion $\Zaff$ corresponding to $\AIP$. By combining the two linear minions $\Qconv$ and $\Zaff$ in a suitable way, one obtains the linear minion $\BAminion$ corresponding to $\BA$.

The framework developed in~\cite{cz23soda:minions} allows to systematically strengthen the relaxation corresponding to any linear minion, by making use of the notion of \emph{tensor power} of a digraph: For $k\in\N$, the $k$-th tensor power of $\A$ is the \emph{hypergraph} 
$\Ak$ whose vertices are $k$-tuples of vertices of $\A$, and whose hyperedges are
$k$-dimensional tensors obtained by ``scattering'' the edges of $\A$ in $k$
dimensions. The $k$-th level of the hierarchy of the relaxation corresponding to
some linear minion $\Mminion$ essentially amounts to applying the relaxation to
the \emph{tensorised} digraphs rather than the original digraphs; in other
words, checking if there exists a homomorphism
$\Xk\to\freeM(\Ak)$.\footnote{We note that $\freeM(\Ak)$ is a hypergraph rather than a digraph; the definition is analogous.}
In addition, the homomorphism needs to 
preserve the tensor structure of the two hypergraphs (intuitively, it must ``behave well with respect to projections'')---in which case, we say that it is a \emph{$k$-tensorial} homomorphism.
The algorithm obtained in this way is progressively stronger as $k$ increases, and it still runs in polynomial time (for a fixed $k$) since the tensorised digraph can be constructed in polynomial time and its size is polynomial in the size of the original digraph. In particular, if the matrices in $\Mminion$ satisfy a certain positivity
requirement---in which case we say that the linear minion is \emph{conic}---the hierarchy is sound in the limit, as its $k$-th level correctly classifies all instances $\X$ on at most $k$ vertices. In fact, 
the hierarchies based on conic minions enforce local consistency. Crucially, the linear minions $\Qconv$ and $\BAminion$ are conic, while the linear minion $\Zaff$ is not~\cite{cz23soda:minions}.

It was shown in~\cite{cz23soda:minions} that the $\BA$ hierarchy---as well as the $\BLP$, $\AIP$, and other algorithmic hierarchies---fits within this framework: The fact that $\BA^k(\X,\A)=\YES$ is equivalent to the existence of a $k$-tensorial homomorphism from $\Xk$ to $\freeBA(\Ak)$. Moreover, it follows from the structure of $\BAminion$ that any such homomorphism can be \emph{decoupled} into a homomorphism $\xi$ to $\freeQ(\Ak)$ and a homomorphism $\zeta$ to $\freeZ(\Ak)$ (cf.~Theorem~\ref{thm_acceptance_BA_hierarchy_general}). If $\A$ is a clique---as it happens when the $\BA$ hierarchy is applied to AGC---one can design a simpler sufficient criterion, based on the fact that one may always assume $\xi$ to be the homomorphism mapping a tuple of vertices of $\X$ to a tensor in $\freeQ(\Ak)$ that is uniform on its support. After dealing with some combinatorial technicalities, this fact produces the following criterion of acceptance. (In the statement below, $E_\ba\ast\zeta(\bx)$ denotes the $\ba$-th entry of the tensor $\zeta(\bx)$, while $\ba\not\prec\bx$ means that there exist two indices $i,j$ for which $a_i=a_j$ but $x_i\neq x_j$.)

\begin{thm}
\label{thm_decoupling_BLPAIP_for_cliques}
Let $2\leq k\leq n\in\N$, let $\X$ be a loopless digraph, and let $\zeta:\Xk\to\freeZ(\K_n^\tensor{k})$ be a $k$-tensorial homomorphism such that $\ba\not\prec\bx$ implies $E_\ba\ast\zeta(\bx)=0$ for any $\bx\in \Vset(\X)^k$ and $\ba\in \{1,\dots,n\}^k$. 
Then $\BA^k(\X,\K_n)=\YES$.
\end{thm}

\subsection{Crystals}
\label{subsec_crystals}

\noindent The criterion of acceptance for $\BA^k$ stated in Theorem~\ref{thm_decoupling_BLPAIP_for_cliques} is multilinear. Indeed, $\freeZ(\K_n^\tensor{k})$ is a space of integer \textit{affine} tensors (where we call a tensor affine if its entries sum up to $1$), and the existence of a $k$-tensorial homomorphism from $\Xk$ to $\freeZ(\K_n^\tensor{k})$ corresponds to the existence of a family of tensors 
satisfying a specific system of symmetries, which are formally described in~Remark~\ref{rem_description_freestructure_tensorisation}, see also the discussion at the beginning of Section~\ref{sec_crystals}. 
Letting $q$ be the number of vertices in $\X$,
such a family can be realised as the family of $k$-dimensional projections of a
single affine $q$-dimensional \emph{crystal} tensor, which we next informally define. We let $\mathcal{T}^{{n}\cdot\bone_q}(\Z)$ denote the set of all integer \emph{cubical tensors of dimension $q$ and width ${n}$}---i.e., ${n}\times {n}\times\dots\times {n}$ arrays of integers, where ${n}$ appears $q$ times. The notion of projecting should intuitively be thought of as ``summing up all entries of a tensor along a certain set of directions''; the formal definition shall make use of the operation of \emph{tensor contraction}, which we define in Section~\ref{subsec_tensors}. 
By oriented projection we mean that the directions are considered to be ordered. This is because, for example, the $2$-dimensional oriented projection of a $3$-dimensional tensor onto the directions $1$ and $2$ is the transpose of  the $2$-dimensional oriented projection of the same tensor onto the directions $2$ and $1$.
\begin{defn}[Informal]
\label{defn_crystal_informal}
Let $q,{n}\in\N$ and $k\in\{0,\dots,q\}$. A cubical tensor $C\in\mathcal{T}^{{n}\cdot\bone_q}(\Z)$ is a \emph{$k$-crystal} if all its $k$-dimensional oriented projections are equal. In this case, the \emph{$k$-shadow} of $C$ is this common oriented projection.
\end{defn}
Equivalently, a $k$-crystal is required to have equal $k$-dimensional projections \emph{up to reflection}---where a reflection is a higher-dimensional analogue of the transpose operation for matrices.
Let $\zeta_C$ be the map---associated with an affine $k$-crystal $C$---that takes a $k$-tuple $\bx$ of vertices of $\X$ and maps it to the projection of $C$ onto the hyperplane generated by $\bx$. By construction, $\zeta_C$ behaves well with respect to projections, so it is automatically $k$-tensorial. In order to yield a certificate of acceptance for $\BA^k(\X,\K_n)$, according to Theorem~\ref{thm_decoupling_BLPAIP_for_cliques}, $\zeta_C$ also needs to be a homomorphism and satisfy the extra condition $\ba\not\prec\bx\;\Rightarrow\;E_\ba\ast\zeta_C(\bx)=0$. It turns out that
both these requirements translate as a condition on the $k$-\emph{shadow} $S$ of $C$: The only entries of $S$ allowed to be nonzero are those whose coordinates are all distinct. We say that a tensor having this property is \emph{hollow} (the formal definition is given in Section~\ref{sec_crystals}). As an example, if $k=2$, the condition means that the $n\times n$ matrix $S$ needs to have zero diagonal; if $k=3$, three diagonal planes of the $n\times n\times n$ tensor $S$ of the form $(a,a,b)$, $(a,b,a)$, $(b,a,a)$ should be set to zero, and so on.

To summarise the discussion above, an affine $k$-crystal of dimension $q$ and width $n$ whose $k$-shadow is hollow yields a certificate that $\BA^k(\X,\K_n)=\YES$ for \emph{any} loopless digraph $\X$ with $q$ vertices. The problem is now to verify whether hollow-shadowed crystals exist. It is not hard to check that such crystals cannot exist for all choices of $k$, $q$, and $n$; this parallels the fact that the $\BA$ hierarchy is sound in the limit, so it cannot be the case that \emph{any} $\X$ is accepted by \emph{any} level of $\BA$ applied to \emph{any} clique $\K_n$. This is in sharp contrast with the weaker $\AIP$ hierarchy, for which a similar acceptance result holds, cf.~\cite{cz23soda:aip}. It follows that, unlike for $\AIP$, one cannot simply take large cliques as fooling instances for $\BA$. As we shall see in Section~\ref{subsec_fooling_BA}, a more refined family of digraphs can be shown to provide fooling instances for the $\BA$ hierarchy as long as one can produce 
hollow-shadowed crystals whose width $n$ is \emph{sub-exponential} in the level $k$. The main technical contribution of this work is a method for mining hollow-shadowed crystals whose width is \emph{quadratic} in $k$, as stated next.

\begin{thm}
\label{thm_existence_crystals_with_hollow_shadow}
For any $k\leq q\in\N$ there exists an affine $k$-crystal $C\in\mathcal{T}^{\frac{k^2+k}{2}\cdot\bone_q}(\Z)$ with hollow $k$-shadow.
\end{thm}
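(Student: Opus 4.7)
The plan is to construct $C$ in the two stages flagged in the overview: first produce a baseline $k$-crystal $C_0 \in \mathcal{T}^{\frac{k^2+k}{2}\cdot\bone_q}(\Z)$ without worrying about the shadow's support, and then apply local integer corrections through \emph{quartzes} to kill all non-hollow entries of the shadow.

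For the first stage, I would start from an arbitrary affine $k$-dimensional tensor $T$ of width $n = \frac{k^2+k}{2}$ and symmetrise it across the $q$ modes of the ambient cube: for each ordered $k$-subset of modes, place a suitably scaled copy of $T$ along those modes and a constant (or zero) along the remaining $q-k$ modes, and sum. Because the construction is invariant under permutations of the ambient modes, every oriented $k$-projection of the resulting tensor is the same multiple of $T$, so the tensor is automatically a $k$-crystal. Affinity is arranged by normalising the overall scale. In general its shadow $S_0$ will be nonzero at non-hollow positions $\ba \in \{1,\dots,n\}^k$, i.e.\ at tuples with some repeated coordinate, and the bulk of the work lies in zeroing these out.

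For the second stage, the key ingredient is the quartz. For each non-hollow $\ba$ I would construct a $k$-crystal $Q_\ba \in \mathcal{T}^{n\cdot\bone_q}(\Z)$ whose $k$-shadow has value $1$ at $\ba$, value zero at every other non-hollow position, and balancing entries (summing to $-1$) placed only at \emph{hollow} positions. Since the set of $k$-crystals is a $\Z$-module under entrywise addition and integer scaling, the tensor
\[
C \;=\; C_0 \;-\; \sum_{\ba \text{ non-hollow}}(E_\ba\ast S_0)\,Q_\ba
\]
is again an integer $k$-crystal, its $k$-shadow is hollow by the defining property of the $Q_\ba$, and affinity is preserved because each quartz's shadow sums to zero. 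Note that this is a \emph{simultaneous} correction rather than an iterative one, which sidesteps any question of the quartzes interfering with each other.

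The principal obstacle, and the heart of the argument, is to exhibit the quartzes $Q_\ba$. The crystal condition is extremely rigid: any perturbation at $\ba$ reverberates in all $\binom{q}{k}\cdot k!$ oriented projections simultaneously, so the balancing entries in a single permuted projection must be chosen so that in \emph{every other} permuted projection they still land at hollow positions and do not collide with the symbols appearing in $\ba$. The choice $n = \binom{k+1}{2}$ is precisely what provides this room: the $k$ symbols of $\ba$ leave $\binom{k}{2}$ free ``balancing'' coordinates, which is exactly enough to place, for each repeated symbol in $\ba$, a compensating mark in a slot disjoint from $\ba$ under every mode permutation. I expect this geometric/combinatorial step, together with verifying that the resulting $Q_\ba$ really does have equal oriented $k$-projections, to be the most delicate part of the proof; once the quartzes are in hand, the assembly in the second stage is a routine linear combination over $\Z$.
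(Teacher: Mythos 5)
Your two-stage plan is structurally different from the paper's and, more importantly, leaves the core technical difficulty unresolved. The paper's route is: (a) prove by induction on $k$ the existence of a \emph{hollow} affine $(k-1)$-crystal $\hat C\in\cT^{\frac{k^2+k}{2}\cdot\bone_k}(\Z)$ (Theorem~\ref{prop_hollow_shadows_exist}), and (b) lift $\hat C$ to dimension $q$ via a general realisation theorem (Theorem~\ref{thm_realistic_equals_realisable} / Corollary~\ref{cor_exogenesis_of_crystals}), which gives a $q$-dimensional $k$-crystal whose $k$-shadow is $\hat C$ and is therefore hollow. All the hard work happens in (a), inside $k$-dimensional space, and step (b) is a one-line application of crystalisation.

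The gap in your plan is the existence of the objects $Q_\ba$. You correctly identify them as ``the heart of the argument,'' but you only gesture at why $n=\binom{k+1}{2}$ should provide enough room; you neither construct them nor reduce their existence to something previously established. What you ask of $Q_\ba$ is in fact \emph{stronger} than the paper's core lemma: passing through Corollary~\ref{cor_exogenesis_of_crystals}, the existence of each $Q_\ba$ is equivalent to the existence of a $(k-1)$-crystal $S_\ba\in\cT^{n\cdot\bone_k}(\Z)$ that equals $1$ at the single non-hollow position $\ba$, vanishes at every other non-hollow position, and has entries summing to $0$. By contrast, the paper's quartz $Q_{\ba,\bb}$ (Definition~\ref{defn_quartzes}) is an explicit $\pm1$-valued tensor supported on the corners of a box, whose crucial feature is that its $(k-1)$-shadow is \emph{identically zero} (Proposition~\ref{prop_basic_stuff_quartzes}\eqref{lem_part_iscrystal}); such quartzes are added at the level of the $k$-dimensional crystal $W$ to make $W$ itself hollow without disturbing its already-hollow shadow, not at the level of the shadow. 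Producing your $Q_\ba$'s with prescribed \emph{non-trivial} shadows would itself require an argument at least as delicate as the paper's Theorem~\ref{prop_hollow_shadows_exist}, and you have no analogue of the paper's inductive scheme (pad with $k$ layers of zeros, perturb by simple box quartzes $Q_{\bd,\by}$, verify cancellations) that actually establishes it.

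Your stage 1 is also too vague to support the claim that the result is a $k$-crystal: ``place a suitably scaled copy of $T$ along $k$ of the modes and a constant along the rest, and sum'' does not obviously produce a tensor all of whose $\binom{q}{k}$ oriented $k$-projections coincide --- the contributions from the various placements overlap and interact, and arranging equality of projections is precisely the non-trivial content of the paper's realisation theorem (proved by a nested induction first on the dimension of the shadows and then on the total width). Moreover, if you \emph{could} build a hollow $(k-1)$-crystal $S$ in $\cT^{n\cdot\bone_k}(\Z)$ to serve as the target shadow --- which is what your stage-2 lemma would essentially give you --- then there would be no need for $C_0$ or the subsequent correction at all: Corollary~\ref{cor_exogenesis_of_crystals} lifts $S$ directly to a $q$-dimensional $k$-crystal with $k$-shadow $S$. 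In that sense your plan is both incomplete at its centre and unnecessarily roundabout at the edges.
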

\noindent 
The key to establishing Theorem~\ref{thm_existence_crystals_with_hollow_shadow} is proving the following.
\begin{thm}
\label{prop_hollow_shadows_exist}
For any $k\in\N$ there exists a hollow affine $(k-1)$-crystal $C\in\mathcal{T}^{\frac{k^2+k}{2}\cdot\bone_k}(\Z)$.
\end{thm}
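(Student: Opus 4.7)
My plan is to prove Theorem~\ref{prop_hollow_shadows_exist} by explicit construction. I would first exhibit any affine $(k-1)$-crystal $C_{0}$ of dimension $k$ and width $\binom{k+1}{2}$ while ignoring the hollowness requirement, and then subtract a carefully chosen integer combination of local \emph{quartz} corrections to cancel each non-hollow entry while preserving the crystal property and the affine-sum condition. The base case $k=1$ is trivial, since $C=[1]\in\cT^{1\cdot\bone_1}(\Z)$ is vacuously hollow, affine, and a $0$-crystal. For general $k$, the characterisation of realisable shadows developed in Phase~1 of the paper should supply a suitable $C_{0}$ from very simple data --- for instance, the $k$-fold extension of a rank-one symmetric tensor on the $\binom{k+1}{2}$-element ground set --- whose common $(k-1)$-shadow is explicit and whose non-hollow entries can therefore be enumerated.

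The key surgical tool is a \emph{quartz}: an integer tensor supported on the $2^{k}$ corners of an axis-aligned rectangular box in $\{1,\dots,n\}^{k}$, with values $\pm 1$ determined by the parity of each corner's coordinates within the box (the $k$-dimensional analogue of the alternating $2\times 2$ checkerboard $\bigl(\begin{smallmatrix}1 & -1\\-1 & 1\end{smallmatrix}\bigr)$). Because the values along any axis of the box sum to zero, every oriented $(k-1)$-projection of a quartz vanishes identically; a quartz is therefore a zero-sum $(k-1)$-crystal, and adding any integer combination of quartzes to $C_{0}$ preserves being a $(k-1)$-crystal of dimension $k$, width $\binom{k+1}{2}$, and total sum $1$. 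By choosing the box so that exactly one corner coincides with a prescribed non-hollow position $\bar p$ while the remaining $2^{k}-1$ corners lie at positions with pairwise distinct coordinates, the quartz becomes a targeted tool that, when scaled by $-C_{0}[\bar p]$, kills the entry at $\bar p$ and redistributes mass only to hollow positions.

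The construction of $C$ is then iterative: enumerate the non-hollow entries of $C_{0}$ in a carefully chosen order and, at each step, subtract a scalar multiple of a suitable quartz to kill the current non-hollow entry without reintroducing non-hollowness at previously cleared positions. The generous width $\binom{k+1}{2}$ --- exceeding the minimum width $k$ needed for any hollow position to exist by an amount growing with $k$ --- is precisely the combinatorial budget that makes this possible: there are always enough fresh indices available to route the remaining corners of each quartz through positions whose coordinates are pairwise distinct. The main obstacle of the proof, which I expect to be the most delicate part, is verifying the consistency of the cancellation procedure: one must show that a valid quartz is always available at the current step, that accumulated corrections do not re-break hollowness at positions already cleared, and that integrality is maintained throughout. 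I would attempt this by processing positions in order of decreasing depth of coordinate repetitions (i.e., by the number of pairs of coincident coordinates in $\bar p$), so that each quartz correction only affects entries of strictly shallower non-hollow type, allowing an induction on this depth to close the argument.
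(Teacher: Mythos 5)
Your high-level strategy — start from a non-hollow crystal of the right width and cancel its ties by subtracting integer multiples of quartzes — is the same as the paper's, but the key step of the cancellation scheme does not survive scrutiny, and the missing idea is precisely the hard part of the proof.

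You claim that, given a non-hollow position $\bar p$, one can pick a quartz box so that $\bar p$ is the \emph{only} corner with a coordinate repetition. This is impossible when $k\geq 3$. If $p_u=p_v$ for some $u\neq v$, then for \emph{every} $\bz\in\{0,1\}^k$ with $z_u=z_v=0$, the corner $h(\bz;\bar p,\bb)$ has its $u$-th and $v$-th entries both equal to $p_u$, regardless of $\bb$. That gives $2^{k-2}-1$ corners other than $\bar p$ that inherit the same tie, and they have the \emph{same} depth (number of coincident pairs), not strictly shallower, so your proposed induction on depth does not close. Any single quartz necessarily reintroduces non-hollowness.

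The paper gets around this by not trying to avoid new ties. Instead it subtracts the quartzes $Q_{\bd,\by}$ for \emph{all} $\bd$ in the support block of $W$ \emph{with a single common opposite corner} $\by=(\hat n+1,\dots,n)$, and then proves that the ties created by different $\bd$'s cancel in aggregate. That cancellation is not visible at the level of a single quartz: the proof of hollowness runs an induction on $\ell(\bb)=|\{i:b_i>\hat n\}|$, and the inductive step sums $E_{\bb^{(a)}}\ast C$ over a one-dimensional slice and recognises the total as an entry of the shadow $Z$, which vanishes because $Z$ is hollow. For this to work, the initial crystal cannot be arbitrary: the paper builds $W$ by first finding (inductively, over $k$) a \emph{hollow} affine $(k-2)$-crystal $U$ of width $\frac{k^2-k}{2}$, crystalising it to a $(k-1)$-crystal $V$ with shadow $U$, and then padding $V$ with $k$ zero layers per mode. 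Hollowness of the shadow and confinement of the nonzero block to $[\hat n]^k$ are both load-bearing. So the gap in your proposal is twofold: the impossible ``one-tied-corner'' quartz, and the absence of the shared-opposite-corner, shadow-hollowness-driven aggregate cancellation that actually makes the surgery succeed.
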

\noindent 
We now discuss the main ideas of the proof of Theorem~\ref{prop_hollow_shadows_exist} for the case $k=3$.
Our goal is to find a hollow affine $2$-crystal $C\in\cT^{6\cdot\bone_3}(\Z)$. In other words, $C$ must be a three-dimensional cubical tensor of width $6$, such that $(i)$ $C$ is hollow, i.e., the only entries allowed to be nonzero are the ones whose three coordinates are all distinct; $(ii)$ $C$ is affine, i.e., its entries sum up to $1$; and $(iii)$ $C$ is a $2$-crystal, i.e., projecting it onto the $xy$-, $yz$-, and $xz$-planes results in the same $6\times 6$ ``shadow'' matrix.
By induction, we can assume that Theorem~\ref{prop_hollow_shadows_exist} holds for $k=2$. In fact, it is not hard to find by inspection that the matrix 
\begin{equation*}
  U=\left[\begin{array}{ccc}0&0&1\\1&0&-1\\0&0&0\end{array}\right]=\squareTensorA{2}{.4}\,
\end{equation*}
is a hollow affine $1$-crystal in $\cT^{3\cdot\bone_2}(\Z)$. (We indicate the numbers $-1$, ${0}$, ${1}$, and $2$ by the colours green, light grey, yellow, and orange, respectively.)

The next step is to build a (not necessarily hollow) $3$-dimensional $2$-crystal having shadow $U$. 
In order to perform this task, we investigate the following question: Given a collection $\mathcal{S}$ of low-dimensional tensors (which we call a \emph{system of shadows}), which property characterises the fact that $\mathcal{S}$ is \emph{realisable}---i.e., that $\mathcal{S}$ is the family of oriented projections of a single high-dimensional tensor $T$? %
Now, if $\br$ and $\bc$ are the row- and column-sum vectors of a matrix, the sums of the entries of $\br$ and $\bc$ must coincide. We say that $\mathcal{S}$ is a \emph{realistic} system of shadows if its members meet an analogous compatibility requirement, which is trivially satisfied whenever $\mathcal{S}$ consists of the projections of a common tensor; i.e., if $\mathcal{S}$ is realisable, it must be realistic.
In Section~\ref{subsec_system_of_shadows_BODY} we prove that the two conditions are in fact equivalent: A system of shadows is realistic if and only if it is realisable. Concretely, our proof shows how to build a tensor $T$ realising a given realistic system of shadows, and it is based on a nested induction (first on the dimension of the shadows, second on the sum of the sizes of the modes of $T$). A key fact, essential to making the process work, is that the problem is invariant under reflections of the tensors involved, cf.~Lemma~\ref{lem_rotating_preserves_truth}.

\begin{figure}[!pt]
\centering
\begin{minipage}[][][b]{.5\textwidth}
  \centering
  \VCrystal{-3}{.6}
  \captionof{figure}{The crystal $V$.}
  \label{fig_shadow_V}
\end{minipage}%
\begin{minipage}[][][b]{.5\textwidth}
  \centering
  \cubeTensorW{-3}{.5}
  \vspace{.49cm}
  \captionof{figure}{The crystal $W$.}
  \label{fig_crystal_W}
\end{minipage}
\end{figure}

In particular, this results in a \emph{crystalisation} procedure: By letting each member of the system of shadows $\mathcal{S}$ be a single lower-dimensional crystal $S$, one constructs a higher-dimensional crystal whose shadow is $S$ (see Section~\ref{subsec_crystalisation}). Applying this procedure to $U$ results in the crystal
\begin{equation*}
  V=\left[\begin{array}{ccc|ccc|ccc}
-1&0&1 &0&0&0 &1&0&0\\
2&0&-1 &0&0&0 &-1&0&0\\
-1&1&0 &0&0&0 &1&-1&0
\end{array}\right]\hspace{-.1cm},
\end{equation*}
shown in Figure~\ref{fig_shadow_V} together with its shadow (recall the colour/number correspondence described above).
Clearly, $V$ is not hollow---for example, its $(1,1,1)$-th coordinate is $-1\neq 0$. In fact, it is not hard to check that a hollow affine $2$-crystal of dimension $3$ and width $3$ cannot exist (see~Example~\ref{ex_no_hollo_crystal_of_small_width}).
We need to increase the width to ``make more space''; we do so by padding $V$ with three layers of zeros along each of the three dimensions. The tensor $W$ we obtain in this way (Figure~\ref{fig_crystal_W}) is clearly still a $2$-crystal. 
We can view $W$ as a block tensor with eight $3\times 3\times 3$ blocks; note that all non-zero entries of $W$ are in one block. 

\begin{figure}[!pb]
\centering
\begin{minipage}{.5\textwidth}
  \centering
  \quartzExample{-3}{.5}
  \captionof{figure}{The quartz $Q_{\ba,\bb}$.}
  \label{fig_quartz}
\end{minipage}%
\begin{minipage}{.5\textwidth}
  \centering
  \WMinusOneQuartz{-3}{.5}
  \captionof{figure}{$W-w_{(1,1,1)}\cdot Q_{(1,1,1),\bb}$.}
  \label{fig_W_minus_one_quartz}
\end{minipage}
\end{figure}

The strategy is now to ``spread'' these entries in the other blocks, in a way that they migrate to positions whose indices have no repetitions. To this end, we make use of a particular class of ``transparent'' crystals that we call \emph{quartzes}. Such crystals are designed in a way that the shadow they project is identically zero, meaning that we can freely add them (or their integer multiples) to a given crystal without changing its shadow and maintaining it a crystal. 
A quartz can be built by choosing two cells $\ba$ and $\bb$ having disjoint coordinates, considering the parallelepiped generated by $\ba$ and $\bb$, assigning value $1$ or $-1$ to its vertices in a way that two adjacent vertices get values of opposite sign, and assigning value $0$ to all other cells. We refer to such a tensor as to $Q_{\ba,\bb}$, see Figure~\ref{fig_quartz}; this construction is easily generalised to arbitrary dimension. 
Quartzes yield a method to relocate some nonzero entry of $W$, while leaving the rest of $W$ \emph{almost} untouched. More precisely, if the $\ba$-th entry of $W$ has value $w_\ba\neq 0$, the $\ba$-th entry of $W-w_\ba\cdot Q_{\ba,\bb}$ is zero, and this operation modifies the value of only $8$ cells of $W$. 

The idea is then to modify $W$ with suitable quartzes, so as to transfer all nonzero entries to positions where they do not violate the hollowness requirement. 
To this end, we take as $\bb$ a fixed cell that generates the smallest number of ties and that lies in the block of $W$ opposite to the one containing the nonzero entries---for example, the cell $\bb=(4,5,6)$, as in Figure~\ref{fig_quartz}. Even with such a choice, it can happen that adding a multiple of a quartz introduces new nonzero entries in positions that violate hollowness.
For example, Figure~\ref{fig_W_minus_one_quartz} shows the tensor $W-w_{(1,1,1)}\cdot Q_{(1,1,1),\bb}$. 
The value of the cell $(1,1,1)$ has become zero, as wanted, but three new forbidden cells ($(1,1,6)$, $(1,5,1)$, and $(4,1,1)$) now have nonzero values. However, the nonzero values in these forbidden cells cancel out once this procedure is applied to \emph{all} entries in the nonzero block of $W$. In other words, the affine $2$-crystal 
\begin{align*}
C=W-\sum_{\ba\in \{1,2,3\}^3}w_\ba\cdot Q_{\ba,\bb},
\end{align*}
shown in Figure~\ref{fig_hollow_crystal},
is hollow. 

\begin{figure}
\begin{center}
\HollowCrystal{-3}{.72}
\end{center}
\caption{The hollow crystal $C$.}
\label{fig_hollow_crystal}
\end{figure}

\subsection{Fooling the hierarchy}
\label{subsec_fooling_BA}
Let $C$ be an affine $k$-crystal of dimension $q$ and width $\frac{k^2+k}{2}$
whose $k$-shadow is hollow, as per
Theorem~\ref{thm_existence_crystals_with_hollow_shadow}. Let $\X$ be a loopless
digraph on vertex set $\Vset(\X)=\{1,\dots,q\}$. Consider the map $\zeta_C$ taking as
input a tuple $\bx$ of $k$ vertices of $\X$ (i.e., a tuple of $k$ numbers in
$\{1,\dots,q\}$) and returning the $k$-dimensional projection of $C$ onto the
hyperplane corresponding to $\bx$. As discussed earlier, $\zeta_C$ yields a
$k$-tensorial homomorphism from $\Xk$ to $\freeZ(\K_{(k^2+k)/2}^\tensor{k})$, and the fact that the shadow of $C$ is hollow translates as $\zeta_C$ satisfying the extra requirement of Theorem~\ref{thm_decoupling_BLPAIP_for_cliques}. Hence, we obtain the following.

\begin{thm}
\label{acceptance_prop_BA_big_enough}
Let $2\leq k\in\N$ and let $\X$ be a loopless digraph. Then $\BA^k(\X,\K_{(k^2+k)/2})=\YES$.
\end{thm}
To prove Theorem~\ref{thm_BLPAIPk_no_solves_AGC}, we need to show that $\BA^k$
does not solve $\PCSP(\K_c,\K_d)$ for all choices of $k\in\N$ and $3\leq c\leq
d\in\N$. If $c=\frac{k^2+k}{2}$, any graph with chromatic number bigger than $d$
(for example, the clique $\K_{d+1}$) would then yield a fooling instance. Since
increasing $c$ can only make AGC harder, this argument shows that $\BA^k$ does not solve $\PCSP(\K_c,\K_d)$ as long as $c\geq\frac{k^2+k}{2}$, and the fooling instances are simply cliques. 
In order to establish Theorem~\ref{thm_BLPAIPk_no_solves_AGC} in full generality, however, we shall pick the fooling instances from a more refined class of digraphs: the so-called \emph{shift digraphs} (see Figure~\ref{fig_shift_digraphs}).

\begin{defn}
The \emph{line digraph} of a digraph $\X$ is the digraph $\delta\X$ defined by $\Vset(\delta\X)=\Eset(\X)$ and $\Eset(\delta\X)=\{((x,y),(y,z)): (x,y),(y,z)\in \Eset(\X)\}$.
\end{defn}

\begin{defn}
\label{defn_shift_digraphs}
Let $q\in\N$ and $i\in\N\cup\{0\}$. The \emph{shift digraph} $\SSS_{q,i}$ is recursively defined by setting $\SSS_{q,0}=\K_q$, $\SSS_{q,i}=\delta\SSS_{q,i-1}$ for each $i\geq 1$.
\end{defn}
It is not hard to verify that the following non-recursive description of shift
digraphs is equivalent to Definition~\ref{defn_shift_digraphs} for $i\geq 1$: $\SSS_{q,i}$ is
the digraph whose vertex set consists of all strings of length $i+1$ over the
alphabet $\{1,\ldots,q\}$ such that consecutive letters are distinct, and whose edge set contains all pairs $(a_1\dots a_{i+1},\;b_1\dots b_{i+1})$ of strings such that $b_\ell=a_{\ell+1}$ for $\ell=1,\dots,i$.\footnote{In~\cite[\S2.5]{hell2004graphs}, a slightly different definition of shift digraphs is given, where the case $i=0$ is a transitive tournament rather than a clique; there, the vertex set of $\SSS_{q,i}$ only includes \emph{monotonically increasing} strings.}
The line digraph has been utilised in~\cite{GS20:icalp,KOWZ22} as a polynomial-time (and in fact log-space) reduction between $\PCSP$s. This construction changes the chromatic number in a controlled way, as we now describe.
Consider the integer functions $a$ and $b$ defined by $a(p)=2^p$ and
$b(p)={p\choose \floor{p/2}}$ for $p\in\N$, and notice that $a(p)\geq b(p)$ for
each $p$. Let $a^{(i)}$ and $b^{(i)}$ be the functions obtained by iterating $a$
and $b$, respectively, $i$-many times, for $i\in \N$. The following result from~\cite[Theorems~8--9]{HarnerE72}  bounds the chromatic number of the line digraph in terms of that of the original digraph. 
\begin{thm}[\cite{HarnerE72}]
\label{prop_chromatic_number_line_digraph}
Let $\X$ be a digraph and let $p\in\N$.
If $\delta\X\to\K_p$, then $\X\to\K_{a(p)}$;
if $\X\to\K_{b(p)}$, then $\delta\X\to \K_p$.
\end{thm}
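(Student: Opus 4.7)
The plan is to prove both implications by unfolding the definition of $\delta\X$. A homomorphism $c\colon\delta\X\to\K_p$ is exactly an assignment $c\colon\Eset(\X)\to\{1,\dots,p\}$ such that $c(u,v)\neq c(v,w)$ whenever both $(u,v)$ and $(v,w)$ belong to $\Eset(\X)$. For each vertex $v\in\Vset(\X)$, write $O(v):=\{c(v,w):(v,w)\in\Eset(\X)\}\subseteq\{1,\dots,p\}$ for the set of colours appearing on edges leaving $v$.

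For the first implication, I would set $\phi(v):=O(v)$, viewed as a vertex of $\K_{2^p}$. For $(u,v)\in\Eset(\X)$, clearly $c(u,v)\in O(u)=\phi(u)$. On the other hand, $c(u,v)\notin O(v)=\phi(v)$, for otherwise there would exist $(v,w)\in\Eset(\X)$ with $c(v,w)=c(u,v)$, contradicting that $c$ is a homomorphism from $\delta\X$. Hence $\phi(u)\neq\phi(v)$, so $\phi$ is a homomorphism from $\X$ to $\K_{2^p}=\K_{a(p)}$.

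For the second implication, I would exploit an antichain of size $b(p)$ in the Boolean lattice on $\{1,\dots,p\}$: by Sperner's theorem, the family of subsets of size $\lfloor p/2\rfloor$ has cardinality $\binom{p}{\lfloor p/2\rfloor}=b(p)$, so I can identify the vertices of $\K_{b(p)}$ with such subsets. Given a homomorphism $\phi\colon\X\to\K_{b(p)}$, for each edge $(u,v)\in\Eset(\X)$ the sets $\phi(u),\phi(v)$ are distinct and of the same cardinality, so $\phi(u)\setminus\phi(v)\neq\emptyset$. Fix an arbitrary total order on $\{1,\dots,p\}$ and define $c(u,v):=\min\bigl(\phi(u)\setminus\phi(v)\bigr)$.

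It remains to check that this $c$ is a homomorphism $\delta\X\to\K_p$. If $(u,v),(v,w)\in\Eset(\X)$, then $c(u,v)\in\phi(u)\setminus\phi(v)$, so $c(u,v)\notin\phi(v)$, while $c(v,w)\in\phi(v)\setminus\phi(w)\subseteq\phi(v)$. Therefore $c(u,v)\neq c(v,w)$, as required. The whole argument is classical and presents no real obstacle; the only nontrivial ingredient is the appeal to Sperner's theorem in the second direction, which is what forces the maximum-antichain bound $b(p)$ rather than the naive $2^p$.
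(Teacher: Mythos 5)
Your proof is correct, and in fact the paper does not prove this statement at all --- it is cited directly to Harner and Entringer~\cite{HarnerE72} --- so there is no paper proof to compare against. Your argument is the standard one: colour each vertex by the set of colours on its out-edges (giving the $2^p$ bound), and in the other direction embed the target clique into the middle layer $\binom{[p]}{\lfloor p/2\rfloor}$, which is an antichain, and colour each edge $(u,v)$ by an element of $\phi(u)\setminus\phi(v)$. One small remark: you do not actually need Sperner's theorem here. You only need that the middle layer is an antichain of size $\binom{p}{\lfloor p/2\rfloor}$, which is immediate (sets of equal size are pairwise incomparable); Sperner's theorem is what shows this value of $b(p)$ is \emph{optimal}, i.e., no larger antichain exists, which is why the exponential/super-polynomial gap between $a$ and $b$ is as tight as it can be with this method --- but that optimality is not part of the statement being proved.
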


An interesting feature of the line digraph operator is that it preserves acceptance by hierarchies of relaxations corresponding to conic minions, at the only cost of halving the level (see Proposition~\ref{prop_reduction_line_digraph}). As stated next, this in particular holds for the $\BA$ hierarchy, whose corresponding minion $\BAminion$ is conic.

\begin{prop}
\label{prop_reduction_line_digraph_BA}
Let $2\leq k\in\N$, let $\X,\A$ be digraphs, and suppose that $\BA^{2k}(\X,\A)=\YES$.
Then $\BA^{k}(\delta\X,\delta\A)=\YES$.
\end{prop}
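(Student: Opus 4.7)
The plan is to convert a certificate for $\BA^{2k}(\X,\A)=\YES$ into a certificate for $\BA^k(\delta\X,\delta\A)=\YES$ by means of a natural \emph{flattening}-plus-\emph{restriction} operation on tensors. By Theorem~\ref{thm_acceptance_BA_hierarchy_general}, the hypothesis yields a pair of $2k$-tensorial homomorphisms $\xi:\Xk^{\otimes 2k}\to\freeQ(\Ak^{\otimes 2k})$ and $\zeta:\Xk^{\otimes 2k}\to\freeZ(\Ak^{\otimes 2k})$ satisfying the refinement condition. The goal is to produce homomorphisms $\xi'$ and $\zeta'$ with the analogous properties, but with domain $(\delta\X)^{\otimes k}$ and codomains built from $(\delta\A)^{\otimes k}$.

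First, I would define the flattening embedding $V(\delta\X)^k\hookrightarrow V(\X)^{2k}$ sending $\be=((x_1,y_1),\dots,(x_k,y_k))$ to $\hat\be=(x_1,y_1,x_2,y_2,\dots,x_k,y_k)$, and likewise for $\A$. For a cubical tensor $T$ on $V(\A)^{2k}$, let $R(T)$ be the cubical tensor on $V(\delta\A)^k=E(\A)^k$ whose entry indexed by $\ba=((a_1,b_1),\dots,(a_k,b_k))\in E(\A)^k$ equals $T_{\hat\ba}$. Setting $\xi':=R\circ\xi\circ(\,\widehat{\cdot}\,)$ and $\zeta':=R\circ\zeta\circ(\,\widehat{\cdot}\,)$, the task reduces to three verifications: that these maps land in the appropriate free structures; that they are $k$-tensorial homomorphisms; and that they still satisfy the refinement condition.

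The key observation making the first verification work is that $2k$-tensoriality of $\xi$ (resp.\ $\zeta$), applied to the $k$ hyperedges of $\X^{\otimes 2k}$ induced by placing the edge $(x_i,y_i)\in\Eset(\X)$ in the coordinate-pair $(2i{-}1,2i)$, forces $\xi(\hat\be)$ (resp.\ $\zeta(\hat\be)$) to be supported on the set of $2k$-tuples $(a_1,b_1,\dots,a_k,b_k)\in V(\A)^{2k}$ for which every $(a_i,b_i)$ lies in $\Eset(\A)$. Consequently, the restriction $R$ is lossless: the entries of $\xi(\hat\be)$ and $\zeta(\hat\be)$ outside the image of the flattening are all zero, so $R$ preserves the stochastic property (landing $R\circ\xi$ in $\freeQ((\delta\A)^{\otimes k})$) and the affine property (landing $R\circ\zeta$ in $\freeZ((\delta\A)^{\otimes k})$). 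The refinement condition is preserved because it is a pointwise zero-support comparison and $R$ is a pure restriction. For $k$-tensoriality of $\xi',\zeta'$, one unpacks that a hyperedge of $(\delta\X)^{\otimes k}$ is a ``scattering'' of $k$ edges of $\delta\X$, each of which corresponds to a pair of consecutive $\X$-edges sharing an endpoint; concatenating these yields a hyperedge of $\X^{\otimes 2k}$, and the corresponding identity for $\xi$ and $\zeta$ is translated, via $R$, into the required identity for $\xi'$ and $\zeta'$. The assumption $\Eset(\delta\A)\neq\emptyset$ ensures that $(\delta\A)^{\otimes k}$ has nonempty hyperedge set, so the $k$-tensoriality condition at the target is well posed.

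The main obstacle is the bookkeeping in the third step: verifying that the natural map from hyperedges of $(\delta\X)^{\otimes k}$ to hyperedges of $\X^{\otimes 2k}$ interacts correctly with the ``scattering'' definition of hyperedges in a tensor power, and that the restriction operator $R$ commutes with the elementary row operations (summing, swapping, inserting zero rows) that define $\freeQ$ and $\freeZ$. Once this is phrased at the level of minion operations on matrices with index set $V(\A)^{2k}$ versus $V(\delta\A)^k$, the commutation becomes a routine check, since $R$ merely re-indexes rows and discards rows whose indices do not lie in $E(\A)^k$, and by the support observation above those discarded rows are identically zero on the tensors in the image of $\xi$ and $\zeta$.
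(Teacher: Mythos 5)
Your plan is in essence the paper's: the paper proves a more general statement (Proposition~\ref{prop_reduction_line_digraph}, for an arbitrary conic minion $\Mminion$) by mapping $\bx\in\Eset(\X)^k$ to $\xi(\gamma(\bx))_{/\beta}$, where $\gamma$ is exactly your flattening and $\beta:\Vset(\A)^{2k}\to\Eset(\A)^k$ collapses pairs, sending non-edge pairs to a fixed edge $\mathbf{e}$. Your operator $R$ is a priori not a minor (it discards rows rather than aggregating them), but, by the support observation, it agrees with the genuine minor $_{/\beta}$ on the image of $\xi$, which is what makes the idea go through; the paper phrases the map honestly as a minor so it stays inside the minion framework throughout.

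There is, however, a genuine gap in the way you justify the key support claim. You write that $2k$-tensoriality of $\xi$ \emph{(resp.\ $\zeta$)} forces $\xi(\hat\be)$ \emph{(resp.\ $\zeta(\hat\be)$)} to be supported on $2k$-tuples whose consecutive pairs are all edges. For $\xi$, this is correct, but only because $\Qconv$ is a \emph{conic} minion — that is exactly what Proposition~\ref{prop_partial_homo} requires. For $\zeta$, the claim is false as stated: $\Zaff$ is not conic (Example~\ref{Qconv_Zaff_conic_not}), and a $2k$-tensorial homomorphism into $\freeZ(\A^{\tensor{2k}})$ can assign nonzero (cancelling) integer weights to tuples that do not yield partial homomorphisms. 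This is precisely why the $\AIP$ hierarchy is not sound in the limit. Your argument is salvageable, but the mechanism is different: the support restriction for $\zeta$ does not come from tensoriality; it comes from the refinement condition $\supp(\zeta(\hat\be))\subseteq\supp(\xi(\hat\be))$ together with the (conic-driven) support restriction for $\xi$. You should make this explicit. The paper avoids the issue entirely by working directly with the conic minion $\BAminion$ (Proposition~\ref{prop_reduction_line_digraph}), and then Proposition~\ref{prop_reduction_line_digraph_BA} follows in one line — which is cleaner than decoupling into $\xi$ and $\zeta$.
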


The key point is that, under the application of the line digraph operator, the chromatic number of a digraph decreases \emph{exponentially} fast, while the $\BA$ acceptance level decreases only \emph{polynomially} fast.
\begin{figure}
\centering
\begin{subfigure}{.22\textwidth}
  \adjincludegraphics[width=.8\linewidth,trim={{.2\width} {.45\width} {.15\width} {.45\width}},clip]{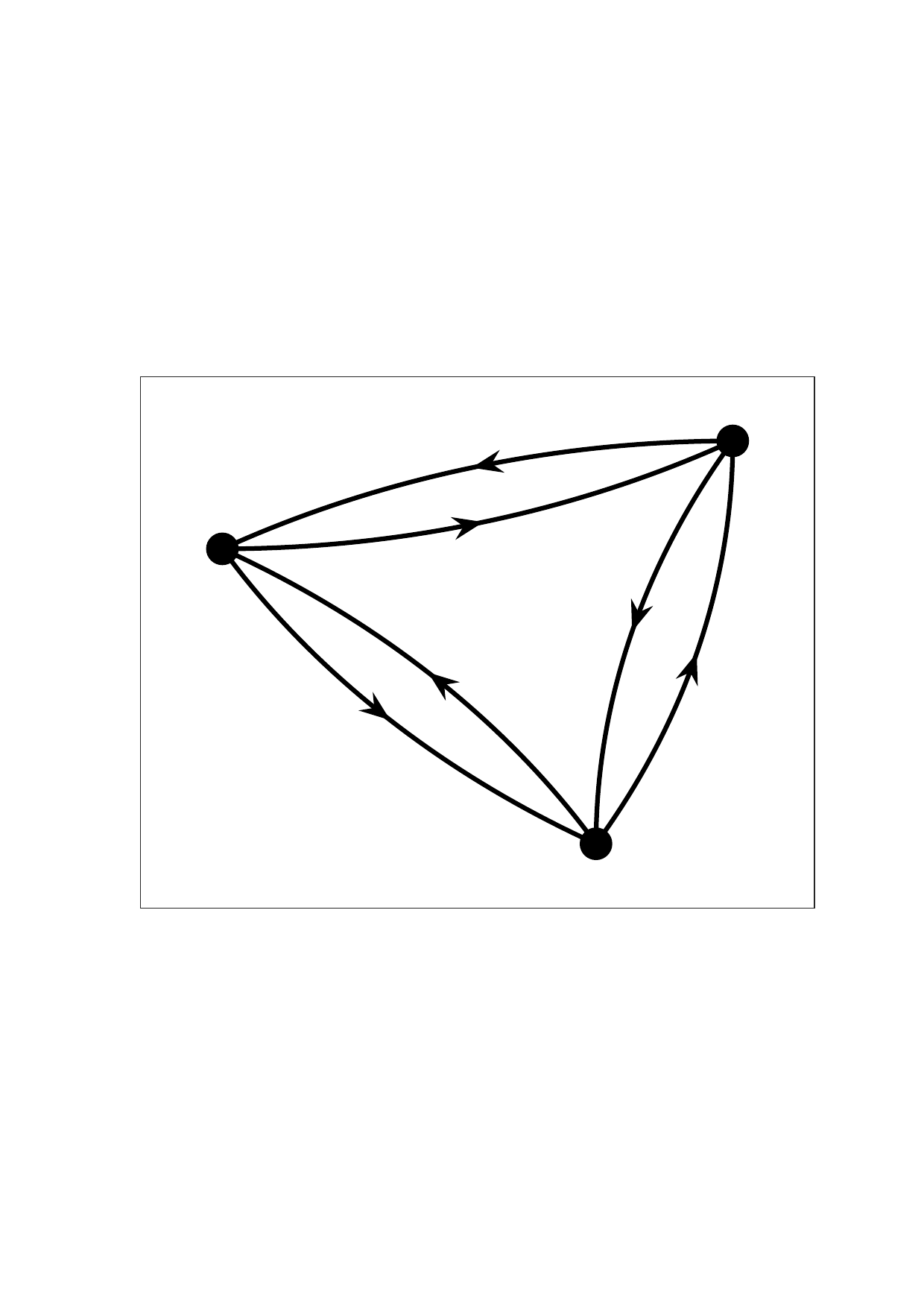}
  \caption{$\SSS_{3,0}$}
  \label{fig:sub1a}
\end{subfigure}%
\begin{subfigure}{.24\textwidth}
  \adjincludegraphics[width=.85\linewidth,trim={{.2\width} {.45\width} {.15\width} {.45\width}},clip]{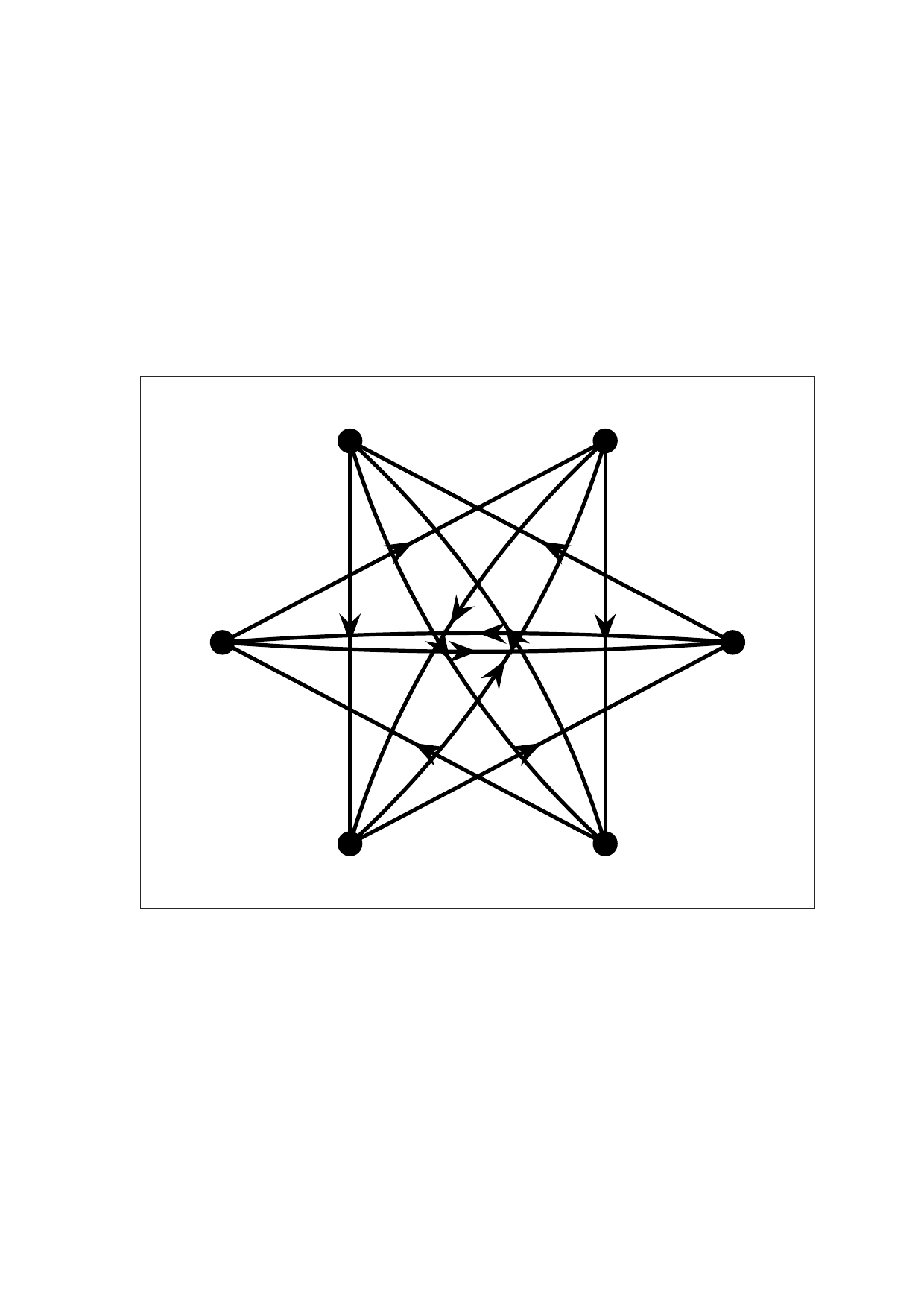}
  \caption{$\SSS_{3,1}$}
  \label{fig:sub1b}
\end{subfigure}%
\begin{subfigure}{.27\textwidth}
  \adjincludegraphics[width=.95\linewidth,trim={{.2\width} {.45\width} {.15\width} {.45\width}},clip]{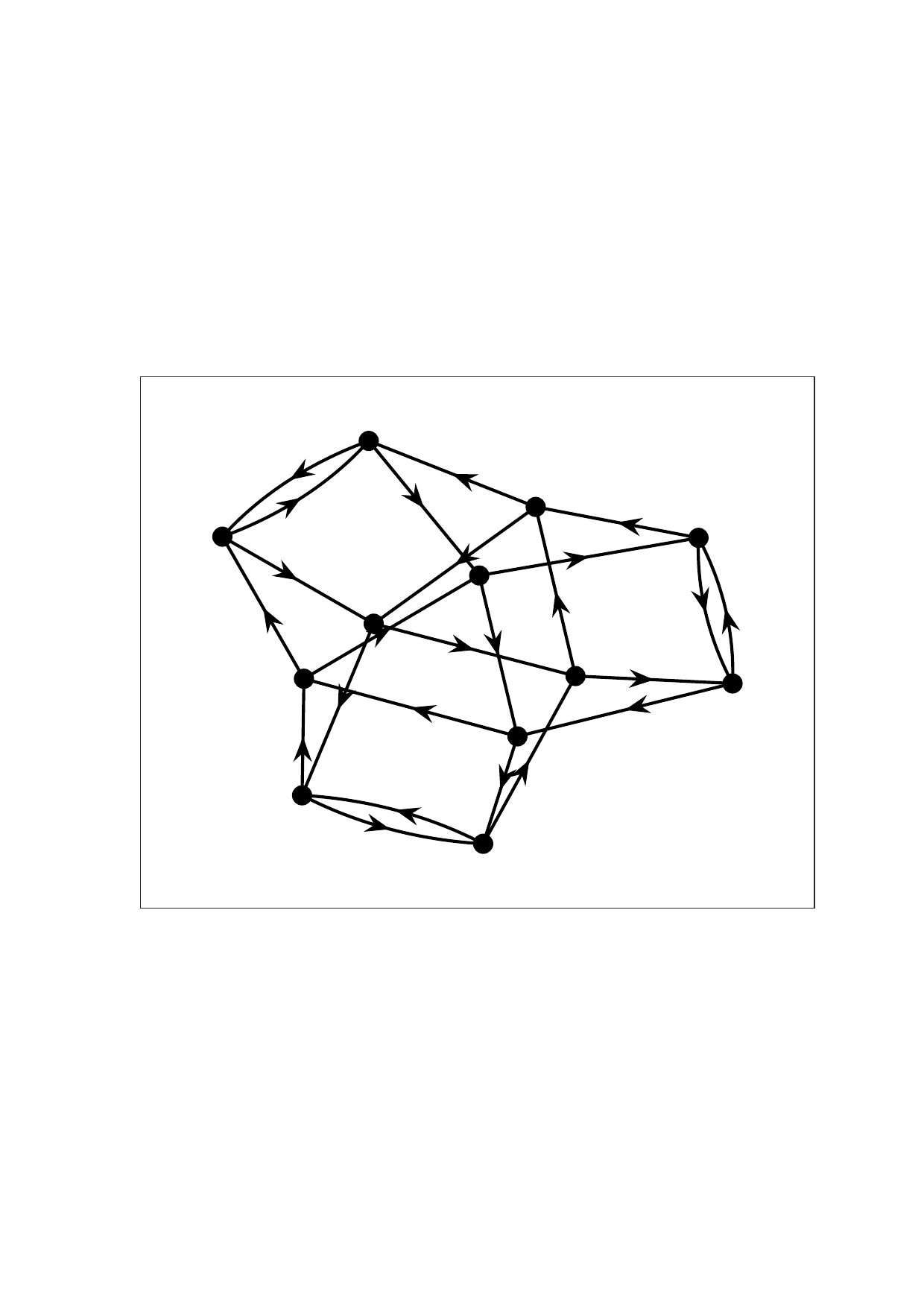}
  \caption{$\SSS_{3,2}$}
  \label{fig:sub1c}
\end{subfigure}%
\begin{subfigure}{.27\textwidth}
  \adjincludegraphics[width=1\linewidth,trim={{.2\width} {.45\width} {.15\width} {.45\width}},clip]{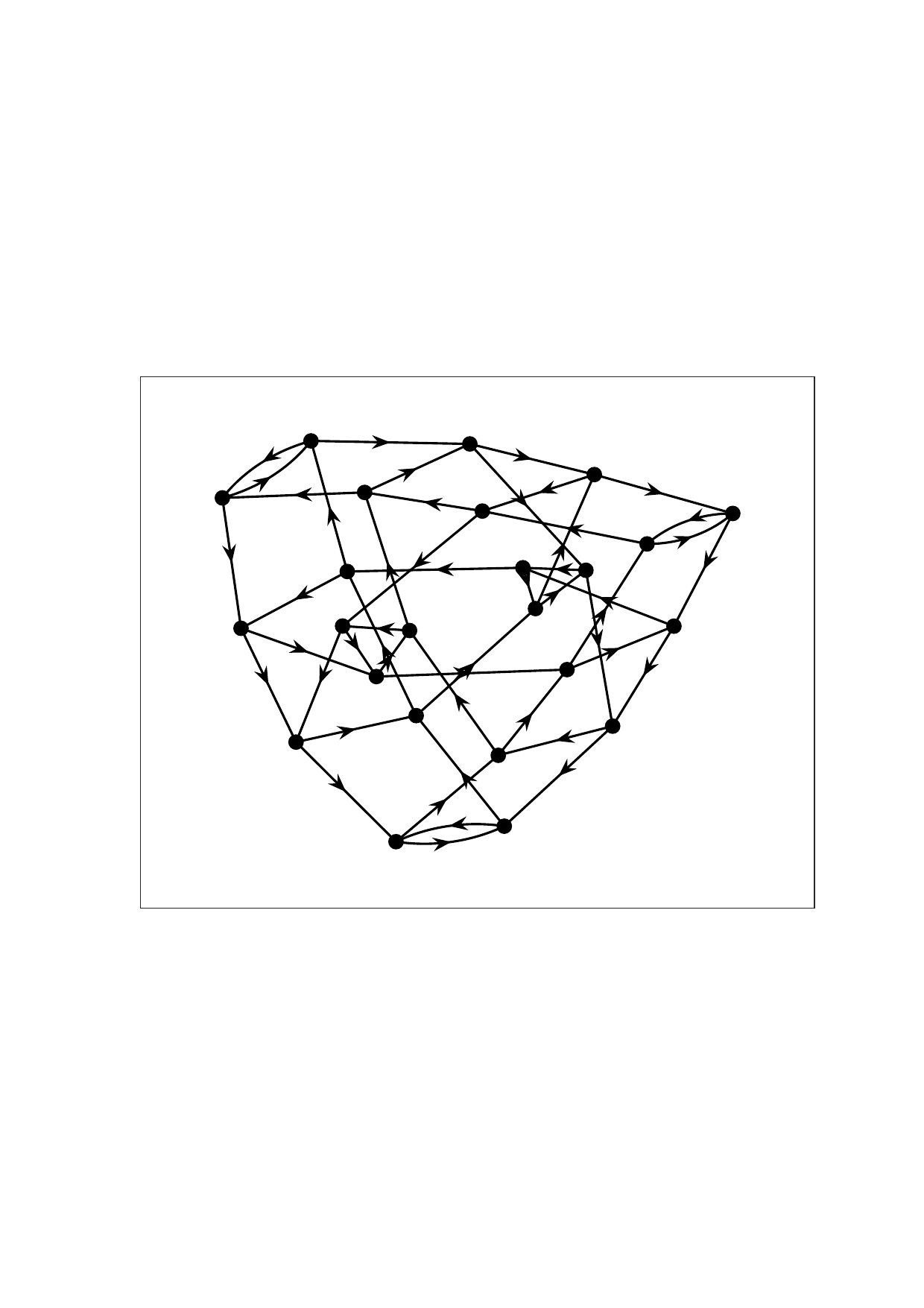}
  \caption{$\SSS_{3,3}$}
  \label{fig:sub1d}
\end{subfigure}%
\caption{Shift digraphs.}
\label{fig_shift_digraphs}
\end{figure}
Intuitively, our strategy to fool $\BA^k$ as an algorithm to solve $\PCSP(\K_c,\K_d)$ will be to take as the fooling instance
a shift digraph $\SSS_{q,i}$ where $q\sim\exp^{(i)}(d+1)$, rather than the clique $\K_{d+1}$. Here, by $\exp^{(i)}(\cdot)$ ($\operatorname{pol}^{(i)}(\cdot)$), we mean a function obtained by iterating $i$-many times an exponential (polynomial) function. Chromatically, this digraph is similar to $\K_{d+1}$ by Theorem~\ref{prop_chromatic_number_line_digraph}, so it is not $d$-colourable. On the other hand, for large enough $i$, the difference in speed decrease guarantees that $\BA^{\operatorname{pol}^{(i)}(k)}(\K_q,\K_{\operatorname{exp}^{(i)}(c)})=\YES$ by Theorem~\ref{acceptance_prop_BA_big_enough}---which, after applying Proposition~\ref{prop_reduction_line_digraph_BA} for a suitable number of times, eventually implies $\BA^k(\SSS_{q,i},\K_c)=\YES$.
We note that this argument crucially depends on the fact that the size $\frac{k^2+k}2$ of the clique in Theorem~\ref{acceptance_prop_BA_big_enough}---i.e., the width of the hollow-shadowed crystals constructed in Section~\ref{subsec_crystals}---is sub-exponential in $k$. Before proving Theorem~\ref{thm_BLPAIPk_no_solves_AGC} in full detail, we present a result---which holds for hierarchies corresponding to all linear minions---stating that acceptance of some instance $\X$ by some level of the $\BA$ hierarchy is preserved under homomorphisms of the template.

\begin{prop}
\label{prop_BA_preserves_homo}
Let $2\leq k\in\N$, let $\X,\A,\B$ be digraphs such that $\A\to\B$, and suppose that $\BA^k(\X,\A)=\YES$. Then $\BA^k(\X,\B)=\YES$.
\end{prop}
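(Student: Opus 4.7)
The plan is to exploit functoriality: a homomorphism $h\colon\A\to\B$ induces a map of tensor powers $h^{\tensor{k}}\colon\Ak\to\Bk$ (acting coordinatewise on $k$-tuples of vertices and componentwise on hyperedge tensors), which in turn should induce a ``push-forward'' map $\tilde h\colon\freeBA(\Ak)\to\freeBA(\Bk)$ at the level of free structures. Post-composing with $\tilde h$ would then convert a certificate for $\BA^k(\X,\A)$ into one for $\BA^k(\X,\B)$. By Theorem~\ref{thm_acceptance_BA_hierarchy_general}, $\BA^k(\X,\A)=\YES$ yields a $k$-tensorial homomorphism $\zeta\colon\Xk\to\freeBA(\Ak)$, so the task reduces to constructing $\tilde h$ and showing that $\tilde h\circ\zeta$ remains a $k$-tensorial homomorphism.

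The construction of $\tilde h$ uses row-summing along the fibres of $h^{\tensor{k}}$: a vertex of $\freeBA(\Ak)$ is a pair of matrices $(P,Q)\in\Qconv\times\Zaff$ with rows indexed by $\Vset(\Ak)$ and satisfying the refinement condition (zeros in $P$ force zeros in $Q$); define the $\bb$-th row of $\tilde h(P,Q)$ to be $\bigl(\sum_{\ba\in(h^{\tensor{k}})^{-1}(\bb)} P_{\ba},\;\sum_{\ba\in(h^{\tensor{k}})^{-1}(\bb)} Q_{\ba}\bigr)$, for each $\bb\in\Vset(\Bk)$. Because summing rows is an admissible elementary row operation, the resulting pair still lies in $\Qconv\times\Zaff$; and because $\Qconv$-entries are nonnegative, a sum of $\Qconv$-entries vanishes iff each summand does, so the refinement condition is preserved (each corresponding $\Zaff$-summand vanishes by hypothesis, hence so does the $\Zaff$-sum). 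It is then routine to check that $\tilde h$ sends (hyper-)edges of $\freeBA(\Ak)$ to (hyper-)edges of $\freeBA(\Bk)$: edges in a free structure arise from matrices obtained via elementary row operations from a common ``super-matrix'', and these operations commute with row-summing along fibres of $h^{\tensor{k}}$.

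The main obstacle is verifying that $\tilde h\circ\zeta$ respects the full $k$-tensorial structure, not merely being a homomorphism of hypergraphs. Since a $k$-tensorial map is characterised by its commutation with the family of projection/contraction operators relating the various levels of the tensor power, one must check that $\tilde h$ itself commutes with such contractions. This reduces to the observation that contracting a tensor along a fixed index commutes with summing along an orthogonal set of indices; both operations are linear, and the only subtlety lies in the careful bookkeeping of fibres across multiple levels, as dictated by the formalism of~\cite{cz23soda:minions}. Once this compatibility is established, $\tilde h\circ\zeta$ is the desired $k$-tensorial homomorphism $\Xk\to\freeBA(\Bk)$, whence $\BA^k(\X,\B)=\YES$ by another application of Theorem~\ref{thm_acceptance_BA_hierarchy_general}.
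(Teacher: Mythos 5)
Your proposal is correct and takes essentially the same approach as the paper: Proposition~\ref{prop_acceptance_preserved_under_homo_linear_minions} defines the push-forward $\vartheta(\bx)=\xi(\bx)_{/g}=P_g\cont{k}\xi(\bx)$, where $g$ is the coordinatewise extension of the template homomorphism, which is exactly your row-summing along fibres of $h^{\tensor{k}}$. The paper phrases the argument for an arbitrary linear minion $\Mminion$ (so that closure under $P_\pi$ absorbs your explicit check of the refinement condition, which is how Example~\ref{example_Qconv_Zaff} establishes that $\BAminion$ is a linear minion in the first place), but the core construction, the hyperedge-preservation via $g\circ\pi^\A_\bi=\pi^\B_\bi\circ h$, and the $k$-tensoriality check via the commutation $P_g\cont{k}\Pi^{n\cdot\bone_k}_\bi=\Pi^{p\cdot\bone_k}_\bi\cont{k}P_g$ are the same.
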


\begin{proof}[Proof of Theorem~\ref{thm_BLPAIPk_no_solves_AGC}]
Since $\BA^2$ is at least as powerful as $\BA^1$, we can assume that $k\geq 2$.
Suppose first that $c\geq 4$. In this case, we can find $i\in\N$ such that $b^{(i)}(c)\geq k^2 4^i$.
Take $q>a^{(i)}(d)$.
We claim that the shift digraph $\SSS_{q,i}$ is a fooling instance for the $k$-th level of $\BA$ applied to $\PCSP(\K_c,\K_d)$; in other words, we claim that $(1)$ $\BA^k(\SSS_{q,i},\K_c)=\YES$ and $(2)$ $\SSS_{q,i}\not\to\K_d$.

For $(1)$, we start by applying Theorem~\ref{acceptance_prop_BA_big_enough} to find that $\BA^{k2^i}(\K_q,\K_{(k^24^i+k2^i)/2})=\YES$.
Observe that $$\frac{k^2 4^i+k 2^i}{2}\leq k^24^i\leq b^{(i)}(c),$$ so $$\K_{(k^2
  4^i+k 2^i)/2}\to \K_{k^24^i} \to \K_{b^{(i)}(c)}.$$ By Proposition~\ref{prop_BA_preserves_homo}, we deduce that $\BA^{k2^i}(\K_q,\K_{b^{(i)}(c)})=\YES$. Applying Proposition~\ref{prop_reduction_line_digraph_BA} repeatedly, we obtain $\BA^k(\SSS_{q,i},\SSS_{b^{(i)}(c),i})=\YES$. Noticing that $\K_{b^{(i)}(c)}\to\K_{b^{(i)}(c)}$ and applying the second part of Theorem~\ref{prop_chromatic_number_line_digraph} repeatedly, we find $\SSS_{b^{(i)}(c),i}\to\K_c$. Again by Proposition~\ref{prop_BA_preserves_homo}, we conclude that $\BA^k(\SSS_{q,i},\K_c)=\YES$, as required.
For $(2)$, we first note that $\K_q\not\to\K_{a^{(i)}(d)}$ as $q>a^{(i)}(d)$. Applying the (contrapositive of the) first part of Theorem~\ref{prop_chromatic_number_line_digraph} repeatedly, we deduce that $\SSS_{q,i}\not\to\K_d$, as required.

Suppose now that $c=3$. Assume, for the sake of contradiction, that the $k$-th
  level of $\BA$ solves $\PCSP(\K_3,\K_d)$. Let $\X$ be a digraph such that
  $\BA^{4k}(\X,\K_4)=\YES$. Applying
  Proposition~\ref{prop_reduction_line_digraph_BA} twice, we find that
  $\BA^k(\delta(\delta\X),\SSS_{4,2})=\YES$. It was observed in~\cite{zhu1998survey} (see also~\cite{poljak1991coloring}) that $\SSS_{4,2}\to\K_3$. Combining this with
  Proposition~\ref{prop_BA_preserves_homo} yields
  $\BA^k(\delta(\delta\X),\K_3)=\YES$. Since we are assuming that $\BA^k$ solves
  $\PCSP(\K_3,\K_d)$, we must have $\delta(\delta\X)\to\K_d$, whence it follows,
  through a double application of the first part of
  Theorem~\ref{prop_chromatic_number_line_digraph}, that $\X\to\K_{a^{(2)}(d)}$.
  Note now that $d\geq c=3$ implies $a^{(2)}(d)=2^{2^d}\geq 2^{2^3}\geq 4$, so $\K_4\to\K_{a^{(2)}(d)}$, which means that $\PCSP(\K_4,\K_{a^{(2)}(d)})$ is well defined. Hence, we have shown that the $(4k)$-th level of $\BA$ solves $\PCSP(\K_4,\K_{a^{(2)}(d)})$, thus contradicting the first part of the proof.
\end{proof}

\section{Preliminaries}
\label{sec_preliminaries}
Throughout this work,
the expression ``$x \stackrel{\operatorname{L}.\bullet}{\;=\;} y$'' shall mean ``$x=y$ by Lemma~$\bullet$''. Similarly, 
``$x \stackrel{\operatorname{P}.\bullet}{\;=\;} y$'' and 
``$x \stackrel{(\bullet)}{\;=\;} y$'' shall mean 
``$x=y$ by Proposition~$\bullet$'' and 
``$x=y$ by equation~$(\bullet)$'', respectively.

\subsection{Hypergraphs}
For ${k}\in\N$, a \emph{${k}$-uniform hypergraph} $\HH$ consists of a set $\Vset(\HH)$ of elements called \emph{vertices} and a set $\Eset(\HH)\subseteq \Vset(\HH)^{k}$ of tuples of ${k}$ vertices called \emph{hyperedges}.
A $2$-uniform hypergraph is a \emph{digraph}, as defined in Section~\ref{sec:intro}.
The notion of homomorphism, defined in Section~\ref{sec:intro} for digraphs, naturally extends to hypergraphs: Given two ${k}$-uniform hypergraphs $\HH$ and $\tilde\HH$, a map $f:\Vset(\HH)\to\Vset(\tilde\HH)$ is a \emph{homomorphism} from $\HH$ to $\tilde\HH$ if $f(\bh)\in \Eset(\tilde\HH)$ for any $\bh\in \Eset(\HH)$, where $f$ is applied component-wise to the vertices in $\bh$. We indicate the existence of a homomorphism from $\HH$ to $\tilde\HH$ by writing $\HH\to\tilde\HH$.

\subsection{Tuples}
\label{subsec_tuples}
We let $\N$ be the set of positive integers, and we let $\N_0=\N\cup\{0\}$. Given $n\in\N$, $[n]$ denotes the set $\{1,\dots,n\}$. We additionally set $[0]=\emptyset$. Given a tuple $\bn=(n_1,\dots,n_q)\in\N^q$ for some $q\in\N$, we denote by $[\bn]$ the set $[n_1]\times\dots\times[n_q]$.
Given a tuple $\ba=(a_1,\dots,a_q)\in [\bn]$ and a tuple $\bi=(i_1,\dots,i_p)\in
[q]^p$ for $p\in\N$,
the \emph{projection} of $\ba$ onto $\bi$ is the tuple $\ba_\bi$ obtained by selecting from $\ba$ the entries indexed by $\bi$; i.e., $\ba_\bi= (a_{i_1},\dots,a_{i_p})$. Notice that $\ba_\bi\in [\bn_\bi]$. 
Tuple projection is associative, in the sense that, for $\bj\in [p]^m$, $(\ba_\bi)_\bj=\ba_{(\bi_\bj)}$. Hence, we will omit parantheses when dealing with iterated projections.
For $\tilde\bn\in\N^p$ and $\bb=(b_1,\dots,b_p)\in [\tilde{\bn}]$, the \emph{concatenation} of $\ba$ and $\bb$ is the tuple $(\ba,\bb)=(a_1,\dots,a_q,b_1,\dots,b_p)$. Notice that $(\ba,\bb)\in [(\bn,\tilde{\bn})]$. 
It will be handy to extend the notation above to include tuples of length zero.
For any set $S$, we define $S^0= \{\bepsilon\}$, where $\bepsilon$ denotes the empty tuple. For any tuple $\bx$, we let $\bx_\bepsilon=\bepsilon$ and $(\bx,\bepsilon)=(\bepsilon,\bx)=\bx$. We also define $[\bepsilon]= \{\bepsilon\}$. 
For $n\in\N$, we define the tuple $\ang{n}=(1,\dots,n)$. Also, we let $\ang{0}=\bepsilon$. 
The cardinality of a set $S$ is denoted by $|S|$. Given a tuple $\bs\in S^k$ for
some $k\in\N_0$, $\set(\bs)=\{s_i:i\in [k]\}$ is the set of elements appearing
in $\bs$, while $|\bs|=|\set(\bs)|$ is the number of distinct entries in
$\bs$.
Given two sets $S,\tilde S$ and two tuples $\bs=(s_1,\dots,s_k)\in S^k$, $\tilde{\bs}=(\tilde{s}_1,\dots,\tilde{s}_k)\in \tilde{S}^k$, we write $\bs\prec\tilde{\bs}$ if, for any $i,j\in [k]$, $s_i=s_j$ implies $\tilde{s}_i=\tilde{s}_j$. 
We write $\bs\sim\tilde\bs$ if $\bs\prec\tilde\bs$ and $\tilde\bs\prec\bs$. The symbols ``$\not\prec$'' and ``$\not\sim$'' denote the negations of ``$\prec$'' and ``$\sim$'', respectively.
Observe that, for every $k$-tuple $\bs$, it holds that $\ang{k}\prec\bs\prec\mathbf{c}$, where $\mathbf{c}$ is a constant $k$-tuple.
We denote by $\bzero_k$ and $\bone_k$ the all-zero tuple and the all-one tuple of length $k$, respectively.

\subsection{Hierarchies of relaxations}
\label{subsec_hierarhies_of_relaxations}
Given two digraphs $\X$ and $\A$ and an integer $k\in\N$, introduce a variable $\lambda_{\bx,\ba}$ for each $\bx\in\Vset(\X)^k$ and $\ba\in\Vset(\A)^k$, and a variable $\mu_{\by,\bb}$ for each $\by\in\Eset(\X)$ and $\bb\in\Eset(\A)$. Consider the following system of equations:
\begin{align}
\label{eqn_hierarchy}
\tag{$\IP^k$}
\begin{array}{llllllll}
\mbox{($\IP^k_1$)}\qquad & \displaystyle\sum_{\ba\in \Vset(\A)^k}\lambda_{\bx,\ba}&=&1 & \bx\in\Vset(\X)^k\\[10pt]
\mbox{($\IP^k_2$)} & \displaystyle \sum_{\substack{\hat\ba\in \Vset(\A)^k\\\hat\ba_\bi=\ba}}\lambda_{\bx,\hat\ba}&=&\lambda_{\bx_\bi,\ba} & 
\bx\in\Vset(\X)^k,\;\bi\in [k]^k,\;\ba\in\Vset(\A)^k\\[10pt]
\mbox{($\IP^k_3$)} & \displaystyle \sum_{\substack{\bb\in \Eset(\A)\\\bb_\bi=\ba}}\mu_{\by,\bb}&=&\lambda_{\by_\bi,\ba} & 
\by\in\Eset(\X),\;\bi\in [2]^k,\;\ba\in\Vset(\A)^k\\[10pt]
\mbox{($\IP^k_4$)} & \displaystyle \lambda_{\bx,\ba}&=&0 & 
\bx\in\Vset(\X)^k,\;\ba\in\Vset(\A)^k,\;\bx\not\prec\ba\\[10pt]
\mbox{($\IP^k_5$)} & \displaystyle \mu_{\by,\bb}&=&0 & 
\by\in\Eset(\X),\;\bb\in\Eset(\A),\;\by\not\prec\bb.
\end{array}
\end{align}
The equations~($\IP^k_1$) enforce that the variables be properly scaled\footnote{($\IP^k_1$) requires that only the $\lambda$ variables should sum up to $1$, but combining~($\IP^k_1$)~and~($\IP^k_3$) yields the same requirement for the $\mu$-variables as well.}---which is particularly desirable if we wish to interpret them as probability distributions over the set of assignments of vertices of $\A$ (``colours'') to sets of vertices of $\X$. Given a joint probability distribution over some random variables, the corresponding probability distribution over a subset of the variables is obtained by \emph{marginalising}; i.e., by summing up over all variables that are ignored. The equations~($\IP^k_2$)~and~($\IP^k_3$) simulate this marginality requirement for the distributions $\lambda$ and $\mu$, respectively. Finally, the equations~($\IP^k_4$)~and~($\IP^k_5$) simply make sure that a vertex of $\X$ appearing multiple times in the same tuple receives the same colour.
Note that~($\IP^k_5$) is superfluous when $\X$ is loopless since, in that case, no edge $\by$ satisfies $\by\not\prec\bb$.\footnote{A different formulation of the system~\eqref{eqn_hierarchy} would consider $\lambda$-variables corresponding to \emph{sets} rather than \emph{tuples} of vertices; by virtue of~($\IP^k_4$), the two formulations are equivalent.} 

Let $k\geq 2$. We write $\BLP^k(\X,\A)=\YES$ if the system~\eqref{eqn_hierarchy} admits a solution such that all variables take rational nonnegative values. 
We write $\AIP^k(\X,\A)=\YES$ if the system~\eqref{eqn_hierarchy} admits a solution such that all variables take integer (possibly negative) values. 
We write $\BA^k(\X,\A)=\YES$ if the system~\eqref{eqn_hierarchy} admits both a solution such that all variables take rational nonnegative values and a solution such that all variables take integer values, and the following \emph{refinement condition} holds: Denoting by the superscript $(\Bmat)$ the variables in the $\BLP^k$ solution and by the superscript $(\Amat)$ those in the $\AIP^k$ solution, we require that
\begin{subequations}
\begin{align}
     \lambda^{(\Bmat)}_{\bx,\ba}=0 \;\;\;\Rightarrow\;\;\; \lambda^{(\Amat)}_{\bx,\ba}=0
    &&&
    \mbox{for each }
    \bx\in\Vset(\X)^k,\;
    \ba\in\Vset(\A)^k
    \label{eqn_refinement_condition_lambda}\\[5pt]
    \mu^{(\Bmat)}_{\by,\bb}=0 \;\;\;\Rightarrow\;\;\; \mu^{(\Amat)}_{\by,\bb}=0
    &&&
    \mbox{for each }
    \by\in\Eset(\X),\;
    \bb\in\Eset(\A).\label{eqn_refinement_condition_mu}
\end{align}
\end{subequations}
\begin{rem}
The following is a procedure to check whether $\BA^k(\X,\A)=\YES$ in polynomial time in the size of $\X$ (cf.~\cite{bgwz20}): 
\begin{enumerate}
    \item  Check whether~\eqref{eqn_hierarchy} has a rational nonnegative solution. If it does not, output $\NO$; otherwise:
    \item Select a solution $(\lambda^{\operatorname{ri}},\mu^{\operatorname{ri}})$ lying in the relative interior of the polytope of solutions.
    \item
    Check whether there exists an integer solution to the system~\eqref{eqn_hierarchy}, \emph{refined} with the requirement that all variables whose value in $(\lambda^{\operatorname{ri}},\mu^{\operatorname{ri}})$ is zero should be set to zero. If there is one, output $\YES$; otherwise, output $\NO$.
\end{enumerate}
The procedure above can be implemented in a way that it runs in polynomial time in the size of $\X$: Step 1 corresponds to checking whether an LP on polynomially many variables is feasible; step 2 has polynomial run-time by virtue of a result in~\cite{MR1261419} (cf.~\cite{BG19}); step 3 corresponds to checking feasibility of a system of linear Diophantine equations on polynomially many variables, which can be done in polynomial time by computing the Hermite or the Smith normal forms of the matrix of coefficients, see~\cite{MR874114}.

Clearly, if such procedure outputs $\YES$, then $\BA^k(\X,\A)=\YES$. For the
  converse implication, suppose that $\BA^k(\X,\A)=\YES$ and let
  $(\lambda^{(B)},\mu^{(B)})$ and $(\lambda^{(A)},\mu^{(A)})$ be solutions
  to~\eqref{eqn_hierarchy} witnessing it. Notice that, in this case, the
  procedure does produce a solution
  $(\lambda^{\operatorname{ri}},\mu^{\operatorname{ri}})$, but this may differ
  from $(\lambda^{(B)},\mu^{(B)})$. Nevertheless, any variable that is zero in
  $(\lambda^{\operatorname{ri}},\mu^{\operatorname{ri}})$ is also zero in
  $(\lambda^{(B)},\mu^{(B)})$ (by the definition of relative interior,
  cf.~\cite{schrijver1998theory}), so $(\lambda^{(A)},\mu^{(A)})$ does witness that the refined system of step 3 has an integer solution and, thus, that the procedure outputs $\YES$.
\end{rem}

We also define $\BLP^1$, $\AIP^1$, and $\BA^1$ as $\BLP$, $\AIP$, and $\BA$, respectively, as described in Section~\ref{sec_overview}. Notice that this almost entirely corresponds to taking $k=1$ in the definition above, except for the fact that the equations ($\IP^1_5$) are dropped. Indeed, looking at~\eqref{eqns_BLP}, we observe that ($\IP^1_1$) is equivalent to ($\IP_1$), ($\IP^1_3$) is equivalent to ($\IP_2$), while ($\IP^1_2$) and ($\IP^1_4$) are vacuous; however, ($\IP^1_5$) is not implied by the system~\eqref{eqns_BLP}.

\begin{rem}
\label{rem_redundancy_mu_refinement}
For $k\geq 2$,
the equations~\eqref{eqn_refinement_condition_mu} are implied by the equations~\eqref{eqn_refinement_condition_lambda}.
Indeed, suppose that $\mu^{(\Bmat)}_{\by,\bb}=0$ for some $\by\in\Eset(\X)$, $\bb\in\Eset(\A)$. Observe that, for the tuple $\bi=(1,2,1,\dots,1)\in [2]^k$, we have $\{\bc\in\Eset(\A):\bc_\bi=\bb_\bi\}=\{\bb\}$. Hence,~($\IP^k_3$) yields
\begin{align*}
    \mu^{(\Bmat)}_{\by,\bb}
    \spaceeq
    \sum_{\substack{\bc\in\Eset(\A)\\\bc_\bi=\bb_\bi}}\mu^{(\Bmat)}_{\by,\bc}
    \spaceeq
    \lambda^{(\Bmat)}_{\by_\bi,\bb_\bi}
\end{align*}
and, similarly, $\mu^{(\Amat)}_{\by,\bb}=\lambda^{(\Amat)}_{\by_\bi,\bb_\bi}$. Therefore, $\mu^{(\Bmat)}_{\by,\bb}=0$ implies $\lambda^{(\Bmat)}_{\by_\bi,\bb_\bi}=0$, whence it follows through~\eqref{eqn_refinement_condition_lambda} that $\lambda^{(\Amat)}_{\by_\bi,\bb_\bi}=0$, thus forcing $\mu^{(\Amat)}_{\by,\bb}=0$. In fact, the same holds if the hierarchy is applied to arbitrary relational structures rather than digraphs---in which case, we require that $k$ be at least the maximum arity of the relations in the structures. 
\end{rem}

Given two digraphs $\A$ and $\B$ such that $\A\to\B$, we say that $\BA^k$ ($\BLP^k$, $\AIP^k$) \emph{solves} $\PCSP(\A,\B)$ if $\X\to\B$ whenever $\BA^k(\X,\A)=\YES$ ($\BLP^k(\X,\A)=\YES$, $\AIP^k(\X,\A)=\YES$). Note that the algorithms are complete: If $\X\to\A$ then $\BA^k(\X,\A)=\BLP^k(\X,\A)=\AIP^k(\X,\A)=\YES$.
Indeed, given a homomorphism $f: \X\to\A$, the distributions assigning weight $1$ to $(\bx,f(\bx))$ for each $\bx\in \Vset(\X)^k\cup E(\X)$ and weight $0$ to any other pair $(\bx,\ba)$ are easily seen to yield both a $\BLP^k$ and an $\AIP^k$ solution, and to satisfy the extra refinement condition.
Hence, the algorithms do not produce false negatives (but may produce false positives).

\subsection{Tensors}
\label{subsec_tensors}

Take a set $S$, an integer $q\in\N_0$, and a tuple $\bn\in\N^q$. We denote by $\cT^\bn(S)$ the set of functions from $[\bn]$ to $S$. We call a function $T$ in $\cT^\bn(S)$ a \emph{tensor} on $q$ \emph{modes} of sizes $n_1,\dots,n_q$, and we visualise $T$ as a $q$-dimensional array or hypermatrix, each of whose cells contains an element of $S$. We sometimes use the notation $T=(t_\bi)_{\bi\in [\bn]}$ where, for $\bi\in [\bn]$, $t_\bi\in S$ is the \emph{$\bi$-th entry of $T$}; i.e., the image of $\bi$ under $T$.
For example, $\cT^{n}(S)$ and $\cT^{(m,n)}(S)$ are the sets of $n$-vectors and $m\times n$ matrices, respectively, having entries in $S$. Notice that $\cT^\bepsilon(S)$ is the set of functions from $[\bepsilon]=\{\bepsilon\}$ to $S$, which we identify with $S$.
We will often consider \emph{cubical} tensors, all of whose modes have equal length; i.e., tensors in the set $\cT^{n\cdot\bone_q}(S)$ for some $n\in \N$.

Many tensors appearing throughout this work have entries in the field $\Q$ of rational numbers. Such tensors can be multiplied with each other via an operation that generalises several linear-algebraic products. Take three integers $a,b,c\in\N_0$ and three tuples $\ba\in\N^a$, $\bb\in\N^b$, $\bc\in\N^c$. The \emph{contraction} of two tensors $T=(t_\bi)_{\bi\in [(\ba,\bb)]}\in\cT^{(\ba,\bb)}(\Q)$ and $U=(u_\bi)_{\bi\in [(\bb,\bc)]}\in\cT^{(\bb,\bc)}(\Q)$, denoted by $T\cont{b}U$, is the tensor in $\cT^{(\ba,\bc)}(\Q)$ whose $(\bj,\bell)$-th entry is \[\sum_{\bk\in[\bb]}t_{(\bj,\bk)}u_{(\bk,\bell)}\] 
for $\bj\in[\ba]$ and $\bell\in[\bc]$.
If at least one of $a$ and $c$ equals zero---i.e., if we are contracting over all modes of $T$ or $U$---we write $T\ast U$ for $T\cont{b}U$ to increase readability. 
\begin{example}
For $m,n,p\in\N$, consider the tensors $z\in\cT^{\bepsilon}(\Q)=\Q$; $\bu,\bv\in\cT^{m}(\Q)$; $\bw\in\cT^{n}(\Q)$; $M,N\in\cT^{(m,n)}(\Q)$; and $Q\in\cT^{(n,p)}(\Q)$. Following~\cite[Example~22]{cz23soda:minions}, we can list several classic linear-algebraic products as instances of tensor contraction:
\begin{align*}
    \begin{array}{llllllll}
        z\cont{0}\bu&=&z\ast\bu&=&z\bu & \mbox{ (scalar times vector)} \\
        z\cont{0}M&=&z\ast M&=&zM & \mbox{ (scalar times matrix)} \\
        \bu\cont{1}\bv&=&\bu\ast\bv&=&\bu^T\bv & \mbox{ (inner product of vectors)} \\
        \bu\cont{0}\bw&=&\bu\bw^T && &\mbox{ (outer product of vectors)} \\
        M\cont{1}Q&=&MQ && & \mbox{ (standard matrix product)} \\
        M\cont{2}N&=&M\ast N&=&\tr(M^TN) &\mbox{ (Frobenius inner product of matrices)}.
    \end{array}
\end{align*}
\end{example}
\noindent Let $a\in\N_0$ and $\ba\in\N^a$.
Given $\bi\in[\ba]$, we denote by $E_\bi$ the \emph{$\bi$-th standard unit tensor}; i.e., the tensor in
$\cT^\ba(\Q)$ all of whose entries are $0$, except the
$\bi$-th entry that is $1$. 
(While this tensor is defined in terms of both $\bi$ and $\ba$, the latter tuple shall always be clear from the context, and we do not indicate it explicitly in the notation to improve readability.)
Observe that, for any $T\in \cT^\ba(\Q)$, we may express the $\bi$-th entry of $T$ as $E_\bi\ast T$. 
In other words, if $T=(t_\bi)_{\bi\in [\ba]}$, then $E_\bi\ast T=t_\bi$.
We let the \emph{support} of $T$ be the set of indices of all nonzero entries of $T$; i.e., the set $\supp(T)=\{\bi\in [\ba]:E_\bi\ast T\neq 0\}$.
\begin{rem}
\label{rem_E_epsilon_equals_one}
    Since $\N^0=\{\bepsilon\}$ and $[\bepsilon]=\{\bepsilon\}$, the tensor $E_\bepsilon$ is well defined and lives in $\cT^\bepsilon(\Q)=\Q$. Observe that $E_\bepsilon=1$, as
    its unique entry---i.e., its $\bepsilon$-th entry---is $1$ by definition.
\end{rem}

\noindent 
As noted in~\cite{cz23soda:minions},
tensor contraction satisfies a specific form of associativity. We include a simple proof of this fact for completeness.
\begin{lem}
\label{lem_associativity_contraction}
Take five integers $a,b,c,d,e\in\N_0$, five tuples $\ba\in\N^{a},\bb\in\N^{b},\bc\in\N^{c},\bd\in\N^{d},\be\in\N^{e}$, and three tensors $T\in\cT^{(\ba,\bb)}(\Q)$, $U\in\cT^{(\bb,\bc,\bd)}(\Q)$, $V\in\cT^{(\bd,\be)}(\Q)$. Then 
\begin{align*}
(T\cont{b}U)\cont{d}V
\spaceeq
T\cont{b}(U\cont{d}V).    
\end{align*}
\end{lem}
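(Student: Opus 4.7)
The plan is a direct calculation: expand both sides using the definition of tensor contraction and observe that both reduce to the same double sum. Concretely, I would fix arbitrary indices $\bj\in[\ba]$, $\bell\in[\bc]$, $\bbm\in[\be]$ and compute the $(\bj,\bell,\bbm)$-th entry of each side.

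For the left-hand side, $T\cont{b}U$ is a tensor in $\cT^{(\ba,\bc,\bd)}(\Q)$ whose $(\bj,\bell,\bk')$-th entry (for $\bk'\in[\bd]$) is $\sum_{\bk\in[\bb]} t_{(\bj,\bk)}\, u_{(\bk,\bell,\bk')}$ by definition. Contracting further with $V$ over the $\bd$-modes gives the $(\bj,\bell,\bbm)$-th entry
\[
\sum_{\bk'\in[\bd]} \Bigl(\sum_{\bk\in[\bb]} t_{(\bj,\bk)}\, u_{(\bk,\bell,\bk')}\Bigr) v_{(\bk',\bbm)}.
\]
For the right-hand side, $U\cont{d}V$ lies in $\cT^{(\bb,\bc,\be)}(\Q)$ with $(\bk,\bell,\bbm)$-th entry $\sum_{\bk'\in[\bd]} u_{(\bk,\bell,\bk')}\, v_{(\bk',\bbm)}$, and contracting $T$ with this tensor over the $\bb$-modes yields
\[
\sum_{\bk\in[\bb]} t_{(\bj,\bk)}\Bigl(\sum_{\bk'\in[\bd]} u_{(\bk,\bell,\bk')}\, v_{(\bk',\bbm)}\Bigr).
\]

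The two expressions agree after interchanging the order of the two finite summations and pulling scalars through, so both coincide with $\sum_{\bk\in[\bb],\,\bk'\in[\bd]} t_{(\bj,\bk)}\, u_{(\bk,\bell,\bk')}\, v_{(\bk',\bbm)}$. Since this equality of entries holds for every $(\bj,\bell,\bbm)\in[(\ba,\bc,\be)]$, the two tensors are equal. There is essentially no obstacle here beyond careful bookkeeping of the tuple indices; the only mild subtlety is to handle the corner cases where some of $a,c,e$ (or even $b,d$) equal zero, which is taken care of by the conventions $[\bepsilon]=\{\bepsilon\}$ and the identification $\cT^{\bepsilon}(\Q)=\Q$ established in Section~\ref{subsec_tensors}, so the empty sums and empty tuple indices behave as expected.
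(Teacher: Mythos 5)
Your proof is correct and takes essentially the same route as the paper: fix arbitrary indices in $[(\ba,\bc,\be)]$, expand both contractions entrywise by definition, and interchange the two finite sums; the only cosmetic difference is that you write the entries as $t_{(\bj,\bk)}$ rather than $E_{(\bj,\bk)}\ast T$.
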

\begin{proof}
Let $W=(T\cont{b}U)\cont{d}V$ and $Z=T\cont{b}(U\cont{d}V)$, and observe that both $W$ and $Z$ are tensors in $\cT^{(\ba,\bc,\be)}(\Q)$. Take $\bi\in [\ba]$, $\bj\in [\bc]$, and $\bk\in [\be]$, and observe that the $(\bi,\bj,\bk)$-th entry of $W$ is
\begin{align*}
    E_{(\bi,\bj,\bk)}\ast W
    &\spaceeq
    \sum_{\bell\in [\bd]}\left[E_{(\bi,\bj,\bell)}\ast(T\cont{b}U)\right]\cdot \left[E_{(\bell,\bk)}\ast V\right]\\
    &\spaceeq
    \sum_{\bell\in [\bd]}\sum_{\bbm\in [\bb]}\left[E_{(\bi,\bbm)}\ast T\right]\cdot\left[E_{(\bbm,\bj,\bell)}\ast U\right]\cdot \left[E_{(\bell,\bk)}\ast V\right]
    \intertext{while the $(\bi,\bj,\bk)$-th entry of $Z$ is}
    E_{(\bi,\bj,\bk)}\ast Z
    &\spaceeq
    \sum_{\bbm\in[\bb]}\left[E_{(\bi,\bbm)}\ast T\right]\cdot\left[E_{(\bbm,\bj,\bk)}\ast (U\cont{d}V)\right]\\
    &\spaceeq
    \sum_{\bbm\in[\bb]}\left[E_{(\bi,\bbm)}\ast T\right]\cdot\sum_{\bell\in [\bd]}\left[E_{(\bbm,\bj,\bell)}\ast U\right]\cdot\left[E_{(\bell,\bk)}\ast V\right].
\end{align*}
The value of the two expressions is the same, so $W=Z$, as required.
\end{proof}
\begin{rem}
Lemma~\ref{lem_associativity_contraction} establishes that tensor contraction is associative if it is taken over disjoint sets of modes. It is easy to check that, if this hypothesis is dropped, associativity may not hold (see~\cite[\S4.1]{cz23soda:minions}). For example, consider three tensors $T\in\cT^{(\ba,\bb)}(\Q)$, $U\in\cT^{(\bb,\bc)}(\Q)$, and $V\in\cT^{(\ba,\bc)}(\Q)$, where $\ba,\bb,\bc$ are as in Lemma~\ref{lem_associativity_contraction}. Then, the expression $(T\cont{b}U)\cont{a+c}V$ is well defined, while the expression obtained by switching the order of the contractions is not well defined in general. For this reason, we define the contraction operation to be left-associative by default, in the sense that the expression $T_1\cont{k_1}T_2\cont{k_2}T_3$ shall mean $(T_1\cont{k_1}T_2)\cont{k_2}T_3$. Whenever this is possible (i.e., whenever we are contracting over disjoint sets of modes), we shall tacitly make use of the associativity property of Lemma~\ref{lem_associativity_contraction}.
In particular, in this way, we can express the entry of index $(\bi,\bj)$ of a tensor $T\in\cT^{(\ba,\bb)}(\Q)$ (where $\bi\in[\ba]$ and $\bj\in [\bb]$) by the notation $E_\bi\ast T\ast E_\bj$; note that this is the same as $E_{(\bi,\bj)}\ast T$.
\end{rem}
\noindent 
\subsection{The projection tensor}
\label{subsec_projection_tensor}
\noindent Take $a,b\in\N_0$, $\ba\in\N^a$, and $\bell\in [a]^b$, and consider the \emph{projection tensor} $\Pi^\ba_\bell\in\cT^{(\ba_\bell,\ba)}(\Q)$
 defined, for each $\bi\in [\ba_\bell]$ and each $\bj\in [\ba]$,
 by 
\begin{align}
\label{eqn_projection_tensor}
E_\bi\ast \Pi^\ba_\bell\ast E_{\bj}\spaceeq
\left\{
\begin{array}{ll}
1 & \mbox{if }\bj_\bell=\bi\\
0 & \mbox{otherwise.}
\end{array}
\right.
\end{align}
In particular, observe that setting $b=0$ yields $[a]^b=\{\bepsilon\}$, so $\Pi^\ba_\bepsilon$ is well defined and lives in $\cT^{(\ba_\bepsilon,\ba)}(\Q)=\cT^\ba(\Q)$.

We now present some basic results on this special tensor, which justify its name and which shall be used throughout this work.
\begin{lem}
\label{Pi_epsilon_all_one}
Given $a\in\N_0$ and $\ba\in\N^a$, $\Pi^\ba_\bepsilon$ is the all-one tensor in $\cT^\ba(\Q)$.
\end{lem}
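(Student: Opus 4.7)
The plan is to verify the claim by directly unpacking the definitions of the projection tensor, the standard unit tensor, and the conventions for empty tuples from Section~\ref{subsec_tuples}. The statement essentially says that after ``projecting away all modes'' (by taking $\bell=\bepsilon$), the resulting object must be the constant-$1$ tensor; this is the degenerate base case of the projection tensor machinery, and the proof is mostly a matter of bookkeeping rather than genuine mathematical content.

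First I would check that the indices make sense: since $\bepsilon\in[a]^0$, the tuple $\bell=\bepsilon$ is a legitimate choice in the definition~\eqref{eqn_projection_tensor}. Using $\ba_\bepsilon=\bepsilon$ and the convention $(\bepsilon,\ba)=\ba$, I get $\Pi^\ba_\bepsilon\in\cT^{(\ba_\bepsilon,\ba)}(\Q)=\cT^\ba(\Q)$, which at least confirms that $\Pi^\ba_\bepsilon$ lives in the correct space to be compared with the all-one tensor in $\cT^\ba(\Q)$.

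Next, to compute the entries, I would fix an arbitrary $\bj\in[\ba]$ and evaluate the $\bj$-th entry of $\Pi^\ba_\bepsilon$. By Remark~\ref{rem_E_epsilon_equals_one}, $E_\bepsilon=1\in\cT^\bepsilon(\Q)=\Q$, so the contraction $E_\bepsilon\ast\Pi^\ba_\bepsilon$ reduces to the scalar multiplication $1\cdot\Pi^\ba_\bepsilon=\Pi^\ba_\bepsilon$. Hence $E_\bepsilon\ast\Pi^\ba_\bepsilon\ast E_\bj$ is exactly the $\bj$-th entry of $\Pi^\ba_\bepsilon$. Applying the defining formula~\eqref{eqn_projection_tensor} with $\bi=\bepsilon$ (the only element of $[\ba_\bepsilon]=[\bepsilon]=\{\bepsilon\}$), the value is $1$ if $\bj_\bepsilon=\bepsilon$ and $0$ otherwise. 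But $\bj_\bepsilon=\bepsilon$ holds for every $\bj$ by the convention $\bx_\bepsilon=\bepsilon$ from Section~\ref{subsec_tuples}. Therefore every entry of $\Pi^\ba_\bepsilon$ equals $1$, as required.

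I do not anticipate any real obstacle: the whole argument is essentially chasing the empty-tuple conventions through one application of the defining formula. The only thing to be careful about is making the associativity of the two contractions $E_\bepsilon\ast\Pi^\ba_\bepsilon\ast E_\bj$ explicit (which is immediate from Lemma~\ref{lem_associativity_contraction}, since the two contractions are over disjoint mode sets $\bepsilon$ and $\ba$), and making sure the reader sees why $\bj_\bepsilon=\bepsilon$ trivially holds for all $\bj$.
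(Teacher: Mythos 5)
Your proposal is correct and follows essentially the same route as the paper's proof: both observe that $\Pi^\ba_\bepsilon$ lives in $\cT^\ba(\Q)$, use $E_\bepsilon=1$ from Remark~\ref{rem_E_epsilon_equals_one} to identify $\Pi^\ba_\bepsilon\ast E_\bj$ with $E_\bepsilon\ast\Pi^\ba_\bepsilon\ast E_\bj$, and then read off the value $1$ from the defining formula~\eqref{eqn_projection_tensor} since $\bj_\bepsilon=\bepsilon$ holds trivially. Your version just spells out the associativity and empty-tuple bookkeeping slightly more explicitly than the paper does.
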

\begin{proof}
  Using that $E_\bepsilon=1$, as seen in Remark~\ref{rem_E_epsilon_equals_one}, and applying the definition~\eqref{eqn_projection_tensor}, we find that, for any $\bj\in [\ba]$,
\begin{align*}
\Pi^\ba_\bepsilon\ast E_\bj
&\spaceeq
E_\bepsilon\ast\Pi^\ba_\bepsilon\ast E_\bj
\spaceeq 1,
\end{align*}
as required.
\end{proof}
\noindent
The following description of the entries of the projection tensor is essentially a reformulation of~\cite[Lemma~34]{cz23soda:minions} in the notation of the current paper. We include the straightforward proof for completeness.
\begin{lem}
\label{lem_Pi_basic}
Given $a,b\in \N_0$, $\ba\in\N^a$, $\bell\in [a]^b$, and $\bi\in [\ba_\bell]$,
$
E_\bi\ast \Pi^\ba_\bell
=
\sum_{\bj\in [\ba],\;\bj_\bell=\bi}E_{\bj}$.
\end{lem}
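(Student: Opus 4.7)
The plan is to verify the identity by computing both sides entry by entry. Both the left-hand side $E_\bi\ast\Pi^\ba_\bell$ and the right-hand side $\sum_{\bj\in[\ba],\,\bj_\bell=\bi}E_\bj$ are tensors in $\cT^\ba(\Q)$, so it suffices to check that their $\bj'$-th entries agree for every $\bj'\in[\ba]$. Since any tensor $T\in\cT^\ba(\Q)$ is determined by the scalars $E_{\bj'}\ast T$ for $\bj'\in[\ba]$ (the defining property of the standard unit tensors noted just before the lemma), this will complete the proof.

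First I would compute the left side. Using associativity of contraction over disjoint modes (Lemma~\ref{lem_associativity_contraction}) and the fact that the contraction $E_\bi\ast\Pi^\ba_\bell$ has signature $\cT^\ba(\Q)$, contracting further against $E_{\bj'}$ gives
\[
(E_\bi\ast\Pi^\ba_\bell)\ast E_{\bj'}\;\;=\;\;E_\bi\ast\Pi^\ba_\bell\ast E_{\bj'},
\]
which by the defining equation~\eqref{eqn_projection_tensor} equals $1$ when $\bj'_\bell=\bi$ and $0$ otherwise.

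Next I would compute the right side. By bilinearity of contraction,
\[
\Big(\sum_{\bj\in[\ba],\,\bj_\bell=\bi}E_\bj\Big)\ast E_{\bj'}\;\;=\;\;\sum_{\bj\in[\ba],\,\bj_\bell=\bi}E_\bj\ast E_{\bj'},
\]
and $E_\bj\ast E_{\bj'}$ evaluates to $1$ if $\bj=\bj'$ and to $0$ otherwise (since this contraction is a full contraction returning the $\bj$-th entry of $E_{\bj'}$). Hence the sum collapses to $1$ exactly when $\bj'\in[\ba]$ satisfies $\bj'_\bell=\bi$, and to $0$ otherwise. This matches the left side on every $\bj'\in[\ba]$, so the two tensors coincide.

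There is no real obstacle here: the lemma is essentially a restatement of the definition of $\Pi^\ba_\bell$ in a form convenient for later use. The only minor care needed is to invoke associativity of contraction correctly (which applies because the contractions are over disjoint modes) and to handle the degenerate cases $a=0$ or $b=0$ uniformly, using the conventions $[\bepsilon]=\{\bepsilon\}$ and $E_\bepsilon=1$ from Remark~\ref{rem_E_epsilon_equals_one}; in the case $b=0$, the condition $\bj_\bell=\bi$ becomes $\bj_\bepsilon=\bepsilon$, which holds for all $\bj\in[\ba]$, recovering the all-one tensor of Lemma~\ref{Pi_epsilon_all_one}.
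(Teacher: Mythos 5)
Your proof is correct and follows essentially the same route as the paper: both verify the identity entrywise against the defining equation~\eqref{eqn_projection_tensor} by contracting with $E_{\bj'}$ and comparing indicators. The paper splits off the case $b=0$ and settles it via Lemma~\ref{Pi_epsilon_all_one}, whereas you observe that the entrywise check already covers it; both are fine (and the appeal to associativity on the left-hand side is not really needed, since $E_\bi\ast\Pi^\ba_\bell\ast E_{\bj'}$ is left-associated by the paper's convention and is therefore literally $(E_\bi\ast\Pi^\ba_\bell)\ast E_{\bj'}$).
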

\begin{proof}
If $b=0$, we have $\bell=\bi=\bepsilon$. Using Remark~\ref{rem_E_epsilon_equals_one} and Lemma~\ref{Pi_epsilon_all_one}, we find
\begin{align*}
E_\bepsilon\ast \Pi^\ba_\bepsilon
&\spaceeq
\Pi^\ba_\bepsilon
\spaceeq
\sum_{\bj\in [\ba]}E_\bj
\spaceeq
\sum_{\substack{\bj\in [\ba]\\\bj_\bepsilon=\bepsilon}}E_{\bj},
\end{align*}
as required.
Suppose now that $b\in\N$. In this case, we can assume that $a\in\N$ as $[0]^b=\emptyset^b=\emptyset$.
For any $\bj'\in [\ba]$, we have
\begin{align*}
\sum_{\substack{\bj\in [\ba]\\\bj_\bell=\bi}}E_{\bj}\ast E_{\bj'}
&\spaceeq
\sum_{\substack{\bj\in [\ba]\\\bj_\bell=\bi\\\bj=\bj'}}1
\spaceeq
\left\{
\begin{array}{lll}
1&\mbox{ if }\bj'_\bell=\bi\\
0&\mbox{ otherwise}
\end{array}
\right.
\spaceeq
E_\bi\ast\Pi^\ba_\bell\ast E_{\bj'},
\end{align*}
thus proving the result.
\end{proof}

Given a tensor $T\in\cT^{\ba}(\Q)$, we have from Lemma~\ref{lem_Pi_basic} and from the associativity rule of Lemma~\ref{lem_associativity_contraction} that, for $\bi\in [\ba_\bell]$, the $\bi$-th entry of $\Pi^\ba_\bell\ast T$ is 
\[
E_\bi\ast \Pi^\ba_\bell\ast T=\sum_{\bj\in [\ba],\;\bj_\bell=\bi}E_{\bj}\ast T;
\]
i.e., the sum of all entries of $T$ whose index $\bj$ projected onto $\bell$ gives $\bi$. In other words, contracting $T$ by $\Pi^\ba_\bell$ amounts to selecting a set of modes of $T$ (given by the tuple $\bell$) and projecting $T$ onto the hyperplane corresponding to those modes---whence the name ``projection tensor''.
In particular, if one lets $a=b=|\bell|$ in the definition of the projection tensor $\Pi^\ba_\bell$, contracting $T$ by $\Pi^\ba_\bell$  has the effect of permuting the modes of $T$. We call the resulting tensor $\Pi^\ba_\bell\ast T$ a \emph{reflection} of $T$.
For instance, for $\ba=(a_1,a_2)\in\N^2$, contracting by $\Pi^\ba_{(1,2)}$ results in the identity operator (cf.~Lemma~\ref{lem_identity_Pi} below), while contracting by $\Pi^\ba_{(2,1)}$ gives the transpose operator. Indeed, for any $a_1\times a_2$ matrix $M$, $\Pi^\ba_{(1,2)}\ast M=M$ and $\Pi^\ba_{(2,1)}\ast M=M^T$.

The assignment $\bell\mapsto\Pi^\ba_\bell$
creates a correspondence between 
tuples and projection tensors. Under this assignment, Lemma~\ref{lem_proj_contr} below shows that
the operation of tuple projection 
is translated 
into the operation of tensor contraction, while Lemma~\ref{lem_identity_Pi} shows that the tuple $\ang{a}$, that acts by projection as the identity on the set of tuples of appropriate length, corresponds to the projection tensor that acts by contraction as the identity on the space of tensors of appropriate size.
\begin{lem}
\label{lem_proj_contr}
Let $a,b,c\in\N_0$, and consider two tuples $\bell\in [a]^b$ and $\bbm\in [b]^c$. Then, for any $\ba\in\N^a$,
$
\Pi^\ba_{\bell_\bbm}=\Pi^{\ba_\bell}_\bbm \cont{b} \Pi^\ba_\bell.
$
\end{lem}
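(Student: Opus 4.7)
The plan is to verify the identity entrywise, by computing both sides at an arbitrary index and showing they agree. First I will check that both sides live in the same space: the left-hand side is $\Pi^\ba_{\bell_\bbm}\in\cT^{(\ba_{\bell_\bbm},\ba)}(\Q)$, while the right-hand side is obtained by contracting $\Pi^{\ba_\bell}_\bbm\in\cT^{((\ba_\bell)_\bbm,\ba_\bell)}(\Q)$ with $\Pi^\ba_\bell\in\cT^{(\ba_\bell,\ba)}(\Q)$ over the middle $b$ modes, producing a tensor in $\cT^{((\ba_\bell)_\bbm,\ba)}(\Q)$. Since $(\ba_\bell)_\bbm=\ba_{\bell_\bbm}$ follows directly from the definition of tuple projection in Section~\ref{subsec_tuples}, the two sides are tensors of the same shape.

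Next I will compare entries. Fix $\bi\in[\ba_{\bell_\bbm}]$ and $\bj\in[\ba]$. By the defining property~\eqref{eqn_projection_tensor}, the $(\bi,\bj)$-entry of the left-hand side equals $1$ if $\bj_{\bell_\bbm}=\bi$ and $0$ otherwise. For the right-hand side, using the definition of tensor contraction together with the associativity rule of Lemma~\ref{lem_associativity_contraction}, I expand
\begin{align*}
E_\bi\ast\bigl(\Pi^{\ba_\bell}_\bbm \cont{b} \Pi^\ba_\bell\bigr)\ast E_\bj
\;\;=\;\;
\sum_{\bk\in[\ba_\bell]}\bigl(E_\bi\ast\Pi^{\ba_\bell}_\bbm\ast E_\bk\bigr)\bigl(E_\bk\ast\Pi^\ba_\bell\ast E_\bj\bigr).
\end{align*}
By~\eqref{eqn_projection_tensor} applied to the second factor, the only $\bk$ giving a nonzero contribution is $\bk=\bj_\bell$, and the contribution then equals $1$ precisely if $(\bj_\bell)_\bbm=\bi$. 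The identity $(\bj_\bell)_\bbm=\bj_{\bell_\bbm}$ (again from the definition of projection of a tuple) shows that this condition coincides with the condition characterising the left-hand side.

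Thus the $(\bi,\bj)$-entries of both tensors agree for all choices of $\bi$ and $\bj$, which gives the desired equality. The only subtlety worth flagging is handling the degenerate cases $a=0$, $b=0$, or $c=0$: these require treating empty tuples and the convention $E_\bepsilon=1$ from Remark~\ref{rem_E_epsilon_equals_one}, but the argument above goes through verbatim since the summation over $[\ba_\bell]=\{\bepsilon\}$ (when $b=0$) contains the single term $\bk=\bepsilon$. I do not anticipate any real obstacle; the proof is a direct unpacking of definitions, and the key small lemma I am implicitly invoking is the compatibility of tuple projection with composition, which is immediate.
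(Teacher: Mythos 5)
Your proof is correct and follows essentially the same entrywise computation as the paper: both fix $\bi\in[\ba_{\bell_\bbm}]$ and $\bj\in[\ba]$, expand the contraction, use the defining property of the projection tensor to identify the nonzero term, and conclude via the tuple-composition identity $(\bj_\bell)_\bbm=\bj_{\bell_\bbm}$. The only cosmetic difference is that the paper applies Lemma~\ref{lem_Pi_basic} to expand $E_\bi\ast\Pi^{\ba_\bell}_\bbm$ while you expand the contraction directly from the definition and then use~\eqref{eqn_projection_tensor}, which amounts to the same calculation.
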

\begin{proof}
Take $\bi\in [\ba_{\bell_\bbm}]$ and $\bj'\in [\ba]$, and observe that
\begin{align*}
E_\bi\ast (\Pi^{\ba_\bell}_\bbm \cont{b} \Pi^\ba_\bell)\ast E_{\bj'}
&\lemeq{lem_associativity_contraction}
E_\bi\ast \Pi^{\ba_\bell}_\bbm \ast \Pi^\ba_\bell\ast E_{\bj'}
\lemeq{lem_Pi_basic}
\sum_{\substack{\bj\in [\ba_\bell]\\\bj_\bbm=\bi}}E_\bj\ast\Pi^\ba_\bell\ast E_{\bj'}
\spaceeq
\sum_{\substack{\bj\in [\ba_\bell]\\\bj_\bbm=\bi\\\bj'_\bell=\bj}}1\\
&\spaceeq
\left\{
\begin{array}{cl}
1&\mbox{ if }\bj'_{\bell_\bbm}=\bi\\
0&\mbox{ otherwise}
\end{array}
\right.
\spaceeq
E_\bi\ast\Pi^\ba_{\bell_\bbm}\ast E_{\bj'},
\end{align*}
whence the result follows.
\end{proof}
\begin{lem}
\label{lem_identity_Pi}
Let $a,b\in\N_0$, $\ba\in\N^a$, $\bb\in\N^{b}$, and $T\in\cT^{(\ba,\bb)}(\Q)$. Then
$
\Pi^\ba_{\ang{a}}\cont{a} T=T.
$
\end{lem}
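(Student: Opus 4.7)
The plan is to verify the identity entrywise. Since $\ang{a}$ is the tuple $(1,2,\dots,a)$, its action by projection on any tuple of length $a$ is the identity, so we expect $\Pi^\ba_{\ang{a}}$ to act as an identity under contraction; concretely, $\Pi^\ba_{\ang{a}}$ is the ``identity tensor'' on $\cT^\ba(\Q)$ in the sense that $E_\bi\ast\Pi^\ba_{\ang{a}}\ast E_\bj$ equals $1$ if $\bj=\bi$ and $0$ otherwise. Plugging this into the definition of contraction, the sum collapses to a single term and yields $T$ back.

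More precisely, I would first dispatch the degenerate case $a=0$. Here $\ang{0}=\bepsilon$, so $\Pi^\ba_{\ang{a}}=\Pi^\bepsilon_\bepsilon$ lives in $\cT^\bepsilon(\Q)=\Q$. By Lemma~\ref{Pi_epsilon_all_one} this is the all-one tensor in $\cT^\bepsilon(\Q)$, and combining this with Remark~\ref{rem_E_epsilon_equals_one} it equals the scalar $1$. Then $\Pi^\bepsilon_\bepsilon \cont{0} T = 1 \ast T = T$, as desired.

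For the main case $a\geq 1$, I would fix arbitrary $\bi\in[\ba]$ and $\bk\in[\bb]$ and compute the $(\bi,\bk)$-th entry of $\Pi^\ba_{\ang{a}}\cont{a} T$:
\begin{align*}
E_{(\bi,\bk)}\ast(\Pi^\ba_{\ang{a}}\cont{a} T)
&\lemeq{lem_associativity_contraction}
\sum_{\bj\in[\ba]}\bigl(E_\bi\ast\Pi^\ba_{\ang{a}}\ast E_\bj\bigr)\cdot\bigl(E_{(\bj,\bk)}\ast T\bigr).
\end{align*}
By the definition~\eqref{eqn_projection_tensor} of the projection tensor, $E_\bi\ast\Pi^\ba_{\ang{a}}\ast E_\bj=1$ precisely when $\bj_{\ang{a}}=\bi$, i.e.\ $\bj=\bi$, and is $0$ otherwise. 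Thus the sum above collapses to $E_{(\bi,\bk)}\ast T$, completing the verification that the two tensors agree in every entry.

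I do not expect any real obstacle: the only thing to be careful about is the empty-tuple case, which is handled cleanly via Lemma~\ref{Pi_epsilon_all_one} and Remark~\ref{rem_E_epsilon_equals_one}. Everything else is a direct application of Lemma~\ref{lem_associativity_contraction} together with the definition of $\Pi^\ba_{\ang{a}}$; alternatively, one could invoke Lemma~\ref{lem_Pi_basic} to write $E_\bi\ast\Pi^\ba_{\ang{a}}=\sum_{\bj\in[\ba],\,\bj=\bi}E_\bj=E_\bi$ and then contract with $T$ directly.
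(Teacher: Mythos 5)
Your proof is correct and takes essentially the same route as the paper: the paper also computes the entries of $\Pi^\ba_{\ang{a}}\cont{a} T$ and observes that the constraint $\bj_{\ang{a}}=\bi$ collapses the sum to the single term $\bj=\bi$, citing Lemma~\ref{lem_associativity_contraction} and Lemma~\ref{lem_Pi_basic} exactly as your final alternative remark suggests. The only cosmetic differences are that you split off the $a=0$ case (the paper's single computation already covers it, since $[\bepsilon]=\{\bepsilon\}$) and you expand the $(\bi,\bk)$-th entry fully whereas the paper contracts only against $E_\bi$ and leaves the $\bb$-modes implicit.
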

\begin{proof}
For any $\bi\in [\ba]$, we find
\begin{align*}
E_\bi\ast (\Pi^\ba_{\ang{a}}\cont{a} T)
&\lemeq{lem_associativity_contraction}
E_\bi\ast \Pi^\ba_{\ang{a}}\ast T
\lemeq{lem_Pi_basic}
\sum_{\substack{\bj\in [\ba]\\\bj_{\ang{a}}=\bi}}E_\bj\ast T
\spaceeq
\sum_{\substack{\bj\in [\ba]\\\bj=\bi}}E_\bj\ast T
\spaceeq
E_\bi\ast T,
\end{align*}
as required.
\end{proof}

\section{The BA hierarchy through tensors}
\label{sec_BA_hierarchy_through_tensors}

When does $\BA^k(\X,\A)=\YES$?
In this section, we shall see that the acceptance problem for the $\BA$ hierarchy can be conveniently translated and studied in an algebraic---in fact, linear-algebraic---framework, through the notions of linear minions and tensorisation. The final result of this process, Theorem~\ref{thm_decoupling_BLPAIP_for_cliques}, will allow us to see $\BA^k$ acceptance (when the hierarchy is applied to AGC) as the problem of checking for the existence of some integer tensors satisfying certain geometric properties. This ``ultra-processed'' acceptance criterion will allow turning the quest for a fooling instance for $\BA^k$ (the goal of this paper) into the problem of building certain special hollow-shadowed crystal tensors---which will be accomplished in later sections.

\subsection{Relaxations and linear minions}
\label{subsec_relaxations_and_linear_minions}

All relaxation algorithms studied in the literature on CSPs and their promise variant are captured algebraically through the notion of linear minion, which we describe in this section. 

Given two integers $\ell,m\in\N$ and a function $\pi:[\ell]\to [m]$, let $P_\pi$ be the $m\times \ell$ matrix such that, for $i\in [m]$ and $j\in[\ell]$, the $(i,j)$-th entry of $P_\pi$ is $1$ if $\pi(j)=i$, and $0$ otherwise. 

\begin{defn}[\cite{cz23soda:minions}]
\label{defn_linear_minion}
A \emph{linear minion} $\Mminion$ of \emph{depth} 
${d}\in\N$ consists of the union of sets $\Mminion^{(\ell)}$ of $\ell\times d$ rational matrices
for $\ell\in\N$, that satisfy the following condition: $P_\pi M\in\Mminion^{(m)}$ whenever $\ell,m\in\N$, $\pi:[\ell]\to [m]$, and $M\in\Mminion^{(\ell)}$.\footnote{The definition of linear minions we give here is less general than the one in~\cite[Definition~16]{cz23soda:minions}, which includes linear minions of infinite depth and whose matrices have entries in arbitrary semirings rather than $\Q$.}
\end{defn}
Observe that pre-multiplying a matrix $M$ by $P_\pi$ amounts to performing a combination of the following three elementary operations to the rows of $M$: swapping two rows, replacing two rows with their sum, and inserting a zero row. Hence, a linear minion is simply a set of matrices having a fixed number of columns that is closed under such elementary operations. 

\begin{example}
\label{example_Qconv_Zaff}
For each $\ell\in\N$, let 
\begin{itemize}
    \item 
    $\Qconv^{(\ell)}$ be the set of rational vectors of length $\ell$ whose entries are nonnegative and sum up to $1$,
    \item
    $\Zaff^{(\ell)}$ be the set of integer vectors of length $\ell$ whose (possibly negative) entries sum up to $1$, and
    \item
    $\BAminion^{(\ell)}$ be the set of $\ell\times 2$ matrices whose left column $\bv$ belongs to $\Qconv^{(\ell)}$, whose right column $\bw$ belongs to $\Zaff^{(\ell)}$, and such that, for each $i\in [\ell]$, $v_i=0$ implies $w_i=0$.
\end{itemize}
Using that $\bone_m^T P_\pi=\bone_\ell^T$ for each $\pi:[\ell]\to [m]$, we easily check that $\Qconv=\bigcup_{\ell\in\N}\Qconv^{(\ell)}$ and $\Zaff=\bigcup_{\ell\in\N}\Zaff^{(\ell)}$ are both linear minions of depth $1$, while $\BAminion=\bigcup_{\ell\in\N}\BAminion^{(\ell)}$ is a linear minion of depth $2$.
\end{example}
In order to be consistent with the notation of~\cite[Definition~5]{bgwz20},
given a linear minion $\Mminion$, a function $\pi:[\ell]\to [m]$, and a matrix $M\in\Mminion^{(\ell)}$, 
we shall often denote the product $P_\pi M$ by the notation $M_{/\pi}$.

\begin{rem}
For two maps $\pi:[\ell]\to [m]$ and $\sigma:[m]\to [p]$, we easily check that $P_{\sigma\circ\pi}=P_\sigma P_\pi$. As a consequence,
\begin{align}
\label{eqn_composition_minors}
    M_{/\sigma\circ\pi}
    \spaceeq
    (M_{/\pi})_{/\sigma}.
\end{align}
Also, if $\operatorname{id}$ is the identity function on $[\ell]$,
  $P_{\operatorname{id}}$ is the identity matrix of size $\ell\times\ell$, so $M_{/\operatorname{id}}=M$. This shows that linear minions form a subclass of the so-called \emph{abstract minions} (or simply \emph{minions}) introduced in~\cite{bgwz20} (see also~\cite{BBKO21}).
\end{rem}

Each linear minion corresponds to a relaxation for (P)CSPs through the notion of free structure. Intuitively, the free structure of a linear minion $\Mminion$ generated by a hypergraph $\HH$ simulates the structure of $\HH$ inside $\Mminion$: The vertices become matrices of $\Mminion$, while the hyperedges are tuples of matrices that can all be obtained from a single other matrix through elementary row operations.
The formal definition
is given below. We define the free structure for uniform hypergraphs rather than digraphs, because we will later use it in that more general case. In fact, the same construction can be applied to arbitrary relational structures, see~\cite[Definition~4.1]{BBKO21}.
\begin{defn}[\cite{BBKO21}]
\label{defn_minion_free_structure}
Let $\HH$ be a $p$-uniform hypergraph having $n$ vertices and $m$ hyperedges. Without loss of generality, let the domain of $\HH$  be $[n]$.
The \emph{free structure} $\freeM(\HH)$ of a linear minion $\Mminion$ generated by $\HH$ is the (potentially infinite) $p$-uniform hypergraph on the vertex set $\Vset(\freeM(\HH))=\Mminion^{(n)}$ whose hyperedges are defined as follows: Given $M_1,\dots,M_p\in \Mminion^{(n)}$, the tuple $(M_1,\dots,M_p)$ belongs to $\Eset(\freeM(\HH))$ if and only if there exists some $Q\in\Mminion^{(m)}$ such that $M_i=Q_{/\pi_i}$ for each $i\in[p]$, where $\pi_i:\Eset(\HH)\to \Vset(\HH)$ maps a hyperedge $\bh$ to its $i$-th entry $h_i$.
\end{defn}

Take a linear minion $\Mminion$ and two digraphs $\X$ (the instance) and $\A$ (the template). The relaxation corresponding to $\Mminion$ outputs $\YES$ if $\X\to\freeM(\A)$ and $\NO$ otherwise.\footnote{In~\cite{cz23soda:minions}, this relaxation was described as the ``minion test'' associated with $\Mminion$.}
For certain linear minions, the problem of deciding whether $\X\to\freeM(\A)$ can be solved in polynomial time (in the size of the input $\X$) for any $\A$. In particular, this is the case for the linear minions $\Qconv$, $\Zaff$, and $\BAminion$ from Example~\ref{example_Qconv_Zaff}. It was shown in~\cite{BBKO21} that $\Qconv$ and $\Zaff$ correspond to the polynomial-time relaxations $\BLP$ and $\AIP$, respectively, while it was shown in~\cite{bgwz20} that $\BAminion$ corresponds to the polynomial-time relaxation $\BA$.

In~\cite{cz23soda:minions}, a class of linear minions enjoying particularly desirable features was identified.

\begin{defn}[\cite{cz23soda:minions}]
\label{defn_conic_minion}
A \emph{conic minion} $\Mminion$ is a linear minion of depth $d$ such that $(i)$ $\Mminion$ does not contain any all-zero matrix, and $(ii)$ for every $\ell\in\N$, every $M\in\Mminion^{(\ell)}$,
and every $V\subseteq [\ell]$, the following implication is true:
\begin{align*}
\begin{array}{ll}
\sum_{i\in V}E_i\ast M=\bzero_d \quad\Rightarrow\quad E_i\ast M=\bzero_d\;\;\forall i\in V.
\end{array}
\end{align*}
\end{defn}
\noindent In other words, a linear minion $\Mminion$ is conic if it does not contain all-zero matrices and if summing up nonzero rows of a matrix in $\Mminion$ does not yield the all-zero vector.

\begin{example}
\label{Qconv_Zaff_conic_not}
It is not hard to check that $\Qconv$ and $\BAminion$ are conic, while $\Zaff$ is not (cf.~\cite{cz23soda:minions}).
\end{example}

\noindent The following property of the entries of $P_\pi$ is a reformulation of~\cite[Lemma~30]{cz23soda:minions} and shall prove useful on multiple occasions. We include the simple proof for completeness.
\begin{lem}
\label{lem_calculation_rule_P}
Let $\ell,m\in\N$, let $\pi:[\ell]\to [m]$, and let $i\in [m]$. Then $E_i\ast P_\pi=\sum_{j\in\pi^{-1}(i)}E_j$.
\end{lem}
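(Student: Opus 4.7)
The plan is to verify this identity by direct unwinding of definitions; no real obstacle is expected, since the statement is essentially the tensor-formalism translation of the fact that the $i$-th row of $P_\pi$ is the characteristic vector of the fibre $\pi^{-1}(i) \subseteq [\ell]$. Both sides of the claimed equation are elements of $\cT^\ell(\Q)$, so it suffices to compare their $j$-th entries for each $j \in [\ell]$.

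The key step is to pair each side with the unit tensor $E_j \in \cT^\ell(\Q)$ on the right. For the left-hand side, associativity of contraction (Lemma~\ref{lem_associativity_contraction}) lets me write $E_j \ast (E_i \ast P_\pi) = E_i \ast P_\pi \ast E_j$, and by the definition of tensor contraction this expression evaluates to the $(i,j)$-th entry of $P_\pi$; by the definition of $P_\pi$ given in Section~\ref{subsec_relaxations_and_linear_minions}, this entry is $1$ if $\pi(j) = i$ and $0$ otherwise. For the right-hand side, contracting $\sum_{j' \in \pi^{-1}(i)} E_{j'}$ against $E_j$ yields $\sum_{j' \in \pi^{-1}(i)} E_{j'} \ast E_j$, which picks out exactly the indicator that $j \in \pi^{-1}(i)$, i.e., that $\pi(j) = i$. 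The two values coincide, proving the identity.

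The only mild care required is notational: tracking that $E_i \in \cT^m(\Q)$ and $E_j \in \cT^\ell(\Q)$ live in different ambient spaces, and that the contraction $E_i \ast P_\pi$ is taken along the first (row) mode of $P_\pi$ while the subsequent contraction with $E_j$ is taken along the remaining (column) mode. Beyond that, the argument is a routine calculation, so no genuine difficulty arises.
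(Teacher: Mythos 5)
Your proof is correct and follows essentially the same route as the paper's: both compare the two sides entrywise by contracting with a standard unit tensor $E_z$ (your $E_j$) over the $\ell$-indexed mode and unwinding the definition of $P_\pi$ to see that both yield the indicator of $\pi(z)=i$. No meaningful difference.
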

\begin{proof}
For any $z\in [\ell]$, we have 
\begin{align*}
    \sum_{j\in\pi^{-1}(i)}E_j\ast E_z
    \spaceeq
    \left\{
    \begin{array}{cl}
         1&\mbox{ if }z\in\pi^{-1}(i)  \\
         0&\mbox{ otherwise} 
    \end{array}
    \right.
    \spaceeq
    \left\{
    \begin{array}{cl}
         1&\mbox{ if }\pi(z)=i  \\
         0&\mbox{ otherwise} 
    \end{array}
    \right.
    \spaceeq
    E_i\ast P_\pi\ast E_z,
\end{align*}
which means that $\sum_{j\in\pi^{-1}(i)}E_j=E_i\ast P_\pi$, as required.
\end{proof}

\subsection{Hierarchies and tensors}

The framework developed in~\cite{cz23soda:minions} allows to progressively strengthen the relaxation corresponding to any linear minion (called ``minion test'' therein) through the notion of tensor power of a digraph (given in~\cite[Definition~10]{cz23soda:minions} for the more general case of relational structures).

\begin{defn}[\cite{cz23soda:minions}]
\label{defn_tensorisation}
Given $k\in\N$, the \emph{$k$-th tensor power} of a digraph $\A$ is the $2^k$-uniform hypergraph $\Ak$ having vertex set $\Vset(\Ak)=\Vset(\A)^k$ and hyperedge set $\Eset(\Ak)=\{\ba^\tensor{k}:\ba\in \Eset(\A)\}$ where, for $\ba\in \Eset(\A)$, $\ba^\tensor{k}$ is the tensor
in $\cT^{2\cdot\bone_k}(\Vset(\A)^k)$ whose $\bi$-th entry is $\ba_\bi$ for every $\bi\in [2]^k$.
\end{defn}

\noindent 
Let us see what happens when we take the free structure generated by the tensor power of a digraph.

\begin{rem}
\label{rem_description_freestructure_tensorisation}
Let $\Mminion$ be a linear minion of depth $d$ and let $\A$ be a digraph with $n$ vertices\footnote{Here and throughout the rest of the paper, we shall often assume that the vertex set of the digraph $\A$ is $[n]$.} and $m$ edges. Just like $\Ak$, $\freeM(\Ak)$ is a $2^k$-uniform hypergraph. Its vertex set is $\Vset(\freeM(\Ak))=\Mminion^{(n^k)}$. Hence, the vertices of $\freeM(\Ak)$ are $n^k\times d$ rational matrices; it will be convenient to identify them with tensors in $\cT^{(n\cdot\bone_k,d)}(\Q)$. A family $\{M^{(\bi)}\}_{\bi\in [2]^k}$ of vertices (i.e., of tensors in $\Vset(\freeM(\Ak))$) forms a hyperedge if and only if there exists some matrix $Q\in\Mminion^{(m)}$ such that $M^{(\bi)}=Q_{/\pi_\bi}$ for each $\bi\in [2]^k$, where $\pi_\bi:\Eset(\A)\to\Vset(\A)^k$ maps $\ba\in\Eset(\A)$ to $\ba_\bi$. Note that $Q_{/\pi_\bi}$ can be expressed as a contraction by the multilinear version of the matrix $P_{\pi_i}$ associated with the map $\pi_i$ from Definition~\ref{defn_minion_free_structure};
i.e., $Q_{/\pi_\bi}=P_{\pi_\bi}\cont{1}Q$, where $P_{\pi_\bi}\in\cT^{(n\cdot\bone_k,m)}(\Q)$ is the tensor whose $(\ba,\bb)$-th entry is $1$ if $\bb_\bi=\ba$ and $0$ otherwise, for $\ba\in\Vset(\A)^k$ and $\bb\in\Eset(\A)$.
\end{rem}

The strategy introduced in~\cite{cz23soda:minions} for strengthening a minion test consists in applying the test to the tensor powers of both the instance and the template---with one extra technicality: The homomorphism certifying acceptance of the relaxation thus obtained should be compatible with the tensorised structures, in the sense of Definition~\ref{defn_tensorial_homo}.

\begin{defn}
\label{defn_tensorial_homo}
Let $\Mminion$ be a linear minion, let $k\in\N$, and let $\X,\A$ be two digraphs. We say that a homomorphism $\xi:\Xk\to\freeM(\Ak)$ is \emph{$k$-tensorial} if $\xi(\bx_\bi)=\Pi^{n\cdot\bone_k}_\bi\cont{k}\xi(\bx)$ for any $\bx\in \Vset(\X)^k$, $\bi\in [k]^k$.
\end{defn}

In other words, a $k$-tensorial homomorphism
translates the operation of tuple projection into the operation of tensor projection---where the latter is expressed as contraction by the projection tensor $\Pi^{n\cdot\bone_k}_\bi$ introduced in Section~\ref{subsec_projection_tensor}.

Given a linear minion $\Mminion$ and an integer $k\in\N$, the $k$-th level of the relaxation induced by $\Mminion$ is defined as follows: For any pair of digraphs $\X$ (the instance) and $\A$ (the template), it outputs $\YES$ if there exists a $k$-tensorial homomorphism $\Xk\to\freeM(\Ak)$ and $\NO$ otherwise.
It was shown in~\cite{cz23soda:minions} that both the $\BLP$ and the $\AIP$ hierarchies fit into this framework, in the sense that, for any two digraphs $\X,\A$ and any integer $k\in\N$, $\BLP^k(\X,\A)=\YES$ ($\AIP^k(\X,\A)=\YES$) if and only if there exists a $k$-tensorial homomorphism $\Xk\to\freeQ(\Ak)$ ($\Xk\to\freeZ(\Ak)$).
A similar characterisation was also established for the $\BA$ hierarchy we consider in this work (see~\cite[Theorem~15]{cz23soda:minions}). Moreover, using that the minion $\BAminion$ capturing the $\BA$ hierarchy is the \emph{semi-direct product} of the two minions $\Qconv$ and $\Zaff$, it was shown in~\cite[Proposition~44]{cz23soda:minions} that any $k$-tensorial homomorphism $\Xk\to\freeBA(\Ak)$ can be split into homomorphisms to the free structures of $\Qconv$ and $\Zaff$, separately. These results are summarised in the next theorem.
\begin{thm}[\cite{cz23soda:minions}]
\label{thm_acceptance_BA_hierarchy_general}
    Let $\X$ and $\A$ be digraphs and let $2\leq k\in\N$. The following are equivalent:
    \begin{itemize}
        \item $\BA^k(\X,\A)=\YES$;
        \item there exists a $k$-tensorial homomorphism from $\Xk$ to $\freeBA(\Ak)$;
        \item
        there exist $k$-tensorial homomorphisms $\xi:\Xk\to\freeQ(\Ak)$ and $\zeta:\Xk\to\freeZ(\Ak)$ such that $\supp(\zeta(\bx))\subseteq\supp(\xi(\bx))$ for any $\bx\in \Vset(\X)^k$.
    \end{itemize}
\end{thm}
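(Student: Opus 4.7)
The plan is to establish the equivalence through the cycle (1) $\Rightarrow$ (2) $\Rightarrow$ (3) $\Rightarrow$ (1), by systematically translating the scalar system~\eqref{eqn_hierarchy} into the multilinear language of Section~\ref{subsec_projection_tensor}. The first step is to reorganise a pair of solutions witnessing $\BA^k(\X,\A)=\YES$ as a family of tensors. For each $\bx\in\Vset(\X)^k$, stack the rational values $\{\lambda^{(\Bmat)}_{\bx,\ba}\}_{\ba}$ and the integer values $\{\lambda^{(\Amat)}_{\bx,\ba}\}_{\ba}$ as two columns of an $n^k\times 2$ matrix $\eta(\bx)$, which we view as a tensor in $\cT^{(n\cdot\bone_k,2)}(\Q)$. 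The equation $(\IP^k_1)$ together with nonnegativity, integrality, and the refinement condition~\eqref{eqn_refinement_condition_lambda} are precisely the conditions defining membership in $\BAminion^{(n^k)}$ (cf.~Example~\ref{example_Qconv_Zaff}); hence $\eta(\bx)$ is a vertex of $\freeBA(\Ak)$. Using Lemma~\ref{lem_Pi_basic}, the marginality relations in $(\IP^k_2)$ translate exactly into $\eta(\bx_\bi)=\Pi^{n\cdot\bone_k}_\bi\cont{k}\eta(\bx)$, which is the $k$-tensoriality condition of Definition~\ref{defn_tensorial_homo}.

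Next I would handle hyperedges. For each $\by\in\Eset(\X)$, stack the variables $\{\mu^{(\Bmat)}_{\by,\bb}\}_\bb$ and $\{\mu^{(\Amat)}_{\by,\bb}\}_\bb$ as the two columns of a matrix $\rho(\by)$; equation $(\IP^k_5)$ and the (implied, by Remark~\ref{rem_redundancy_mu_refinement}) condition~\eqref{eqn_refinement_condition_mu} place $\rho(\by)$ in $\BAminion^{(m)}$. Reading $(\IP^k_3)$ in combination with Lemma~\ref{lem_calculation_rule_P}, for each $\bi\in[2]^k$ one obtains $\rho(\by)_{/\pi_\bi}=\eta(\by_\bi)$, where $\pi_\bi$ is the hyperedge-to-vertex projection of Remark~\ref{rem_description_freestructure_tensorisation}. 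Unpacking Definition~\ref{defn_minion_free_structure}, this shows that $\{\eta(\by_\bi)\}_{\bi\in[2]^k}$ is a hyperedge of $\freeBA(\Ak)$, so $\eta$ is a $k$-tensorial homomorphism. Running each of these translations in reverse (i.e., defining the $\lambda$- and $\mu$-variables as the entries of $\eta$ and $\rho$) recovers (1) from (2); these translations are manifestly bijective, giving (1) $\Leftrightarrow$ (2).

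For (2) $\Leftrightarrow$ (3), I would exploit that an element of $\BAminion^{(\ell)}$ is by definition a pair consisting of a column in $\Qconv^{(\ell)}$ and a column in $\Zaff^{(\ell)}$ subject to a support inclusion. Given $\eta:\Xk\to\freeBA(\Ak)$, let $\xi(\bx)$ and $\zeta(\bx)$ be its left and right columns; since row operations $M\mapsto M_{/\pi}$ act columnwise, $\xi$ and $\zeta$ are homomorphisms to $\freeQ(\Ak)$ and $\freeZ(\Ak)$ respectively. The $k$-tensoriality of $\eta$ transfers to both, as contraction with $\Pi^{n\cdot\bone_k}_\bi$ in the first $k$ modes acts columnwise, and the pointwise support condition $\supp(\zeta(\bx))\subseteq\supp(\xi(\bx))$ is exactly the refinement built into $\BAminion$. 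For the converse, one pairs $\xi(\bx)$ and $\zeta(\bx)$ columnwise into a candidate $\eta(\bx)\in\BAminion^{(n^k)}$; the support condition on vertices guarantees this lands in $\BAminion$.

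The main obstacle is the hyperedge step of (3) $\Rightarrow$ (2): to certify that $\{\eta(\by_\bi)\}_{\bi}$ is a hyperedge of $\freeBA(\Ak)$ one needs a single matrix $Q\in\BAminion^{(m)}$, and this requires the support inclusion also at the \emph{edge} witnesses $Q^{(\Bmat)}\in\Qconv^{(m)}$, $Q^{(\Amat)}\in\Zaff^{(m)}$ provided separately by $\xi$ and $\zeta$. This is the point where the semi-direct-product structure $\BAminion=\Qconv\ltimes\Zaff$ highlighted in~\cite{cz23soda:minions} is essential. The workaround, in the spirit of the algorithmic procedure stated after~\eqref{eqn_refinement_condition_mu}, is to choose the $\Qconv$-witness $Q^{(\Bmat)}$ in the relative interior of the polytope of rational witnesses for $\xi$: this maximises its edge-level support, so that contracting along any $\pi_\bi$ recovers $\xi(\by_\bi)$ with the largest possible support, which must contain the support of any $\Zaff$-witness $Q^{(\Amat)}$ whose contractions are $\zeta(\by_\bi)$. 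Pairing these two witnesses then yields the required $Q\in\BAminion^{(m)}$, and the rest is routine.
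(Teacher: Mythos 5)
This theorem is cited from~\cite{cz23soda:minions} rather than reproved in the present paper, so there is no in-paper proof to compare against. Your overall plan --- interpreting a pair of $\BA^k$ solutions as matrices in $\BAminion$, translating $(\IP^k_1)$--$(\IP^k_3)$ into membership, $k$-tensoriality, and the hyperedge condition, and then decoupling/recoupling columns for (2)~$\Leftrightarrow$~(3) --- is the right one, and most of the translation is sketched correctly. One small imprecision: $(\IP^k_5)$ does not ``place $\rho(\by)$ in $\BAminion^{(m)}$''; that membership needs sum-to-one, nonnegativity/integrality, and the $\mu$-refinement, the last of which comes from Remark~\ref{rem_redundancy_mu_refinement}. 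The constraints $(\IP^k_4)$ and $(\IP^k_5)$ are discharged in the (2)~$\Rightarrow$~(1) direction via Lemma~\ref{cor_vanishing_contractions} and the same remark.

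The genuine gap is the hyperedge step of (3)~$\Rightarrow$~(2). You correctly identify that for each $\by\in\Eset(\X)$ you must merge separate edge witnesses $Q^{(\Bmat)}\in\Qconv^{(m)}$ (for $\xi$) and $Q^{(\Amat)}\in\Zaff^{(m)}$ (for $\zeta$) into a single $Q\in\BAminion^{(m)}$, which needs $\supp(Q^{(\Amat)})\subseteq\supp(Q^{(\Bmat)})$. The relative-interior argument does not deliver this. Taking $Q^{(\Bmat)}$ in the relative interior of $\{Q'\in\Qconv^{(m)}: Q'_{/\pi_\bi}=\xi(\by_\bi)\;\forall\bi\}$ only maximises $\supp(Q^{(\Bmat)})$ among $\Qconv$-witnesses of $\xi$; the projections $\xi(\by_\bi)$ are already fixed and cannot be ``enlarged'', and $Q^{(\Amat)}$ (which projects to $\zeta(\by_\bi)$ and may have negative entries) is not controlled by the support of $Q^{(\Bmat)}$ in any way --- the implication you assert is simply not there. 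The correct argument is the computation of Remark~\ref{rem_redundancy_mu_refinement}, run in reverse, and this is precisely where the hypothesis $k\geq 2$ enters: take $\bi=(1,2,1,\dots,1)\in[2]^k$; then $\bc\mapsto\bc_\bi$ is injective on $\Eset(\A)$, so for any $\bb\in\Eset(\A)$ one has
\begin{align*}
E_\bb\ast Q^{(\Bmat)}=E_{\bb_\bi}\ast Q^{(\Bmat)}_{/\pi_\bi}=E_{\bb_\bi}\ast\xi(\by_\bi)
\qquad\text{and}\qquad
E_\bb\ast Q^{(\Amat)}=E_{\bb_\bi}\ast\zeta(\by_\bi),
\end{align*}
so the vertex-level inclusion $\supp(\zeta(\by_\bi))\subseteq\supp(\xi(\by_\bi))$ gives $\supp(Q^{(\Amat)})\subseteq\supp(Q^{(\Bmat)})$ automatically, for \emph{any} choice of witnesses. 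The relative-interior trick belongs to the polynomial-time implementation of $\BA^k$ described in Section~\ref{subsec_hierarhies_of_relaxations}, not to this equivalence.
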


\begin{rem}
    It was shown in~\cite[Proposition~36]{cz23soda:minions} that the existence of a $k$-tensorial homomorphism from $\Xk$ to $\freeM(\Ak)$ is equivalent to the existence of a homomorphism from $\tilde{\X}^\tensor{k}$ to $\freeM(\tilde{\A}^\tensor{k})$, where $\tilde{\X}$ and $\tilde{\A}$ are obtained from $\X$ and $\A$ by \emph{$k$-enhancing} them, i.e., by adding to their signatures an extra relation that includes all tuples of length $k$. We prefer to adopt the description in terms of $k$-tensorial homomorphisms, as $k$-enhancing a digraph results in a structure having two different relations, while in this work we only consider structures with one relation (digraphs or hypergraphs). We also remark that the term ``$k$-tensorial'' does not appear in~\cite{cz23soda:minions}.
\end{rem}

\subsection{BA$^k$ acceptance for AGC}

The goal of this work is to show that no level of the $\BA$ hierarchy solves the approximate graph colouring problem $\PCSP(\K_c,\K_d)$. To that end, we need to find instances $\X$ that are able to fool the hierarchy, i.e., such that $\BA^k(\X,\K_c)=\YES$ but $\X$ is not $d$-colourable. 
It turns out that, for the particular case that the $\BA$ hierarchy is applied to the colouring problem (i.e., when $\A$ is a clique), the acceptance criterion of Theorem~\ref{thm_acceptance_BA_hierarchy_general} can be simplified: As stated in Theorem~\ref{thm_decoupling_BLPAIP_for_cliques}, it is enough to check for the existence of a $k$-tensorial homomorphism $\zeta$ from $\Xk$ to $\freeZ(\Ak)$ that satisfies a simple combinatorial condition. The reason why one does not have to explicitly verify the existence of a homomorphism $\xi$ to $\freeQ(\Ak)$, too, is that, when the size of the clique $\A$ is at least $k$, there exists a standard $k$-tensorial homomorphism $\xi_0$ from $\Xk$ to $\freeQ(\Ak)$ that gives equal weight to \emph{all} admitted assignments---equivalently, the tensors that are images of elements of $\Xk$ under $\xi_0$ are uniform within their admitted support. This homomorphism is ``as good as possible'' for our purposes, in the sense that it makes the support of $\xi_0(\bx)$ as large as it can be, thus leaving more room for the existence of some $\zeta$ satisfying the refinement condition $\supp(\zeta(\bx))\subseteq\supp(\xi(\bx))$. In other words, whenever a pair of $k$-tensorial homomorphisms $(\xi,\zeta)$ certifying $\BA^k$ acceptance exists, the pair $(\xi_0,\zeta)$ also works. As it will later become more clear, thanks to the criterion given in Theorem~\ref{thm_decoupling_BLPAIP_for_cliques}, we can view $\BA^k$ acceptance in terms of the existence of a family of integer tensors satisfying a system of symmetries (dictated by the fact that $\zeta$ needs to be a $k$-tensorial homomorphism) together with a ``hollowness requirement'' expressed through the extra combinatorial condition. The hollow-shadowed crystals we shall seek in the next section will generate a family of such tensors.

The proof of Theorem~\ref{thm_decoupling_BLPAIP_for_cliques} makes use of two technical lemmas that we present next. The first is a special case of~\cite[Lemma~32]{cz23soda:minions}.
Recall the definition of the symbols ``$\prec$'' and ``$\sim$'' given in Section~\ref{subsec_tuples}.

\begin{lem}[\cite{cz23soda:minions}]
\label{cor_vanishing_contractions}
Let $\Mminion$ be a linear minion of depth $d$, let $k\in\N$, let $\X,\A$ be two digraphs, and let $\xi:\X^{\tensor{k}}\to\freeM(\Ak)$ be a $k$-tensorial homomorphism. Then $E_\ba\ast \xi(\bx)=\bzero_d$ for any $\bx\in \Vset(\X)^k$ and $\ba\in \Vset(\A)^k$ for which $\bx\not\prec\ba$.  
\end{lem}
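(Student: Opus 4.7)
The hypothesis $\bx\not\prec\ba$ says that some indices $i,j\in[k]$ satisfy $x_i=x_j$ but $a_i\neq a_j$. My plan is to exploit this single pair $(i,j)$ by constructing one index tuple $\bi\in[k]^k$ which simultaneously accomplishes two things: (a) the projection $\bx_\bi$ coincides with $\bx$, so that $k$-tensoriality expresses $\xi(\bx)$ as a contraction of itself against $\Pi^{n\cdot\bone_k}_\bi$; and (b) no $\bb\in\Vset(\A)^k$ satisfies $\bb_\bi=\ba$, so that the $\ba$-th entry of the contraction is a sum over an empty set.

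The natural candidate is to modify the identity tuple $\ang{k}$ at position $i$: take $i_i=j$ and $i_\ell=\ell$ for $\ell\neq i$. Writing $n=|\Vset(\A)|$, one checks directly that $(\bx_\bi)_i=x_j=x_i$ and $(\bx_\bi)_\ell=x_\ell$ elsewhere, so indeed $\bx_\bi=\bx$. Applying Definition~\ref{defn_tensorial_homo} gives the key identity
\begin{align*}
    \xi(\bx)\spaceeq\xi(\bx_\bi)\spaceeq\Pi^{n\cdot\bone_k}_\bi\cont{k}\xi(\bx).
\end{align*}

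I would then take the $\ba$-th entry of both sides and combine Lemma~\ref{lem_associativity_contraction} with Lemma~\ref{lem_Pi_basic} to rewrite the right-hand side as
\begin{align*}
    E_\ba\ast\xi(\bx)\spaceeq E_\ba\ast\Pi^{n\cdot\bone_k}_\bi\cont{k}\xi(\bx)\spaceeq\sum_{\substack{\bb\in[n]^k\\\bb_\bi=\ba}}E_\bb\ast\xi(\bx).
\end{align*}
Finally, the condition $\bb_\bi=\ba$ is $b_{i_\ell}=a_\ell$ for every $\ell\in[k]$; reading this at $\ell=i$ forces $b_j=a_i$, while at $\ell=j$ it forces $b_j=a_j$, and these are incompatible because $a_i\neq a_j$. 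Hence the indexing set of the sum is empty, and $E_\ba\ast\xi(\bx)=\bzero_d$.

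There is no real technical obstacle in this proof; the entire argument hinges on finding a tuple $\bi$ that meets goals (a) and (b) at once, and the hypothesis $\bx\not\prec\ba$ provides precisely the pair $(i,j)$ needed to do so. The depth-$d$ plays no role beyond tracking that $E_\ba\ast\xi(\bx)\in\cT^{d}(\Q)$, so the zero on the right-hand side is the zero vector $\bzero_d$ of the correct type.
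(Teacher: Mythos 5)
Your proof is correct. The paper does not actually reprove this lemma -- it cites~\cite{cz23soda:minions} -- so there is no in-text argument to compare against; but your derivation is sound and self-contained, relying only on Definition~\ref{defn_tensorial_homo}, Lemma~\ref{lem_associativity_contraction}, and Lemma~\ref{lem_Pi_basic}. The one place where readers might want to pause is the choice of $\bi$: you set $i_i=j$ and $i_\ell=\ell$ otherwise, and the argument that the indexing set $\{\bb:\bb_\bi=\ba\}$ is empty requires reading the constraint at both $\ell=i$ (giving $b_j=a_i$) and $\ell=j$ (giving $b_j=a_j$, which uses $i_j=j$ and hence tacitly that $i\neq j$); this is indeed forced since $x_i=x_j$ together with $a_i\neq a_j$ rules out $i=j$, so the argument closes. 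The final point -- that the depth $d$ only fixes the ambient space $\cT^d(\Q)$ in which the empty sum equals $\bzero_d$ -- is also exactly right.
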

\noindent Crucially, Lemma~\ref{cor_vanishing_contractions} does not require that the linear minion be conic. In the proof of Theorem~\ref{thm_decoupling_BLPAIP_for_cliques}, we shall apply this lemma to the (non-conic) minion $\Zaff$.

\begin{lem}
\label{lem_involved_1511_1718}
Let $k\leq n\in\N$, let $X$ be a set, and consider the tuples $\bx\in X^k$, $\bi\in [k]^k$, and $\ba\in [n]^k$. Then
\begin{align*}
|\{\bb\in [n]^k: \bb_\bi=\ba\mbox{ and }\bb\sim\bx\}|=
\left\{
\begin{array}{cl}
\frac{(n-|\bx_\bi|)!}{(n-|\bx|)!} & \mbox{if } \ba\sim\bx_\bi\\
0 & \mbox{otherwise.}
\end{array}
\right.
\end{align*}
\end{lem}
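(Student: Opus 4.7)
My plan is to reparametrise tuples $\bb \in [n]^k$ satisfying $\bb \sim \bx$ as injections $\phi : \set(\bx) \to [n]$, and then count. First I would dispose of the case $\ba \not\sim \bx_\bi$: applying the definition of $\sim$ to the index pair $i_j, i_{j'}$ shows immediately that $\bb \sim \bx$ implies $\bb_\bi \sim \bx_\bi$, so if $\bb_\bi = \ba$ then necessarily $\ba \sim \bx_\bi$; the contrapositive shows the set is empty when $\ba \not\sim \bx_\bi$.

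For the main case $\ba \sim \bx_\bi$, I would use the bijective correspondence between tuples $\bb \in [n]^k$ with $\bb \sim \bx$ and injections $\phi : \set(\bx) \to [n]$, given by the relation $b_j = \phi(x_j)$. The direction $\bx \prec \bb$ in $\bb \sim \bx$ is what makes $\phi$ well defined on $\set(\bx)$, and $\bb \prec \bx$ is what makes $\phi$ injective. Under this correspondence, the constraint $\bb_\bi = \ba$ translates to $\phi(x_{i_j}) = a_j$ for every $j \in [k]$, which specifies $\phi$ on the subset $\set(\bx_\bi) \subseteq \set(\bx)$. The assumption $\ba \sim \bx_\bi$ is precisely what is needed for this specification to define a well-defined injection $\phi_0 : \set(\bx_\bi) \to \set(\ba) \subseteq [n]$: the direction $\bx_\bi \prec \ba$ handles well-definedness, while $\ba \prec \bx_\bi$ handles injectivity, and in particular $|\set(\ba)| = |\bx_\bi|$.

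It then remains to count the number of injective extensions of $\phi_0$ from the domain $\set(\bx_\bi)$ to the larger domain $\set(\bx)$. Such an extension maps the $|\bx| - |\bx_\bi|$ remaining elements injectively into $[n] \setminus \set(\ba)$, a set of size $n - |\bx_\bi|$; the assumption $|\bx| \leq k \leq n$ keeps all cardinalities non-negative. The number of such injections is the falling factorial
\[
(n - |\bx_\bi|)(n - |\bx_\bi| - 1) \cdots (n - |\bx| + 1) \;=\; \frac{(n - |\bx_\bi|)!}{(n - |\bx|)!},
\]
which is the required value. No step should present a genuine obstacle; the only point deserving some care is that the full hypothesis $\ba \sim \bx_\bi$ (rather than only one of $\ba \prec \bx_\bi$ or $\bx_\bi \prec \ba$) is exactly the two-sided condition that makes the partial map $\phi_0$ both single-valued and injective.
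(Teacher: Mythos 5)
Your proof is correct, and it takes a genuinely different and cleaner route than the paper's. The paper proves the lemma by induction on $k-|\bx|$: after handling the empty case, it constructs from a tuple $\bx$ with a repeated entry a modified pair $(\bx',\bi')$ with one fewer repetition, establishes a many-to-one correspondence between the set $S_{\ba,\bi',\bx'}$ and the set $S_{\ba,\bi,\bx}$ (each element of the latter giving rise to $n-|\bx|$ elements of the former), and then unwinds the factorial via the inductive hypothesis. Your approach instead reparametrises once and for all: you observe that tuples $\bb\sim\bx$ biject with injections $\phi:\set(\bx)\to[n]$, that the constraint $\bb_\bi=\ba$ pins down $\phi$ on $\set(\bx_\bi)$ as a partial injection $\phi_0$ (exactly because $\ba\sim\bx_\bi$), and that the count is then the number of injective extensions of $\phi_0$, which is a falling factorial. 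This replaces the paper's two-page induction with a direct bijection and a standard injection count; the essential insight you isolate -- that both directions of $\ba\sim\bx_\bi$ are needed, one for single-valuedness and one for injectivity of $\phi_0$ -- is the same fact the paper's induction implicitly preserves at each step, but you make it explicit and structural. The one thing your write-up handles slightly loosely is worth noting: the bound $|\bx|\le k\le n$ is what guarantees $n-|\bx|\ge 0$ so the falling factorial is a genuine count; you flag it, which is enough.
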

\begin{proof}
Assume first that $\ba\sim \bx_\bi$. Note that there exists a bijection $\vartheta$ between the set $\{\bb\in[n]^k: \bb\sim\bx\}$
  and the set of injective functions from $\set(\bx)$ to $[n]$.
  (Indeed, $\bb\sim\bx$ means that $b_p=b_q$ if and only if $x_p=x_q$ for every
  $p,q\in[k]$.) Now, if $\bb_\bi=\ba\sim \bx_\bi$, the restriction of $\vartheta(\bb)$ to $\set(\bx_\bi)$ is entirely determined by $\ba$. The remaining values of
  $\vartheta(\bb)$ can be chosen in 
\begin{align*}
(n-|\bx_\bi|)\cdot(n-|\bx_\bi|-1)\cdots(n-|\bx|+1) \spaceeq \frac{(n-|\bx_\bi|)!}{(n-|\bx|)!}
\end{align*}
distinct ways, thus proving the first case in the statement of the lemma.

Assume now that $\ba\not\sim \bx_\bi$. By definition, if $\bb\sim \bx$, then $\bb_\bi\sim \bx_\bi$. Thus, if $\bb_\bi=\ba$ and $\bb\sim\bx$, then $\ba\sim \bx_\bi$, a contradiction. This proves the second case in the statement of the lemma.
\end{proof}

\begin{thm*}[Theorem~\ref{thm_decoupling_BLPAIP_for_cliques} restated]
Let $2\leq k\leq n\in\N$, let $\X$ be a loopless digraph, and let $\zeta:\Xk\to\freeZ(\K_n^\tensor{k})$ be a $k$-tensorial homomorphism such that $E_\ba\ast\zeta(\bx)=0$ for any $\bx\in \Vset(\X)^k$ and $\ba\in [n]^k$ for which $\ba\not\prec\bx$. 
Then $\BA^k(\X,\K_n)=\YES$.
\end{thm*}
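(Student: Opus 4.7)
The plan is to invoke Theorem~\ref{thm_acceptance_BA_hierarchy_general}: since $\zeta$ is already supplied as a $k$-tensorial homomorphism $\Xk\to\freeZ(\K_n^\tensor{k})$, all that is missing is a $k$-tensorial homomorphism $\xi:\Xk\to\freeQ(\K_n^\tensor{k})$ satisfying the refinement condition $\supp(\zeta(\bx))\subseteq\supp(\xi(\bx))$ for every $\bx\in \Vset(\X)^k$. I would define $\xi$ as the \emph{canonical uniform} map, assigning to each vertex $\bx$ the probability tensor uniformly supported on the tuples of $[n]^k$ that share the equality pattern of $\bx$; explicitly,
\[
E_\ba \ast \xi(\bx) \;=\; \begin{cases}\frac{(n-|\bx|)!}{n!} & \text{if } \ba \sim \bx,\\ 0 & \text{otherwise.}\end{cases}
\]
The hypothesis $k\leq n$ guarantees that at least one such tuple exists for every $\bx$, so the entries are nonnegative and sum to one, placing $\xi(\bx)$ in $\Qconv^{(n^k)}$.

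The refinement condition then comes essentially for free: Lemma~\ref{cor_vanishing_contractions} applied to $\Zaff$ forces $E_\ba\ast\zeta(\bx)=0$ whenever $\bx\not\prec\ba$, while the hypothesis of the theorem forces the same whenever $\ba\not\prec\bx$. Together these give $\supp(\zeta(\bx))\subseteq\{\ba:\ba\sim\bx\}=\supp(\xi(\bx))$, exactly as required.

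Next I would verify $k$-tensoriality of $\xi$, namely $\xi(\bx_\bi)=\Pi^{n\cdot\bone_k}_\bi\cont{k}\xi(\bx)$ for each $\bx\in\Vset(\X)^k$ and $\bi\in [k]^k$. Using Lemma~\ref{lem_Pi_basic}, the right-hand side evaluated at $\bc\in [n]^k$ equals $\frac{(n-|\bx|)!}{n!}$ times the number of tuples $\ba\sim\bx$ with $\ba_\bi=\bc$. This is precisely the quantity computed by Lemma~\ref{lem_involved_1511_1718}, which returns $\frac{(n-|\bx_\bi|)!}{(n-|\bx|)!}$ when $\bc\sim\bx_\bi$ and $0$ otherwise; the product matches the entries of $\xi(\bx_\bi)$.

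The step I expect to require the most care is showing that $\xi$ maps hyperedges of $\Xk$ to hyperedges of $\freeQ(\K_n^\tensor{k})$: for each edge $\by\in\Eset(\X)$ I must exhibit some $Q\in\Qconv^{(|\Eset(\K_n)|)}$ with $\xi(\by_\bi)=P_{\pi_\bi}\cont{1}Q$ for every $\bi\in [2]^k$, in the sense of Remark~\ref{rem_description_freestructure_tensorisation}. My guess is the uniform distribution on directed edges, $Q_\bb=\frac{1}{n(n-1)}$ for every $\bb\in\Eset(\K_n)$. By Lemma~\ref{lem_Pi_basic}, the $\ba$-entry of $P_{\pi_\bi}\cont{1}Q$ equals $\frac{1}{n(n-1)}$ times the number of edges $\bb\in\Eset(\K_n)$ with $\bb_\bi=\ba$, and a short case analysis on whether $\bi$ is constant or takes both values in $[2]$---exploiting looplessness of $\X$ (so $y_1\neq y_2$) and the fact that edges of $\K_n$ have distinct endpoints---shows this count equals $n-1$, $1$, or $0$ in exactly the cases where $\xi(\by_\bi)_\ba$ equals $\frac{1}{n}$, $\frac{1}{n(n-1)}$, or $0$, respectively. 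Having established that $\xi$ is a $k$-tensorial homomorphism to $\freeQ(\K_n^\tensor{k})$ and that the refinement holds, Theorem~\ref{thm_acceptance_BA_hierarchy_general} yields $\BA^k(\X,\K_n)=\YES$.
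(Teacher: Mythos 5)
Your proposal is correct and matches the paper's proof essentially line for line: the same canonical uniform $\xi$ (the paper writes it as a normalised indicator tensor $T_{\bx}/(\Pi^{n\cdot\bone_k}_\bepsilon\ast T_\bx)$, which is the same map), the same uniform $Q\in\Qconv^{(n^2-n)}$ to witness the hyperedge condition, the same appeal to Lemma~\ref{lem_involved_1511_1718} for $k$-tensoriality and to Lemma~\ref{cor_vanishing_contractions} together with the hypothesis for the refinement condition, and the same concluding invocation of Theorem~\ref{thm_acceptance_BA_hierarchy_general}. The only difference is the order in which the verifications are presented, which is immaterial.
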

\begin{proof}
For $\bx\in \Vset(\X)^k$, consider the tensor $T_\bx\in\cT^{n\cdot\bone_k}(\Q)$ defined by 
\begin{align*}
E_\ba\ast T_\bx=
\left\{
\begin{array}{cl}
1 & \mbox{if }\ba\sim\bx\\
0 & \mbox{otherwise}
\end{array}
\right.
\hspace{1cm}
\forall \ba\in [n]^k.
\end{align*}
We shall prove that the function
\begin{align*}
\xi:\Vset(\X)^k&\to \mathcal{T}^{n\cdot\bone_k}(\Q)\\
\bx&\mapsto \frac{1}{\Pi^{n\cdot\bone_k}_\bepsilon\ast T_\bx}T_\bx
\end{align*}
yields a $k$-tensorial homomorphism from $\Xk$ to $\freeQ(\K_n^\tensor{k})$. First, observe that $\xi$ is well defined as, using that $k\leq n$,
\begin{align}
\label{eqn_1255_28sep}
    \Pi^{n\cdot\bone_k}_\bepsilon\ast T_\bx
    \lemeq{Pi_epsilon_all_one}
    \sum_{\ba\in [n]^k}E_\ba\ast T_\bx
    \spaceeq
    |\{\ba\in [n]^k:\ba\sim\bx\}|
    \spaceeq
    \frac{n!}{(n-|\bx|)!}
\end{align}
which is not zero.
Moreover, we have that $\xi(\bx)\in\Qconv^{(n^k)}$ since
\begin{align*}
    \Pi^{n\cdot\bone_k}_{\bepsilon}\ast \xi(\bx)
    \spaceeq
    \frac{\Pi^{n\cdot\bone_k}_\bepsilon\ast T_\bx}{\Pi^{n\cdot\bone_k}_\bepsilon\ast T_\bx}
    \spaceeq
    1.
\end{align*}
We now prove that $\xi$ sends hyperedges of $\Xk$ to hyperedges of $\freeQ(\K_n^\tensor{k})$. Take $(x,y)\in \Eset(\X)$, so $(x,y)^\tensor{k}\in\Eset(\Xk)$; since $\X$ is loopless, $x\neq y$. Observe that $|\Eset(\K_n)|=n^2-n$. Take $Q=\frac{1}{n^2-n}\cdot\bone_{n^2-n}\in\Qconv^{(n^2-n)}$; we claim that $\xi((x,y)_\bi)=Q_{/\pi_\bi}$ for each $\bi\in [2]^k$, which then implies that $\xi((x,y)^\tensor{k})\in\Eset(\freeQ(\K_n^\tensor{k}))$, as needed. For $\bi\in [2]^k$ and $\ba\in [n]^k$, we have
\begin{align}
\label{eqn_1246_28sep}
\notag
    E_\ba\ast Q_{/\pi_\bi}
    &\spaceeq
    E_\ba\ast P_{\pi_\bi}\ast Q
    \spaceeq
    \frac{1}{n^2-n}E_\ba\ast P_{\pi_\bi}\ast\bone_{n^2-n}
    \spaceeq
    \frac{1}{n^2-n}\sum_{(a',b')\in\Eset(\K_n)}E_\ba\ast P_{\pi_\bi}\ast E_{(a',b')}\\
    &\spaceeq
    \frac{1}{n^2-n}|\{(a',b')\in\Eset(\K_n):(a',b')_\bi=\ba\}|.
\end{align}
Suppose that $\bi=\bone_k$. In this case,~\eqref{eqn_1246_28sep} yields
\begin{align*}
    E_\ba\ast Q_{/\pi_\bi}
    &\spaceeq
    \frac{1}{n^2-n}|\{(a',b')\in\Eset(\K_n):(a',\dots,a')=\ba\}|
    \spaceeq
    \left\{
    \begin{array}{cl}
        \frac{1}{n} & \mbox{ if }\ba \mbox{ is constant} \\
        0 & \mbox{ otherwise.} 
    \end{array}
    \right.
\end{align*}
On the other hand,
\begin{align*}
    E_\ba\ast\xi((x,y)_\bi)
    &\spaceeq
    E_\ba\ast\xi((x,\dots,x))
    \spaceeq
    \frac{1}{\Pi^{n\cdot\bone_k}_\bepsilon\ast T_{(x,\dots,x)}}E_\ba\ast T_{(x,\dots,x)}
    \equationeq{eqn_1255_28sep}
    \frac{(n-1)!}{n!}E_\ba\ast T_{(x,\dots,x)}\\
    &\spaceeq
    \left\{
    \begin{array}{cl}
        \frac{1}{n} & \mbox{ if }\ba \mbox{ is constant} \\
        0 & \mbox{ otherwise.} 
    \end{array}
    \right.
\end{align*}
Hence, the claim holds in this case. The case $\bi=2\cdot\bone_k$ follows analogously. Suppose now that $|\bi|=2$. In this case,~\eqref{eqn_1246_28sep} yields
\begin{align*}
    E_\ba\ast Q_{/\pi_\bi}
    &\spaceeq
    \left\{
    \begin{array}{cl}
        \frac{1}{n^2-n} & \mbox{ if }\ba\sim\bi  \\
        0 & \mbox{ otherwise.} 
    \end{array}
    \right.
\end{align*}
On the other hand,
\begin{align*}
    E_\ba\ast\xi((x,y)_\bi)
    &\spaceeq
    \frac{1}{\Pi^{n\cdot\bone_k}_\bepsilon\ast T_{(x,y)_\bi}}E_\ba\ast T_{(x,y)_\bi}
    \equationeq{eqn_1255_28sep}
    \frac{(n-2)!}{n!}E_\ba\ast T_{(x,y)_\bi}
    \spaceeq
    \left\{
    \begin{array}{cl}
        \frac{1}{n^2-n} & \mbox{ if }\ba\sim (x,y)_\bi \\
         0 & \mbox{ otherwise.} 
    \end{array}
    \right.
\end{align*}
Using that $(x,y)_\bi\sim\bi$ and that ``$\sim$'' is transitive, we conclude that the claim holds in this case, too.
It follows that $\xi$ is a homomorphism from $\Xk$ to $\freeQ(\K_n^\tensor{k})$.
To show that $\xi$ is $k$-tensorial, consider three tuples $\bx\in \Vset(\X)^k$, $\bi\in [k]^k$, and $\ba\in [n]^k$, and observe that
\begin{align*}
    E_\ba\ast \Pi^{n\cdot\bone_k}_\bi\ast \xi(\bx)
    &\lemeq{lem_Pi_basic}
    \sum_{\substack{\bb\in [n]^k\\\bb_\bi=\ba}}E_\bb\ast\xi(\bx)
    \spaceeq
    \frac{1}{\Pi^{n\cdot\bone_k}_\bepsilon\ast T_\bx}\sum_{\substack{\bb\in [n]^k\\\bb_\bi=\ba}}E_\bb\ast T_\bx
    \equationeq{eqn_1255_28sep}
    \frac{(n-|\bx|)!}{n!}\sum_{\substack{\bb\in [n]^k\\\bb_\bi=\ba}}E_\bb\ast T_\bx\\
    &\spaceeq
    \frac{(n-|\bx|)!}{n!}|\{\bb\in [n]^k:\bb_\bi=\ba \mbox{ and } \bb\sim\bx\}|\\
    &\lemeq{lem_involved_1511_1718}
    \left\{
    \begin{array}{cl}
        \frac{(n-|\bx|)!}{n!}\cdot\frac{(n-|\bx_\bi|)!}{(n-|\bx|)!} & \mbox{ if } \ba\sim\bx_\bi \\
        0 & \mbox{ otherwise.} 
    \end{array}
    \right.
    \spaceeq
    \left\{
    \begin{array}{cl}
        \frac{(n-|\bx_\bi|)!}{n!} & \mbox{ if } \ba\sim\bx_\bi \\
        0 & \mbox{ otherwise.} 
    \end{array}
    \right.
\intertext{On the other hand,}
    E_\ba\ast\xi(\bx_\bi)
    &\spaceeq
    \frac{1}{\Pi^{n\cdot\bone_k}_\bepsilon\ast T_{\bx_\bi}}E_\ba\ast T_{\bx_\bi}
    \equationeq{eqn_1255_28sep}
    \frac{(n-|\bx_\bi|)!}{n!}E_\ba\ast T_{\bx_\bi}
    \spaceeq
    \left\{
    \begin{array}{cl}
        \frac{(n-|\bx_\bi|)!}{n!} & \mbox{ if }\ba\sim\bx_\bi\\
        0 & \mbox{ otherwise.} 
    \end{array}
    \right.
\end{align*}
It follows that $\xi(\bx_\bi)=\Pi^{n\cdot\bone_k}_\bi\ast \xi(\bx)$, which means that $\xi$ is $k$-tensorial.

Take $\bx\in \Vset(\X)^k$ and $\ba\in [n]^k$, and suppose that $E_\ba\ast\xi(\bx)=0$. This implies $E_\ba\ast T_\bx=0$, which means that $\ba\not\sim\bx$; i.e., either $\ba\not\prec\bx$ or $\bx\not\prec\ba$. Using the hypothesis of the theorem (in the former case) or Lemma~\ref{cor_vanishing_contractions} applied to $\zeta$ (in the latter case), we find that $E_\ba\ast\zeta(\bx)=0$.
It follows that $\supp(\zeta(\bx))\subseteq\supp(\xi(\bx))$ for any $\bx\in \Vset(\X)^k$. By virtue of Theorem~\ref{thm_acceptance_BA_hierarchy_general}, this 
implies that $\BA^k(\X,\K_n)=\YES$. 
\end{proof}

\section{Crystals}
\label{sec_crystals}
In Section~\ref{sec_BA_hierarchy_through_tensors}, we obtained a multilinear criterion for the acceptance of the $\BA$ hierarchy applied to AGC: According to Theorem~\ref{thm_decoupling_BLPAIP_for_cliques}, to have $\BA^k(\X,\K_n)=\YES$ it suffices to find a $k$-tensorial homomorphism $\zeta$ from $\Xk$ to $\freeZ(\K_n^\tensor{k})$ satisfying the extra condition 
\begin{align}
\label{eqn_1536_1nov}
\ba\not\prec\bx\;\;\;\;\Rightarrow\;\;\;\;E_\ba\ast\zeta(\bx)=0.    
\end{align}
(Note that, by virtue of Lemma~\ref{cor_vanishing_contractions}, the condition ``$\ba\not\prec\bx$'' might be replaced with ``$\ba\not\sim\bx$''.)
It follows from Remark~\ref{rem_description_freestructure_tensorisation} that $\freeZ(\K_n^\tensor{k})$ is a $2^k$-uniform infinite hypergraph whose vertices are elements of $\cT^{n\cdot\bone_k}(\Z)$, i.e., $k$-dimensional integer cubical tensors of width $n$, whose entries sum up to $1$. As for the hyperedges, a family $\{T^{(\bi)}\}_{\bi\in [2]^k}$ of $2^k$ such tensors  forms a hyperedge if and only if there exists an integer vector $\bq$ of length $n^2-n=|\Eset(\K_n)|$ (i.e., an integer distribution over the edges of $\K_n$) whose entries sum up to $1$ and such that all tensors in the family can be obtained from $\bq$ by specific contractions; more precisely, we require that $T^{(\bi)}=\bq_{/\pi_\bi}=P_{\pi_\bi}\ast\bq$ for each $\bi\in [2]^k$. 
\begin{defn}
\label{defn_affine_tensors}
Let $q\in\N_0$, let $\bn\in\N^q$, and let $T\in \mathcal{T}^\bn(\Z)$. We say that $T$ is \emph{affine} if $\Pi^\bn_\bepsilon\ast T=1$.
\end{defn}
Hence, finding a homomorphism $\zeta$ from $\Xk$ to $\freeZ(\K_n^\tensor{k})$ means selecting some $k$-dimensional integer affine cubical tensors of width $n$ (one for each tuple $\bx\in\Vset(\X)^k$) in such a way that the hyperedge relation is preserved. In order for $\zeta$ to be $k$-tensorial, this family of tensors needs to behave well with respect to projections: The tensor associated with the (combinatorial) projection of a tuple $\bx$ of vertices onto a tuple $\bi\in [k]^k$ should be the (geometric) projection of the tensor associated with $\bx$ onto the hyperplane generated by $\bi$; in symbols, $\zeta(\bx_\bi)=\Pi^{n\cdot\bone_k}_\bi\ast\zeta(\bx)$. 
One way to build a family of tensors having this property is to consider the $k$-dimensional projections of a single higher-dimensional affine cubical tensor $C$ of width $n$, whose dimension $q$ is the number of vertices of $\X$. Specifically,
we build a map $\zeta_C$ associated with the tensor $C$ as follows: The image of a tuple $\bx\in\Vset(\X)^k$ under $\zeta_C$ is the projection of $C$ onto the hyperplane generated by $\bx$; i.e., the tensor $\Pi^{n\cdot\bone_q}_\bx\ast C$. In this way, $\zeta_C$ is automatically $k$-tensorial. Indeed, Lemma~\ref{lem_proj_contr} and Lemma~\ref{lem_associativity_contraction} imply that $$\zeta_C(\bx_\bi)=\Pi^{n\cdot\bone_q}_{\bx_\bi}\ast C=\Pi^{n\cdot\bone_k}_{\bi}\ast\Pi^{n\cdot\bone_q}_\bx\ast C=\Pi^{n\cdot\bone_k}_{\bi}\ast\zeta_C(\bx),$$ as needed. 

For the map $\zeta_C$ to yield a homomorphism from $\Xk$ to $\freeZ(\K_n^\tensor{k})$, it is enough to require that the $2$-dimensional projections of $C$ be equal up to taking the transpose and have zero diagonal (cf.~the proof of Theorem~\ref{acceptance_prop_BA_big_enough} in Section~\ref{sec_fooling_BA}). Since a cubical tensor $C$ of width $n$ and dimension $q$ with this property exists for all choices of $n\geq 3$ and $q$, \emph{any} loopless digraph $\X$ is accepted by \emph{any} level of the $\AIP$ hierarchy applied to the template $\K_n$ for \emph{any} $n\geq 3$---whence it follows that to fool any level of the $\AIP$ hierarchy applied to $\PCSP(\K_c,\K_d)$ one can simply take the clique $\K_{d+1}$ (cf.~\cite{cz23soda:aip}).

This clearly cannot be true for the stronger $\BA$ hierarchy that, unlike $\AIP$, is sound in the limit. The obstruction is the condition~\eqref{eqn_1536_1nov}. The goal is then to identify a class of more refined tensors $C$ such that the associated homomorphism $\zeta_C$ satisfies the above condition. To this end, we start by enforcing a stronger requirement on the projections on $C$: \emph{The $k$-dimensional} (as opposed to $2$-dimensional) \emph{projections of $C$ should coincide}. 
Note that we cannot require that \emph{all} such projections be equal.
Indeed, already for $k=2$, if a matrix $M$ is the projection of $C$ onto some $2$-dimensional plane $xy$, then the projection of $C$ onto the reflected plane $yx$ is $M^T$. If these two projections need to be equal, it follows that $M$ must be symmetric.
In addition, $M$ is required to be affine and have zero diagonal, which clearly leads to a contradiction.
We then relax the hypothesis, by requiring that 
only the \emph{oriented} $k$-dimensional projections be equal. We say that a tensor having this property is a \emph{crystal}, as we next define.

Given $q,k\in\N$, we let $[q]^k_\rightarrow$ denote the set of increasing tuples in $[q]^k$; i.e., $[q]^k_\rightarrow=\{(i_1,\dots,i_k)\in [q]^k \mbox{ s.t. } i_1<i_2<\dots<i_k\}$. We also set $[q]^0_\to=\{\bepsilon\}$. Observe that $[q]^k_\to\neq\emptyset$ if and only if $k\leq q$.

\begin{defn*}[Formal version of Definition~\ref{defn_crystal_informal}]
Let $q,n\in\N$ and $k\in\{0,\dots,q\}$. A cubical tensor $C\in\mathcal{T}^{n\cdot\bone_q}(\Z)$ is a \emph{$k$-crystal} if $\Pi^{n\cdot\bone_q}_\bi\ast C=\Pi^{n\cdot\bone_q}_\bj\ast C$
for each $\bi,\bj\in [q]^k_\to$. In this case, the \emph{$k$-shadow} of $C$ is the tensor $\Pi^{n\cdot\bone_q}_\bi\ast C$ (for some $\bi\in [q]^k_\to$).
\end{defn*}

\begin{rem}
Given a not necessarily increasing tuple $\bj\in[q]^k$, we can always find two tuples $\bi\in [q]^k_\to$ and $\bell\in [k]^k$ for which $\bj=\bi_\bell$. Then, if $S$ is the $k$-shadow of a $k$-crystal $C$, we obtain \begin{align*}
    \Pi^{n\cdot\bone_q}_\bj\ast C
    \spaceeq
    \Pi^{n\cdot\bone_q}_{\bi_\bell}\ast C
    \lemeq{lem_proj_contr}
    \left(\Pi^{n\cdot\bone_k}_{\bell}\cont{k}\Pi^{n\cdot\bone_q}_\bi\right)\ast C
    \lemeq{lem_associativity_contraction}
    \Pi^{n\cdot\bone_k}_{\bell}\ast\left(\Pi^{n\cdot\bone_q}_\bi\ast C\right)
    \spaceeq
    \Pi^{n\cdot\bone_k}_{\bell}\ast S.
\end{align*}
If $|\bell|=k$ (equivalently, $|\bj|=k$), the tensor $\Pi^{n\cdot\bone_k}_{\bell}\ast S$ is a reflection of $S$; i.e., it is obtained from $S$ by simply permuting its modes (cf.~Section~\ref{subsec_projection_tensor}).
As a consequence, the definition above may be rephrased by asking that the projections of a $k$-crystal onto hyperplanes generated by $k$ distinct modes should be equal \emph{up to the reflection} associated with the orderings of the modes.  
\end{rem}
Let now $C$ be a $k$-crystal, and let $S$ be its $k$-shadow.
The condition~\eqref{eqn_1536_1nov} for the map $\zeta_C$ associated with $C$ becomes now a condition on the shadow $S$:
The only entries of $S$ that are allowed to be nonzero are the ones whose coordinates are all distinct. We say that a tensor satisfying this requirement is \emph{hollow}.

\begin{figure}
\begin{center}
\SCrystal{-3}{.8}
\end{center}
\caption{The tensor $S$ from Example~\ref{ex_no_hollo_crystal_of_small_width}.}
\label{fig_shadow_S}
\end{figure}

\begin{defn}
\label{defn_hollow_tensors}
Let $k\in\N$, let $\bn\in\N^k$, and let $T\in \mathcal{T}^\bn(\Z)$. 
A tuple $\ba\in [\bn]$ is a
\emph{tie} for $T$ if $|\ba|<k$ and $\ba\in\supp(T)$. We say that $T$ is \emph{hollow} if $T$ does not have any ties.
\end{defn}

In summary, we have (informally) shown that an affine $q$-dimensional $k$-crystal $C$ of width $n$ whose $k$-shadow is hollow yields a $k$-tensorial homomorphism $\zeta_C$ satisfying~\eqref{eqn_1536_1nov} and thus, through Theorem~\ref{thm_decoupling_BLPAIP_for_cliques}, certifies that $\BA^k(\X,\K_n)=\YES$ if $\X$ has $q$ vertices. (How to explicitly construct $\zeta_C$ from a hollow-shadowed crystal $C$ is discussed in more detail in the proof of Theorem~\ref{acceptance_prop_BA_big_enough} in Section~\ref{sec_fooling_BA}.) The problem is now to verify if such crystals actually exist. The next example shows that it is not possible to build a hollow-shadowed crystal whose width is too small.

\begin{example}
\label{ex_no_hollo_crystal_of_small_width}
We now show by contradiction that, for any $q\geq 4$, it is not possible to build an affine $q$-dimensional $3$-crystal $C$ of width $3$ whose $3$-shadow $S$ is hollow.

Suppose that such $C$ exists.
First, observe that $S$ belongs to $\cT^{3\cdot\bone_3}(\Z)$; i.e., it is a $3\times 3\times 3$ integer tensor. Figure~\ref{fig_shadow_S} shows $S$ together with its three $2$-dimensional oriented projections; in grey are the cells that need to be zero to satisfy the hollowness requirement, while each of the other six cells is assigned a different colour.\footnote{The colours in Figure~\ref{fig_shadow_S} are not related to the colours used in Section~\ref{subsec_crystals} and in Example~\ref{example_four_dimensional_crystal}.}
We shall see in Proposition~\ref{prop_crystals_smaller_dimension} that, if $C$ is a $3$-crystal, it also needs to be a $2$ crystal; let $\tilde S$ be the $2$-shadow of $C$. Then, for any $\bi\in [3]^2_\to$, we have
\begin{align*}
    \Pi^{3\cdot\bone_3}_\bi\ast S
    \spaceeq
    \Pi^{3\cdot\bone_3}_\bi\ast \left(\Pi^{3\cdot\bone_q}_{\ang{3}}\ast C\right)
    \lemeq{lem_associativity_contraction}
    \Pi^{3\cdot\bone_3}_\bi\cont{3}\Pi^{3\cdot\bone_q}_{\ang{3}}\ast C
    \lemeq{lem_proj_contr}
    \Pi^{3\cdot\bone_q}_{\ang{3}_\bi}\ast C
    \spaceeq
    \Pi^{3\cdot\bone_q}_{\bi}\ast C
    \spaceeq
    \tilde S.
\end{align*}
In other words, $S$ is a $2$-crystal itself. It follows that the three oriented $2$-dimensional projections of $S$ depicted in Figure~\ref{fig_shadow_S} need to coincide:
\begin{align*}
\SProjectionXY{-1.5}{.5}
\spaceeq
\SProjectionXZ{-1.5}{.5}
\spaceeq
\SProjectionYZ{-1.5}{.5}\;.
\end{align*}
This forces all six non-grey entries of $S$ to be equal. On the other hand, $C$ is affine, and we will see in Lemma~\ref{lem_conservation_sum_crystals} that $S$ is affine, too. Since the entries of $S$ are integers, this yields a contradiction.
\end{example}
As a consequence, taking an arbitrary digraph with high chromatic number is not enough for fooling the $\BA$ hierarchy applied to AGC; in particular, unlike for the $\AIP$ hierarchy,
one cannot simply use cliques as fooling instances.
This motivates the strategy, discussed in Section~\ref{subsec_fooling_BA} (see also Section~\ref{sec_fooling_BA}), of using \emph{shift digraphs} instead of cliques as fooling instances. To guarantee $\BA^k$ acceptance for this more refined class of digraphs, it shall be enough to have hollow-shadowed crystals whose width is \emph{sub-exponential} in $k$.
The result stated next is the
main technical contribution of this work, and it shows the existence of hollow-shadowed crystals whose width is \emph{quadratic} in $k$.

\begin{thm*}[Theorem~\ref{thm_existence_crystals_with_hollow_shadow} restated]
For any $k\leq q\in\N$ there exists an affine $k$-crystal $C\in\mathcal{T}^{\frac{k^2+k}{2}\cdot\bone_q}(\Z)$ with hollow $k$-shadow.
\end{thm*}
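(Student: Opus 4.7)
My strategy is to deduce Theorem~\ref{thm_existence_crystals_with_hollow_shadow} from Theorem~\ref{prop_hollow_shadows_exist} together with the crystalisation machinery of Section~\ref{subsec_crystalisation}. Setting $n=\frac{k^2+k}{2}$, I start by invoking Theorem~\ref{prop_hollow_shadows_exist} to obtain a hollow affine $(k-1)$-crystal $S\in\cT^{n\cdot\bone_k}(\Z)$. Its width already matches the width demanded by the target statement, and its $(k-1)$-crystal property will be exactly what powers the subsequent realisation step.

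The base case $q=k$ is essentially free: the set $[k]^k_\to$ contains only the tuple $\ang{k}$, so the equality required in the definition of a $k$-crystal holds vacuously, and Lemma~\ref{lem_identity_Pi} gives $\Pi^{n\cdot\bone_k}_{\ang{k}}\ast S=S$, so the $k$-shadow of $S$ is $S$ itself. Hence $C:=S$ is an affine $k$-crystal of dimension $k$ with hollow $k$-shadow.

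For $q>k$, I would apply the crystalisation procedure (equivalently, the realisation theorem for realistic systems of shadows from Section~\ref{subsec_system_of_shadows_BODY}) to the system on $[q]^k_\to$ obtained by assigning $S$ to every increasing $k$-tuple of modes (with the corresponding reflections of $S$ forced at non-increasing tuples). The only nontrivial realisability obstruction is that the $(k-1)$-dimensional sub-projections of the assigned shadows agree on overlapping $(k-1)$-faces — and this agreement is precisely encoded by the $(k-1)$-crystal property of $S$. Realisation therefore produces a tensor $C\in\cT^{n\cdot\bone_q}(\Z)$ whose oriented $k$-projections are all equal to $S$; that is, $C$ is a $k$-crystal with $k$-shadow $S$, which is hollow by the choice of $S$. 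Affineness of $C$ follows from Lemma~\ref{lem_proj_contr} and Lemma~\ref{lem_associativity_contraction}, which together yield
$\Pi^{n\cdot\bone_q}_\bepsilon\ast C=\Pi^{n\cdot\bone_k}_\bepsilon\ast(\Pi^{n\cdot\bone_q}_{\ang{k}}\ast C)=\Pi^{n\cdot\bone_k}_\bepsilon\ast S=1$,
the last equality using affineness of $S$.

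The main obstacle of the whole plan therefore does not lie in this reduction, which is essentially a packaging exercise, but in its two external ingredients: Theorem~\ref{prop_hollow_shadows_exist}, whose proof is the central technical achievement of the paper (a base realistic system of shadows is realised as a crystal, which is then refined into a \emph{hollow} crystal by adding integer multiples of the ``transparent'' quartz tensors, as illustrated for $k=3$ in Section~\ref{subsec_crystals}); and the characterisation of realistic systems of shadows in Section~\ref{subsec_system_of_shadows_BODY}, whose proof is a delicate nested induction on the dimension of the shadows and on the sizes of the modes. Given those two results, Theorem~\ref{thm_existence_crystals_with_hollow_shadow} is a direct corollary.
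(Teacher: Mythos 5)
Your proof is correct and follows essentially the same route as the paper: invoke Theorem~\ref{prop_hollow_shadows_exist} to obtain a hollow affine $(k-1)$-crystal $S$ of the right width, then apply the crystalisation procedure (Corollary~\ref{cor_exogenesis_of_crystals}) to realise $S$ as the $k$-shadow of a $q$-dimensional $k$-crystal $C$, with affineness transferred via Lemma~\ref{lem_conservation_sum_crystals}. Your separate treatment of the base case $q=k$ is harmless but unnecessary, since the corollary already covers $k\in[q]$ uniformly.
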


\noindent The core of this section is dedicated to the proof of the next result, from which Theorem~\ref{thm_existence_crystals_with_hollow_shadow} will follow via a simple \textit{crystalisation} argument described in Section~\ref{subsec_crystalisation}.

\begin{thm*}[Theorem~\ref{prop_hollow_shadows_exist} restated]
For any $k\in\N$ there exists a hollow affine $(k-1)$-crystal $C\in\mathcal{T}^{\frac{k^2+k}{2}\cdot\bone_k}(\Z)$.
\end{thm*}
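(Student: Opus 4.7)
The plan is induction on $k$. The base case $k=1$ is immediate: take $C=1\in\mathcal{T}^{1\cdot\bone_1}(\Z)$, which is affine and vacuously hollow. For the inductive step, fix $k\geq 2$, set $n:=\tfrac{k^2+k}{2}$ and $m:=n-k=\tfrac{(k-1)k}{2}$, and use the inductive hypothesis to pick a hollow affine $(k-2)$-crystal $U\in\mathcal{T}^{m\cdot\bone_{k-1}}(\Z)$. First, apply the crystallisation procedure of Section~\ref{subsec_crystalisation} to the single shadow $U$ to obtain a $(k-1)$-crystal $V\in\mathcal{T}^{m\cdot\bone_k}(\Z)$ with $(k-1)$-shadow equal to $U$. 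Then zero-pad $V$ into $W\in\mathcal{T}^{n\cdot\bone_k}(\Z)$ along a disjoint ``new range'' of $k$ fresh indices $\{m+1,\ldots,n\}$. Since the appended entries vanish, $W$ remains an affine $(k-1)$-crystal whose shadow is the zero-padded $U$, and this zero-padded shadow stays hollow.

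Next, fix once and for all the tuple $\bb:=(m+1,m+2,\ldots,m+k)\in[n]^k$, whose coordinates are pairwise distinct and all lie outside $[m]$. For each $\ba\in\supp(V)\subseteq[m]^k$, form the quartz $Q_{\ba,\bb}\in\mathcal{T}^{n\cdot\bone_k}(\Z)$ supported on the $2^k$ vertices $\bc$ of the parallelepiped $\{\bc:c_i\in\{a_i,b_i\}\}$, assigning $(Q_{\ba,\bb})_\bc=(-1)^{|\epsilon(\bc)|}$ where $\epsilon_i(\bc)=0$ if $c_i=a_i$ and $\epsilon_i(\bc)=1$ if $c_i=b_i$ (well-defined because the ranges $[m]$ and $\{m+1,\ldots,n\}$ are disjoint). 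Define
\[
C \;:=\; W \;-\; \sum_{\ba\in\supp(V)} w_\ba\cdot Q_{\ba,\bb}\,,\qquad w_\ba:=E_\ba\ast V.
\]
Affineness and the crystal property of $C$ are routine: every quartz has total entry sum zero (signs alternate on a parallelepiped), and every $(k-1)$-dimensional projection of $Q_{\ba,\bb}$ vanishes (summing out any one direction pairs up $\pm1$ contributions). Thus $C$ is affine and a $(k-1)$-crystal with the same shadow as $W$.

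The main verification, and the step I expect to be the principal obstacle, is hollowness. Fix a tie $\bc\in[n]^k$ with $|\bc|<k$ and partition $[k]=I\sqcup J$ by $I:=\{i:c_i\notin[m]\}$, $J:=\{j:c_j\in[m]\}$. A quartz $Q_{\ba,\bb}$ contributes non-zero to $C_\bc$ only if $c_i=b_i$ for all $i\in I$ and $a_j=c_j$ for all $j\in J$; since the entries of $\bb$ are distinct and the ranges of $I$- and $J$-coordinates are disjoint, any tie in $\bc$ must involve two coincident entries within $J$, so $\bc_J$ is itself a tie of length $|J|$. Consequently
\[
C_\bc \;=\; W_\bc \;-\; (-1)^{|I|}\!\!\sum_{\substack{\ba\in[m]^k\\ a_j=c_j\,\forall j\in J}}\!\! w_\ba \;=\; W_\bc \;-\; (-1)^{|I|}\bigl(\Pi^{m\cdot\bone_{k-1}}_{\bj'}\ast U\bigr)_{\bc_J},
\]
where the second equality uses Lemma~\ref{lem_proj_contr} together with the $(k-1)$-crystal property of $V$: for any increasing tuple $\bell\in[k]^{k-1}_\to$ containing the coordinates of $J$, the projection of $V$ onto $J$ factors through the projection onto $\bell$, which equals $U$ (and $\bj'$ is the induced tuple locating $J$ inside $\bell$). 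If $|I|\geq 1$ then $W_\bc=0$; moreover every summand $U_\bd$ appearing in the projected value at $\bc_J$ has the two coincident coordinates of $\bc_J$ as two coordinates of $\bd$, making $\bd$ a tie of $U$, so $U_\bd=0$ by hollowness of $U$. If $|I|=0$ then $W_\bc=w_\bc$ and the inner sum reduces to the single term $w_\bc$, which cancels exactly. In every case $C_\bc=0$, so $C$ is hollow, closing the induction.

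The entire argument hinges on a single conceptual observation: hollowness of $U$ automatically propagates to every further projection of $U$ evaluated at a tie, because every summand in such a projection is indexed by a tie of $U$ itself. Once this is noted, one round of quartz perturbation with a fixed $\bb$ suffices to kill all defects simultaneously across all ``defect levels'' $|I|$; no strengthening of the inductive invariant or iterated perturbation is needed. The bookkeeping to express the inner sum as an honest projection of $U$ (and to verify well-definedness of the quartz even when $\ba$ has repeated entries) is where the write-up will require the most care.
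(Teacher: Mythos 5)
Your proof is correct and follows the same overall construction as the paper's: induction on $k$, crystallisation of the inductively obtained $U$ to get $V$, zero-padding $V$ to $W$, and subtracting a fixed linear combination of quartzes $Q_{\ba,\bb}$ with $\bb=(m+1,\ldots,n)$. The affineness and crystal-preservation claims are handled identically (quartzes have zero total sum and zero $(k-1)$-shadow). Where you genuinely diverge from the paper is the heart of the matter: the verification that $C$ is hollow.

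The paper proves hollowness by an inner induction on $\ell(\bc)=|I|$, the number of coordinates of $\bc$ outside $[m]$. It first establishes the auxiliary localization fact that $\bc\in\supp(C)$ with $c_i>m$ forces $c_i=m+i$, then chooses one index $j$ with $c_j>m$, projects along $j$ into the hollow shadow $Z$ of $C$, and splits the resulting sum $\sum_a C_{\bc^{(a)}}=0$ using the inner inductive hypothesis for $a\le m$ and the localization fact for $a>m$. You instead compute $C_\bc$ for a general tie $\bc$ in closed form. Your identity
\[
C_\bc \;=\; W_\bc \;-\; (-1)^{|I|}\Bigl(\Pi^{m\cdot\bone_{k-1}}_{\bj'}\ast U\Bigr)_{\bc_J}
\]
expresses the residual quartz contribution as a projection of $U$ evaluated at $\bc_J$, and the observation that every tie of $\bc$ must lie wholly within the $J$-coordinates (because the $I$-coordinates are distinct entries of $\bb$ disjoint from $[m]$) lets you read off vanishing directly from hollowness of $U$: every $\bd$ occurring in the projected sum inherits the tie. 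This avoids both the inner induction and the explicit localization fact, and makes the role of $U$'s hollowness transparent. The paper reaches the same place more indirectly, via the shadow $Z$ of $C$, one deleted coordinate at a time. Both arguments ultimately hinge on the same structural point (the tie sits in the $J$-block), but your unrolled computation is cleaner.

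One presentational gap to close in a careful write-up: your displayed formula for $C_\bc$ is valid only under the assumption $c_i=b_i$ for all $i\in I$. If instead $c_i\neq b_i$ for some $i\in I$, no quartz is supported at $\bc$ and the sum must be read as $0$, giving $C_\bc=W_\bc$; this is still $0$ whenever $I\neq\emptyset$ since then $\bc\notin[m]^k$ and $W$ vanishes there. You hint at this dichotomy in the sentence preceding the display, but the formula as written would be wrong if applied blindly in that case, so make the case split explicit.
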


\noindent Our strategy to prove Theorem~\ref{prop_hollow_shadows_exist} shall be the following: 
\begin{enumerate}
    \item[($\spadesuit$\,1)]
    We start with a hollow affine $(k-1)$-dimensional $(k-2)$-crystal $U$
    of width $\frac{k^2-k}{2}$,    
    whose existence we assume by induction.
    \item[($\spadesuit$\,2)]
    We build a (not necessarily hollow) $k$-dimensional $(k-1)$-crystal $V$ whose shadow is $U$. This is done by using a general construction---described in Section~\ref{subsec_system_of_shadows_BODY}---that, given a ``realistic system of shadows'' $\mathcal{S}$, produces a ``realisation'' of $\mathcal{S}$, i.e., a
    tensor whose projections are precisely the members of $\mathcal{S}$. In particular, the construction yields the crystalisation procedure of Section~\ref{subsec_crystalisation}.
    \item[($\spadesuit$\,3)]
    We pad $V$ with $k$ layers of zeros in each dimension,
    thus obtaining a wider tensor $W$ that is still a $k$-dimensional $(k-1)$-crystal.
    \item[($\spadesuit$\,4)]
    We modify $W$ by adding to it certain transparent crystals, which we call \emph{quartzes}, discussed in Section~\ref{subsec_quartzes_BODY}. These crystals have the property of projecting an all-zero shadow, which implies in particular that the tensor $C$ obtained after this process is still a crystal. 
    \item[($\spadesuit$\,5)]
    By carefully choosing the quartzes, we end up with $C$ being hollow (as shown in Section~\ref{subsec_hollow_shadowed_crystals_BODY}).
\end{enumerate} 
\begin{rem}
The step ($\spadesuit$\,3) has the consequence that the hollow crystals resulting from this process are progressively wider as $k$ increases.
In fact, we are not able to build an affine hollow $(k-1)$-crystal $C\in\mathcal{T}^{n\cdot\bone_k}(\Z)$ for \emph{all} choices of $k$ and $n$.
This is not a deficit of our methods: For instance, it follows from Example~\ref{ex_no_hollo_crystal_of_small_width} that an affine hollow $2$-crystal in $\cT^{3\cdot\bone_3}(\Z)$ cannot exist.
\end{rem}

\begin{rem}
All of the steps ($\spadesuit$\,1)--($\spadesuit$\,5) in the proof of Theorem~\ref{prop_hollow_shadows_exist} are \emph{constructive}, in that they directly translate into an algorithm to find the required crystal. As a consequence, the proof of Theorem~\ref{thm_existence_crystals_with_hollow_shadow} on the existence of hollow-shadowed crystals of quadratic width is constructive, too.
\end{rem}

\subsection{Monotonicity of crystals}
As a warm-up, we start by proving the next monotonicity property of crystals.
\begin{prop}
\label{prop_crystals_smaller_dimension}
Let ${q},n\in\N$, let ${h},{k}\in\N_0$, and suppose that ${h}\leq{k}<{q}$. Then any ${k}$-crystal in $\cT^{n\cdot\bone_{q}}(\Z)$ is also an ${h}$-crystal.
\end{prop}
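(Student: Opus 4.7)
The plan is to prove the proposition by iterated application of a single key step: \emph{if $C$ is a $k$-crystal in $\cT^{n\cdot\bone_q}(\Z)$ with $1\leq k<q$, then $C$ is a $(k-1)$-crystal.} Once this is established, the proposition follows by iterating the step from $k$ down to $h$; at each intermediate stage $k'$ we still have $k'\leq k<q$, so the hypothesis is preserved, and the base case $h=k$ is immediate.

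To prove the key step, I would fix the common $k$-shadow $S = \Pi^{n\cdot\bone_q}_{\bi} \ast C$ (independent of $\bi \in [q]^k_\to$ by hypothesis). Given any $\bi' \in [q]^{k-1}_\to$, I would choose an element $z \in [q] \setminus \set(\bi')$ -- such $z$ exists because $|[q] \setminus \set(\bi')| = q - k + 1 \geq 2$ -- and insert it into $\bi'$ at its sorted position. The result is a tuple $\bi \in [q]^k_\to$ satisfying $\bi' = \bi_{\bell^{(t)}}$, where $t \in [k]$ denotes the insertion position and $\bell^{(t)} := (1, \ldots, t-1, t+1, \ldots, k) \in [k]^{k-1}_\to$. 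Combining Lemmas~\ref{lem_proj_contr} and~\ref{lem_associativity_contraction} then gives
\[
\Pi^{n\cdot\bone_q}_{\bi'} \ast C \;=\; \Pi^{n\cdot\bone_q}_{\bi_{\bell^{(t)}}} \ast C \;=\; \Pi^{n\cdot\bone_k}_{\bell^{(t)}} \ast S,
\]
so it remains only to show that the marginal $T_t := \Pi^{n\cdot\bone_k}_{\bell^{(t)}} \ast S$ -- which sums out the $t$-th coordinate of $S$ -- is independent of $t \in [k]$.

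To equate $T_t$ and $T_{t'}$ for distinct $t, t' \in [k]$, I would exhibit a single witness $\bi^* \in [q]^{k-1}_\to$ that admits an insertion at \emph{both} positions $t$ and $t'$. A tuple $(a_1, \ldots, a_{k-1})$ admits an insertion at position $j$ precisely when the ``slack'' $a_j - a_{j-1} - 1$ is positive (with conventions $a_0 = 0$, $a_k = q+1$), and the total slack $\sum_{j=1}^{k}(a_j - a_{j-1} - 1) = q - k + 1$ is at least $2$ because $q > k$. One can therefore distribute one unit of slack to each of positions $t$ and $t'$; an explicit choice (for $t<t'$, say) is the sequence of consecutive integers with a single gap at position $t$ and a single gap at position $t'$, whose maximum entry is $k+1 \leq q$. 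Two different insertions into $\bi^*$ -- one at position $t$, one at position $t'$ -- yield tuples $\bi^{(1)}, \bi^{(2)} \in [q]^k_\to$ satisfying $\bi^{(1)}_{\bell^{(t)}} = \bi^{(2)}_{\bell^{(t')}} = \bi^*$, so that the displayed computation applied to each insertion gives
\[
T_t \;=\; \Pi^{n\cdot\bone_q}_{\bi^*}\ast C \;=\; T_{t'}.
\]

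The main obstacle is a combinatorial bookkeeping one: verifying that the witness $\bi^*$ really lies in $[q]^{k-1}_\to$ and that both inserted tuples lie in $[q]^k_\to$. This is precisely where the hypothesis $k < q$ is consumed, leaving exactly enough total slack (two units) to accommodate two independent insertions; conversely, the failure of the statement for $k = q$ reflects that no slack is available.
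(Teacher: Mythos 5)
Your proof is correct, and it takes a genuinely different (and arguably cleaner) route than the paper's. Both arguments reduce to the case $h=k-1$ and iterate; the divergence is in how the reduction step is handled. The paper argues by contradiction with a minimisation: it takes a counterexample $\bi$ minimising $\bi^T\bone_{k-1}$, constructs tuples $\ba,\bb\in[q]^k_\to$ and a suitable $\bc$ so that the crystal property transports the counterexample to one of strictly smaller $\bone$-weight, and derives a contradiction. You instead argue directly: given $\bi'\in[q]^{k-1}_\to$, inserting any available element at position $t$ shows that $\Pi^{n\cdot\bone_q}_{\bi'}\ast C = \Pi^{n\cdot\bone_k}_{\bell^{(t)}}\ast S =: T_t$ depends only on the insertion position $t$; and for distinct $t,t'$ a single witness $\bi^*$ admitting insertions at both positions forces $T_t=T_{t'}$. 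Your approach makes the role of the hypothesis $k<q$ transparent — it is precisely the two units of combinatorial slack needed to construct the shared witness — and it avoids the minimisation bookkeeping and the case analysis ($\nu\leq k$ vs.\ $\nu=k+1$) that the paper's proof requires. Both are valid; yours is more constructive and, in my view, easier to check.
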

\noindent Before proving the proposition, we illustrate it with an example.
\begin{example}
Suppose for concreteness that $h=2$, $k=3$, and $q=6$, and let $C$ be a $6$-dimensional $3$-crystal. To simplify the notation in this example, let $C_{ij\dots}$ denote the projection of $C$ onto the modes $(i,j,\dots)$.
To see why Proposition~\ref{prop_crystals_smaller_dimension} is true, observe first that some of the oriented $2$-dimensional projections of $C$ must be equal as an immediate consequence of the definition of a $3$-crystal.
For example, the fact that, say, $C_{12}=C_{23}$ immediately follows from the fact that $C_{123}=C_{234}$---which, in turn, is implied by $C$ being a $3$-crystal. However, in order to show that, say, $C_{12}=C_{56}$, one step is not sufficient: \emph{Two} of the equalities enforced by $C$ being a $3$-crystal need to be considered. For example, we may derive from $C_{123}=C_{456}$ that $C_{12}=C_{45}$, and from $C_{345}=C_{456}$ that $C_{45}=C_{56}$. 
\end{example}
Thus, in some sense, Proposition~\ref{prop_crystals_smaller_dimension} relies on the connectedness of the graph encoding the projections of the given $k$-crystal onto lower dimensional spaces.
The proof below formalises this idea in arbitrary dimensions via a simple minimality argument. 
We point out that the assumption $k<q$ is crucial for this argument to work---and for the result to hold. Indeed, it is easily verified from Definition~\ref{defn_crystal_informal} that \emph{any} tensor $C\in\mathcal{T}^{n\cdot\bone_q}(\Z)$ is a $k$-crystal for $k=q$.

\begin{proof}[Proof of Proposition~\ref{prop_crystals_smaller_dimension}]
We can assume that ${h}={k}-1$ without loss of generality.
Given a tuple $\bi\in [{q}]^{{h}}_\to$ and $p\in [{q}]\setminus\set(\bi)$, we define $\bi\boxplus p$ as the tuple in $[{q}]^{k}_\to$ obtained by inserting $p$ into $\bi$ in the unique position that makes the resulting tuple monotonically increasing; in other words, $\bi\boxplus p=(\bi_\ang{\alpha},p,\bi_{(\alpha+1,\dots,{h})})$, where $\alpha=|\{\beta\in [{h}]:i_\beta<p\}|$. Similarly, given $\bj\in [{q}]^{k}_\to$ and $r\in\set(\bj)$, we define $\bj\boxminus r$ as the tuple in $[{q}]^{{h}}_\to$ obtained by removing $r$ from $\bj$.

Let ${C}$ be a ${k}$-crystal in $\cT^{n\cdot\bone_{q}}(\Z)$, and consider the tensor $S=\Pi^{n\cdot\bone_{q}}_\ang{{h}}\ast {C}$. 
We now show that $\Pi^{n\cdot\bone_{q}}_\bi\ast {C}=S$ for each $\bi\in [{q}]^{{h}}_\to$, which implies the result. For the sake of contradiction,
let $\bi\in [{q}]^{{h}}_\to$ be a tuple such that $\Pi^{n\cdot\bone_{q}}_\bi\ast {C}\neq S$ and such that the quantity $\bi^T\bone_{{h}}$ is minimum among the set of tuples $\bi'\in [{q}]^{{h}}_\to$ for which $\Pi^{n\cdot\bone_{q}}_{\bi'}\ast {C}\neq S$. Notice that the set $[{q}]\setminus\set(\bi)$ has at least two elements as ${h}={k}-1\leq {q}-2$. Therefore, the numbers $\mu=\min([{q}]\setminus\set(\bi))$ and $\nu=\min([{q}]\setminus(\set(\bi)\cup \{\mu\}))$ are well defined.
Consider the tuples $\ba=\bi\boxplus\nu$ and $\bb=\ba\boxminus a_{{k}}\boxplus\mu$ (where the operations are meant to be executed from the left to the right). By construction, we have $2\leq \nu\leq {k}+1$,
so $\nu-1\in[k]=\set(\ang{k})$. Hence,
we can define the tuple $\bc=\ang{{k}}\boxminus (\nu-1)$. By the definition of $\mu$ and $\nu$, we have that $a_{\nu-1}=\nu$. This implies that $\ba_\bc=\bi$, so
\begin{align}
\label{eqn_1736_7oct}
\notag
    S&\spaceneq
    \Pi^{n\cdot\bone_{q}}_\bi\ast {C}
    \spaceeq
    \Pi^{n\cdot\bone_{q}}_{\ba_\bc}\ast {C}
    \lemeq{lem_proj_contr}
    \left(\Pi^{n\cdot\bone_{k}}_\bc\cont{k}\Pi^{n\cdot\bone_{q}}_\ba\right)\ast {C}
    \lemeq{lem_associativity_contraction}
    \Pi^{n\cdot\bone_{k}}_\bc\ast\left(\Pi^{n\cdot\bone_{q}}_\ba\ast {C}\right)\\
    &\spaceeq
    \Pi^{n\cdot\bone_{k}}_\bc\ast\left(\Pi^{n\cdot\bone_{q}}_\bb\ast {C}\right)
    \lemeq{lem_associativity_contraction}
    \left(\Pi^{n\cdot\bone_{k}}_\bc\cont{k}\Pi^{n\cdot\bone_{q}}_\bb\right)\ast {C}
    \lemeq{lem_proj_contr}
    \Pi^{n\cdot\bone_{q}}_{\bb_\bc}\ast {C},
\end{align}
where the fourth equality uses that ${C}$ is a ${k}$-crystal and that $\ba,\bb\in [{q}]^{k}_\to$.
Observe that $a_{k}\geq \nu>\mu$, so $\bb\leq\ba$ entrywise. It follows that $\bb_\bc\leq\ba_\bc=\bi$ entrywise. 

Assume first that $\nu\leq {k}$. In this case, we have $c_{{h}}={k}$. We claim that $i_{h}>\nu$; otherwise, we would have $i_{h}\leq \nu\leq {k}$, which would yield $\set(\bi)\cup\{\mu\}=[{k}]$ since $\bi$ is monotonically increasing. This would force $\nu={k}+1$, a contradiction. In turn, $i_{h}>\nu$ implies that $a_{k}=i_{h}$. In particular, this means that $a_{k}>\mu$, so $b_{k}<a_{k}$. We conclude that $b_{c_{{h}}}=b_{k}<a_{k}=i_{h}$.
Since, as noted above, $\bb_\bc\leq\bi$ entrywise, it follows that
$\bb_\bc^T\bone_{{h}}<\bi^T\bone_{{h}}$. Putting all together, we have derived that $\Pi^{n\cdot\bone_{q}}_{\bb_\bc}\ast {C}\neq S$ and $\bb_\bc^T\bone_{{h}}<\bi^T\bone_{{h}}$, thus contradicting our minimality assumption.

On the other hand, if $\nu={k}+1$, we deduce that $\bi=\ang{{k}}\boxminus\mu$, so $\ba=\ang{{k}}\boxminus\mu\boxplus({k}+1)$, thus yielding $a_{k}={k}+1$. Therefore, 
\begin{align*}
\bb
&\spaceeq
\ang{{k}}\boxminus\mu\boxplus({k}+1)\boxminus a_{k}\boxplus\mu
\spaceeq
\ang{{k}}\boxminus\mu\boxplus({k}+1)\boxminus ({k}+1)\boxplus\mu
\spaceeq
\ang{{k}}\boxminus\mu\boxplus\mu
\spaceeq
\ang{{k}},
\end{align*}
while $\bc=\ang{{k}}\boxminus(\nu-1)=
\ang{{k}}\boxminus{k}=\ang{{h}}$ and, thus, $\bb_\bc=\ang{{k}}_{\ang{{h}}}=\ang{{h}}$. Then,~\eqref{eqn_1736_7oct} yields $\Pi^{n\cdot\bone_{q}}_\ang{{h}}\ast {C}\neq S$, which again contradicts our assumptions.

Hence, we have shown that $\Pi^{n\cdot\bone_{q}}_\bi\ast {C}=S$ for each $\bi\in [{q}]^{{h}}_\to$, and the proof is concluded.
\end{proof}

\subsection{Systems of shadows}
\label{subsec_system_of_shadows_BODY}
A crystal tensor has the property of projecting the same shadow onto each oriented hyperplane of appropriate dimension, cf.~Definition~\ref{defn_crystal_informal}. The step ($\spadesuit$\,2) of the strategy to prove Theorem~\ref{prop_hollow_shadows_exist} requires reconstructing a crystal from its shadow. We now show how to accomplish this task. In fact, our approach shall be more general: In Theorem~\ref{thm_realistic_equals_realisable}, we characterise those sets of (lower-dimensional) tensors that can be realised as the oriented projections of a single (higher-dimensional) tensor. Then, we shall see in Section~\ref{subsec_crystalisation} (Corollary~\ref{cor_exogenesis_of_crystals}) that this characterisation easily implies the existence of the crystal required in ($\spadesuit$\,2).

\begin{defn}
\label{defn_system_of_shadows}
For $p,q\in\N$ and $\bn\in\N^q$, a \emph{$(p,\bn)$-system of shadows} is a set $\mathcal{S}=\{S_\bi\}_{\bi\in [q]^p_\rightarrow}$ such that $S_\bi\in\cT^{\bn_\bi}(\Z)$ for each $\bi\in [q]^p_\rightarrow$.
\begin{itemize}
    \item 
    $\mathcal{S}$ is a \emph{realistic} system of shadows if 
    \begin{align}
    \label{eqn_condition_balanced_sums}
    \Pi^{\bn_\bi}_\br\ast S_\bi=\Pi_\bs^{\bn_\bj}\ast S_\bj && \mbox{ for any }&&\bi,\bj\in [q]^p_\rightarrow,\;\br,\bs\in [p]^{p-1}_\rightarrow\mbox{ such that }\bi_\br=\bj_\bs.
    \end{align}
    \item
    $\mathcal{S}$ is a \emph{realisable} system of shadows if there exists a tensor $C\in\cT^{\bn}(\Z)$ such that $\Pi^\bn_\bi\ast C=S_\bi$ for each $\bi\in [q]^p_\rightarrow$.
\end{itemize}
\end{defn}
In other words, a system of $p$-dimensional ``shadow'' tensors is realistic if the shadows are locally compatible with each other in the sense of the requirement~\eqref{eqn_condition_balanced_sums}, while it is realisable if it can actually be realised as the set of $p$-dimensional oriented projections of a single $q$-dimensional tensor. Notice that, for the set $\mathcal{S}$ to be nonempty, we must have $p\leq q$. Observe also that the tensors $S_\bi$ and $C$ are not required to be cubical.

Using Lemma~\ref{lem_proj_contr}, it is not hard to check that a realisable system of shadows is always realistic. As stated in the next theorem, it turns out that the two conditions are in fact equivalent.
\begin{thm}
\label{thm_realistic_equals_realisable}
Let $p,q\in\N$ and $\bn\in\N^q$. A $(p,\bn)$-system of shadows is realistic if and only if it is realisable.
\end{thm}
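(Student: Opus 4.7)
The forward direction (realisable implies realistic) is a direct calculation using the composition rule of projection tensors. If $C\in\cT^\bn(\Z)$ realises $\mathcal{S}$, then for any $\bi,\bj\in[q]^p_\rightarrow$ and $\br,\bs\in[p]^{p-1}_\rightarrow$ with $\bi_\br=\bj_\bs$, Lemmas~\ref{lem_proj_contr} and~\ref{lem_associativity_contraction} yield
\[
\Pi^{\bn_\bi}_\br\ast S_\bi \;=\; \Pi^{\bn_\bi}_\br\ast\Pi^\bn_\bi\ast C \;=\; \Pi^\bn_{\bi_\br}\ast C \;=\; \Pi^\bn_{\bj_\bs}\ast C \;=\; \Pi^{\bn_\bj}_\bs\ast S_\bj,
\]
which is precisely~\eqref{eqn_condition_balanced_sums}.

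For the substantial converse, my plan is a nested induction as indicated in the excerpt: outer on the shadow dimension $p$, inner on the total mode size $N=\sum_i n_i$. Two easy boundary cases bound the recursion. When $p=q$, the set $[q]^p_\rightarrow=\{\ang{q}\}$ is a singleton, so $\mathcal{S}=\{S_{\ang q}\}$, and Lemma~\ref{lem_identity_Pi} shows that $C=S_{\ang q}$ realises it. When $N=q$, all $n_i=1$, the ambient tensor reduces to a scalar, and realism forces every shadow to equal that same scalar. For the base case $p=1$ (the $1$-dimensional marginals problem, a tensor analogue of the Gale--Ryser condition), I would give a direct construction: realism yields a common total sum $s$, and one can realise the system by placing the prescribed marginals along the ``first fibre'' of each mode with a compensating correction term at the all-ones corner.

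The inductive step with $p\geq 2$ and $N>q$ proceeds by \emph{peeling a slice}. Using the invariance of the problem under reflections (the forthcoming Lemma~\ref{lem_rotating_preserves_truth}) I may assume without loss of generality that mode $q$ has size greater than $1$, and peel off its last slice. This slice, a tensor $R\in\cT^{(n_1,\ldots,n_{q-1})}(\Z)$, must itself realise the $(p-1,(n_1,\ldots,n_{q-1}))$-system obtained by restricting each $S_\bi$ with $q\in\set(\bi)$ to its final $q$-coordinate; this is a strictly smaller instance (at shadow dimension $p-1$) and is solved by the outer induction. Once $R$ is constructed, I would form a reduced $(p,\bn')$-system on $\bn'=(n_1,\ldots,n_q{-}1)$ by subtracting, from each $S_\bi$ with $q\in\set(\bi)$, the contribution already accounted for by $R$; leaving the remaining shadows unchanged. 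By the inner induction on $N$ (applied at shadow dimension $p$ but smaller $N$), the reduced system is realised by some $C'\in\cT^{\bn'}(\Z)$, and stacking $R$ as the final slice of $C'$ along mode $q$ yields the desired realisation $C\in\cT^\bn(\Z)$.

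The main obstacle I anticipate is verifying that the reduced system constructed in the inner step is itself realistic, so that the inner induction hypothesis applies. This amounts to checking the compatibility condition~\eqref{eqn_condition_balanced_sums} after subtracting the contribution of $R$, and the delicate point is the asymmetry between shadows $S_\bi$ with $q\in\set(\bi)$ (which are modified) and those with $q\notin\set(\bi)$ (which are not). Tracing this through requires the realism of $\mathcal{S}$ precisely on the ``interface'' tuples whose projections relate the two classes, and it is here that the careful invocation of Lemma~\ref{lem_proj_contr} together with the identity $\Pi^{\bn_\bi}_\br\ast S_\bi=\Pi^{\bn_\bj}_\bs\ast S_\bj$ closes the compatibility check. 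Reflection-invariance plays a decisive bookkeeping role: it normalises the peeled mode and avoids a case analysis on where the shrunken coordinate appears inside a given $\bi$.
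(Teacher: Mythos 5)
Your overall strategy is the same as the paper's: a nested induction, outer on the shadow dimension $p$ and inner on the total mode size, with Lemma~\ref{lem_rotating_preserves_truth} used to normalise so that the last mode has size at least $2$, followed by peeling off its final slice. The base-case discussion also matches in substance; your direct construction for $p=1$ (placing the prescribed marginals along the fibres through $\bone_q$ and correcting at the corner) is a valid alternative to the paper's inductive Lemma~\ref{lem_realistic_realisable_case_p_1}, since the required corner value $\sum_{j}E_1\ast S_j-(q-1)s$ (with $s$ the common total sum) turns out to be independent of the mode and therefore consistent.

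However, the bookkeeping in your inductive step is reversed, and this is a genuine error rather than an imprecision. Split $C\in\cT^{\bn}(\Z)$ along mode $q$ into a body $C'\in\cT^{\bn'}(\Z)$, with $\bn'=\bn-E_q$, and a final slice $R\in\cT^{\hat{\bn}}(\Z)$, with $\hat{\bn}=(n_1,\dots,n_{q-1})$. For $\bi$ with $q\notin\set(\bi)$, projecting onto $\bi$ sums over \emph{all} layers of mode $q$, including the last, so
\begin{align*}
\Pi^{\bn}_\bi\ast C \;=\; \Pi^{\bn'}_\bi\ast C' \;+\; \Pi^{\hat{\bn}}_\bi\ast R.
\end{align*}
Hence the reduced shadow that $C'$ must realise is $S_\bi-\Pi^{\hat{\bn}}_\bi\ast R$, not $S_\bi$ unchanged. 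Conversely, for $\bi$ with $q\in\set(\bi)$ (equivalently $i_p=q$, since $\bi$ is increasing), the last layer of $S_\bi$ along its final mode is exactly $\Pi^{\hat{\bn}}_{\bi_{\ang{p-1}}}\ast R$ and is already supplied by the slice $R$; what $C'$ must realise is the \emph{truncation} of $S_\bi$, with no numerical subtraction. Your proposal swaps these two cases: it subtracts from the $q\in\set(\bi)$ shadows and leaves the $q\notin\set(\bi)$ shadows untouched. Carried out as written, the stacked tensor would give $\Pi^{\bn}_\bi\ast C=S_\bi+\Pi^{\hat{\bn}}_\bi\ast R\neq S_\bi$ for every $\bi$ with $q\notin\set(\bi)$, so the construction fails. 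The ``asymmetry'' you flag is indeed the crux, but it runs in the opposite direction from the one you describe. Once corrected, verifying that the reduced system is realistic is the remaining work (cf.~the case analysis around~\eqref{eqn_1417_08062022} in the paper), and your outline then coincides with the paper's argument.
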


Theorem~\ref{thm_realistic_equals_realisable} is proved through a nested
induction---first on the dimension of the shadows $S_\bi$ (i.e., $p$), and second on the
sum of the sizes of the modes of the tensor $C$ that realises the shadows (i.e.,
$\bn^T\bone_q$). Lemmas~\ref{lem_1_dim_albums_are_truthful} and~\ref{lem_realistic_realisable_case_p_1} contain the base cases for the second and the first inductions, respectively. We note that the proof of Theorem~\ref{thm_realistic_equals_realisable}---as well as the proofs of Lemmas~\ref{lem_1_dim_albums_are_truthful} and~\ref{lem_realistic_realisable_case_p_1}---is constructive, as it directly provides a procedure to recover the tensor $C$ realising a given realistic system of shadows $\mathcal{S}$. See also Example~\ref{example_four_dimensional_crystal}, which illustrates this procedure applied to the problem of building a $4$-dimensional $2$-crystal having a given shadow.

In order to establish that a realistic system of shadows is always realisable---the non-trivial direction in Theorem~\ref{thm_realistic_equals_realisable}---we start by showing that the problem is invariant under 
reflections
of the tensors involved.
\begin{lem}
\label{lem_rotating_preserves_truth}
Let $p,q\in\N$, let $\bell\in [q]^q$ be such that $|\bell|=q$, and let
  $\bn\in\N^q$. If every realistic $(p,\bn_\bell)$-system of shadows is realisable then every realistic $(p,\bn)$-system of shadows is realisable.
\end{lem}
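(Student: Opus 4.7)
The plan is to use the bijection given by reflection: since $|\bell|=q$, the operator $\Pi^\bn_\bell$ is a bijection from $\cT^\bn(\Z)$ to $\cT^{\bn_\bell}(\Z)$, with inverse $\Pi^{\bn_\bell}_{\bell^{-1}}$, where $\bell^{-1}\in[q]^q$ denotes the unique permutation satisfying $\bell_{\bell^{-1}}=\ang{q}$. Thus, if I can turn a realistic $(p,\bn)$-system of shadows $\mathcal{S}=\{S_\bi\}_{\bi\in[q]^p_\to}$ into a realistic $(p,\bn_\bell)$-system $\tilde{\mathcal{S}}$, invoke the hypothesis to realise $\tilde{\mathcal{S}}$ by some $\tilde{C}\in\cT^{\bn_\bell}(\Z)$, and then apply $\Pi^{\bn_\bell}_{\bell^{-1}}$ to $\tilde{C}$, I will recover a realisation of $\mathcal{S}$.

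Concretely, for each $\tilde{\bi}\in[q]^p_\to$, the tuple $\bell_{\tilde{\bi}}\in[q]^p$ has $p$ distinct entries and therefore factors uniquely as $\bell_{\tilde{\bi}}=\bj_\br$ with $\bj\in[q]^p_\to$ and $\br\in[p]^p$ a permutation of $[p]$. I will then set $\tilde{S}_{\tilde{\bi}}:=\Pi^{\bn_\bj}_\br\ast S_\bj$, which lives in $\cT^{\bn_{\bell_{\tilde{\bi}}}}(\Z)=\cT^{(\bn_\bell)_{\tilde{\bi}}}(\Z)$ as required. Two verifications remain. First, $\tilde{\mathcal{S}}$ must be realistic: given $\tilde{\bi},\tilde{\bj}\in[q]^p_\to$ and $\tilde{\br},\tilde{\bs}\in[p]^{p-1}_\to$ with $\tilde{\bi}_{\tilde{\br}}=\tilde{\bj}_{\tilde{\bs}}$, I need to show $\Pi^{(\bn_\bell)_{\tilde{\bi}}}_{\tilde{\br}}\ast\tilde{S}_{\tilde{\bi}}=\Pi^{(\bn_\bell)_{\tilde{\bj}}}_{\tilde{\bs}}\ast\tilde{S}_{\tilde{\bj}}$. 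Unfolding both sides via Lemmas~\ref{lem_proj_contr} and~\ref{lem_associativity_contraction} turns them into projections of $S_\bj$ and $S_{\bj'}$ (for the respective increasing tuples) indexed by tuples in $[p]^{p-1}$ whose images under $\bj$ and $\bj'$ both equal $\bell_{\tilde{\bi}_{\tilde{\br}}}$; then~\eqref{eqn_condition_balanced_sums} applied to $\mathcal{S}$ closes the argument. Second, defining $C:=\Pi^{\bn_\bell}_{\bell^{-1}}\ast\tilde{C}$ and computing $\Pi^\bn_\bi\ast C=\Pi^{\bn_\bell}_{\bell^{-1}_\bi}\ast\tilde{C}$ via Lemma~\ref{lem_proj_contr}, I express $\bell^{-1}_\bi=\tilde{\bi}_{\br'}$ in the standard ``increasing-tuple plus permutation'' form and use the identity $\bell_{\bell^{-1}_\bi}=\bi$ (which is $\bell_{\bell^{-1}}=\ang{q}$ applied projectionwise) to conclude $\Pi^\bn_\bi\ast C=\Pi^{\bn_\bj}_{\br_{\br'}}\ast S_\bj$ for some $\bj\in[q]^p_\to$ with $\bj_{\br_{\br'}}=\bi$. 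Since both $\bj$ and $\bi$ are increasing with the same support, this forces $\bj=\bi$ and $\br_{\br'}=\ang{p}$, yielding $\Pi^\bn_\bi\ast C=S_\bi$.

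The main obstacle I anticipate is the combinatorial book-keeping in the realistic-ness check for $\tilde{\mathcal{S}}$: one must keep track of four permutations ($\br(\tilde{\bi})$, $\br(\tilde{\bj})$, $\tilde{\br}$, and $\tilde{\bs}$) simultaneously and show that their composition, inserted into the definitions of $\tilde{S}_{\tilde{\bi}}$ and $\tilde{S}_{\tilde{\bj}}$, gives an instance of~\eqref{eqn_condition_balanced_sums} already known for $\mathcal{S}$. The geometric content — that projection onto a reflected set of axes corresponds, modulo reflection, to projection onto the original axes — is transparent, but the algebra does require patience.
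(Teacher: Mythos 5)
Your proof is correct and follows essentially the same plan as the paper's: define the reflected system $\tilde{\mathcal{S}}$, verify it is realistic, apply the hypothesis to obtain a realisation $\tilde{C}$, and reflect back to get $C$. The one difference worth noting is that the paper first reduces to the case where $\bell$ is an inversion (so that $\bell_\bell=\ang{q}$ and the inverse of $\Pi^\bn_\bell$ can be written as $\Pi^{\bn_\bell}_\bell$), whereas you work directly with a general permutation and its explicit inverse $\bell^{-1}$; this is a modest tidying-up that avoids one reduction step. Under the correspondence, your $\bj$ and $\br$ match the paper's $\bell_{\bi_{\bi^+}}$ and $\bi^-$, so your $\tilde{S}_{\tilde\bi}$ is the paper's $\tilde{S}_\bi$, and your $C=\Pi^{\bn_\bell}_{\bell^{-1}}\ast\tilde{C}$ is the natural generalisation of the paper's $C=\Pi^{\bn_\bell}_\bell\ast\tilde{C}$. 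One point you should make explicit in the realistic-ness check: after unfolding, the projection index you obtain on $S_\bj$ is $\br_{\tilde{\br}}\in[p]^{p-1}$, which need not be increasing, so before invoking~\eqref{eqn_condition_balanced_sums} --- which is stated only for \emph{increasing} projection indices --- you must further factor $\br_{\tilde{\br}}=\balpha_\bbeta$ with $\balpha\in[p]^{p-1}_\to$ and $\bbeta$ a permutation of $[p-1]$ (the paper's $\balpha,\bbeta$), apply~\eqref{eqn_condition_balanced_sums} to the increasing parts only, and check separately that the permutation parts $\bbeta$ agree on the two sides; you flag this book-keeping but do not spell it out.
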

\begin{proof}
For purely typographical reasons, in this proof we will adopt an in-line notation for the operation of tuple projection: Given tuples $\ba,\bb,\bc,\dots$ of suitable lengths, we will denote the iterated projection $\ba_{\bb_{\bc_{\dots}}}$ by $\ba\bb\bc\cdots$.

Since every permutation can be expressed as the composition of transpositions, it is
  enough to consider the case that $\bell$ is a transposition; in particular, $\bell\bell=\ang{q}$.
Let $\cS=\{S_\bi\}_{\bi\in [q]^p_\rightarrow}$ be a realistic $(p,\bn)$-system of shadows. For any $\bi\in [q]^p_\rightarrow$, let $\bi^+$ be the (unique) tuple in $[p]^p$ such that $\bell{\bi{\bi^+}}\in [q]^p_\rightarrow$. Let also $\bi^-$ be the (unique) tuple in $[p]^p$ such that $\bi^+{\bi^-}=\bi^-{\bi^+}=\ang{p}$.
For each $\bi\in [q]^p_\rightarrow$, define the tensor
\begin{align}
\label{eqn_1045_0607}
\tilde{S}_\bi= \Pi^{\bn{\bell{\bi{\bi^+}}}}_{\bi^-}\ast S_{\bell{\bi{\bi^+}}}.
\end{align}
Observe that $\tilde{S}_\bi\in\cT^{\bn{\bell\bi}}(\Z)$, so $\tilde{\cS}=\{\tilde{S}_\bi\}_{\bi\in [q]^p_\rightarrow}$ is a $(p,\bn\bell)$-system of shadows. 
We claim that $\tilde{\cS}$ is a realistic system.
To prove the claim, take $\bi,\bj\in [q]^p_\rightarrow$ and $\br,\bs\in [p]^{p-1}_\rightarrow$ such that $\bi\br=\bj\bs$. We need to show that 
\begin{align}
\label{eqn_2041_08062022}
\Pi^{\bn{\bell\bi}}_{\br}\ast\tilde{S}_\bi
=
\Pi^{\bn{\bell\bj}}_{\bs}\ast\tilde{S}_\bj.
\end{align}
Let $\balpha,\bbeta\in [p-1]^{p-1}$ be the (unique) tuples such that $\bi^-{\br\balpha}\in [p]^{p-1}_\rightarrow$ and $\balpha\bbeta=\bbeta\balpha=\ang{p-1}$. We claim that $\bj^-{\bs\balpha}\in [p]^{p-1}_\rightarrow$. Indeed, for any $x,y\in [p-1]$ such that $x<y$ we have 
\begin{align*}
i^-{r{\alpha x}}<i^-{r{\alpha y}}
\quad
&\Rightarrow
\quad
\ell{i{i^+{i^-{r{\alpha x}}}}}
<
\ell{i{i^+{i^-{r{\alpha y}}}}}
\quad
\Rightarrow
\quad
\ell{i{{{r{\alpha x}}}}}
<
\ell{i{{{r{\alpha y}}}}}
\quad
\Rightarrow
\quad
\ell{j{{{s{\alpha x}}}}}
<
\ell{j{{{s{\alpha y}}}}}\\
\quad
&\Rightarrow
\quad
\ell{j{j^+{j^-{s{\alpha x}}}}}
<
\ell{j{j^+{j^-{s{\alpha y}}}}}
\quad
\Rightarrow
\quad
j^-{s{\alpha x}}<j^-{s{\alpha y}},
\end{align*}
thus proving the claim.
Therefore,
\begin{align}
\label{eqn_2117_0806_2022_A}
\notag
\Pi^{\bn{\bell\bi}}_{\br}\ast\tilde{S}_\bi
&\equationeq{eqn_1045_0607}
\Pi^{\bn{\bell\bi}}_{\br}\ast\left(\Pi^{\bn{\bell{\bi{\bi^+}}}}_{\bi^-}\ast S_{\bell{\bi{\bi^+}}}\right)
\lemeq{lem_associativity_contraction}
\Pi^{\bn{\bell\bi}}_{\br}\cont{p}\Pi^{\bn{\bell{\bi{\bi^+}}}}_{\bi^-}\ast S_{\bell{\bi{\bi^+}}}
\lemeq{lem_proj_contr}
\Pi^{\bn{\bell{\bi{\bi^+}}}}_{\bi^-\br}\ast S_{\bell{\bi{\bi^+}}}\\
&\spaceeq
\Pi^{\bn{\bell{\bi{\bi^+}}}}_{\bi^-{\br{\balpha\bbeta}}}\ast S_{\bell{\bi{\bi^+}}}
\lemeq{lem_proj_contr}
\Pi^{\bn{\bell{\bi{\br\balpha}}}}_\bbeta\cont{p-1}\Pi^{\bn{\bell{\bi{\bi^+}}}}_{\bi^-{\br{\balpha}}}\ast S_{\bell{\bi{\bi^+}}}
\lemeq{lem_associativity_contraction}
\Pi^{\bn{\bell{\bi{\br\balpha}}}}_\bbeta\ast\left(\Pi^{\bn{\bell{\bi{\bi^+}}}}_{\bi^-{\br{\balpha}}}\ast S_{\bell{\bi{\bi^+}}}\right)
\intertext{and, similarly,}
\label{eqn_2117_0806_2022_B}
\Pi^{\bn{\bell\bj}}_{\bs}\ast\tilde{S}_\bj
&\spaceeq
\Pi^{\bn{\bell{\bj{\bs\balpha}}}}_\bbeta\ast\left(\Pi^{\bn{\bell{\bj{\bj^+}}}}_{\bj^-{\bs{\balpha}}}\ast S_{\bell{\bj{\bj^+}}}\right).
\end{align}
Let us now focus on the tuples $\bell{\bi{\bi^+}},\bell{\bj{\bj^+}}\in [q]^p_\rightarrow$ and $\bi^-{\br\balpha},\bj^-{\bs\balpha}\in [p]^{p-1}_\rightarrow$. Observe that 
\begin{align*}
\bell{\bi{\bi^+{\bi^-{\br\balpha}}}}
&\spaceeq
\bell{\bi{{{\br\balpha}}}}
\spaceeq
\bell{\bj{{{\bs\balpha}}}}
\spaceeq
\bell{\bj{\bj^+{\bj^-{\bs\balpha}}}}.
\end{align*}
Using that $\cS$ is a realistic system, we deduce
\begin{align}
\label{eqn_2117_0806_2022_C}
\Pi^{\bn{\bell{\bi{\bi^+}}}}_{\bi^-{\br\balpha}}\ast S_{\bell{\bi{\bi^+}}}
&\spaceeq
\Pi^{\bn{\bell{\bj{\bj+}}}}_{\bj^-{\bs\balpha}}\ast S_{\bell{\bj{\bj+}}}.
\end{align}
Combining~\eqref{eqn_2117_0806_2022_A},~\eqref{eqn_2117_0806_2022_B}, and~\eqref{eqn_2117_0806_2022_C}, and recalling that $\bi\br=\bj\bs$, yields~\eqref{eqn_2041_08062022}, thus proving that $\tilde{\cS}$ is a realistic $(p,\bn\bell)$-system of shadows, as claimed. From the hypothesis of the lemma, we deduce that $\tilde{\cS}$ is realisable, so there exists a tensor $\tilde{C}\in\cT^{\bn\bell}(\Z)$ such that $\Pi^{\bn\bell}_\bi\ast\tilde{C}=\tilde{S}_\bi$ for each $\bi\in [q]^p_\rightarrow$. Define $C= \Pi^{\bn\bell}_{\bell}\ast\tilde{C}\in\cT^{\bn}(\Z)$ (where we are using that $\bell\bell=\ang{q}$). Given $\bi\in [q]^p_\rightarrow$, we find
\begin{align}
\label{eqn_2158_08062022_A}
\notag
\displaystyle \Pi^\bn_\bi\ast C
&\spaceeq
\displaystyle \Pi^\bn_\bi\ast (\Pi^{\bn\bell}_{\bell}\ast\tilde{C})
\spaceeq
\displaystyle \Pi^\bn_{\bi{\bi^+{\bi^-}}}\ast (\Pi^{\bn\bell}_{\bell}\ast\tilde{C})
\lemeq{lem_associativity_contraction}
\displaystyle \Pi^\bn_{\bi{\bi^+{\bi^-}}}\cont{q} \Pi^{\bn\bell}_{\bell}\ast\tilde{C}\\
\notag
&\lemeq{lem_proj_contr}
\displaystyle \Pi^{\bn{\bi{\bi^+}}}_{\bi^-}\cont{p}\Pi^{\bn{\bi}}_{\bi^+}\cont{p}\Pi^\bn_\bi\cont{q} \Pi^{\bn\bell}_{\bell}\ast\tilde{C}
\lemeq{lem_proj_contr}
\displaystyle \Pi^{\bn{\bi{\bi^+}}}_{\bi^-}\cont{p}
\Pi^{\bn\bell}_{\bell{\bi{\bi^+}}}\ast\tilde{C}
\lemeq{lem_associativity_contraction}
\displaystyle \Pi^{\bn{\bi{\bi^+}}}_{\bi^-}\ast(
\Pi^{\bn\bell}_{\bell{\bi{\bi^+}}}\ast\tilde{C})\\
&\spaceeq
\displaystyle \Pi^{\bn{\bi{\bi^+}}}_{\bi^-}\ast\tilde{S}_{\bell{\bi{\bi^+}}}.
\end{align}
Notice that $\bell{\bell{\bi{\bi^+{\bi^-}}}}=\bi$, which is an increasing tuple. Hence, $\left(\bell{\bi{\bi^+}}\right)^+=\bi^-$ and, consequently, $\left(\bell{\bi{\bi^+}}\right)^-=\bi^+$. It follows from~\eqref{eqn_1045_0607} that
\begin{align}
\label{eqn_2158_08062022_B}
\tilde{S}_{\bell{\bi{\bi^+}}}
&\spaceeq
\Pi^{\bn{\bell{\bell{\bi{\bi^+{\bi^-}}}}}}_{\bi^+}\ast S_{\bell{\bell{\bi{\bi^+{\bi^-}}}}}
\spaceeq
\Pi^{\bn{\bi}}_{\bi^+}\ast S_{\bi}.
\end{align}
Combining~\eqref{eqn_2158_08062022_A} and~\eqref{eqn_2158_08062022_B} yields
\begin{align*}
\Pi^\bn_\bi\ast C
&\spaceeq
\Pi^{\bn{\bi{\bi^+}}}_{\bi^-}\ast(\Pi^{\bn{\bi}}_{\bi^+}\ast S_{\bi})
\lemeq{lem_associativity_contraction}
\Pi^{\bn{\bi{\bi^+}}}_{\bi^-}\cont{p}\Pi^{\bn{\bi}}_{\bi^+}\ast S_{\bi}
\lemeq{lem_proj_contr}
\Pi^{\bn\bi}_{\ang{p}}\ast S_\bi
\lemeq{lem_identity_Pi}
S_\bi,
\end{align*}
which concludes the proof that $\cS$ is a realisable system of shadows.
\end{proof}

The next result establishes the base case for the second
induction in the proof of Theorem~\ref{thm_realistic_equals_realisable}. Its proof is a simple connectivity argument for the shadows' modes, analogous to the one used to prove Proposition~\ref{prop_crystals_smaller_dimension}.

\begin{lem}
\label{lem_1_dim_albums_are_truthful}
A realistic $(p,\bone_q)$-system of shadows is realisable for any $p,q\in\N$.
\end{lem}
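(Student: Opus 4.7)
The plan is to reduce the statement to a purely combinatorial question about connectedness. The key observation is that all tensors involved live in spaces with singleton modes: since $\bn=\bone_q$, every $\bn_\bi$ (for $\bi\in [q]^p_\rightarrow$) equals $\bone_p$, and $\cT^{\bone_m}(\Z)$ is canonically identified with $\Z$ via $T\mapsto E_{\bone_m}\ast T$ (the unique entry). Under this identification, any projection tensor $\Pi^{\bone_m}_\bu$ acts trivially: by Lemma~\ref{lem_Pi_basic}, $E_{\bone_{|\bu|}}\ast \Pi^{\bone_m}_\bu = \sum_{\bj\in[\bone_m],\,\bj_\bu=\bone_{|\bu|}}E_\bj = E_{\bone_m}$, since there is only one $\bj\in[\bone_m]$. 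Hence, in terms of the underlying integers, every projection is the identity.

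First I would handle the trivial case $q<p$, where $[q]^p_\rightarrow=\emptyset$: both conditions hold vacuously and any $C\in\cT^{\bone_q}(\Z)$ realises $\cS$. For $q\geq p$, write $s_\bi\in\Z$ for the integer corresponding to $S_\bi$. Then the realisability requirement $\Pi^\bn_\bi\ast C=S_\bi$ for all $\bi$ becomes, after applying the identification above, the requirement that $s_\bi$ be the same integer for every $\bi\in[q]^p_\rightarrow$, and equal to the integer corresponding to $C$. Similarly, the realistic condition~\eqref{eqn_condition_balanced_sums} becomes: $s_\bi=s_\bj$ whenever there exist $\br,\bs\in[p]^{p-1}_\rightarrow$ with $\bi_\br=\bj_\bs$. (In the degenerate case $p=1$, where $[p]^{p-1}_\rightarrow=\{\bepsilon\}$, the premise always holds with $\br=\bs=\bepsilon$, so this directly forces all $s_\bi$ equal.)

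Next I would verify that, for $p\geq 2$ and $q\geq p$, the relation ``$\exists\, \br,\bs\in[p]^{p-1}_\rightarrow$ with $\bi_\br=\bj_\bs$'' generates an equivalence that identifies all of $[q]^p_\rightarrow$. Since elements of $[q]^p_\rightarrow$ correspond bijectively to $p$-element subsets of $[q]$ (via $\bi\mapsto\set(\bi)$), the condition on $\bi,\bj$ translates to $|\set(\bi)\cap\set(\bj)|=p-1$, i.e.\ the two subsets differ by a single element swap. The associated graph is the Johnson graph $J(q,p)$, which is well known to be connected for all $q\geq p\geq 1$: any $p$-subset can be transformed into any other by a sequence of single-element swaps. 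Therefore the realistic hypothesis forces a common value $c\in\Z$ with $s_\bi=c$ for all $\bi\in[q]^p_\rightarrow$.

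Finally, take $C\in\cT^{\bone_q}(\Z)$ to be the tensor whose unique entry is $c$. Then for each $\bi\in[q]^p_\rightarrow$, the projection $\Pi^{\bone_q}_\bi\ast C$ lies in $\cT^{\bone_p}(\Z)$ and its unique entry coincides with $c=s_\bi$ by the trivial-action computation above, so $\Pi^\bn_\bi\ast C=S_\bi$. Hence $\cS$ is realisable. The only potential subtlety is checking the degenerate edge cases ($p=1$, and $q=p$ where $[q]^p_\rightarrow$ is a singleton), but both are straightforward; there is no real obstacle, since this lemma serves as the base case of the induction on $p$ used to establish Theorem~\ref{thm_realistic_equals_realisable}.
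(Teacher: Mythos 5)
Your proof is correct. The core insight --- that for $\bn=\bone_q$ every tensor space $\cT^{\bone_m}(\Z)$ collapses to $\Z$, every projection becomes the identity, and the problem reduces to showing that the relation induced by the realistic condition~\eqref{eqn_condition_balanced_sums} on $[q]^p_\rightarrow$ is connected --- is exactly the same as the paper's. Where you diverge is in how you prove connectedness: you identify the relevant graph with the Johnson graph $J(q,p)$ (vertices: $p$-subsets of $[q]$; edges: pairs with $(p-1)$-element intersection) and invoke its well-known connectedness. The paper instead runs a self-contained minimal-counterexample argument: it orders pairs by the Hamming distance $\operatorname{d}(\bi,\bj)=|\{t:i_t\neq j_t\}|$, takes a minimizing pair with $S_\bi\neq S_\bj$, and constructs an intermediate tuple $\bell$ by replacing $i_\alpha$ with $j_\alpha$ at the \emph{largest} index $\alpha$ where they differ --- the maximality of $\alpha$ is precisely what keeps $\bell$ strictly increasing, so that $\bell\in[q]^p_\rightarrow$ and the realistic condition applies. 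Your version is shorter and more conceptual, at the cost of importing a (standard but external) graph-theoretic fact; the paper's is more hands-on and fully elementary. Both are fine. Two small points worth tidying: the translation of the realistic hypothesis is that $s_\bi=s_\bj$ whenever $|\set(\bi)\cap\set(\bj)|\geq p-1$ (the $=p-1$ case is what matters for $\bi\neq\bj$, but you should say $\geq$ to be precise), and your $p=1$ sidestep is consistent with the general picture (the induced graph there is the complete graph on $[q]$, also trivially connected), so it could be folded into the main argument rather than treated separately.
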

\begin{proof}
Let $\cS=\{S_\bi\}_{\bi\in [q]^p_\rightarrow}$ be a realistic $(p,\bone_q)$-system of shadows. For any $\bi\in [q]^p_\rightarrow$, $S_\bi\in\cT^{(\bone_q)_\bi}(\Z)=\cT^{\bone_p}(\Z)$. We claim that $S_\bi=S_\bj$ for any $\bi,\bj\in [q]^p_\rightarrow$.  Define, for each pair $\bi,\bj\in [q]^p_\rightarrow$, their \emph{distance} $\operatorname{d}(\bi,\bj)$ as the cardinality of the set $\{t\in [p]:i_t\neq j_t\}$. Suppose, for the sake of contradiction, that the claim is false, and let  $\bi,\bj\in [q]^p_\rightarrow$ attain the minimum distance among all pairs $\bi',\bj'$ for which $S_{\bi'}\neq S_{\bj'}$. Let $\alpha=\max\{t\in [p]:i_t\neq j_t\}$. Assume, without loss of generality, that $i_\alpha<j_\alpha$, and define a new tuple $\bell\in [q]^p$ obtained from $\bi$ by replacing $i_\alpha$ with $j_\alpha$. Observe that $i_1<i_2<\dots<i_{\alpha-1}<i_\alpha<j_\alpha<j_{\alpha+1}=i_{\alpha+1}<i_{\alpha+2}<\dots<i_p$, so $\bell\in [q]^p_\rightarrow
$. Letting $\br\in [p]^{p-1}_\rightarrow$ be obtained from $\ang{p}$ by deleting its $\alpha$-th entry, observe that $\bi_\br=\bell_\br$. Using that $\cS$ is a realistic system, we obtain $\Pi^{\bone_p}_\br\ast S_\bi=\Pi_\br^{\bone_p}\ast S_\bell$. Therefore,
\begin{align*}
E_{\bone_p}\ast S_\bi
&\lemeq{lem_Pi_basic}
E_{\bone_{p-1}}\ast \Pi^{\bone_p}_\br\ast S_\bi
\spaceeq
E_{\bone_{p-1}}\ast \Pi^{\bone_p}_\br\ast S_\bell
\lemeq{lem_Pi_basic}
E_{\bone_p}\ast S_\bell,
\end{align*} 
so $S_\bell=S_\bi\neq S_\bj$. But this contradicts the choice of the pair $(\bi,\bj)$, as $\operatorname{d}(\bell,\bj)=\operatorname{d}(\bi,\bj)-1$. Hence, the claim is true.
We can then define a tensor $C\in\cT^{\bone_q}(\Z)$ by setting $E_{\bone_q}\ast C=E_{\bone_p}\ast S_\bi$ for any $\bi\in [q]^p_\rightarrow$. In this way, we get
\begin{align*}
E_{\bone_p}\ast\Pi^{\bone_q}_\bi\ast C
\lemeq{lem_Pi_basic}
E_{\bone_q}\ast C
\spaceeq
E_{\bone_p}\ast S_\bi.
\end{align*}
We conclude that $\Pi^{\bone_q}_\bi\ast C=S_\bi$ for any $\bi\in [q]^p_\rightarrow$, which means that $\cS$ is a realisable system.
\end{proof}

The next result establishes the base case for the first
induction in the proof of Theorem~\ref{thm_realistic_equals_realisable}.

\begin{lem}
\label{lem_realistic_realisable_case_p_1}
A realistic $(1,\bn)$-system of shadows is realisable for any $q\in\N$ and $\bn\in\N^q$.
\end{lem}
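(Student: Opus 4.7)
The plan is to unpack what ``realistic'' means when $p=1$ and then build $C$ by hand. For $p=1$, the only admissible $\br,\bs$ in~\eqref{eqn_condition_balanced_sums} are the empty tuple in $[1]^0$, so the condition collapses to $\Pi^{n_i}_\bepsilon\ast S_i=\Pi^{n_j}_\bepsilon\ast S_j$ for all $i,j\in[q]$. By Lemma~\ref{Pi_epsilon_all_one} this is exactly the statement that the one-dimensional shadows $S_1,\dots,S_q$ all have the same entry sum $N\in\Z$. Realisability then asks for a tensor $C\in\cT^\bn(\Z)$ whose $i$-th projection $\Pi^\bn_i\ast C$ equals $S_i$ for every $i$.

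I will construct such $C$ explicitly as an integer combination of standard unit tensors. Begin from the ``seed'' $\hat{C}=N\cdot E_{\bone_q}$; a direct application of Lemma~\ref{lem_Pi_basic} shows that $\Pi^\bn_i\ast\hat{C}=N\cdot E_1$ in every mode $i$. For $i\in[q]$ and $t\in\{2,\dots,n_i\}$, let $\bj^{(i,t)}\in[\bn]$ denote the tuple obtained from $\bone_q$ by replacing the $i$-th entry with $t$, and introduce the ``transport tensor'' $F_{i,t}=E_{\bj^{(i,t)}}-E_{\bone_q}$. Since $\bj^{(i,t)}$ and $\bone_q$ agree on every coordinate except the $i$-th, Lemma~\ref{lem_Pi_basic} yields $\Pi^\bn_i\ast F_{i,t}=E_t-E_1$ and $\Pi^\bn_{i'}\ast F_{i,t}=0$ for every $i'\neq i$. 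Now set $C=\hat{C}+\sum_{i\in[q]}\sum_{t=2}^{n_i}(E_t\ast S_i)\,F_{i,t}$. Then $\Pi^\bn_i\ast C=NE_1+\sum_{t=2}^{n_i}(E_t\ast S_i)(E_t-E_1)$, whose $t$-th entry for $t\geq 2$ equals $E_t\ast S_i$ and whose first entry equals $N-\sum_{t=2}^{n_i}(E_t\ast S_i)=E_1\ast S_i$ by the common-sum condition. Hence $\Pi^\bn_i\ast C=S_i$ for every $i\in[q]$, which is exactly realisability.

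There is no genuine obstacle here: the whole content of the $p=1$ base case is that a one-dimensional system of shadows carries only a single scalar compatibility constraint (the common total $N$), and any such scalar obstruction can be resolved by purely local unit-tensor adjustments that leave the other marginals untouched. The only step requiring care is the routine contraction bookkeeping to verify the advertised marginal behaviour of the transport tensors $F_{i,t}$, which follows from the associativity of contraction (Lemma~\ref{lem_associativity_contraction}) together with the explicit description of the projection tensor given in Lemma~\ref{lem_Pi_basic}.
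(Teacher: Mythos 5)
Your proof is correct, and it is genuinely different from the paper's. The paper proves this lemma by induction on $\bn^T\bone_q$: the base case $\bn=\bone_q$ is delegated to a separate lemma, and the inductive step peels off the last coordinate of the highest mode (after a reflection, via Lemma~\ref{lem_rotating_preserves_truth}, to bring a mode of size $\geq 2$ into the last position), subtracts the value $\ell=E_{n_q}\ast S_q$ from each remaining shadow's last entry, recurses, and then reassembles $C$ by a case split on whether $b_q=n_q$. This recursive peeling is set up deliberately in parallel with the much heavier induction used later for general $p$ in Theorem~\ref{thm_realistic_equals_realisable}, so the argument's structure carries over. Your approach instead gives a one-shot closed-form expression: you observe that the only content of the $p=1$ realistic condition is the common column sum $N$, seed $C$ with $N\cdot E_{\bone_q}$ at the corner cell, and then move mass out along each axis with the rank-one transport tensors $F_{i,t}=E_{\bj^{(i,t)}}-E_{\bone_q}$, whose marginals you verify correctly ($E_t-E_1$ on mode $i$, zero on every other mode). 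This yields $\Pi^\bn_i\ast C=S_i$ directly, with the first entry of each marginal recovered from the sum constraint. Your construction is shorter, avoids both the auxiliary base-case lemma and the reflection lemma, and is arguably more transparent about \emph{why} realisability has no obstruction beyond the common sum; what it does not do is foreshadow the inductive machinery needed for $p\geq 2$, which is presumably why the authors chose uniform recursion instead.
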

\begin{example}
 For $q=2$, the statement above expresses the fact that, given two integer vectors $S_1$ of length $n_1$ and $S_2$ of length $n_2$ such that the sums of the entries of $S_1$ and $S_2$ coincide, there exists an $n_1\times n_2$ integer matrix $C$ whose row-sum and column-sum vectors are $S_1$ and $S_2$, respectively.    
\end{example}
\begin{proof}[Proof of Lemma~\ref{lem_realistic_realisable_case_p_1}]
If $q=1$, the result is trivially true; indeed, in this case, the vector $C=S_1$ witnesses that the given system of shadows $\mathcal{S}=\{S_1\}$ is realisable. Hence, assume $q\geq 2$.
Notice that $ [q]^1_\rightarrow=[q]$, so each element of $[q]^1_\rightarrow$ is a single number.
We prove the statement by induction on $\bn^T\bone_q$.
If $\bn^T\bone_q=q$, then $\bn=\bone_q$, and the result follows from Lemma~\ref{lem_1_dim_albums_are_truthful}. Suppose that $\bn^T\bone_q\geq q+1$. 
Using Lemma~\ref{lem_rotating_preserves_truth}, we can assume $n_q\geq 2$ without loss of generality. Let $\cS=\{S_i\}_{i\in [q]}$ be a realistic $(1,\bn)$-system of shadows; observe that $S_i$ is a vector in $\cT^{n_i}(\Z)$ for each $i\in [q]$.
Set $\ell=E_{n_q}\ast S_q$ (i.e., $\ell$ is the last entry of $S_q$), and consider a new family of tensors $\tilde\cS=\{\tilde S_i\}_{i\in [q]}$ defined by 
\begin{align*}
\tilde{S}_i
\spaceeq
\left\{
\begin{array}{lll}
S_i-\ell E_{n_i} & \mbox{if }i\in [q-1]\\
(E_1\ast S_q,\dots,E_{n_q-1}\ast S_q)  & \mbox{if }i=q.
\end{array}
\right.
\end{align*}
Let $\tilde{\bn}=\bn-E_q$ and notice that $\tilde{\bn}\in\N^q$ since $n_q\geq 2$. We have that $S_i\in\cT^{\tilde{n}_i}(\Z)$ for each $i\in [q]$, so $\tilde\cS$ is a $(1,\tilde{\bn})$-system of shadows. 

We now show that $\tilde\cS$ is realistic.
By definition, $[1]^{0}_\rightarrow=\{\bepsilon\}$, so we only need to show that
\begin{align}
\label{eqn_1739_06062022_A}
\Pi^{\tilde n_i}_\bepsilon\ast \tilde S_i
\spaceeq
\Pi^{\tilde n_j}_\bepsilon\ast \tilde S_j
&& \forall\, i,j\in [q].
\end{align} 
We claim that 
\begin{align}
\label{eqn_1739_06062022_B}
\Pi^{\tilde n_i}_\bepsilon\ast \tilde S_i
\spaceeq
\Pi^{n_i}_\bepsilon\ast S_i-\ell && &\forall\, i\in [q].
\end{align}
Then,~\eqref{eqn_1739_06062022_A} will follow from the fact that $\cS$ is a realistic system. If $i\in [q-1]$,
\begin{align*}
\Pi^{\tilde n_i}_\bepsilon\ast \tilde S_i
&\spaceeq
\Pi^{n_i}_\bepsilon\ast (S_i-\ell E_{n_i})
\lemeq{Pi_epsilon_all_one}
\Pi^{n_i}_\bepsilon\ast S_i-\ell,
\intertext{so~\eqref{eqn_1739_06062022_B} holds in this case. Moreover,}
\Pi^{\tilde n_q}_\bepsilon\ast \tilde S_q
&\spaceeq
\Pi^{n_q-1}_\bepsilon\ast (E_1\ast S_q,\dots,E_{n_q-1}\ast S_q)
\lemeq{Pi_epsilon_all_one}
\sum_{b\in [n_q-1]}E_b\ast S_q 
\spaceeq
\bone_{n_q}\ast S_q-\ell\\
&\lemeq{Pi_epsilon_all_one}
\Pi^{n_q}_\bepsilon\ast S_q-\ell,
\end{align*}
so~\eqref{eqn_1739_06062022_B} holds in this case as well. We conclude that $\tilde\cS$ is indeed a realistic system. 

Since $\tilde\bn^T\bone_q=\bn^T\bone_q-1$, we have from the inductive hypothesis that $\tilde{\cS}$ is realisable, so there exists a tensor $\tilde C\in\cT^{\tilde{\bn}}(\Z)$ such that $\Pi^{\tilde{\bn}}_i\ast \tilde{C}=\tilde{S}_i$ for each $i\in [q]$. Define a tensor $C\in\cT^{\bn}(\Z)$ by setting, for each $\bb\in [\bn]$,
\begin{align}
\label{eqn_1230_0507}
E_\bb\ast C
\spaceeq
\left\{
\begin{array}{llll}
\ell&\mbox{ if }\bb=\bn\\
0&\mbox{ if }\bb\neq \bn\;\mbox{ and }\;b_q=n_q\\
E_\bb\ast \tilde{C}&\mbox{ if }b_q\neq n_q.
\end{array}
\right.
\end{align}
(Notice that the last line of the right-hand side of the above expression is well defined as, if $b_q\neq n_q$, then $\bb\in [\tilde{\bn}]$.)
Take $i\in [q]$; we claim that $\Pi^\bn_i\ast C=S_i$. For $a\in [n_i]$, we find
\begin{align*}
E_a\ast \Pi^\bn_i\ast C
&\lemeq{lem_Pi_basic}
\sum_{\substack{\bb\in [\bn]\\b_i=a}}E_\bb\ast C.
\end{align*}
For $i\neq q$, this yields
\begin{align*}
E_a\ast \Pi^\bn_i\ast C
&\spaceeq
\sum_{\substack{\bb\in [\bn]\\b_i=a\\b_q=n_q}}E_\bb\ast C
+
\sum_{\substack{\bb\in [\bn]\\b_i=a\\b_q\neq n_q}}E_\bb\ast C
\equationeq{eqn_1230_0507}
\ell\cdot \delta_{a,n_i}
+
\sum_{\substack{\bb\in [\tilde{\bn}]\\b_i=a}}E_\bb\ast \tilde C
\intertext{(where $\delta_{a,n_i}$ is $1$ if $a=n_i$, $0$ otherwise)}
&\lemeq{lem_Pi_basic}
\ell\cdot \delta_{a,n_i}
+
E_a\ast \Pi^{\tilde{\bn}}_i\ast \tilde{C}
\spaceeq
\ell\cdot \delta_{a,n_i}
+
E_a\ast\tilde{S}_i
\spaceeq
\ell\cdot \delta_{a,n_i}
+
E_a\ast(S_i-\ell E_{n_i})\\
&\spaceeq
E_a\ast S_i.
\end{align*}
For $i=q$, if $a=n_q$ we get
\begin{align*}
E_a\ast \Pi^\bn_q\ast C
&\spaceeq
\sum_{\substack{\bb\in [\bn]\\b_q=n_q}}E_\bb\ast C
\equationeq{eqn_1230_0507}
\ell
\spaceeq
E_a\ast S_q,
\end{align*}
while if $a\neq n_q$ we get 
\begin{align*}
E_a\ast \Pi^\bn_q\ast C
&\spaceeq
\sum_{\substack{\bb\in [\bn]\\b_q=a}}E_\bb\ast C
\equationeq{eqn_1230_0507}
\sum_{\substack{\bb\in [\tilde{\bn}]\\b_q=a}}E_\bb\ast \tilde C
\lemeq{lem_Pi_basic}
E_a\ast\Pi^{\tilde{\bn}}_q\ast\tilde{C}
\spaceeq
E_a\ast\tilde{S}_q\\
&\spaceeq
E_a\ast (E_1\ast S_q,\dots,E_{n_q-1}\ast S_q)
\spaceeq
E_a\ast S_q.
\end{align*} 
It follows that $\Pi^\bn_i\ast C=S_i$, as claimed. Therefore, $\cS$ is a realisable system.
\end{proof}

\begin{proof}[Proof of Theorem~\ref{thm_realistic_equals_realisable}]
Let $\cS=\{S_\bi\}_{\bi\in [q]^p_\rightarrow}$ be a realisable system of shadows; i.e., there exists $C\in\cT^\bn(\Z)$ such that $\Pi^\bn_\bi\ast C= S_\bi$ for each $\bi\in [q]^p_\rightarrow$. For any $\bi,\bj\in [q]^p_\rightarrow$ and $\br,\bs\in [p]^{p-1}_\rightarrow$ such that $\bi_\br=\bj_\bs$, we find
\begin{align*}
\Pi^{\bn_\bi}_\br\ast S_\bi
&\spaceeq
\Pi^{\bn_\bi}_\br\ast(\Pi^\bn_\bi\ast C)
\lemeq{lem_associativity_contraction}
\Pi^{\bn_\bi}_\br\cont{p}\Pi^\bn_\bi\ast C
\lemeq{lem_proj_contr}
\Pi^{\bn}_{\bi_\br}\ast C
\spaceeq
\Pi^{\bn}_{\bj_\bs}\ast C
\lemeq{lem_proj_contr}
\Pi^{\bn_\bj}_\bs\cont{p}\Pi^\bn_\bj\ast C\\
&\spaceeq
\Pi^{\bn_\bj}_\bs\ast S_\bj,
\end{align*}
which shows that $\cS$ is a realistic system. Hence, the ``if'' part of the statement is true. Next, we focus on the ``only if'' part.

We prove the result by nested induction, first on $p$ and second on $\bn^T\bone_q$. For $p=1$, the result follows from Lemma~\ref{lem_realistic_realisable_case_p_1}. Suppose that $p\geq 2$. 
For $\bn^T\bone_q=q$ (which implies $\bn=\bone_q$), the result follows from Lemma~\ref{lem_1_dim_albums_are_truthful}. Suppose that $\bn^T\bone_q\geq q+1$. Using Lemma~\ref{lem_rotating_preserves_truth}, we can safely assume $n_q\geq 2$. 
If $q=1$, then $[q]^p_\to=\emptyset$ and the statement is trivially true, so we can assume $q\geq 2$.
Let $\cS=\{S_\bi\}_{\bi\in [q]^p_\to}$ be a realistic $(p,\bn)$-system of shadows; we need to show that $\cS$ is realisable.

Set
$\hat{\bn}=(n_1,\dots,n_{q-1})\in\N^{q-1}$. For any $\bi\in [q-1]^{p-1}_\rightarrow$, we define $\hat S_\bi\in \cT^{\hat \bn_\bi}(\Z)$ by $E_\ba\ast \hat{S}_\bi= E_{(\ba,n_q)}\ast S_{(\bi,q)}$ for each $\ba\in [\hat{\bn}_\bi]$. Observe that the last expression is well defined, as $\bi\in [q-1]^{p-1}_\rightarrow$ implies that $(\bi,q)\in [q]^{p}_\rightarrow$. 
We claim that the family $\hat\cS=\{\hat{S}_\bi\}_{\bi\in [q-1]^{p-1}_\rightarrow}$ is a realistic $(p-1,\hat\bn)$-system of shadows. Take $\bi,\bj\in [q-1]^{p-1}_\rightarrow$ and $\br,\bs\in [p-1]^{p-2}_\rightarrow$ such that $\bi_\br=\bj_\bs$. For any $\ba\in [\hat{\bn}_{\bi_\br}]$, we find
\begin{align}
\label{eqn_1656_0507_A}
\notag
E_\ba\ast\Pi^{\hat{\bn}_\bi}_\br\ast\hat{S}_\bi
&\lemeq{lem_Pi_basic}
\sum_{\substack{\bb\in [\hat{\bn}_\bi]\\ \bb_\br=\ba}}E_\bb\ast\hat{S}_\bi
\spaceeq
\sum_{\substack{\bb\in [\hat{\bn}_\bi]\\ \bb_\br=\ba}}
E_{(\bb,n_q)}\ast S_{(\bi,q)}
\spaceeq
\sum_{\substack{\bc\in [{\bn}_{(\bi,q)}]\\ \bc_{(\br,p)}=(\ba,n_q)}}
E_{\bc}\ast S_{(\bi,q)}\\
&\lemeq{lem_Pi_basic}
E_{(\ba,n_q)}\ast\Pi^{\bn_{(\bi,q)}}_{(\br,p)}\ast S_{(\bi,q)}
\intertext{and, similarly,}
\label{eqn_1656_0507_B}
E_\ba\ast\Pi^{\hat{\bn}_\bj}_\bs\ast\hat{S}_\bj
&\spaceeq
E_{(\ba,n_q)}\ast\Pi^{\bn_{(\bj,q)}}_{(\bs,p)}\ast S_{(\bj,q)}.
\end{align}
We now use the fact that $\cS$ is a realistic system. In particular, we apply the requirement~\eqref{eqn_condition_balanced_sums} to the tuples $(\bi,q),(\bj,q)\in [q]^p_\rightarrow$ and $(\br,p),(\bs,p)\in [p]^{p-1}_\rightarrow$ (note that $(\bi,q)_{(\br,p)}=(\bi_\br,q)=(\bj_\bs,q)=(\bj,q)_{(\bs,p)}$). Since $(\ba,n_q)\in [\bn_{(\bi,q)_{(\br,p)}}]$, we obtain
\begin{align*}
E_{(\ba,n_q)}\ast\Pi^{\bn_{(\bi,q)}}_{(\br,p)}\ast S_{(\bi,q)}
&\spaceeq
E_{(\ba,n_q)}\ast\Pi^{\bn_{(\bj,q)}}_{(\bs,p)}\ast S_{(\bj,q)}.
\end{align*}
Combining this with~\eqref{eqn_1656_0507_A} and~\eqref{eqn_1656_0507_B} yields
\begin{align*}
E_\ba\ast\Pi^{\hat{\bn}_\bi}_\br\ast\hat{S}_\bi
&\spaceeq
E_\ba\ast\Pi^{\hat{\bn}_\bj}_\bs\ast\hat{S}_\bj.
\end{align*}
We conclude that $\hat\cS$ is a realistic system, as claimed.
It follows from the inductive hypothesis that $\hat\cS$ is realisable, so we can find a tensor $\hat{C}\in\cT^{\hat\bn}(\Z)$ such that $\Pi^{\hat{\bn}}_\bi\ast\hat{C}=\hat{S}_\bi$ for each $\bi\in [q-1]^{p-1}_\rightarrow$. 
Let now $\tilde{\bn}=\bn-E_q\in\N^q$. For any $\bi\in [q]^p_\rightarrow$, define a tensor $\tilde{S}_\bi\in\cT^{\tilde{\bn}_\bi}(\Z)$ as follows:
 If $i_p\neq q$ (in which case $\bi\in [q-1]^p_\rightarrow$) we set $\tilde{S}_\bi=S_\bi-\Pi^{\hat{\bn}}_\bi\ast\hat C$; if $i_p=q$, for $\bb\in [\tilde{\bn}_\bi]$, we set $E_\bb\ast\tilde{S}_\bi=E_\bb\ast S_\bi$ (where the last expression is well defined as $[\tilde{\bn}]\subseteq [\bn]$, so $[\tilde{\bn}_\bi]\subseteq [\bn_\bi]$).
We claim that the family $\tilde\cS=\{\tilde S_\bi\}_{\bi\in [q]^p_\rightarrow}$ is a realistic $(p,\tilde{\bn})$-system of shadows. To show that the claim is true, we shall first prove that the equation
\begin{align}
\label{eqn_1417_08062022}
E_\ba\ast
\Pi^{\tilde\bn_\bi}_\br\ast \tilde S_\bi
\spaceeq
\left\{
\begin{array}{llll}
E_\ba\ast\Pi^{\bn_\bi}_\br\ast S_\bi&\mbox{ if }i_{r_{p-1}}= q\\
E_\ba\ast(\Pi^{\bn_\bi}_\br\ast S_\bi-\hat{S}_{\bi_\br})&\mbox{ otherwise}
\end{array}
\right.
\end{align} 
is satisfied for any $\bi\in [q]^p_\rightarrow$, any $\br\in [p]^{p-1}_\rightarrow$, and any $\ba\in [\tilde{\bn}_{\bi_\br}]$. First, notice that, if $i_p=q$,
\begin{align*}
[\tilde{\bn}_\bi]
&\spaceeq
[\tilde{n}_{i_1}]\times\dots\times [\tilde{n}_{i_{p-1}}]\times [\tilde{n}_{i_p}]
\spaceeq
[{n}_{i_1}]\times\dots\times [n_{i_{p-1}}]\times [{n}_{q}-1]
\spaceeq
\{\bb\in [\bn_\bi]:b_p\neq n_q\}
\end{align*}
while, if $i_p\neq q$, $\tilde{\bn}_\bi=\hat\bn_\bi=\bn_\bi$, so $[\tilde{\bn}_\bi]=[\hat\bn_\bi]=[\bn_\bi]$.
Suppose that $i_{r_{p-1}}=q$. In this case, we have $r_{p-1}=p$ and $i_p=q$. Hence, 
\begin{align*}
E_\ba\ast\Pi^{\tilde\bn_\bi}_\br\ast \tilde S_\bi
&\lemeq{lem_Pi_basic} 
\sum_{\substack{\bb\in [\tilde{\bn}_\bi]\\ \bb_\br=\ba}}
E_\bb\ast\tilde{S}_\bi
\spaceeq
\sum_{\substack{\bb\in [{\bn}_\bi]\\ \bb_\br=\ba\\ b_p\neq n_q}}
E_\bb\ast S_\bi
\spaceeq
\sum_{\substack{\bb\in [{\bn}_\bi]\\ \bb_\br=\ba}}
E_\bb\ast S_\bi
\lemeq{lem_Pi_basic}
E_\ba\ast\Pi^{\bn_\bi}_{\br}\ast S_\bi,
\end{align*}
so~\eqref{eqn_1417_08062022} holds in this case. Suppose now that $i_{r_{p-1}}\neq q$. This can happen either if $i_p\neq q$ (\underline{case \emph{a}}), or if $i_p=q$ and $r_{p-1}\neq p$ (\underline{case \emph{b}}), and it implies that $\bi_\br\in [q-1]^{p-1}_\rightarrow$. In \underline{case \emph{a}}, 
\begin{align*}
\Pi^{\tilde\bn_\bi}_\br\ast \tilde S_\bi
&\spaceeq
\Pi^{\bn_\bi}_\br\ast (S_\bi-\Pi^{\hat\bn}_\bi\ast\hat{C})
\lemeq{lem_associativity_contraction}
\Pi^{\bn_\bi}_\br\ast S_\bi - \Pi^{\bn_\bi}_\br\cont{p}\Pi^{\hat\bn}_\bi\ast\hat{C}
\lemeq{lem_proj_contr}
\Pi^{\bn_\bi}_\br\ast S_\bi - \Pi^{\hat\bn}_{\bi_\br}\ast\hat{C}\\
&\spaceeq
\Pi^{\bn_\bi}_\br\ast S_\bi - \hat{S}_{\bi_\br},
\end{align*}
where the last equality follows from the property of $\hat{C}$. So,~\eqref{eqn_1417_08062022} holds in this case. In \underline{case \emph{b}}, we must have $\br=\ang{p-1}$. Hence,
\begin{align*}
E_\ba\ast\Pi^{\tilde\bn_\bi}_\br\ast \tilde S_\bi
&\lemeq{lem_Pi_basic}
\sum_{\substack{\bb\in [\tilde{\bn}_\bi]\\ \bb_{\ang{p-1}}=\ba}}
E_\bb\ast\tilde{S}_\bi
\spaceeq
\sum_{\substack{\bb\in [{\bn}_\bi]\\ \bb_{\ang{p-1}}=\ba\\ b_p\neq n_q}}
E_\bb\ast S_\bi
\spaceeq
\sum_{\substack{\bb\in [{\bn}_\bi]\\ \bb_{\ang{p-1}}=\ba}}
E_\bb\ast S_\bi
-
E_{(\ba,n_q)}\ast S_\bi\\
&\lemeq{lem_Pi_basic}
E_\ba\ast\Pi^{\bn_\bi}_{\ang{p-1}}\ast S_\bi-E_{(\ba,n_q)}\ast S_\bi
\spaceeq
E_\ba\ast\Pi^{\bn_\bi}_{\ang{p-1}}\ast S_\bi-E_{(\ba,n_q)}\ast S_{(\bi_{\ang{p-1}},q)}\\
&\spaceeq
E_\ba\ast\Pi^{\bn_\bi}_{\ang{p-1}}\ast S_\bi-E_\ba\ast \hat S_{\bi_{\ang{p-1}}}
\lemeq{lem_associativity_contraction}
E_\ba\ast(\Pi^{\bn_\bi}_{\br}\ast S_\bi-\hat S_{\bi_{\br}})
,
\end{align*}
where the penultimate equality comes from the definition of $\hat\cS$ and from the fact that, in this case, $\tilde{\bn}_{\bi_\br}=\hat\bn_{\bi_\br}$, so $\ba\in [\hat{\bn}_{\bi_\br}]$. We conclude that~\eqref{eqn_1417_08062022} also holds in \underline{case \emph{b}}. Using~\eqref{eqn_1417_08062022} and the fact that $\cS$ is a realistic system, we easily conclude that $\tilde{\cS}$ is a realistic system, too. Indeed, take $\bi,\bj\in [q]^p_\rightarrow$ and $\br,\bs\in [p]^{p-1}_\rightarrow$ such that $\bi_\br=\bj_\bs$, and choose $\ba\in [\tilde{\bn}_{\bi_\br}]$. Observe that $i_{r_{p-1}}=j_{s_{p-1}}$. If $i_{r_{p-1}}=q$, we find
\begin{align*}
E_\ba\ast\Pi^{\tilde{\bn}_\bi}_\br\ast\tilde{S}_\bi
&\spaceeq
E_\ba\ast\Pi^{\bn_\bi}_\br\ast S_\bi
\spaceeq
E_\ba\ast\Pi^{\bn_\bj}_\bs\ast S_\bj
\spaceeq
E_\ba\ast\Pi^{\tilde{\bn}_\bj}_\bs\ast\tilde{S}_\bj;
\intertext{otherwise,}
E_\ba\ast\Pi^{\tilde{\bn}_\bi}_\br\ast\tilde{S}_\bi
&\spaceeq
E_\ba\ast(\Pi^{\bn_\bi}_\br\ast S_\bi-\hat{S}_{\bi_\br})
\spaceeq
E_\ba\ast(\Pi^{\bn_\bj}_\bs\ast S_\bj-\hat{S}_{\bj_\bs})
\spaceeq
E_\ba\ast\Pi^{\tilde{\bn}_\bj}_\bs\ast\tilde{S}_\bj.
\end{align*}
It follows that $\tilde{\cS}$ is indeed a realistic system, as claimed.
Since $\tilde{\bn}^T\bone_q=\bn^T\bone_q-1$, we can then apply the inductive hypothesis to deduce that $\tilde\cS$ is realisable, so there exists a tensor $\tilde{C}\in\cT^{\tilde{\bn}}(\Z)$ such that $\Pi^{\tilde{\bn}}_\bi\ast\tilde{C}=\tilde{S}_\bi$ for each $\bi\in [q]^p_\rightarrow$. 

We now define a tensor $C\in\cT^{\bn}(\Z)$ by setting, for each $\bb\in [\bn]$,
\begin{align}
\label{eqn_2124_0507}
E_\bb\ast C
\spaceeq
\left\{
\begin{array}{llll}
E_{\bb_{\langle q-1\rangle}}\ast \hat{C}&\mbox{ if }& b_q=n_q\\[5pt]
E_{\bb}\ast \tilde{C}&\mbox{ if }&b_q\neq n_q.
\end{array}
\right.
\end{align}
Take $\bi\in [q]^p_\rightarrow$ and $\ba\in [\bn_\bi]$. To conclude the proof, we need to show that 
\begin{align}
\label{eqn_goal_1259_07062022}
E_\ba\ast\Pi^\bn_\bi\ast C
\spaceeq
E_\ba\ast S_\bi.
\end{align}
Observe that 
\begin{align}
\label{eqn_1311_07062022}
E_\ba\ast\Pi^\bn_\bi\ast C
&\lemeq{lem_Pi_basic}
\sum_{\substack{\bb\in [\bn]\\ \bb_\bi=\ba}}E_\bb\ast C
\spaceeq
\sum_{\substack{\bb\in [\bn]\\ \bb_\bi=\ba\\ b_q=n_q}}E_\bb\ast C
+
\sum_{\substack{\bb\in [\bn]\\ \bb_\bi=\ba\\ b_q\neq n_q}}E_\bb\ast C
\equationeq{eqn_2124_0507}
\sum_{\substack{\bb\in [\bn]\\ \bb_\bi=\ba\\ b_q=n_q}}E_{\bb_{\ang{q-1}}}\ast \hat C
+
\sum_{\substack{\bb\in [\tilde\bn]\\ \bb_\bi=\ba}}E_\bb\ast \tilde C.
\end{align}
Let us denote the first and the second summand of the rightmost expression in~\eqref{eqn_1311_07062022} by $\alpha$ and $\beta$, respectively.
Suppose first that $i_p=q$. If $a_p\neq n_q$, we see that $\alpha=0$, so
\begin{align*}
E_\ba\ast\Pi^\bn_\bi\ast C
&\equationeq{eqn_1311_07062022}
\sum_{\substack{\bb\in [\tilde\bn]\\ \bb_\bi=\ba}}E_\bb\ast \tilde C
\lemeq{lem_Pi_basic}
E_\ba\ast\Pi^{\tilde\bn}_\bi\ast \tilde{C}
\spaceeq
E_\ba\ast\tilde{S}_\bi
\spaceeq
E_\ba\ast S_\bi; 
\end{align*} 
if $a_p=n_q$, we get $\beta=0$, so  
\begin{align*}
E_\ba\ast\Pi^\bn_\bi\ast C
&\equationeq{eqn_1311_07062022}
\sum_{\substack{\bb\in [\bn]\\ \bb_\bi=\ba\\ b_q=n_q}}E_{\bb_{\ang{q-1}}}\ast \hat C
\spaceeq
\sum_{\substack{\bb\in [\bn]\\ \bb_\bi=\ba}}E_{\bb_{\ang{q-1}}}\ast \hat C
\spaceeq
\sum_{\substack{\bc\in [\hat \bn]\\\bc_{\bi_{\ang{p-1}}}=\ba_{\ang{p-1}}}}E_\bc\ast\hat{C}\\
&\lemeq{lem_Pi_basic}
E_{\ba_{\ang{p-1}}}\ast\Pi^{\hat\bn}_{\bi_{\ang{p-1}}}\ast \hat{C}
\spaceeq
E_{\ba_{\ang{p-1}}}\ast\hat{S}_{\bi_{\ang{p-1}}}
\spaceeq
E_{(\ba_{\ang{p-1}},n_q)}\ast S_{(\bi_{\ang{p-1}},q)}
\spaceeq
E_\ba\ast S_\bi.
\end{align*}
Suppose now that $i_p\neq q$, in which case $\bi\in [q-1]^p_\rightarrow$. We obtain
\begin{align*}
\alpha
&\spaceeq
\sum_{\substack{\bb\in [\bn]\\ \bb_\bi=\ba\\ b_q=n_q}}E_{\bb_{\ang{q-1}}}\ast \hat C
\spaceeq
\sum_{\substack{\bc\in [\hat{\bn}]\\ \bc_{\bi}=\ba}}E_\bc\ast\hat{C}
\lemeq{lem_Pi_basic}
E_\ba\ast\Pi^{\hat{\bn}}_{\bi}\ast\hat C,\\
\beta
&\spaceeq
\sum_{\substack{\bb\in [\tilde\bn]\\ \bb_\bi=\ba}}E_\bb\ast \tilde C
\lemeq{lem_Pi_basic}
E_\ba\ast\Pi^{\tilde{\bn}}_{\bi}\ast\tilde{C} 
\spaceeq
E_\ba\ast\tilde{S}_\bi
\spaceeq
E_\ba\ast(S_\bi-\Pi^{\hat\bn}_\bi\ast\hat{C})
\spaceeq
E_\ba\ast S_\bi - E_\ba\ast\Pi^{\hat\bn}_\bi\ast\hat{C},
\end{align*}
and it follows that
\begin{align*}
E_\ba\ast\Pi^\bn_\bi\ast C
&\equationeq{eqn_1311_07062022}
\alpha+\beta
\spaceeq
E_\ba\ast S_\bi.
\end{align*}
Therefore,~\eqref{eqn_goal_1259_07062022} holds, $\cS$ is realisable, and the proof is concluded.\qedhere
\end{proof}

\subsection{Crystalisation}
\label{subsec_crystalisation}
\noindent One easily derives from Theorem~\ref{thm_realistic_equals_realisable} a \emph{crystalisation} procedure, which turns a given crystal $S$ into a new crystal whose shadow is $S$. This is precisely what is needed to complete the step ($\spadesuit$\,2) of the proof of Theorem~\ref{prop_hollow_shadows_exist}.
\begin{cor}
\label{cor_exogenesis_of_crystals}
Let $n,q\in\N$, let $k\in[q]$, and let $S\in\mathcal{T}^{n\cdot\bone_k}(\Z)$ be a $(k-1)$-crystal. Then there exists a $k$-crystal $C\in\mathcal{T}^{n\cdot\bone_q}(\Z)$ whose $k$-shadow is $S$.
\end{cor}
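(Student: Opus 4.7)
The plan is to reduce the statement directly to Theorem~\ref{thm_realistic_equals_realisable} by constructing a trivial system of shadows consisting of copies of $S$. Set $\bn = n \cdot \bone_q$ so that $\bn_\bi = n \cdot \bone_k$ for every $\bi \in [q]^k_\to$, and define the $(k,\bn)$-system of shadows $\mathcal{S} = \{S_\bi\}_{\bi \in [q]^k_\to}$ by letting $S_\bi = S$ for all $\bi$. Note that each $S_\bi$ lies in $\cT^{\bn_\bi}(\Z) = \cT^{n \cdot \bone_k}(\Z)$, so this is a well-defined system of shadows in the sense of Definition~\ref{defn_system_of_shadows}.

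The first step is to verify that $\mathcal{S}$ is realistic. For any $\bi,\bj \in [q]^k_\to$ and any $\br,\bs \in [k]^{k-1}_\to$ with $\bi_\br = \bj_\bs$, we need to check that
\begin{align*}
\Pi^{\bn_\bi}_\br \ast S_\bi \spaceeq \Pi^{\bn_\bj}_\bs \ast S_\bj,
\end{align*}
which in our setting reduces to
\begin{align*}
\Pi^{n\cdot\bone_k}_\br \ast S \spaceeq \Pi^{n\cdot\bone_k}_\bs \ast S.
\end{align*}
But this follows immediately from the hypothesis that $S$ is a $(k-1)$-crystal: by Definition~\ref{defn_crystal_informal}, all of its $(k-1)$-dimensional oriented projections coincide, and both $\br$ and $\bs$ lie in $[k]^{k-1}_\to$. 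Hence the realisticity condition is satisfied automatically, with no use of the compatibility assumption $\bi_\br = \bj_\bs$.

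The second step invokes Theorem~\ref{thm_realistic_equals_realisable}, which guarantees the existence of a tensor $C \in \cT^{\bn}(\Z) = \cT^{n \cdot \bone_q}(\Z)$ such that $\Pi^{\bn}_\bi \ast C = S_\bi = S$ for every $\bi \in [q]^k_\to$. By definition, this means $C$ is a $k$-crystal and its $k$-shadow equals $S$, which is exactly what was required.

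There is essentially no obstacle here: all the work has already been done in Theorem~\ref{thm_realistic_equals_realisable}, and the crystal condition on $S$ makes the realisticity check entirely trivial. The only thing to keep in mind is that one should double-check that the indices are handled correctly when $\bn$ is cubical, so that $\bn_\bi$ does not depend on $\bi$ and the assignment $S_\bi = S$ is legitimate; this is straightforward from $\bn = n \cdot \bone_q$.
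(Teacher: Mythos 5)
Your proof is correct and follows essentially the same approach as the paper: define the constant system of shadows with all members equal to $S$, observe that the $(k-1)$-crystal property of $S$ makes this system realistic, and invoke Theorem~\ref{thm_realistic_equals_realisable} to produce the realising tensor $C$, which is then a $k$-crystal with $k$-shadow $S$ by construction.
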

\begin{proof}
Consider the $(k,n\cdot\bone_q)$-system of shadows $\mathcal{S}=\{S_\bi\}_{\bi\in [q]^k_\to}$ obtained by setting $S_\bi=S$ for each $\bi\in [q]^k_\to$. The fact that $S$ is a $(k-1)$-crystal immediately implies that $\mathcal{S}$ is a realistic system of shadows. Using Theorem~\ref{thm_realistic_equals_realisable}, we deduce that $\mathcal{S}$ is realisable---i.e., there exists a tensor $C\in\mathcal{T}^{n\cdot\bone_q}(\Z)$ such that $\Pi^{n\cdot\bone_q}_\bi\ast C=S$ for each $\bi\in [q]^k_\to$. It follows that $C$ is a $k$-crystal, whose $k$-shadow is $S$.
\end{proof}
Before proceeding to the next steps towards the proof of Theorem~\ref{prop_hollow_shadows_exist}, we illustrate the crystalisation procedure on a concrete example, by showing how to produce a $4$-dimensional $2$-crystal having a given shadow through the construction described in Section~\ref{subsec_system_of_shadows_BODY}.

\begin{example}
\label{example_four_dimensional_crystal}
Throughout this example, we shall indicate the numbers $-2$, $-1$, ${0}$, $1$, $2$, and $3$ by the colours blue, green, light grey, yellow, orange, and red, respectively.

Take $n=3$, $q=4$, and $k=2$ in the statement of Corollary~\ref{cor_exogenesis_of_crystals}. The goal is to build a $2$-crystal $C\in\cT^{3\cdot\bone_4}(\Z)$ whose $2$-shadow is the matrix $\squareTensorA{1}{.25}$\, (which is easily shown to be a $1$-crystal, as the row- and column-sum vectors coincide).
To this end, we consider the $(2,3\cdot\bone_4)$-system of shadows $\cS$ whose members are all equal to 
$\squareTensorA{1}{.25}$\,. $\cS$ is trivially realistic. The goal is to show that it is realisable; indeed, the tensor $C\in\cT^{3\cdot\bone_4}(\Z)$ witnessing this fact would be the crystal we seek.
Following the proof of Theorem~\ref{thm_realistic_equals_realisable}, we create two auxiliary systems of shadows $\hat{\cS}$ and $\tilde{\cS}$. $\hat\cS$ is a $(1,3\cdot\bone_3)$-system---i.e., both the shadows and the tensor that is claimed to realise
them have one fewer dimension than those for the original system $\cS$. In particular, we see from the proof that all members of $\hat\cS$ are the same vector $\lineTensorA{1}{.25}\,$.
Again, it is not hard to verify that $\hat\cS$ is a realistic system. To check that it is realisable, we only need to find a $3$-dimensional tensor of width $3$ such that summing its entries along all three modes yields $\lineTensorA{1}{.25}\,$.
Either by inspection or using the proof of Lemma~\ref{lem_realistic_realisable_case_p_1}, we find that 
\begin{align}
\label{eqn_1246_1902}
\hat{C}\spaceeq\mbox{\cubeTensorA{-3}{.3}}\in\cT^{3\cdot\bone_3}(\Z)
\end{align}
satisfies these conditions.
The second auxiliary system of shadows is the $(2,(3,3,3,2))$-system $\tilde\cS$ defined as follows: $\tilde{S}_{(1,4)}=\tilde{S}_{(2,4)}=\tilde{S}_{(3,4)}=\rectangularTensorA{1}{.25}\,$ (i.e., the matrix obtained by slicing off the rightmost column of $\squareTensorA{1}{.25}\,$); each of the other members of the system is obtained by taking the corresponding matrix in $\cS$ and subtracting from it the projection of $\hat C$ onto the corresponding modes (i.e., $\tilde{S}_\bi=S_\bi-\Pi^{3\cdot\bone_3}_\bi\ast\hat C$).
We see from~\eqref{eqn_1246_1902} that all three projections $\Pi^{3\cdot\bone_3}_{(1,2)}\ast\hat C$, $\Pi^{3\cdot\bone_3}_{(1,3)}\ast\hat C$, and $\Pi^{3\cdot\bone_3}_{(2,3)}\ast\hat C$ are equal to $\squareTensorBNewProjection{1}{.25}\,$.
Hence, we obtain $$\tilde S_{(1,2)}\spaceeq\tilde S_{(1,3)}\spaceeq\tilde S_{(2,3)}\spaceeq\squareTensorA{2}{.4}\,-\,\squareTensorBNewProjection{2}{.4}\spaceeq\squareTensorB{2}{.4}\,.$$ This system is also realistic, and it is such that the sum of the sizes of the modes of the tensor $\tilde{C}$ that is claimed to realise it is strictly smaller than the corresponding quantity for the system $\cS$. At this point, we simply iterate the process, by repeatedly ``slicing'' $\tilde{\cS}$ into a system of $1$-dimensional shadows 
(which we handle through Lemma~\ref{lem_realistic_realisable_case_p_1}) and a smaller system of $2$-dimensional shadows, 
until we end up with a system such that the sizes of all modes are shrunk to $1$, so that the tensor realising it is a single number (see Lemma~\ref{lem_1_dim_albums_are_truthful}).
Throughout this process, Lemma~\ref{lem_rotating_preserves_truth} guarantees that the tensors can be rotated in a way that we slice along the rightmost mode, thus avoiding complications with the orientations of the shadows.
In this way, we find that the system $\tilde{\cS}$ is realised by the tensor $\tilde C$ whose two blocks are $$\cubeTensorB{-3}{.3}\,$$ 
and the all-zero $3\times 3\times 3$ tensor, respectively. Finally, to obtain a tensor $C$ realising the initial system $\cS$ (i.e., a $4$-dimensional $2$-crystal having shadow $\squareTensorA{1}{.25}$\,), we glue together $\tilde{C}$ and $\hat{C}$.
The result is shown in Figure~\ref{fig_four_dimensional_crystal}.
\begin{figure}
\begin{center}
\fourDimCrystal{1}{1}
\end{center}
\caption{A $4$-dimensional $2$-crystal having shadow {\protect\squareTensorA{1}{.25}}\hspace{-.05cm}.}
\label{fig_four_dimensional_crystal}
\end{figure}
\end{example}

\subsection{Quartzes}
\label{subsec_quartzes_BODY}
The crystalisation procedure destroys hollowness: Even when the crystal $S$ in the statement of Corollary~\ref{cor_exogenesis_of_crystals} is hollow, the new crystal $C$ resulting from the crystalisation is not hollow in general---as it is clear from Example~\ref{example_four_dimensional_crystal}.
There does not appear to be
a natural way of modifying the inductive construction in Section~\ref{subsec_system_of_shadows_BODY}
to require that hollowness be preserved along the process. Hence, to achieve hollowness, we employ a second, separate procedure---step ($\spadesuit$\,4)---which consists in applying multiple \emph{local modifications} to the crystal resulting from step ($\spadesuit$\,2) (after expanding it with layers of zeros in step ($\spadesuit$\,3)). These modifications are associated with certain transparent crystals defined next.
\begin{defn}
\label{defn_quartzes}
Let $k,n\in\N$, and let $\ba,\bb\in [n]^k$ be such that $a_i\neq b_i$ for each $i\in [k]$. Given $\bz\in \{0,1\}^k$, let $h(\bz;\ba,\bb)$ be the tuple in $[n]^k$ whose $i$-th entry is $a_i$ if $z_i=0$, $b_i$ otherwise. The \emph{quartz} $Q_{\ba,\bb}$ is the tensor in $\mathcal{T}^{n\cdot\bone_k}(\Z)$ defined by $Q_{\ba,\bb}=\sum_{\bz\in\{0,1\}^k}(-1)^{\bz^T\bone_k}E_{h(\bz;\ba,\bb)}$. Equivalently, $E_{h(\bz;\ba,\bb)}\ast Q_{\ba,\bb}=(-1)^{\bz^T\bone_k}$ for each $\bz\in \{0,1\}^k$, and all other entries are zero. 
\end{defn}
Let the symbol ``$\odot$'' indicate the entrywise multiplication of tuples having the same length.
\begin{rem}
\label{remark_on_quartzes}
It is straightforward to check that, for any two tuples $\bz,\hat\bz\in\{0,1\}^k$, $\bz=\hat{\bz}$ if and only if $h(\bz;\ba,\bb)=h(\hat\bz;\ba,\bb)$.
We can write 
\begin{align}
\label{eqn_1835_4_oct}
h(\bz;\ba,\bb)=(\bone_k-\bz)\odot\ba+\bz\odot\bb.
\end{align}
Notice that the operation of tuple projection distributes over ``$\odot$'', in the sense that $(\bu\odot\bv)_\bi=\bu_\bi\odot\bv_\bi$. Hence, for any $\ell\in\N$ and any  $\bj\in [k]^\ell$, 
\begin{align}
\label{eqn_1817_60oct}
    [h(\bz;\ba,\bb)]_\bj
    &\equationeq{eqn_1835_4_oct}
    [(\bone_k-\bz)\odot\ba+\bz\odot\bb]_\bj
    \spaceeq
    (\bone_\ell-\bz_\bj)\odot\ba_\bj+\bz_\bj\odot\bb_\bj
    \equationeq{eqn_1835_4_oct}
    h(\bz_\bj;\ba_\bj,\bb_\bj).
\end{align}
\end{rem}
\noindent We will need the following simple lemma on crystals.
\begin{lem}
\label{lem_conservation_sum_crystals}
Let $q,n\in\N$ and $k\in\{0,\dots,q\}$,
let $C\in\cT^{n\cdot\bone_q}(\Z)$ be a $k$-crystal, and let $S$ be its $k$-shadow. Then $\Pi^{n\cdot\bone_q}_\bepsilon\ast C=\Pi^{n\cdot\bone_k}_\bepsilon\ast S$. In particular, $C$ is affine if and only if $S$ is affine.
\end{lem}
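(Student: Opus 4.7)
The plan is straightforward, essentially a one-line computation combining Lemmas~\ref{Pi_epsilon_all_one},~\ref{lem_proj_contr}, and~\ref{lem_associativity_contraction}. The underlying intuition is that summing all entries of $C$ can be done by first summing along a subset of modes (producing the $k$-shadow $S$) and then summing along the remaining modes, and the projection tensor $\Pi^\bullet_\bepsilon$ encodes precisely the ``sum all entries'' operation.

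First, I would pick any $\bi \in [q]^k_\to$ (which is non-empty as $k \leq q$), so that by the definition of the $k$-shadow we have $S = \Pi^{n\cdot\bone_q}_\bi \ast C$. Then I would observe that $\bi_\bepsilon = \bepsilon$, so Lemma~\ref{lem_proj_contr} gives the identity $\Pi^{n\cdot\bone_q}_\bepsilon = \Pi^{n\cdot\bone_k}_\bepsilon \cont{k} \Pi^{n\cdot\bone_q}_\bi$. Combining this with the associativity of contraction (Lemma~\ref{lem_associativity_contraction}) yields
\begin{align*}
\Pi^{n\cdot\bone_k}_\bepsilon \ast S
\spaceeq
\Pi^{n\cdot\bone_k}_\bepsilon \ast \left(\Pi^{n\cdot\bone_q}_\bi \ast C\right)
\lemeq{lem_associativity_contraction}
\left(\Pi^{n\cdot\bone_k}_\bepsilon \cont{k} \Pi^{n\cdot\bone_q}_\bi\right) \ast C
\lemeq{lem_proj_contr}
\Pi^{n\cdot\bone_q}_\bepsilon \ast C,
\end{align*}
which is the desired equality.

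For the ``in particular'' clause, $C$ being affine means $\Pi^{n\cdot\bone_q}_\bepsilon \ast C = 1$ and $S$ being affine means $\Pi^{n\cdot\bone_k}_\bepsilon \ast S = 1$; the equivalence of these two conditions is immediate from the equality just established.

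There is no real obstacle — the whole point of having set up the projection-tensor calculus in Section~\ref{subsec_projection_tensor} is that such statements become short symbolic manipulations. The only mild subtlety is checking the boundary case $k=0$, but the argument above works uniformly: when $k=0$ we have $\bi = \bepsilon$, and Lemma~\ref{Pi_epsilon_all_one} together with Remark~\ref{rem_E_epsilon_equals_one} make the identity trivial since $\Pi^\bepsilon_\bepsilon = E_\bepsilon = 1$.
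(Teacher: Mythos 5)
Your proof is correct and essentially identical to the paper's: both pick an increasing tuple (you allow any $\bi\in[q]^k_\to$, the paper uses the canonical $\ang{k}$) and then chain Lemmas~\ref{lem_proj_contr} and~\ref{lem_associativity_contraction} to transfer the empty-tuple projection from $C$ to $S$, with the affine clause following from Definition~\ref{defn_affine_tensors}. Your extra remark about the $k=0$ boundary case is a nice sanity check, but as you note, the uniform argument already handles it.
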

\begin{proof}
Observe that $\ang{k}\in [q]^k_\to$ and $\ang{k}_\bepsilon=\bepsilon$. We obtain
\begin{align*}
    \Pi^{n\cdot\bone_q}_\bepsilon\ast C
    \spaceeq
    \Pi^{n\cdot\bone_q}_{\ang{k}_\bepsilon}\ast C
    \lemeq{lem_proj_contr}
    \left(\Pi^{n\cdot\bone_k}_\bepsilon\cont{k}\Pi^{n\cdot\bone_q}_\ang{k}\right)\ast C
    \lemeq{lem_associativity_contraction}
    \Pi^{n\cdot\bone_k}_\bepsilon\ast\left(\Pi^{n\cdot\bone_q}_\ang{k}\ast C\right)
    \spaceeq
    \Pi^{n\cdot\bone_k}_\bepsilon\ast S,
\end{align*}
as required. Then, the last part of the statement directly follows from the definition of an affine tensor (Definition~\ref{defn_affine_tensors}).
\end{proof}

\noindent The next proposition collects certain properties of quartzes that shall be useful later.

\begin{prop}
\label{prop_basic_stuff_quartzes}
Let $k,n\in\N$, and let $\ba,\bb\in [n]^k$ be such that $a_i\neq b_i$ for each $i\in [k]$. Then 
\begin{enumerate}[(i)]
    \item $\supp(Q_{\ba,\bb})=\{a_1,b_1\}\times\{a_2,b_2\}\times\dots\times\{a_k,b_k\}$.
    \label{lem_part_supp_quartzes}
    \item $E_\ba\ast Q_{\ba,\bb}=1$.
    \label{lem_part_isentryone}
    \item $\Pi^{n\cdot\bone_k}_\bell\ast Q_{\ba,\bb}=Q_{\ba_\bell,\bb_\bell}$ for any $\bell\in [k]^k$ such that $|\bell|=k$.
    \label{lem_part_inveriance_under_tuple_projection}
    \item $Q_{\ba,\bb}$ is a $(k-1)$-crystal, and its $(k-1)$-shadow is the all-zero tensor in $\mathcal{T}^{n\cdot\bone_{k-1}}(\Z)$.
    \label{lem_part_iscrystal}
    \item $\Pi^{n\cdot\bone_k}_\bepsilon\ast Q_{\ba,\bb}=0$.
    \label{lem_part_issumzero}
\end{enumerate}
\end{prop}
\begin{proof}
To prove~\eqref{lem_part_supp_quartzes}, take $S=\{a_1,b_1\}\times\dots\times\{a_k,b_k\}\subseteq[n]^k$. The map $\bz\mapsto h(\bz;\ba,\bb)$ yields a bijection between $\{0,1\}^k$ and $S$. Hence,
\begin{align*}
    \supp(Q_{\ba,\bb})
    \spaceeq
    \bigcup_{\bz\in\{0,1\}^k}\supp(E_{h(\bz;\ba,\bb)})
    \spaceeq
    \bigcup_{\bz\in\{0,1\}^k}\{{h(\bz;\ba,\bb)}\}
    \spaceeq
    \bigcup_{\bs\in S}\{\bs\}
    \spaceeq
    S.
\end{align*}

To prove~\eqref{lem_part_isentryone}, observe that $\ba=h(\bzero_k;\ba,\bb)$, whence we find
\begin{align*}
    E_\ba\ast Q_{\ba,\bb}
    &\spaceeq
    \sum_{\bz\in\{0,1\}^k}(-1)^{\bz^T\bone_k}E_\ba\ast E_{h(\bz;\ba,\bb)}
    \spaceeq
    (-1)^{\bzero_k^T\bone_k}
    \spaceeq
    1.
\end{align*}
%

To prove~\eqref{lem_part_inveriance_under_tuple_projection}, 
observe that
\begin{align}
\label{eqn_1750_6oct_A}
\notag
   Q_{\ba_\bell,\bb_\bell}
   &\spaceeq
   \sum_{\bz\in\{0,1\}^k}(-1)^{\bz^T\bone_k}E_{h(\bz;\ba_\bell,\bb_\bell)}
   \spaceeq
   \sum_{\bz\in\{0,1\}^k}(-1)^{\bz_\bell^T\bone_k}E_{h(\bz_\bell;\ba_\bell,\bb_\bell)}\\
   &\spaceeq
   \sum_{\bz\in\{0,1\}^k}(-1)^{\bz^T\bone_k}E_{h(\bz_\bell;\ba_\bell,\bb_\bell)},
\end{align}
where the second equality is obtained by noting that summing over $\bz$ is equivalent to summing over $\bz_\bell$, since $|\bell|=k$. On the other hand, letting $\bj\in [k]^k$ be the tuple for which $\bell_\bj=\bj_\bell=\ang{k}$,
\begin{align}
\label{eqn_1750_6oct_B}
    \notag
    \Pi^{n\cdot\bone_k}_\bell\ast Q_{\ba,\bb}
    &\spaceeq
    \sum_{\bc\in [n]^k}(E_\bc\ast\Pi^{n\cdot\bone_k}_\bell\ast Q_{\ba,\bb})E_\bc
    \lemeq{lem_Pi_basic}
    \sum_{\bc\in [n]^k}\bigg(\sum_{\substack{\bd\in [n]^k\\\bd_\bell=\bc}}E_\bd\ast Q_{\ba,\bb}\bigg)E_\bc\\
    \notag
    &\spaceeq
    \sum_{\bc\in [n]^k}(E_{\bc_\bj}\ast Q_{\ba,\bb}) E_\bc
    \spaceeq
    \sum_{\bc\in [n]^k}(E_{\bc}\ast Q_{\ba,\bb}) E_{\bc_\bell}\\
    \notag
    &\spaceeq
    \sum_{\bc\in [n]^k}\sum_{\bz\in\{0,1\}^k}(-1)^{\bz^T\bone_k}(E_\bc\ast E_{h(\bz;\ba,\bb)}) E_{\bc_\bell}\\
    \notag
    &\spaceeq
    \sum_{\bz\in\{0,1\}^k}(-1)^{\bz^T\bone_k}\sum_{\bc\in [n]^k}(E_\bc\ast E_{h(\bz;\ba,\bb)}) E_{\bc_\bell}
    \spaceeq
    \sum_{\bz\in\{0,1\}^k}(-1)^{\bz^T\bone_k}E_{[h(\bz;\ba,\bb)]_\bell}\\
    &\equationeq{eqn_1817_60oct}
    \sum_{\bz\in\{0,1\}^k}(-1)^{\bz^T\bone_k}E_{h(\bz_\bell;\ba_\bell,\bb_\bell)}.
\end{align}
Combining~\eqref{eqn_1750_6oct_A} and~\eqref{eqn_1750_6oct_B}, we obtain $\Pi^{n\cdot\bone_k}_\bell\ast Q_{\ba,\bb}=Q_{\ba_\bell,\bb_\bell}$. 

To prove~\eqref{lem_part_iscrystal}, observe that, for any $\bc\in [n]^{k-1}$,
\begin{align}
\label{eqn_1825_6oct}
    \notag
    E_\bc\ast\Pi^{n\cdot\bone_k}_{\ang{k-1}}\ast Q_{\ba,\bb}
    &\lemeq{lem_Pi_basic}
    \sum_{\substack{\bd\in [n]^k\\\bd_{\ang{k-1}}=\bc}}E_\bd\ast Q_{\ba,\bb}
    \spaceeq
    \sum_{d\in [n]}E_{(\bc,d)}\ast Q_{\ba,\bb}\\
    &\spaceeq
    \sum_{\bz\in\{0,1\}^k}(-1)^{\bz^T\bone_k}\sum_{d\in [n]}E_{(\bc,d)}\ast E_{h(\bz;\ba,\bb)}.
\end{align}
In order for a tuple $\bz\in\{0,1\}^k$ to give a nonzero contribution to the sum in the right-hand side of~\eqref{eqn_1825_6oct}, we must have that $(\bc,d)=h(\bz;\ba,\bb)$ for some $d\in [n]$, which implies that
\begin{align*}
    \bc
    \spaceeq
    (\bc,d)_{\ang{k-1}}
    \spaceeq
    [h(\bz;\ba,\bb)]_{\ang{k-1}}
    \equationeq{eqn_1817_60oct}
    h(\bz_{\ang{k-1}};\ba_{\ang{k-1}},\bb_{\ang{k-1}}).
\end{align*}
In particular, $\bz_{\ang{k-1}}=\tilde\bz$ for some $\tilde\bz\in \{0,1\}^{k-1}$ such that $\bc=h(\tilde\bz;\ba_{\ang{k-1}},\bb_{\ang{k-1}})$. Then, it follows from Remark~\ref{remark_on_quartzes} that such tuple $\tilde\bz$ is unique. Notice that $h((\tilde\bz,0);\ba,\bb)=(\bc,a_k)$ and $h((\tilde\bz,1);\ba,\bb)=(\bc,b_k)$.
As a consequence, we can simplify~\eqref{eqn_1825_6oct} to yield
\begin{align*}
    E_\bc\ast\Pi^{n\cdot\bone_k}_{\ang{k-1}}\ast Q_{\ba,\bb}
    &\spaceeq
    \sum_{z\in\{0,1\}}(-1)^{(\tilde\bz,z)^T\bone_k}\sum_{d\in [n]}E_{(\bc,d)}\ast E_{h((\tilde\bz,z);\ba,\bb)}\\
    &\spaceeq
    (-1)^{(\tilde\bz,0)^T\bone_k}\sum_{d\in [n]}E_{(\bc,d)}\ast E_{h((\tilde\bz,0);\ba,\bb)}
    +
    (-1)^{(\tilde\bz,1)^T\bone_k}\sum_{d\in [n]}E_{(\bc,d)}\ast E_{h((\tilde\bz,1);\ba,\bb)}\\
    &\spaceeq
    (-1)^{(\tilde\bz,0)^T\bone_k}\sum_{d\in [n]}E_{(\bc,d)}\ast E_{(\bc,a_k)}
    +
    (-1)^{(\tilde\bz,1)^T\bone_k}\sum_{d\in [n]}E_{(\bc,d)}\ast E_{(\bc,b_k)}\\
    &\spaceeq
    (-1)^{(\tilde\bz,0)^T\bone_k}+(-1)^{(\tilde\bz,1)^T\bone_k}
    \spaceeq
    (-1)^{\tilde\bz^T\bone_{k-1}}-(-1)^{\tilde\bz^T\bone_{k-1}}
    \spaceeq 0.
\end{align*}
It follows that $\Pi^{n\cdot\bone_k}_{\ang{k-1}}\ast Q_{\ba,\bb}$ is the all-zero tensor. Take now $\bi\in [k]^{k-1}_\to$, and let $p$ be the unique element of $[k]\setminus\set(\bi)$. Consider the tuple $\bell=(\bi,p)\in [k]^k$, and notice that $|\bell|=k$ and $\bi=\bell_{\ang{k-1}}$. Hence,
\begin{align*}
    \Pi^{n\cdot\bone_k}_\bi\ast Q_{\ba,\bb}
    &\spaceeq
    \Pi^{n\cdot\bone_k}_{\bell_{\ang{k-1}}}\ast Q_{\ba,\bb}
    \lemeq{lem_proj_contr}
    \left(\Pi^{n\cdot\bone_k}_{\ang{k-1}}\cont{k}\Pi^{n\cdot\bone_k}_\bell\right)\ast Q_{\ba,\bb}
    \lemeq{lem_associativity_contraction}
    \Pi^{n\cdot\bone_k}_{\ang{k-1}}\ast\left(\Pi^{n\cdot\bone_k}_\bell\ast Q_{\ba,\bb}\right)\\
    &\propparteq{prop_basic_stuff_quartzes}{lem_part_inveriance_under_tuple_projection}
    \Pi^{n\cdot\bone_k}_{\ang{k-1}}\ast Q_{\ba_\bell,\bb_\bell},
\end{align*}
which is the all-zero tensor as proved above. This shows that $Q_{\ba,\bb}$ is a $(k-1)$-crystal having the all-zero tensor as its $(k-1)$-shadow.

Finally,~\eqref{lem_part_issumzero} directly follows from~\eqref{lem_part_iscrystal} by applying Lemma~\ref{lem_conservation_sum_crystals}.
\end{proof}

\subsection{Crystals with hollow shadows}
\label{subsec_hollow_shadowed_crystals_BODY}

\noindent We now have all the ingredients for implementing the steps ($\spadesuit$\,1)--($\spadesuit$\,5), thus completing the proof of Theorem~\ref{prop_hollow_shadows_exist}. Once that is established, the existence of hollow-shadowed crystals of quadratic width (Theorem~\ref{thm_existence_crystals_with_hollow_shadow}) can be easily derived.

\begin{proof}[Proof of Theorem~\ref{prop_hollow_shadows_exist}]
We use induction over $k$. For $k=1$, the tensor $C=1$ works. For the inductive step, suppose that $k\geq 2$.
Let $\hat{n}=\frac{k^2-k}{2}$ and $n=\hat{n}+k=\frac{k^2+k}{2}$. By the inductive hypothesis, we find a hollow affine $(k-2)$-crystal $U\in\mathcal{T}^{\hat{n}\cdot\bone_{k-1}}(\Z)$ ($\spadesuit$\,1). Using Corollary~\ref{cor_exogenesis_of_crystals}, we deduce that there exists a (not necessarily hollow) $(k-1)$-crystal $V\in\cT^{\hat{n}\cdot\bone_k}(\Z)$ whose $(k-1)$-shadow is $U$ ($\spadesuit$\,2). By Lemma~\ref{lem_conservation_sum_crystals}, $V$ is affine, too.
Consider now the tensor $W\in\cT^{n\cdot\bone_k}(\Z)$ defined by setting, for each $\ba\in [n]^k$, $E_\ba\ast W=E_\ba\ast V$ if $\set(\ba)\subseteq [\hat{n}]$, $E_\ba\ast W=0$ otherwise; i.e., $W$ is obtained by padding $V$ with $k$ layers of zeros on each mode ($\spadesuit$\,3). Similarly, define $Z\in\cT^{n\cdot\bone_{k-1}}(\Z)$ by setting, for each $\ba\in [n]^{k-1}$, $E_\ba\ast Z=E_\ba\ast U$ if $\set(\ba)\subseteq [\hat{n}]$, $E_\ba\ast Z=0$ otherwise. Observe that $\supp(U)=\supp(Z)$, so $U$ being hollow implies $Z$ being hollow as well. We claim that $W$ is a $(k-1)$-crystal whose $(k-1)$-shadow is $Z$. Indeed, for any $\bi\in [k]^{k-1}_\to$ and $\ba\in [n]^{k-1}$,
\begin{align*}
    E_\ba\ast\Pi^{n\cdot\bone_k}_\bi\ast W
    &\lemeq{lem_Pi_basic}
    \sum_{\substack{\bb\in [n]^k\\\bb_\bi=\ba}}E_\bb\ast W
    =
    \sum_{\substack{\bb\in [\hat{n}]^k\\\bb_\bi=\ba}}E_\bb\ast V
    \lemeq{lem_Pi_basic}
    \left\{
    \begin{array}{cc}
         E_\ba\ast\Pi^{\hat{n}\cdot\bone_k}_\bi\ast V &  \mbox{ if }\set(\ba)\subseteq [\hat{n}]\\
         0 & \mbox{ otherwise} 
    \end{array}
    \right.\\
    &=
    \left\{
    \begin{array}{cc}
         E_\ba\ast U &  \mbox{ if }\set(\ba)\subseteq [\hat{n}]\\
         0 & \mbox{ otherwise} 
    \end{array}
    \right.
    =
    E_\ba\ast Z,
\end{align*}
so $\Pi^{n\cdot\bone_k}_\bi\ast W=Z$, as wanted. Clearly, the padding operation does not change the sum of the entries in the tensor, so $W$ is affine. Consider the tuple $\by=(\hat{n}+1,\hat{n}+2,\dots,n)\in [n]^k$, and define ($\spadesuit$\,4) the tensor 
\begin{align}
\label{eqn_2106_15_sept}
    C
    &=
    W-\sum_{\bd\in [\hat{n}]^k}(E_\bd\ast W)Q_{\bd,\by}.
\end{align}
Note that $C\in\cT^{n\cdot\bone_k}(\Z)$.
We shall prove that $C$ is a hollow affine $(k-1)$-crystal. Recall that $W$ is an affine $(k-1)$-crystal. 
Since tensor projection is a linear operation, crystals are preserved under linear combinations. Hence,
it follows from Proposition~\ref{prop_basic_stuff_quartzes}\eqref{lem_part_iscrystal} that $C$ is a $(k-1)$-crystal, too, having the same $(k-1)$-shadow as $W$---namely, $Z$. Similarly, $C$ is affine by virtue of Proposition~\ref{prop_basic_stuff_quartzes}\eqref{lem_part_issumzero}. Hence, we are left to show that $C$ is hollow. To this end, we show that no tuple $\bb\in [n]^k$ is a tie for $C$. This is proved by induction over the quantity $\ell(\bb)=|\{i\in [k]:b_i>\hat{n}\}|$. For the basis of the induction, suppose that $\ell(\bb)=0$ (which means that $\bb\in [\hat{n}]^k$). 
Observe that the choice of $\by$ guarantees that $\set(\by)$ is disjoint from $\set(\bd)$ for each $\bd\in [\hat{n}]^k$. 
We find
\begin{align*}
    E_\bb\ast C
    &\equationeq{eqn_2106_15_sept}
    E_\bb\ast W-
    \sum_{\bd\in [\hat{n}]^k}(E_\bd\ast W)(E_\bb\ast Q_{\bd,\by})
    \propparteq{prop_basic_stuff_quartzes}{lem_part_supp_quartzes}
    E_\bb\ast W-
    (E_\bb\ast W)(E_\bb\ast Q_{\bb,\by})\\
    &\propparteq{prop_basic_stuff_quartzes}{lem_part_isentryone}
    E_\bb\ast W-E_\bb\ast W
    \spaceeq
    0,
\end{align*}
which means, in particular, that $\bb$ is not a tie for $C$. 
Before dealing with the inductive step, we establish the following fact:
\begin{align}
\label{eqn_claim_1508_29sep}
\mbox{\emph{If $\bc\in\supp(C)$ and $c_i>\hat{n}$ for some $i\in [k]$, then $c_i=\hat{n}+i$.}}
\end{align}
To prove~\eqref{eqn_claim_1508_29sep}, observe that $\set(\bc)\not\subseteq [\hat{n}]$, so $\bc\not\in\supp(W)$. Therefore, it follows from~\eqref{eqn_2106_15_sept} that $\bc\in\supp(Q_{\bd,\by})$ for some $\bd\in [\hat{n}]^k$. 
Using~Proposition~\ref{prop_basic_stuff_quartzes}\eqref{lem_part_supp_quartzes}, we conclude that $c_i=y_i=\hat{n}+i$, as claimed.

Take now $\bb\in [n]^k$ with $\ell(\bb)\geq 1$, and let $j\in [k]$ be such that $b_j>\hat{n}$. Suppose, for the sake of contradiction, that $\bb$ is a tie for $C$; i.e., $|\bb|< k$ and $\bb\in\supp(C)$. Let $\alpha<\beta\in [k]$ be such that $b_\alpha=b_\beta$. Notice that $b_\alpha=b_\beta\in [\hat{n}]$ as, otherwise,~\eqref{eqn_claim_1508_29sep} would yield $b_\alpha=\hat{n}+\alpha\neq\hat{n}+\beta=b_\beta$, a contradiction. In particular, this means that $j\not\in\{\alpha,\beta\}$. Define $\tilde\alpha=\alpha$ if $\alpha<j$, and $\tilde\alpha=\alpha-1$ if $\alpha>j$. Similarly, define $\tilde\beta=\beta$ if $\beta<j$, and $\tilde\beta=\beta-1$ if $\beta>j$. Consider also the tuple $\bi\in [k]^{k-1}_\to$ obtained by removing the $j$-th element from $\ang{k}$, and observe that $b_{i_{\tilde\alpha}}=b_\alpha$ and $b_{i_{\tilde\beta}}=b_\beta$, so $b_{i_{\tilde\alpha}}=b_{i_{\tilde\beta}}$. We note that $\tilde\alpha\neq\tilde\beta$. Indeed, $\tilde\alpha=\tilde\beta$ would imply that $\tilde\alpha=\alpha$ and $\tilde\beta=\beta-1$, from which it would follow that $\alpha<j<\beta$ and that $\alpha=\beta-1$, a contradiction. 
As a consequence, $|\bb_\bi|<k-1$. Since $Z$ is hollow, it follows that $\bb_\bi\not\in\supp(Z)$. 
For any $a\in [n]$, let $\bb^{(a)}$ be the tuple in $[n]^k$ obtained by replacing the $j$-th element of $\bb$ with $a$. We find
\begin{align}
\label{eqn_1634_29sep}
\notag
    0
    &\spaceeq
    E_{\bb_\bi}\ast Z
    \spaceeq
    E_{\bb_\bi}\ast\Pi^{n\cdot\bone_k}_\bi\ast C
    \lemeq{lem_Pi_basic}
    \sum_{\substack{\ba\in [n]^k\\\ba_\bi=\bb_\bi}}E_\ba\ast C
    \spaceeq
    \sum_{a\in [n]}E_{\bb^{(a)}}\ast C\\
    &\spaceeq
    \sum_{a\in [\hat{n}]}E_{\bb^{(a)}}\ast C
    +
    \sum_{a\in [n]\setminus [\hat{n}]}E_{\bb^{(a)}}\ast C
\end{align}
where the second equality follows from the fact that $Z$ is the $(k-1)$-shadow of $C$.
If $a\in [\hat{n}]$, $\ell(\bb^{(a)})=\ell(\bb)-1$. Moreover, using that $j\not\in\{\alpha,\beta\}$, we have $b^{(a)}_\alpha=b_\alpha=b_\beta=b^{(a)}_\beta$, which means that $|\bb^{(a)}|<k$. Using the inductive hypothesis, we deduce that $\bb^{(a)}\not\in\supp(C)$, so $E_{\bb^{(a)}}\ast C=0$. If $a\in [n]\setminus [\hat{n}]$ and $\bb^{(a)}\in\supp(C)$, applying~\eqref{eqn_claim_1508_29sep} twice yields $a=\hat{n}+j=b_j$, which implies that $\bb^{(a)}=\bb$. Therefore, it follows from~\eqref{eqn_1634_29sep} that $E_\bb\ast C=0$, thus contradicting our assumptions. This establishes that $C$ is hollow ($\spadesuit$\,5) and concludes the proof of the theorem.
\end{proof}

\begin{proof}[Proof of Theorem~\ref{thm_existence_crystals_with_hollow_shadow}]
Using Theorem~\ref{prop_hollow_shadows_exist}, we find a hollow affine $(k-1)$-crystal $\hat{C}\in\mathcal{T}^{\frac{k^2+k}{2}\cdot\bone_k}(\Z)$. Applying Corollary~\ref{cor_exogenesis_of_crystals}, we find a $k$-crystal $C\in\mathcal{T}^{\frac{k^2+k}{2}\cdot\bone_q}(\Z)$ whose $k$-shadow is $\hat{C}$. The fact that $C$ is affine directly follows from Lemma~\ref{lem_conservation_sum_crystals}.
\end{proof}

\section{Fooling the hierarchy}
\label{sec_fooling_BA}
In this section, we translate the hollow-shadowed crystals built in Section~\ref{sec_crystals} back to the algorithmic framework. This results in a proof of Theorem~\ref{acceptance_prop_BA_big_enough}, which establishes that any loopless digraph is accepted by any level of the $\BA$ hierarchy applied to AGC, \emph{provided that} the number of colours is large enough. Then, we prove two results on the $\BA$ hierarchy (Propositions~\ref{prop_reduction_line_digraph_BA} and~\ref{prop_BA_preserves_homo}, both consequences of more general results on linear minions) that are able to ``boost'' Theorem~\ref{acceptance_prop_BA_big_enough} by relaxing the requirement on the number of colours. These are the last ingredients needed to establish that the family of shift digraphs introduced in Section~\ref{subsec_fooling_BA} provides fooling instances for \emph{all} levels of the $\BA$ hierarchy applied to AGC for \emph{all} numbers of colours, and to finally validate the proof of Theorem~\ref{thm_BLPAIPk_no_solves_AGC} presented in Section~\ref{subsec_fooling_BA}.

\begin{thm*}[Theorem~\ref{acceptance_prop_BA_big_enough} restated]
Let $2\leq k\in\N$ and let $\X$ be a loopless digraph. Then $\BA^k(\X,\K_{(k^2+k)/2})=\YES$.
\end{thm*}
\begin{proof}
We can assume that $\Vset(\X)=[q]$ for some $q\in\N$. 
Moreover, by possibly adding isolated vertices to $\X$, we can assume that $q> k$.
Set $n=\frac{k^2+k}{2}$.
Applying Theorem~\ref{thm_existence_crystals_with_hollow_shadow}, we construct an affine $k$-crystal $C\in\mathcal{T}^{n\cdot\bone_q}(\Z)$ whose $k$-shadow $S\in\cT^{n\cdot\bone_k}(\Z)$ is hollow. We claim that the map
\begin{align*}
    \zeta:\Vset(\X)^k&\to\cT^{n\cdot\bone_k}(\Z)\\
    \bx&\mapsto\Pi^{n\cdot\bone_q}_\bx\ast C
\end{align*}
yields a $k$-tensorial homomorphism from $\Xk$ to $\freeZ(\K_n^\tensor{k})$.

First of all, we need to check that $\zeta(\bx)\in\Vset(\freeZ(\K_n^\tensor{k}))=\Zaff^{(n^k)}$ for each $\bx\in \Vset(\X)^k$. This easily follows from the facts that $C$ has integer entries and %
\begin{align*}
    \Pi^{n\cdot\bone_k}_\bepsilon\ast\zeta(\bx)
    &\spaceeq
    \Pi^{n\cdot\bone_k}_\bepsilon\ast\left(\Pi^{n\cdot\bone_q}_\bx\ast C\right)
    \lemeq{lem_associativity_contraction}
    \left(\Pi^{n\cdot\bone_k}_\bepsilon\ast\Pi^{n\cdot\bone_q}_\bx\right)\ast C
    \lemeq{lem_proj_contr}
    \Pi^{n\cdot\bone_q}_{\bx_\bepsilon}\ast C\\
    &\spaceeq
    \Pi^{n\cdot\bone_q}_\bepsilon\ast C
    \spaceeq
    1,
\end{align*}
where the last equality holds since $C$ is affine.

We now check that $\zeta$ sends hyperedges of $\Xk$ to hyperedges of $\freeZ(\K_n^\tensor{k})$. Take $\bx=(x_1,x_2)\in\Eset(\X)$, so that $\bx^\tensor{k}\in\Eset(\Xk)$. To prove that $\zeta(\bx^\tensor{k})\in\Eset(\freeZ(\K_n^\tensor{k}))$, we need to find some ${\bq}\in\Zaff^{(|\Eset(\K_n)|)}=\Zaff^{(n^2-n)}$ for which $\zeta(\bx_\bi)={\bq}_{/\pi_\bi}$ for each $\bi\in [2]^k$. By Proposition~\ref{prop_crystals_smaller_dimension} we have that $C$ is a $2$-crystal; let $\tilde{S}$ be its $2$-shadow.
Consider the tuple $\balpha$ defined by $\balpha=(1,2)$ if $x_1<x_2$, $\balpha=(2,1)$ if $x_1>x_2$ (notice that $x_1\neq x_2$ as $\X$ is loopless). 
Observe that $\bx_\balpha\in [q]^2_\to$ and $\balpha_\balpha=(1,2)$. We consider the vector ${\bq}\in\cT^{n^2-n}(\Z)$ whose $\ba$-th entry is $E_\ba\ast\Pi^{n\cdot\bone_2}_\balpha\ast \tilde S$ for any $\ba\in\Eset(\K_n)$. Observe that
\begin{align}
\label{eqn_1805_30sep}
\notag
    \tilde{S}
    &\spaceeq
    \Pi^{n\cdot\bone_q}_{\ang{2}}\ast C
    \spaceeq
    \Pi^{n\cdot\bone_q}_{\ang{k}_{\ang{2}}}\ast C
    \lemeq{lem_proj_contr}
    \left(\Pi^{n\cdot\bone_k}_{\ang{2}}\cont{k}\Pi^{n\cdot\bone_q}_{\ang{k}}\right)\ast C
    \lemeq{lem_associativity_contraction}
    \Pi^{n\cdot\bone_k}_{\ang{2}}\ast\left(\Pi^{n\cdot\bone_q}_{\ang{k}}\ast C\right)\\
    &\spaceeq
    \Pi^{n\cdot\bone_k}_{\ang{2}}\ast S,
\end{align}
where the first and fifth equalities come from the fact that $\tilde{S}$ and $S$ are the $2$-shadow and the $k$-shadow of $C$, respectively, while the second equality holds since $\ang{k}_{\ang{2}}=\ang{2}$.
Therefore, for any $a\in [n]$,
\begin{align}
\label{eqn_1448_30sep}
\notag
    E_{(a,a)}\ast\Pi^{n\cdot\bone_2}_\balpha\ast \tilde S
    &\equationeq{eqn_1805_30sep}
    E_{(a,a)}\ast\Pi^{n\cdot\bone_2}_\balpha\ast\left(\Pi^{n\cdot\bone_k}_{\ang{2}}\ast S\right)
    \lemeq{lem_associativity_contraction}
    E_{(a,a)}\ast\left(\Pi^{n\cdot\bone_2}_\balpha\cont{2}\Pi^{n\cdot\bone_k}_{\ang{2}}\right)\ast S\\
    \lemeq{lem_proj_contr}
    &E_{(a,a)}\ast\Pi^{n\cdot\bone_k}_{\ang{2}_\balpha}\ast S
    \spaceeq
    E_{(a,a)}\ast\Pi^{n\cdot\bone_k}_{\balpha}\ast S
    \lemeq{lem_Pi_basic}
    \sum_{\substack{\bb\in [n]^k\\\bb_\balpha=(a,a)}}E_\bb\ast S
    \spaceeq
    0,
\end{align}
where the fourth equality holds since $\ang{2}_\balpha=\balpha$, and the sixth follows from the fact that $S$ is hollow. Hence, we find
\begin{align*}
    \sum_{\ba\in\Eset(\K_n)}E_\ba\ast {\bq}
    &\spaceeq
    \sum_{\ba\in\Eset(\K_n)}E_\ba\ast\Pi^{n\cdot\bone_2}_\balpha\ast \tilde S
    \equationeq{eqn_1448_30sep}
    \sum_{\ba\in [n]^2}E_\ba\ast\Pi^{n\cdot\bone_2}_\balpha\ast \tilde S
    \lemeq{Pi_epsilon_all_one}
    \Pi^{n\cdot\bone_2}_\bepsilon\ast\Pi^{n\cdot\bone_2}_\balpha\ast \tilde S\\
    &\lemeq{lem_proj_contr}
    \Pi^{n\cdot\bone_2}_{\balpha_\bepsilon}\ast \tilde S
    \spaceeq
    \Pi^{n\cdot\bone_2}_\bepsilon\ast \tilde S
    \lemeq{lem_conservation_sum_crystals}
    1,
\end{align*}
whence it follows that ${\bq}\in\Zaff^{(n^2-n)}$.
Given $\bi\in [2]^k$,
we have
\begin{align*}
    \zeta(\bx_\bi)
    &\spaceeq
    \Pi^{n\cdot\bone_q}_{\bx_\bi}\ast C
    \spaceeq
    \Pi^{n\cdot\bone_q}_{\bx_{\balpha_{\balpha_\bi}}}\ast C
    \lemeq{lem_proj_contr}
    \Pi^{n\cdot\bone_2}_\bi\cont{2}\left(\Pi^{n\cdot\bone_2}_\balpha\cont{2}\Pi^{n\cdot\bone_q}_{\bx_\balpha}\right)\ast C\\
    &\lemeq{lem_associativity_contraction}
    \Pi^{n\cdot\bone_2}_\bi\ast\left(\Pi^{n\cdot\bone_2}_\balpha\ast\left(\Pi^{n\cdot\bone_q}_{\bx_\balpha}\ast C\right)\right)
    \spaceeq
    \Pi^{n\cdot\bone_2}_\bi\ast\left(\Pi^{n\cdot\bone_2}_\balpha\ast\tilde S\right).
\end{align*}
It follows that, for any $\ba\in [n]^k$,
\begin{align*}
    E_\ba\ast\zeta(\bx_\bi)
    &\spaceeq
    E_\ba\ast\left(\Pi^{n\cdot\bone_2}_\bi\ast\left(\Pi^{n\cdot\bone_2}_\balpha\ast\tilde S\right)\right)
    \lemeq{lem_associativity_contraction}
    E_\ba\ast \Pi^{n\cdot\bone_2}_\bi\ast\Pi^{n\cdot\bone_2}_\balpha\ast\tilde S
    \lemeq{lem_Pi_basic}
    \sum_{\substack{\bb\in [n]^2\\\bb_\bi=\ba}}E_\bb\ast\Pi^{n\cdot\bone_2}_\balpha\ast\tilde S\\
    &\equationeq{eqn_1448_30sep}
    \sum_{\substack{\bb\in \Eset(\K_n)\\\bb_\bi=\ba}}E_\bb\ast\Pi^{n\cdot\bone_2}_\balpha\ast\tilde S
    \spaceeq
    \sum_{\substack{\bb\in \Eset(\K_n)\\\bb_\bi=\ba}}E_\bb\ast {\bq}
    \lemeq{lem_calculation_rule_P}
    E_\ba\ast P_{\pi_\bi}\ast {\bq}
    \spaceeq
    E_\ba\ast {\bq}_{/\pi_\bi},
\end{align*}
which concludes the proof that $\zeta(\bx_\bi)={\bq}_{/\pi_\bi}$. Hence, $\zeta$ is a homomorphism.

To check that $\zeta$ is $k$-tensorial, simply notice that, for any $\bx\in \Vset(\X)^k$ and $\bi\in [k]^k$,
\begin{align}
\label{eqn_1933_30sep}
    \zeta(\bx_\bi)
    &\spaceeq
    \Pi^{n\cdot\bone_q}_{\bx_\bi}\ast C
    \lemeq{lem_proj_contr}
    \left(\Pi^{n\cdot\bone_k}_\bi\cont{k}\Pi^{n\cdot\bone_q}_\bx\right)\ast C
    \lemeq{lem_associativity_contraction}    
    \Pi^{n\cdot\bone_k}_\bi\ast\left(\Pi^{n\cdot\bone_q}_\bx\ast C\right)
    \spaceeq
    \Pi^{n\cdot\bone_k}_\bi\ast\zeta(\bx).
\end{align}

Take now $\bx\in \Vset(\X)^k$ and $\ba\in [n]^k$, and suppose that $\ba\not\prec\bx$. If we manage to show that $E_\ba\ast\zeta(\bx)=0$, we may apply Theorem~\ref{thm_decoupling_BLPAIP_for_cliques} and conclude that $\BA^k(\X,\K_n)=\YES$, as desired. 
Choose $u,v\in [k]$ for which $a_u=a_v$ and $x_u\neq x_v$. Using that $q> k$, we find $\by\in [q]^k_\to$ and $\bi\in [k]^k$ for which $\bx=\by_\bi$. We obtain
\begin{align}
\label{eqn_1948_30sep}
\notag
    E_\ba\ast\zeta(\bx)
    &\spaceeq
    E_\ba\ast\zeta(\by_\bi)
    \equationeq{eqn_1933_30sep}
    E_\ba\ast\Pi^{n\cdot\bone_k}_\bi\ast\zeta(\by)
    \spaceeq
    E_\ba\ast\Pi^{n\cdot\bone_k}_\bi\ast\left(\Pi^{n\cdot\bone_q}_\by\ast C\right)\\
    &\spaceeq
    E_\ba\ast\Pi^{n\cdot\bone_k}_\bi\ast S
    \lemeq{lem_Pi_basic}
    \sum_{\substack{\bb\in [n]^k\\\bb_\bi=\ba}}E_\bb\ast S.
\end{align}
Suppose that $\bb\in [n]^k$ satisfies $\bb_\bi=\ba$. Then, $b_{i_u}=a_u=a_v=b_{i_v}$. On the other hand, $y_{i_u}=x_u\neq x_v=y_{i_v}$, which implies that $i_u\neq i_v$. As a consequence, $|\bb|<k$. Since $S$ is hollow, we deduce that $\bb\not\in\supp(S)$. Hence, it follows from~\eqref{eqn_1948_30sep} that $E_\ba\ast\zeta(\bx)=0$, as wanted.
\end{proof}

Our next goal is to prove Proposition~\ref{prop_reduction_line_digraph_BA}, which states that $\BA^k$ acceptance is preserved under the line digraph operator introduced in Section~\ref{subsec_fooling_BA}, at the cost of halving the level.
In fact, we shall prove that result in the more general setting of arbitrary conic minions, as stated in Proposition~\ref{prop_reduction_line_digraph}. We need a property of conic minions from~\cite[Proposition~38]{cz23soda:minions}, formally stated below in Proposition~\ref{prop_partial_homo}: Each relaxation hierarchy built on this type of minions only gives a nonzero
weight to those assignments that yield partial homomorphisms. In other words, each such hierarchy enforces consistency. 
\begin{prop}[\cite{cz23soda:minions}]
\label{prop_partial_homo}
Let $\Mminion$ be a conic minion of depth $d$, let $2\leq k\in\N$, let $\X,\A$ be digraphs, and let $\xi:\X^{\tensor{k}}\to\mathbb{F}_{\Mminion}(\A^{\tensor{k}})$ be a $k$-tensorial homomorphism. Take $\bx\in \Vset(\X)^k$, $\ba\in \Vset(\A)^k$, and $\bi\in [k]^2$. If $\bx_\bi\in \Eset(\X)$ and $\ba_\bi\not\in \Eset(\A)$, then $E_\ba\ast\xi(\bx)=\bzero_d$.
\end{prop}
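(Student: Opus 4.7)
The plan is to exploit the edge $\bx_\bi$ of $\X$ by embedding it into a $k$-dimensional hyperedge of $\X^\tensor{k}$, then combining the homomorphism property of $\xi$ with its $k$-tensoriality to pin down certain projections of $\xi(\bx)$, and finally extracting $E_\ba\ast\xi(\bx)$ via the conic property of $\Mminion$.

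First, set $\by=\bx_\bi\in\Eset(\X)$, so that $\by^\tensor{k}\in\Eset(\X^\tensor{k})$. Since $\xi(\by^\tensor{k})\in\Eset(\freeM(\A^\tensor{k}))$, Definition~\ref{defn_minion_free_structure} yields some matrix $Q\in\Mminion^{(m)}$ (with $m=|\Eset(\A)|$) satisfying $\xi(\by_\bbm)=Q_{/\pi_\bbm}$ for every $\bbm\in[2]^k$. Because $\by_\bbm=\bx_{\bi_\bbm}$ and $\xi$ is $k$-tensorial, this rewrites as
\begin{align*}
\Pi^{n\cdot\bone_k}_{\bi_\bbm}\ast\xi(\bx)\spaceeq Q_{/\pi_\bbm} \qquad\forall\bbm\in[2]^k.
\end{align*}

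Now choose any $\bbm\in[2]^k$ with $\set(\bbm)=\{1,2\}$ (such $\bbm$ exists since $k\geq 2$) and read off the $\ba_{\bi_\bbm}$-th row of both sides. By Lemma~\ref{lem_Pi_basic}, the left-hand side becomes
\begin{align*}
\sum_{\substack{\bd\in[n]^k\\ \bd_{\bi_\bbm}=\ba_{\bi_\bbm}}}E_\bd\ast\xi(\bx),
\end{align*}
while by Lemma~\ref{lem_calculation_rule_P} the right-hand side becomes $\sum_{\bb\in\Eset(\A):\,\bb_\bbm=\ba_{\bi_\bbm}}E_\bb\ast Q$. The crucial observation is that, since $\set(\bbm)=\{1,2\}$, any $\bb\in\Eset(\A)$ with $\bb_\bbm=\ba_{\bi_\bbm}$ must satisfy $b_1=a_{i_1}$ and $b_2=a_{i_2}$; i.e., $\bb=\ba_\bi$. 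But $\ba_\bi\notin\Eset(\A)$ by hypothesis, so the index set on the right is empty and the right-hand side vanishes.

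Hence $\sum_{\bd\in[n]^k:\,\bd_{\bi_\bbm}=\ba_{\bi_\bbm}}E_\bd\ast\xi(\bx)=\bzero_d$, and $\ba$ itself belongs to the index set. Since $\xi(\bx)\in\Mminion^{(n^k)}$, the conic property (Definition~\ref{defn_conic_minion}) forces every single row $E_\bd\ast\xi(\bx)$ in this set to equal $\bzero_d$, and in particular $E_\ba\ast\xi(\bx)=\bzero_d$, as required. The only conceptually non-routine point is the choice of $\bbm$: projecting via $\bi_\bbm\in[k]^k$ rather than via $\bi\in[k]^2$ (which is too short to feed into $\xi$) is exactly what makes the would-be edge $\ba_\bi$ uniquely recoverable from the projection; the conic property then does the rest, and no further technical obstacle is expected.
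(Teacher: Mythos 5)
The proposition is cited from~\cite{cz23soda:minions} without proof in this paper, so there is no in-text argument to compare against. Your proof is correct and is the natural argument given the machinery available here: the homomorphism property of $\xi$ applied to the hyperedge $(\bx_\bi)^\tensor{k}$ produces $Q$; $k$-tensoriality together with Lemmas~\ref{lem_Pi_basic} and~\ref{lem_calculation_rule_P} rewrites the $\ba_{\bi_\bbm}$-th row of $\Pi^{n\cdot\bone_k}_{\bi_\bbm}\ast\xi(\bx)=Q_{/\pi_\bbm}$ in two ways, and choosing $\bbm$ with $\set(\bbm)=\{1,2\}$ makes the solution set of $\bb_\bbm=\ba_{\bi_\bbm}$ over $\Eset(\A)$ be precisely $\{\ba_\bi\}\cap\Eset(\A)=\emptyset$, so the slice-sum over $\{\bd\in[n]^k:\bd_{\bi_\bbm}=\ba_{\bi_\bbm}\}$ vanishes; the conic property of $\Mminion$ then forces every row in that slice --- and $\ba$ is in it --- to vanish.
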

\begin{prop}
\label{prop_reduction_line_digraph}
Let $\Mminion$ be a conic minion, let $2\leq k\in\N$, let $\X,\A$ be 
digraphs, and suppose that there exists a $(2k)$-tensorial homomorphism from $\X^\tensor{2k}$ to $\freeM(\A^\tensor{2k})$.
Then there exists a $k$-tensorial homomorphism from $(\delta\X)^\tensor{k}$ to $\freeM((\delta\A)^\tensor{k})$.
\end{prop}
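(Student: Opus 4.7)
The plan is to construct $\xi'$ via a \emph{flattening} procedure applied to $\xi$. Write $n = |\Vset(\A)|$ and $m = |\Eset(\A)| = |\Vset(\delta\A)|$, and consider the injection $\phi: \Eset(\X)^k \to \Vset(\X)^{2k}$ sending $((x_1, y_1), \ldots, (x_k, y_k))$ to $(x_1, y_1, \ldots, x_k, y_k)$; let $\phi$ also denote the analogous injection $\Eset(\A)^k \hookrightarrow \Vset(\A)^{2k}$. For $\bw \in \Eset(\X)^k = \Vset((\delta\X)^\tensor{k})$, I will set $\xi'(\bw) := P_\sigma\, \xi(\phi(\bw))$, where $\sigma: [n^{2k}] \to [m^k]$ agrees with $\phi^{-1}$ on $\phi(\Eset(\A)^k)$ and is arbitrary elsewhere. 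The closure property of linear minions yields $\xi'(\bw) \in \Mminion^{(m^k)}$, while Proposition~\ref{prop_partial_homo} applied at each $\bi = (2j-1, 2j)$ for $j \in [k]$ forces $\xi(\phi(\bw))$ to vanish outside $\phi(\Eset(\A)^k)$ (since $\phi(\bw)$ carries an edge of $\X$ at every such pair of positions), so the values of $\sigma$ off the image of $\phi$ do not affect the product.

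To verify that $\xi'$ preserves hyperedges, fix $\bc = ((x, y), (y, z)) \in \Eset(\delta\X)$, so that $\bc^\tensor{k}$ is a hyperedge of $(\delta\X)^\tensor{k}$, and choose any witness $\bv \in \Vset(\X)^{2k}$ with $v_1 = x$, $v_2 = y$, $v_3 = z$ (remaining entries arbitrary). The identity $\phi(\bc_\bi) = \bv_{\balpha(\bi)}$ holds for the tuple $\balpha(\bi) \in [2k]^{2k}$ whose $(2\ell-1)$-th entry is $i_\ell$ and whose $(2\ell)$-th entry is $i_\ell + 1$; the $2k$-tensoriality of $\xi$ then expresses each $\xi(\phi(\bc_\bi))$ as $\Pi^{n \cdot \bone_{2k}}_{\balpha(\bi)} \cont{2k} \xi(\bv)$. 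I will then set $Q' := P_{\tilde\pi}\, \xi(\bv) \in \Mminion^{(|\Eset(\delta\A)|)}$, where $\tilde\pi: \Vset(\A)^{2k} \to \Eset(\delta\A)$ sends $\ba$ to $((a_1, a_2), (a_2, a_3))$ whenever both pairs are edges of $\A$ and is arbitrary otherwise --- well-defined because $\Eset(\delta\A) \neq \emptyset$. Since $\bv$ carries edges at positions $(1, 2)$ and $(2, 3)$, Proposition~\ref{prop_partial_homo} confines $\xi(\bv)$ to indices encoding length-two walks in $\A$, so the arbitrary part of $\tilde\pi$ is again immaterial.

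The main obstacle is to verify the identity $\xi'(\bc_\bi) = Q'_{/\pi'_\bi}$ for each $\bi \in [2]^k$, where $\pi'_\bi: \Eset(\delta\A) \to \Eset(\A)^k$ sends $\bd \mapsto \bd_\bi$. I will expand both sides entrywise at $\be \in \Eset(\A)^k$: the left-hand side equals the sum of $E_\ba \ast \xi(\bv)$ over tuples $\ba \in \Vset(\A)^{2k}$ with $\ba_{\balpha(\bi)} = \phi(\be)$, which --- after tracking the structure of $\balpha(\bi)$ --- pins down $(a_1, a_2, a_3)$ to a length-two walk in $\A$ whose $\bi$-projection equals $\be$, while leaving $a_4, \ldots, a_{2k}$ free; the right-hand side is the analogous sum, with the walk constraint coming directly from the definition of $\tilde\pi$, and the two formulas match after a final invocation of Proposition~\ref{prop_partial_homo} to discard those $\ba$'s with $(a_2, a_3) \notin \Eset(\A)$. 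This entrywise identification is the technically delicate step, as it requires simultaneously tracking the projection pattern $\balpha(\bi)$, the flattening bijection $\phi$, and the vanishing coming from conicity of $\Mminion$.

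Finally, $k$-tensoriality of $\xi'$ follows by a parallel computation: the identity $\phi(\bw_\bi) = \phi(\bw)_{\tau(\bi)}$ with $\tau(\bi)_{2\ell - 1} = 2 i_\ell - 1$ and $\tau(\bi)_{2\ell} = 2 i_\ell$, combined with the $2k$-tensoriality of $\xi$ and one more application of Proposition~\ref{prop_partial_homo} to restrict the supports to $\phi(\Eset(\A)^k)$, yields $\xi'(\bw_\bi) = \Pi^{m \cdot \bone_k}_\bi \cont{k} \xi'(\bw)$, as required.
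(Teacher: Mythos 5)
Your proposal is correct and follows essentially the same route as the paper's proof: both flatten edges of $\delta\X$ into vertex tuples of $\X$, define the image map by post-composing $\xi$ with a flattening map from $\Vset(\A)^{2k}$ to $\Eset(\A)^k$, invoke conicity (via Proposition~\ref{prop_partial_homo}) to confine the relevant tensors to ``walk-encoding'' supports, and verify the homomorphism and tensoriality conditions through entrywise expansions with the same index-doubling tuples. The only material difference is cosmetic: the paper extends $\phi^{-1}$ to all of $\Vset(\A)^{2k}$ via a \emph{canonical} fallback (the map $\alpha$ defaulting to a fixed edge $\mathbf{e}$), which makes the commutation identity $P_\beta \cont{2k} \Pi^{n\cdot\bone_{2k}}_{\bj} = \Pi^{m\cdot\bone_k}_\bi \cont{k} P_\beta$ hold as an abstract tensor equality; your $\sigma$ is arbitrary off $\phi(\Eset(\A)^k)$, so the corresponding identity only holds after contracting against $\xi(\phi(\bw))$, forcing one extra invocation of Proposition~\ref{prop_partial_homo} in the $k$-tensoriality step. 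You correctly flag this, so the argument goes through.
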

\begin{proof}
As usual, we let $n=|\Vset(\A)|$; moreover, we let $m=|\Eset(\A)|$.
Take a $(2k)$-tensorial homomorphism $\xi:\X^\tensor{2k}\to\freeM(\A^\tensor{2k})$, 
whose existence is guaranteed by the hypothesis. 

Suppose first that $\Eset(\delta \A)=\emptyset$. We claim that, in this case, $\Eset(\delta\X)=\emptyset$. Otherwise, take some element $((x,y),(y,z))\in\Eset(\delta\X)$, and consider a tuple $\bw\in\Vset(\X)^{2k}$ satisfying $\bw_{\ang{3}}=(x,y,z)$ (where we have used that $k\geq 2$). Since the minion $\Mminion$ is conic, $\xi(\bw)$ is not all zero. Hence, there exists some $\ba\in\Vset(\A)^{2k}$ such that $E_\ba\ast\xi(\bw)\neq\bzero_d$, where $d$ is the depth of $\Mminion$. Applying Proposition~\ref{prop_partial_homo} to the cases $\bi=(1,2)$ and $\bi=(2,3)$, we deduce that $\ba_{(1,2)}\in\Eset(\A)$ and $\ba_{(2,3)}\in\Eset(\A)$. But this means that $((a_1,a_2),(a_2,a_3))\in\Eset(\delta\A)$, a contradiction. Hence, as claimed, $\Eset(\delta\X)=\emptyset$. As a consequence, any map from $\Eset(\X)$ to $\Eset(\A)$ yields a homomorphism $\delta\X\to\delta\A$. Thus, it follows from the completeness of minion tests (see~\cite[Proposition~13]{cz23soda:minions}) that a $k$-tensorial homomorphism from $(\delta\X)^\tensor{k}$ to $\freeM((\delta\A)^\tensor{k})$ exists for any $k\in\N$.

Suppose now that $\Eset(\delta\A)\neq\emptyset$.
Fix $\textbf{t}=(\textbf{e},\textbf{f})\in\Eset(\delta\A)$, where $\textbf{e},\textbf{f}\in \Eset(\A)$, and consider the maps 
\begin{align*}
    \alpha:\Vset(\A)^2&\to \Eset(\A),
    &&&
    \beta:\Vset(\A)^{2k}&\to \Eset(\A)^k
    \\
    (a,b)&\mapsto \left\{
    \begin{array}{cl}
         (a,b)&\mbox{ if }(a,b)\in \Eset(\A)  \\
         \textbf{e}&\mbox{ otherwise} 
    \end{array}\right.
    &&&
    \ba&\mapsto (\alpha(\ba_{(1,2)}),\alpha(\ba_{(3,4)}),\dots,\alpha(\ba_{(2k-1,2k)})).
\end{align*}
Consider also the map $\gamma:\Eset(\X)^k\to \Vset(\X)^{2k}$ sending a tuple $(\bx^{(1)},\bx^{(2)},\dots,\bx^{(k)})$ of edges of $\X$ to the tuple $(x^{(1)}_1,x^{(1)}_2,x^{(2)}_1,x^{(2)}_2,\dots,x^{(k)}_1,x^{(k)}_2)$ of vertices of $\X$, where $\bx^{(i)}=(x^{(i)}_1,x^{(i)}_2)$ for each $i\in [k]$.
We define the map $\vartheta:\Eset(\X)^k\to \Mminion^{(m^k)}$ by setting $\bx\mapsto\xi(\gamma(\bx))_{/\beta}$ for each $\bx\in \Eset(\X)^k$. 
(Observe that the definition of $\vartheta$ is independent of the choice of $\be\in\Eset(\A)$, because of Proposition~\ref{prop_partial_homo}.)
The result would follow if we prove that $\vartheta$ yields a $k$-tensorial homomorphism from $(\delta\X)^\tensor{k}$ to $\freeM((\delta\A)^\tensor{k})$.
Observe first that $\Vset((\delta\X)^\tensor{k})=\Vset(\delta\X)^k=\Eset(\X)^k$ and $\Vset(\freeM((\delta\A)^\tensor{k}))=\Mminion^{(|\Vset((\delta\A)^\tensor{k})|)}=\Mminion^{(|\Vset(\delta\A)^k|)}=\Mminion^{(|\Eset(\A)^k|)}=\Mminion^{(m^k)}$, so the domain and codomain of $\vartheta$ are correct.
Take $\bv=((x,y),(y,z))\in \Eset(\delta\X)$ (so both $(x,y)$ and $(y,z)$ belong to $\Eset(\X)$) and consider the tensor $\bv^\tensor{k}\in \Eset((\delta\X)^\tensor{k})$. To conclude that $\vartheta$ is a homomorphism, we need to show that $\vartheta(\bv^\tensor{k})\in\Eset(\freeM((\delta\A)^\tensor{k}))$; i.e., we need to find some $Q\in\Mminion^{(|\Eset(\delta\A)|)}$ such that $\vartheta(\bv_\bi)=Q_{/\pi_\bi}$ for each $\bi\in [2]^k$, where $\pi_\bi:\Eset(\delta\A)\to\Vset(\delta\A)^k=\Eset(\A)^k$ is the map sending $\bd\in\Eset(\delta\A)$ to $\bd_\bi$. Using that $k\geq 2$, we can consider a tuple $\bw\in \Vset(\X)^{2k}$ satisfying $\bw_{\ang{3}}=(x,y,z)$.
Consider the set $S=\{\ba\in \Vset(\A)^{2k}:\ba_{(\ell,\ell+1)}\in\Eset(\A) \mbox{ for }\ell\in [2]\}$. It follows directly from Proposition~\ref{prop_partial_homo} that 
\begin{align}
\label{eqn_halloween_22_32}
\{\ba\in\Vset(\A)^{2k}:E_\ba\ast\xi(\bw)\neq\bzero_d\}\;\;\subseteq\;\; S.
\end{align}
Take the function
\begin{align*}
    \tau:\Vset(\A)^{2k}&\to \Eset(\delta\A)\\
    \ba&\mapsto\left\{
    \begin{array}{cl}
        (\ba_{(1,2)},\ba_{(2,3)}) &  \mbox{ if }\ba\in S\\
         \textbf{t} & \mbox{ otherwise.} 
    \end{array}
    \right.
\end{align*}
We define $Q=\xi(\bw)_{/\tau}$. (Note that $Q$ does not depend on the choice of $\textbf{t}$, because of Proposition~\ref{prop_partial_homo}.) Let $\bi\in [2]^k$; we need to show that $\vartheta(\bv_\bi)=Q_{/\pi_\bi}$. Consider the tuple $\bj\in [3]^{2k}$ defined by $j_{2\ell-1}=i_\ell$, $j_{2\ell}=i_\ell+1$ for each $\ell\in [k]$, and notice that $\gamma(\bv_\bi)=\bw_\bj$. It follows that
\begin{align}
\label{eqn_26sep_1151_a}
\notag
    \vartheta(\bv_\bi)
    &\spaceeq
    \xi(\gamma(\bv_\bi))_{/\beta}
    \spaceeq
    \xi(\bw_\bj)_{/\beta}
    \spaceeq
    P_\beta\cont{2k}\xi(\bw_\bj)
    \spaceeq
    P_\beta\cont{2k}\left(\Pi^{n\cdot\bone_{2k}}_\bj\cont{2k}\xi(\bw)\right)\\
    &\lemeq{lem_associativity_contraction}
    P_\beta\cont{2k}\Pi^{n\cdot \bone_{2k}}_{\bj}\cont{2k} \xi(\bw),
    \intertext{where the fourth equality follows from the fact that $\xi$ is $(2k)$-tensorial,
    while}
\label{eqn_26sep_1151_b}
    Q_{/\pi_\bi}
    &\spaceeq
    (\xi(\bw)_{/\tau})_{/\pi_\bi}
    \equationeq{eqn_composition_minors}
    \xi(\bw)_{/\pi_\bi\circ\tau}
    \spaceeq
    P_{\pi_\bi\circ\tau}\cont{2k}\xi(\bw).
\end{align}
Consider the function $\rho:\Vset(\A)^{2k}\to\Vset(\A)^{2k}$ defined by $\bc\mapsto\bc_\bj$ for each $\bc\in\Vset(\A)^{2k}$.
Observe that the functions $\beta\circ\rho$ and $\pi_\bi\circ\tau$ coincide on the set $S\subseteq \Vset(\A)^{2k}$. Indeed, for any $\bc\in S$,
\begin{align}
\label{eqn_1237_26sep}
\notag
    \beta\circ\rho(\bc)
    &\spaceeq
    \beta(\bc_\bj)
    \spaceeq
    \beta((c_{i_1},c_{i_1+1},c_{i_2},c_{i_2+1},\dots,c_{i_k},c_{i_k+1}))\\
\notag
    &\spaceeq
    ((c_{i_1},c_{i_1+1}),(c_{i_2},c_{i_2+1}),\dots,(c_{i_k},c_{i_k+1}))\\
    &\spaceeq
    (\bc_{(1,2)},\bc_{(2,3)})_\bi
    \spaceeq
    (\tau(\bc))_\bi
    \spaceeq
    \pi_\bi\circ\tau(\bc).
\end{align}
For $\ba\in \Eset(\A)^k$, we find
{
\allowdisplaybreaks
\begin{align*}
    E_\ba\ast\vartheta(\bv_\bi)
    &\equationeq{eqn_26sep_1151_a}
    E_\ba\ast\left(P_\beta\cont{2k}\Pi^{n\cdot \bone_{2k}}_{\bj}\cont{2k} \xi(\bw)\right)
    \lemeq{lem_associativity_contraction}
    E_\ba\ast P_\beta\ast\Pi^{n\cdot \bone_{2k}}_{\bj}\ast \xi(\bw)\\
    &\lemeq{lem_calculation_rule_P}
    \sum_{\bb\in \beta^{-1}(\ba)}E_\bb\ast\Pi^{n\cdot \bone_{2k}}_{\bj}\ast \xi(\bw)
    \lemeq{lem_Pi_basic}
    \sum_{\bb\in \beta^{-1}(\ba)}\sum_{\substack{\bc\in \Vset(\A)^{2k}\\\bc_\bj=\bb}}E_\bc\ast \xi(\bw)
    \spaceeq
    \sum_{\substack{\bc\in \Vset(\A)^{2k}\\\beta(\bc_\bj)=\ba}}E_\bc\ast \xi(\bw)\\
    &\equationeq{eqn_halloween_22_32}
    \sum_{\substack{\bc\in S\\\beta(\bc_\bj)=\ba}}E_\bc\ast \xi(\bw)
    \spaceeq
    \sum_{\substack{\bc\in S\\\beta\circ\rho(\bc)=\ba}}E_\bc\ast \xi(\bw)
    \equationeq{eqn_1237_26sep}
    \sum_{\substack{\bc\in S\\\pi_\bi\circ\tau(\bc)=\ba}}E_\bc\ast \xi(\bw)\\
    &\equationeq{eqn_halloween_22_32}
    \sum_{\substack{\bc\in \Vset(\A)^{2k}\\\pi_\bi\circ\tau(\bc)=\ba}}E_\bc\ast \xi(\bw)
    \lemeq{lem_calculation_rule_P}
    E_\ba\ast P_{\pi_\bi\circ\tau}\ast\xi(\bw)
    \lemeq{lem_associativity_contraction}
    E_\ba\ast \left(P_{\pi_\bi\circ\tau}\cont{2k}\xi(\bw)\right)
    \equationeq{eqn_26sep_1151_b}
    E_\ba\ast Q_{/\pi_\bi},
\end{align*}
}
which concludes the proof that $\vartheta(\bv_\bi)=Q_{/\pi_\bi}$, thus establishing that $\vartheta$ is a homomorphism.

We are left to prove that $\vartheta$ is $k$-tensorial. To that end, consider some tuples $\bx\in\Eset(\X)^k$ and $\bi\in [k]^k$. We need to show that $\vartheta(\bx_\bi)=\Pi^{m\cdot\bone_k}_\bi\cont{k}\vartheta(\bx)$. Consider the tuple $\bj\in [2k]^{2k}$ defined by $j_{2\ell-1}=2 i_\ell-1$, $j_{2\ell}=2 i_\ell$ for each $\ell\in[k]$, and observe that $\gamma(\bx_\bi)=(\gamma(\bx))_\bj$. Therefore, 
\begin{align*}
    \vartheta(\bx_\bi)
    &\spaceeq
    \xi(\gamma(\bx_\bi))_{/\beta}
    \spaceeq
    \xi((\gamma(\bx))_{\bj})_{/\beta}
    \spaceeq
    P_\beta\cont{2k}\xi((\gamma(\bx))_{\bj})
    \spaceeq
    P_\beta\cont{2k}\left(\Pi^{n\cdot\bone_{2k}}_\bj\cont{2k} \xi(\gamma(\bx))\right)\\
    &\lemeq{lem_associativity_contraction}
    P_\beta\cont{2k}\Pi^{n\cdot\bone_{2k}}_\bj\cont{2k} \xi(\gamma(\bx)),
\end{align*}
where the fourth equality follows from the fact that $\xi$ is $(2k)$-tensorial. Moreover,
\begin{align*}
    \Pi^{m\cdot\bone_k}_\bi\cont{k}\vartheta(\bx)
    &\spaceeq
    \Pi^{m\cdot\bone_k}_\bi\cont{k}\xi(\gamma(\bx))_{/\beta}
    \spaceeq
     \Pi^{m\cdot\bone_k}_\bi\cont{k}\left(P_\beta\cont{2k}\xi(\gamma(\bx))\right)
     \lemeq{lem_associativity_contraction}
     \Pi^{m\cdot\bone_k}_\bi\cont{k}P_\beta\cont{2k}\xi(\gamma(\bx)).
\end{align*}
The claim would then follow if we show that the two tensors $P_\beta\cont{2k}\Pi^{n\cdot\bone_{2k}}_\bj$ and $\Pi^{m\cdot\bone_k}_\bi\cont{k}P_\beta$ coincide. To that end, observe first that the identity $\beta(\bc_\bj)=(\beta(\bc))_\bi$ holds for any $\bc\in \Vset(\A)^{2k}$. Hence, for each $\ba\in\Eset(\A)^k$, we have
\begin{align*}
    E_\ba\ast\left(P_\beta\cont{2k}\Pi^{n\cdot\bone_{2k}}_\bj\right)
    &\lemeq{lem_associativity_contraction}
    E_\ba\ast P_\beta\ast\Pi^{n\cdot\bone_{2k}}_\bj
    \lemeq{lem_calculation_rule_P}
    \sum_{\bb\in\beta^{-1}(\ba)}E_\bb\ast \Pi^{n\cdot\bone_{2k}}_\bj
    \lemeq{lem_Pi_basic}
    \sum_{\bb\in\beta^{-1}(\ba)}
    \sum_{\substack{\bc\in \Vset(\A)^{2k}\\\bc_\bj=\bb}}E_\bc\\
    &\spaceeq
    \sum_{\substack{\bc\in \Vset(\A)^{2k}\\\beta(\bc_\bj)=\ba}}E_\bc
    \spaceeq
    \sum_{\substack{\bc\in \Vset(\A)^{2k}\\(\beta(\bc))_\bi=\ba}}E_\bc
    \spaceeq
    \sum_{\substack{\bb\in\Eset(\A)^k\\\bb_\bi=\ba}}\sum_{\bc\in\beta^{-1}(\bb)}E_\bc\\
    &\lemeq{lem_calculation_rule_P}
    \sum_{\substack{\bb\in\Eset(\A)^k\\\bb_\bi=\ba}}E_\bb\ast P_\beta
    \lemeq{lem_Pi_basic}
    E_\ba\ast \Pi^{m\cdot\bone_k}_\bi\ast P_\beta
    \lemeq{lem_associativity_contraction}
    E_\ba\ast \left(\Pi^{m\cdot\bone_k}_\bi\cont{k} P_\beta\right).
\end{align*}
It follows that $P_\beta\cont{2k}\Pi^{n\cdot\bone_{2k}}_\bj=\Pi^{m\cdot\bone_k}_\bi\cont{k}P_\beta$,
as desired.
\end{proof}

\begin{prop*}[Proposition~\ref{prop_reduction_line_digraph_BA} restated]
Let $2\leq k\in\N$, let $\X,\A$ be digraphs, and suppose that $\BA^{2k}(\X,\A)=\YES$.
Then $\BA^{k}(\delta\X,\delta\A)=\YES$.
\end{prop*}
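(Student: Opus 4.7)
The plan is to derive Proposition~\ref{prop_reduction_line_digraph_BA} as a direct specialisation of the more general Proposition~\ref{prop_reduction_line_digraph}, which is formulated for arbitrary conic minions rather than specifically for $\BAminion$. The first step is to translate the hypothesis $\BA^{2k}(\X,\A)=\YES$ into the language of free structures and tensorial homomorphisms: by Theorem~\ref{thm_acceptance_BA_hierarchy_general}, this acceptance is equivalent to the existence of a $(2k)$-tensorial homomorphism from $\X^\tensor{2k}$ to $\freeBA(\A^\tensor{2k})$.

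The second step is to verify that Proposition~\ref{prop_reduction_line_digraph} can actually be applied to the minion $\BAminion$. This requires checking that $\BAminion$ is conic, which is precisely the content of Example~\ref{Qconv_Zaff_conic_not}. Combined with the assumption $\Eset(\delta\A)\neq\emptyset$, Proposition~\ref{prop_reduction_line_digraph} then yields a $k$-tensorial homomorphism from $(\delta\X)^\tensor{k}$ to $\freeBA((\delta\A)^\tensor{k})$.

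The final step is to translate back from the multilinear language to the language of the $\BA$ hierarchy. Invoking once again Theorem~\ref{thm_acceptance_BA_hierarchy_general} (now with the digraphs $\delta\X$ and $\delta\A$ in place of $\X$ and $\A$, and with level $k$ in place of $2k$), we deduce from the existence of this $k$-tensorial homomorphism that $\BA^{k}(\delta\X,\delta\A)=\YES$, as desired. No new technical obstacle arises: all the work has been done in the proof of Proposition~\ref{prop_reduction_line_digraph}, which carefully constructs the required homomorphism by composing $\xi$ with a suitable map $\gamma:\Eset(\X)^k\to\Vset(\X)^{2k}$ that unfolds an edge of the line digraph into two consecutive vertices, and then applying the minor operation corresponding to the function $\beta$ that collapses pairs of vertices back into edges. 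The coniceness of $\BAminion$ enters only through Proposition~\ref{prop_partial_homo}, ensuring that entries indexed by non-edges vanish and thus allowing $\beta$ to be arbitrarily extended off $\Eset(\A)$ without affecting the result.
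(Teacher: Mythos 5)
Your proposal is correct and follows exactly the same route as the paper: invoke Theorem~\ref{thm_acceptance_BA_hierarchy_general} to pass from $\BA^{2k}(\X,\A)=\YES$ to a $(2k)$-tensorial homomorphism, apply the general Proposition~\ref{prop_reduction_line_digraph} (valid since $\BAminion$ is conic, cf.~Example~\ref{Qconv_Zaff_conic_not}), and translate back with Theorem~\ref{thm_acceptance_BA_hierarchy_general}. The paper states this in one sentence; you have merely spelled out the same chain of reductions in more detail.
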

\begin{proof}
The result immediately follows from Proposition~\ref{prop_reduction_line_digraph} and Theorem~\ref{thm_acceptance_BA_hierarchy_general} and from the fact that $\BAminion$ is a conic minion (cf.~Example~\ref{Qconv_Zaff_conic_not}). 
\end{proof}

We next show that acceptance of hierarchies of relaxations built on linear minions is preserved under homomorphisms of the template. Proposition~\ref{prop_BA_preserves_homo}---the last missing piece in the proof of Theorem~\ref{thm_BLPAIPk_no_solves_AGC}---will then follow as a corollary.

\begin{prop}
\label{prop_acceptance_preserved_under_homo_linear_minions}
Let $\Mminion$ be a linear minion, let $k\in\N$, let $\X,\A,\B$ be digraphs such that $\A\to\B$, and suppose that there exists a $k$-tensorial homomorphism $\Xk\to\freeM(\Ak)$. Then there exists a $k$-tensorial homomorphism $\Xk\to\freeM(\Bk)$.
\end{prop}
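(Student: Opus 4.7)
The plan is to construct from the digraph homomorphism $f:\A\to\B$ an induced map between the free structures, verify it is itself a hypergraph homomorphism, and then show that post-composing the given $k$-tensorial homomorphism $\xi:\Xk\to\freeM(\Ak)$ with it yields a $k$-tensorial homomorphism into $\freeM(\Bk)$. Set $n=|\Vset(\A)|$, $n'=|\Vset(\B)|$, and let $f_*:\Vset(\A)^k\to\Vset(\B)^k$ denote the componentwise application of $f$. Since $\Mminion$ is a linear minion, for each $M\in\Mminion^{(n^k)}$ the minor $M_{/f_*}=P_{f_*}\cont{k}M$ belongs to $\Mminion^{((n')^k)}$, so we may define a map $F:\Vset(\freeM(\Ak))\to\Vset(\freeM(\Bk))$ by $F(M)=M_{/f_*}$.

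To check that $F$ preserves hyperedges, suppose $(M^{(\bi)})_{\bi\in[2]^k}$ is a hyperedge of $\freeM(\Ak)$ witnessed by some $Q\in\Mminion^{(|\Eset(\A)|)}$, so $M^{(\bi)}=Q_{/\pi_\bi}$ where $\pi_\bi:\Eset(\A)\to\Vset(\A)^k$ sends $\ba\mapsto\ba_\bi$. Because $f$ is a digraph homomorphism, it induces $f_\Eset:\Eset(\A)\to\Eset(\B)$, whence $Q_{/f_\Eset}\in\Mminion^{(|\Eset(\B)|)}$. The pointwise identity $f_*(\ba_\bi)=(f_\Eset(\ba))_\bi$ translates as $f_*\circ\pi_\bi=\pi'_\bi\circ f_\Eset$, where $\pi'_\bi$ is the analogous projection for $\B$. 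Combining this with~\eqref{eqn_composition_minors} yields
\begin{align*}
F(M^{(\bi)})\spaceeq (Q_{/\pi_\bi})_{/f_*}\spaceeq Q_{/f_*\circ\pi_\bi}\spaceeq Q_{/\pi'_\bi\circ f_\Eset}\spaceeq (Q_{/f_\Eset})_{/\pi'_\bi}
\end{align*}
for each $\bi\in[2]^k$, so $(F(M^{(\bi)}))_{\bi\in[2]^k}$ is a hyperedge of $\freeM(\Bk)$ witnessed by $Q_{/f_\Eset}$. Hence $F\circ\xi$ is a homomorphism from $\Xk$ to $\freeM(\Bk)$.

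For the $k$-tensoriality of $F\circ\xi$, the decisive step will be the commutation identity
\begin{align*}
P_{f_*}\cont{k}\Pi^{n\cdot\bone_k}_\bi\spaceeq \Pi^{n'\cdot\bone_k}_\bi\cont{k}P_{f_*}\qquad\text{for all }\bi\in[k]^k,
\end{align*}
which I would verify entrywise: using Lemma~\ref{lem_Pi_basic} and the definition of $P_{f_*}$, both sides have $(\bb,\ba)$-entry equal to $1$ exactly when $f_*(\ba_\bi)=\bb$, and $0$ otherwise. Given this identity, the $k$-tensoriality of $\xi$ together with Lemma~\ref{lem_associativity_contraction} gives
\begin{align*}
(F\circ\xi)(\bx_\bi)\spaceeq P_{f_*}\cont{k}\xi(\bx_\bi)\spaceeq P_{f_*}\cont{k}\Pi^{n\cdot\bone_k}_\bi\cont{k}\xi(\bx)\spaceeq \Pi^{n'\cdot\bone_k}_\bi\cont{k}P_{f_*}\cont{k}\xi(\bx)\spaceeq \Pi^{n'\cdot\bone_k}_\bi\cont{k}(F\circ\xi)(\bx),
\end{align*}
which is precisely the $k$-tensoriality condition in Definition~\ref{defn_tensorial_homo}. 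The only mildly delicate point is translating the \enquote{minor} operation $(\cdot)_{/\pi}$ into contractions against the projection tensor $P_\pi$ so that the commutation identity above becomes manifest; past that, everything is a direct application of the associativity Lemma~\ref{lem_associativity_contraction} and the computational lemmas of Section~\ref{subsec_projection_tensor}.
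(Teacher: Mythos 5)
Your proposal is correct and follows essentially the same route as the paper: it forms the map $M\mapsto M_{/f_*}$ (the paper's $\xi(\bx)_{/g}$), verifies the homomorphism property via the commuting square $f_*\circ\pi_\bi=\pi'_\bi\circ f_\Eset$, and verifies $k$-tensoriality via the commutation $P_{f_*}\cont{k}\Pi^{n\cdot\bone_k}_\bi=\Pi^{n'\cdot\bone_k}_\bi\cont{k}P_{f_*}$, all exactly as in the paper. One small terminological slip: $P_\pi$ is not the \emph{projection tensor} in the paper's sense (that name is reserved for $\Pi^\ba_\bell$); it is the multilinear version of the minor matrix $P_\pi$ from Remark~\ref{rem_description_freestructure_tensorisation}, but this does not affect the argument.
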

\begin{proof}
Let $f:\A\to\B$ be a homomorphism, and consider the functions $g:\Vset(\A)^k\to\Vset(\B)^k$ defined by $(a_1,\dots,a_k)\mapsto (f(a_1),\dots,f(a_k))$ and $h:\Eset(\A)\to\Eset(\B)$ defined by $(a_1,a_2)\mapsto (f(a_1),f(a_2))$. (Notice that $h$ is well defined as $f$ is a homomorphism.) 
Suppose, without loss of generality, that $\Vset(\A)=[n]$ and $\Vset(\B)=[p]$ for some $n,p\in\N$.
Let $\xi$ be a $k$-tensorial homomorphism from $\Xk$ to $\freeM(\Ak)$, and consider the function 
\begin{align*}
    \vartheta:\Vset(\X)^k&\to\Mminion^{(p^k)}.\\
    \bx&\mapsto\xi(\bx)_{/g}
\end{align*}
We claim that $\vartheta$ is a $k$-tensorial homomorphism from $\Xk$ to $\freeM(\Bk)$.

To show that $\vartheta$ is a homomorphism, take $\bx\in\Eset(\X)$, so $\bx^\tensor{k}\in\Eset(\Xk)$. Since $\xi$ is a homomorphism, $\xi(\bx^\tensor{k})\in\Eset(\freeM(\Ak))$, so there exists $Q\in\Mminion^{(|\Eset(\A)|)}$ such that $\xi(\bx_\bi)=Q_{/\pi^\A_\bi}$ for each $\bi\in [2]^k$---where the superscript ``$\A$'' indicates that $\pi_\bi$ is defined for the digraph $\A$; i.e., $\pi^\A_\bi:\Eset(\A)\to\Vset(\A)^k$ is the function given by $\ba\mapsto\ba_\bi$. Define $W=Q_{/h}\in\Mminion^{(|\Eset(\B)|)}$. Given $\bi\in [2]^k$, let $\pi^\B_\bi:\Eset(\B)\to\Vset(\B)^k$ be the function given by $\bb\mapsto\bb_\bi$. Note that 
$g\circ\pi^\A_\bi=\pi^\B_\bi\circ h$.
Indeed, for any $\ba\in\Eset(\A)$, we have
\begin{align*}
    g(\pi^\A_\bi(\ba))
    \spaceeq
    g(\ba_\bi)
    \spaceeq
    (f(a_{i_1}),\dots,f(a_{i_k}))
    \spaceeq
    (f(a_1),f(a_2))_\bi
    \spaceeq
    (h(\ba))_\bi
    \spaceeq
    \pi^\B_\bi(h(\ba)).
\end{align*}
Therefore, we find
\begin{align*}
    \vartheta(\bx_\bi)
    \spaceeq
    \xi(\bx_\bi)_{/g}
    \spaceeq
    (Q_{/\pi^\A_\bi})_{/g}
    \equationeq{eqn_composition_minors}
    Q_{/g\circ\pi^\A_\bi}
    \spaceeq 
    Q_{/\pi^\B_\bi\circ h}
    \equationeq{eqn_composition_minors}
    (Q_{/h})_{/\pi^\B_\bi}
    \spaceeq
    W_{/\pi^\B_\bi}.
\end{align*}
It follows that $\vartheta(\bx^\tensor{k})\in\Eset(\freeM(\Bk))$, so $\vartheta$ is a homomorphism.

To show that $\vartheta$ is $k$-tensorial, take $\bx\in\Vset(\X)^k$ and $\bi\in[k]^k$.
Using that $\xi$ is $k$-tensorial, we find
\begin{align*}
    \vartheta(\bx_\bi)
    &\spaceeq
    \xi(\bx_\bi)_{/g}
    \spaceeq
    \left(\Pi^{n\cdot\bone_k}_\bi\cont{k}\xi(\bx)\right)_{/g}
    \spaceeq
    P_g\cont{k}\left(\Pi^{n\cdot\bone_k}_\bi\cont{k}\xi(\bx)\right)
    \lemeq{lem_associativity_contraction}
    P_g\cont{k}\Pi^{n\cdot\bone_k}_\bi\cont{k}\xi(\bx),
\end{align*}
while
\begin{align*}
    \Pi^{p\cdot\bone_k}_\bi\cont{k}\vartheta(\bx)
    &\spaceeq
    \Pi^{p\cdot\bone_k}_\bi\cont{k}\xi(\bx)_{/g}
    \spaceeq
    \Pi^{p\cdot\bone_k}_\bi\cont{k}\left(P_g\cont{k}\xi(\bx)\right)
    \lemeq{lem_associativity_contraction}
    \Pi^{p\cdot\bone_k}_\bi\cont{k}P_g\cont{k}\xi(\bx).
\end{align*}
Therefore, to obtain $\vartheta(\bx_\bi)= \Pi^{p\cdot\bone_k}_\bi\cont{k}\vartheta(\bx)$ and thus conclude that $\vartheta$ is $k$-tensorial, it suffices to prove that $P_g\cont{k}\Pi^{n\cdot\bone_k}_\bi=\Pi^{p\cdot\bone_k}_\bi\cont{k}P_g$. 
To that end, we apply a similar argument to the one used at the end of the proof of Proposition~\ref{prop_reduction_line_digraph}.
Notice that both these tensors belong to $\cT^{(p\cdot\bone_k,n\cdot\bone_k)}(\Q)$. Given $\ba\in\Vset(\A)^k$ and $\bb\in\Vset(\B)^k$, we find
\begin{align*}
    E_\bb\ast\left(P_g\cont{k}\Pi^{n\cdot\bone_k}_\bi\right)\ast E_\ba
    &\lemeq{lem_associativity_contraction}
    E_\bb\ast P_g\ast\Pi^{n\cdot\bone_k}_\bi\ast E_\ba
    \lemeq{lem_calculation_rule_P}
    \sum_{\bc\in g^{-1}(\bb)}E_\bc\ast\Pi^{n\cdot\bone_k}_\bi\ast E_\ba\\
    &\spaceeq
    \left\{\begin{array}{cl}
         1&\mbox{ if }\ba_\bi\in g^{-1}(\bb)   \\
         0&\mbox{ otherwise} 
    \end{array}\right.
    \spaceeq
    \left\{\begin{array}{cl}
         1&\mbox{ if }g(\ba_\bi)=\bb   \\
         0&\mbox{ otherwise}, 
    \end{array}\right.
\end{align*}
while
\begin{align*}
    E_\bb\ast\left(\Pi^{p\cdot\bone_k}_\bi\cont{k}P_g\right)\ast E_\ba
    &\lemeq{lem_associativity_contraction}
    E_\bb\ast\Pi^{p\cdot\bone_k}_\bi\ast P_g\ast E_\ba
    \lemeq{lem_Pi_basic}
    \sum_{\substack{\bd\in\Vset(\B)^k\\\bd_\bi=\bb}}E_\bd\ast P_g\ast E_\ba\\
    &\spaceeq
     \left\{\begin{array}{cl}
         1&\mbox{ if }(g(\ba))_\bi=\bb   \\
         0&\mbox{ otherwise.} 
    \end{array}\right.
\end{align*}
Since $g(\ba_\bi)=(g(\ba))_\bi$, the two expressions above coincide, thus implying that $P_g\cont{k}\Pi^{n\cdot\bone_k}_\bi=\Pi^{p\cdot\bone_k}_\bi\cont{k}P_g$, as required.
\end{proof}

\begin{prop*}[Proposition~\ref{prop_BA_preserves_homo} restated]
Let $2\leq k\in\N$, let $\X,\A,\B$ be digraphs such that $\A\to\B$, and suppose that $\BA^k(\X,\A)=\YES$. Then $\BA^k(\X,\B)=\YES$.
\end{prop*}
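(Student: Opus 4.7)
The plan is to derive the proposition as a direct corollary of two results already in place: Theorem~\ref{thm_acceptance_BA_hierarchy_general}, which rephrases $\BA^k$-acceptance in terms of the existence of a $k$-tensorial homomorphism into the free structure of the minion $\BAminion$, and Proposition~\ref{prop_acceptance_preserved_under_homo_linear_minions}, which establishes that acceptance by the hierarchy associated with \emph{any} linear minion is preserved under homomorphisms of the template. Since $\BAminion$ is a linear minion (as noted in Example~\ref{example_Qconv_Zaff}), these two ingredients combine almost mechanically.

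More concretely, I would proceed in three short steps. First, given the hypothesis $\BA^k(\X,\A)=\YES$, I would invoke Theorem~\ref{thm_acceptance_BA_hierarchy_general} to extract a $k$-tensorial homomorphism $\xi:\Xk\to\freeBA(\Ak)$. Second, using that $\A\to\B$, I would apply Proposition~\ref{prop_acceptance_preserved_under_homo_linear_minions} with $\Mminion=\BAminion$ to transport $\xi$ into a $k$-tensorial homomorphism $\vartheta:\Xk\to\freeBA(\Bk)$. Third, I would apply the converse direction of Theorem~\ref{thm_acceptance_BA_hierarchy_general} to conclude $\BA^k(\X,\B)=\YES$, which is the desired statement.

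The main subtlety to verify, not a genuine obstacle, is simply that Proposition~\ref{prop_acceptance_preserved_under_homo_linear_minions} applies to $\BAminion$: its proof only uses the linear-minion axioms (closure under $P_\pi$-premultiplication), which $\BAminion$ satisfies, and does not require conicity or any other special structural property. In particular, the refinement condition built into the definition of $\BAminion^{(\ell)}$ is preserved automatically, since the construction in that proposition only post-composes $\xi(\bx)$ with a matrix $P_g$ corresponding to the homomorphism $f:\A\to\B$, and such premultiplication is exactly an elementary row operation under which any linear minion is closed. Consequently, no additional argument is needed beyond citing the two results and assembling them in the indicated order.
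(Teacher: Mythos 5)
Your proposal matches the paper's proof exactly: both apply Theorem~\ref{thm_acceptance_BA_hierarchy_general} to extract a $k$-tensorial homomorphism into $\freeBA(\Ak)$, transport it via Proposition~\ref{prop_acceptance_preserved_under_homo_linear_minions} (with $\Mminion=\BAminion$) to $\freeBA(\Bk)$, and conclude by the converse direction of Theorem~\ref{thm_acceptance_BA_hierarchy_general}. The extra remark that only the linear-minion axioms (and not conicity) are needed is correct but not required.
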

\begin{proof}
By Theorem~\ref{thm_acceptance_BA_hierarchy_general}, $\BA^k(\X,\A)=\YES$ implies the existence of a $k$-tensorial homomorphism from $\Xk$ to $\freeBA(\Ak)$. By Proposition~\ref{prop_acceptance_preserved_under_homo_linear_minions}, it follows that there exists a $k$-tensorial homomorphism from $\Xk$ to $\freeBA(\Bk)$. Again by Theorem~\ref{thm_acceptance_BA_hierarchy_general}, we conclude that $\BA^k(\X,\B)=\YES$. 
\end{proof}

\section*{Acknowledgements} 
The authors are grateful to Jakub Opr\v{s}al, who suggested to us a connection
between the line digraph reduction and the Sherali--Adams LP hierarchy for
approximate graph colouring. 
We also thank the anonymous reviewers of the two extended
abstracts~\cite{cz23soda:aip,cz23stoc:ba} and this full version. In particular,
one reviewer suggested a simpler proof of Lemma~\ref{lem_involved_1511_1718}.

{\small
\bibliographystyle{plainurl}
\bibliography{cz}
}

\end{document}